\newif\ifsubmit     %
\newif\ifllncs      %
\newif\ifexabs      %
\newif\ifblind      %

\submittrue

\ifllncs
  \documentclass[runningheads,a4paper]{llncs}

  \pagestyle{plain}
\else
  \documentclass[letterpaper,11pt,pdfa]{article}
  \usepackage[in]{fullpage}
\fi

\usepackage{iftex}
\ifPDFTeX
  \usepackage[utf8]{inputenc}
  \usepackage[noTeX]{mmap}
  \usepackage[T1]{fontenc}
\fi
\ifLuaTeX
  \usepackage{luatex85}
  \usepackage[noTeX]{mmap}
\fi

\usepackage{mdframed}
\usepackage{amsmath}
\usepackage{amsfonts}
\usepackage{amssymb}
\usepackage{amsthm}
\usepackage{color}

\usepackage{appendix}
\usepackage{algorithm}
\usepackage{algpseudocode}
\usepackage{braket}
\usepackage{comment}
\usepackage{url}
\usepackage{bbm}
\usepackage{multicol}
\usepackage{mdframed}
\usepackage{multirow}

\usepackage[bookmarks]{hyperref}
\usepackage[nameinlink]{cleveref}
\hypersetup{
  colorlinks,%
  allcolors=blue
}
\Crefformat{equation}{(#2#1#3)}

\ifllncs

  \spnewtheorem{claim}{Claim}{\bfseries}{\rmfamily}
  \crefname{claim}{claim}{claims}
  \Crefname{claim}{Claim}{Claims}
\else
  \newtheorem{theorem}{Theorem}[section]
  \newtheorem{definition}[theorem]{Definition}
  \newtheorem{remark}[theorem]{Remark}
  \newtheorem{lemma}[theorem]{Lemma}
  \newtheorem{corollary}[theorem]{Corollary}
  \newtheorem*{corollary*}{Corollary}

  \newtheorem{claim}[theorem]{Claim}
  \newtheorem*{remark*}{Remark}
  
  \newtheorem{fact}[theorem]{Fact}

  \newtheorem*{theorem*}{Theorem}
  \newtheorem*{fact*}{Fact}
  \newtheorem*{lemma*}{Lemma}

\usepackage[style=alphabetic,minalphanames=3,maxalphanames=4,maxnames=99,backref=true]{biblatex}

  \DeclareFieldFormat{eprint:iacr}{Cryptology ePrint Archive: \href{https://ia.cr/#1}{\texttt{#1}}}
  \DeclareFieldFormat{eprint:iacrarchive}{Cryptology ePrint Archive: \href{https://eprint.iacr.org/archive/#1}{\texttt{#1}}}
  \addbibresource{ref.bib}

  \AtEveryBibitem{%
    \clearlist{address}
    \clearfield{date}
    \clearfield{isbn}
    \clearfield{issn}
    \clearlist{location}
    \clearfield{month}
    \clearfield{series}
 
    \ifentrytype{book}{}{%
      \clearlist{publisher}
      \clearname{editor}
    }
  }
\fi

\usepackage{appendix}
\usepackage{algorithm}
\usepackage{algpseudocode}
\usepackage{braket}
\usepackage{comment}
\usepackage{url}
\usepackage{multicol}
\usepackage{tikz}
\usetikzlibrary{arrows,automata,positioning}
\usepackage{caption}
\usepackage[caption=false]{subfig}
\usepackage[font=small,labelfont=bf]{caption}
\usepackage{comment}
\usepackage{pifont}
\usetikzlibrary{patterns}

\allowdisplaybreaks

\usepackage{enumitem}
\setlist[description]{noitemsep}
\setlist[enumerate]{noitemsep}
\setlist[itemize]{noitemsep}

\usepackage{soul, xcolor, xparse}
\makeatletter
  \ExplSyntaxOn
    \cs_new:Npn \white_text:n #1
    {
      \fp_set:Nn \l_tmpa_fp {#1 * .01}
      \llap{\textcolor{white}{\the\SOUL@syllable}\hspace{\fp_to_decimal:N \l_tmpa_fp em}}
      \llap{\textcolor{white}{\the\SOUL@syllable}\hspace{-\fp_to_decimal:N \l_tmpa_fp em}}
    }
    \NewDocumentCommand{\whiten}{ m }
    {
      \int_step_function:nnnN {1}{1}{#1} \white_text:n
    }
  \ExplSyntaxOff
  
  \NewDocumentCommand{ \varul }{ D<>{5} O{0.2ex} O{0.1ex} +m } {%
    \begingroup
    \setul{#2}{#3}%
    \def\SOUL@uleverysyllable{%
      \setbox0=\hbox{\the\SOUL@syllable}%
      \ifdim\dp0>\z@
      \SOUL@ulunderline{\phantom{\the\SOUL@syllable}}%
      \whiten{#1}%
      \llap{%
        \the\SOUL@syllable
        \SOUL@setkern\SOUL@charkern
      }%
      \else
      \SOUL@ulunderline{%
        \the\SOUL@syllable
        \SOUL@setkern\SOUL@charkern
      }%
      \fi}%
    \ul{#4}%
    \endgroup
  }
\makeatother

\newcommand{\I}{\mathbb{I}}

\usepackage{mleftright}

\newcommand{\As}{\mathcal{A}}
\newcommand{\Bs}{\mathcal{B}}
\newcommand{\Cs}{\mathcal{C}}
\newcommand{\Ds}{\mathcal{D}}

\newcommand{\cA}{\mathcal{A}}

\newcommand{\cB}{\mathcal{B}}

\newcommand{\tQ}{\Tilde{Q}}

\newcommand{\ch}{{\sf ch}}

\newcommand{\compressO}{{\sf cO}}
\newcommand{\csto}{{\sf CStO}}
\newcommand{\cphso}{{\sf CPhsO}}
\newcommand{\stddecomp}{{\sf StdDecomp}}

\newcommand{\vx}{{\mathbf{x}}}

\newcommand{\vy}{{\mathbf{y}}}
\newcommand{\vz}{{\mathbf{z}}}

\newcommand{\bE}{\mathbf{E}}
\newcommand{\bA}{\mathbf{A}}
\newcommand{\bB}{\mathbf{B}}
\newcommand{\bC}{\mathbf{C}}
\newcommand{\bW}{\mathbf{W}}

\newcommand{\cS}{{\cal S}}

\newcommand{\cE}{{\mathcal{E}}}

\newcommand{\cP}{{\mathcal{P}}}

\newcommand{\eps}{\varepsilon}

\mdfdefinestyle{figstyle}{ 
  linecolor=black!7, 
  backgroundcolor=black!7, 
  innertopmargin=10pt, 
  innerleftmargin=25pt, 
  innerrightmargin=25pt, 
  innerbottommargin=10pt 
}

\newcommand{\mc}[1]{{\mathcal{#1}}}

\DeclareMathOperator*{\E}{\mathbb E}

\newcommand{\X}{\mathbf{X}}
\newcommand{\Y}{\mathbf{Y}}

\newcommand{\KX}{\widetilde{\mathbf{X}}}

\newcommand{\epsQ}{\eps^*}
\newcommand{\epsC}{\eps}

\title{
Tight Characterizations for Preprocessing against Cryptographic Salting
}
\author{
Fangqi Dong\thanks{IIIS, Tsinghua University. Email: \texttt{dongfangqi77@gmail.com}.}
\and
Qipeng Liu\thanks{University of California, San Diego. Email: \texttt{qipengliu0@gmail.com}.}
\and
Kewen Wu\thanks{University of California, Berkeley. Email: \texttt{shlw\_kevin@hotmail.com}. Supported by a Sloan Research Fellowship and NSF CAREER Award CCF-2145474.}
}

\date{}

\begin{document}

\maketitle

\begin{abstract}
Cryptography often considers the strongest yet plausible attacks in the real world. Preprocessing (a.k.a. non-uniform attack) plays an important role in both theory and practice: an efficient online attacker can take advantage of advice prepared by a time-consuming preprocessing stage. 

Salting is a heuristic strategy to counter preprocessing attacks by feeding a small amount of randomness to the cryptographic primitive. 
We present general and tight characterizations of preprocessing against cryptographic salting, with upper bounds matching the advantages of the most intuitive attack.
Our result quantitatively strengthens the previous work by Coretti, Dodis, Guo, and Steinberger (EUROCRYPT'18).
Our proof exploits a novel connection between the non-uniform security of salted games and direct product theorems for memoryless algorithms.

For quantum adversaries, we give similar characterizations for property finding games, resolving an open problem of the quantum non-uniform security of salted collision resistant hash by Chung, Guo, Liu, and Qian (FOCS'20).
Our proof extends the compressed oracle framework of Zhandry (CRYPTO'19) to prove quantum strong direct product theorems for property finding games in the average-case hardness.
\end{abstract}

\clearpage

\section{Introduction}\label{sec:intro}

In modern cryptography, the notion of \emph{security} serves as the cornerstone for evaluating the robustness of cryptographic objects against various adversarial attacks.
In most constructions, the security boils down to the specific hardness of cryptographic primitives.
For instance, the hardness of hash functions refers to the complexity of finding hash collisions; the security of one-way functions hinges on the intractability of finding pre-images of designated outputs.

Traditionally under the lens of uniform security, the adversary's capability is assumed to be fixed and oblivious to the actual primitive.
However, as cryptographic systems evolve to tackle increasingly sophisticated adversaries, the concept of \emph{non-uniform security} emerges as a critical consideration \cite{yao1990coherent,C:Unruh07,EC:CDGS18}, which departs from uniform security by allowing the adversary to adaptively choose its attack strategies based on pre-computed knowledge or advice on the specific cryptographic primitive in question.
This departure reflects a more realistic model of real-world attacks, where the adversary performs extensive interactions with the given primitive with significant amount of time and computational power in an offline phase before actually trying to break the protocol in the online phase.
Given this offline preprocessing stage, adversaries can sometimes perform much better than their uniform counterparts. For example, finding collisions in a vanilla hash function becomes trivial if the adversary reads through its truth table and store one pair of collisions offline.

Cryptographic \emph{salting} is a fundamental and universal technique to address the above issue. By including a random value, known as a \emph{salt}, from the salt space $[K]$ to the primitive, the output becomes unique even if the input data is the same. 
Take collision resistant hash as an example. In the salted version, a random oracle $H\colon[K] \times [M] \to [N]$ is provided. 
The goal of the adversary is, given a uniformly random salt $k$, to find a pair of distinct inputs $x,x' \in [M]$ such that $H(k, x) = H(k, x')$. 
Intuitively, the randomness of the salt $k$ prevents adversaries from effectively using pre-computed tables unless they store look-up tables for each possible salt value, vastly increasing the computational resources required to mount a successful attack.

The ``salting'' approach, dating back to a work by Morris and Thompson~\cite{morris1979password}, guarantees heuristic non-uniform security. The provable consequences of salting  were first investigated by De, Trevisan, and Tulsiani~\cite{de2010time}. 
Later, Chung, Lin, Mahmoody, and Pass~\cite{chung2013power} studied salting for collision resistant hash, one of the most important cryptographic applications. 
Then Mahmoody and Mohammed~\cite{mahmoody2016power} used salting to obtain non-uniform black-box separation results and Dodis, Guo, and Katz~\cite{EC:DodGuoKat17} established tight bounds for some specific salted cryptographic applications. 

More recently, Coretti, Dodis, Guo, and Steinberger~\cite{EC:CDGS18} proved that salting generically defeats preprocessing. Let $G$ be a cryptographic game in the random oracle model (ROM) and $\epsC_G(T)$ be the optimal winning probability that a $T$-query algorithm can obtain. Let $G_K$ be the salted game with the salt space $[K]$ and $\epsC_{G_K}(S, T)$ be the maximum success probability that a non-uniform algorithm with $S$-bit advice and $T$ queries can achieve.
Then they showed that $\epsC_{G_K}(S,T)$ can be bounded by (a multiple of) $\epsC_G(T)$ with an additional non-uniform advantage term decaying as the salt space $[K]$ enlarges.

\begin{theorem}[{\cite[Corollary 18 and 19]{EC:CDGS18}}]\label{thm:CDGS}
For any cryptographic games $G$ in the ROM, we have
\begin{itemize}
\item \textbf{Multiplicative version.} $\epsC_{G_K}(S, T) \le\widetilde O\left(\epsC_G(T) + S T/K\right)$.
\item \textbf{Additive version.} $\epsC_{G_K}(S, T) \le \epsC_G(T) + \widetilde O\left(\sqrt{S T/K}\right)$.
\end{itemize}
Here we use $\widetilde O$ to hide low order terms for simplicity.
\end{theorem}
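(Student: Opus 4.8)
The plan is to follow the \emph{presampling} (bit-fixing) paradigm: reduce non-uniform security in the ROM to security in a structurally restricted model, then exploit the salt combinatorially. \textbf{Step 1 (presampling).} I would first invoke the presampling lemmas of \cite{EC:CDGS18}, which convert any $(S,T)$-adversary against a random oracle, for a fixed game $G'$, into an adversary in the bit-fixing ROM ($\mathsf{BF}$-ROM), where the preprocessing is forced into the restricted form of fixing the oracle on at most $P$ coordinates with the remaining values uniform. Two quantitative forms are used: a \emph{multiplicative} one, $\epsC_{G'}(S,T)\le 2\cdot\epsC_{G'}^{\mathsf{BF}}(P,T)$ for $P=\widetilde O(ST)$, and an \emph{additive} one, $\epsC_{G'}(S,T)\le\epsC_{G'}^{\mathsf{BF}}(P,T)+\gamma$ for $P$ of order $\widetilde O(ST/\gamma)$ and any $\gamma>0$. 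Both follow by regarding the random oracle conditioned on the $S$-bit advice as a source of min-entropy deficiency at most $S$, and decomposing such a source — up to statistical error $\gamma$ visible only to $T$-query distinguishers — into a convex combination of $P$-bit-fixing sources via a ``dense model''/KL-divergence argument; this is the technical heart of the theorem.

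\textbf{Step 2 (salting inside $\mathsf{BF}$-ROM).} Next I bound $\epsC_{G_K}^{\mathsf{BF}}(P,T)$. The oracle $H\colon[K]\times[M]\to[N]$ has at most $P$ fixed coordinates, and these involve at most $P$ distinct salts. Hence for a uniformly random challenge salt $k\gets[K]$, with probability at least $1-P/K$ the restricted oracle $H(k,\cdot)\colon[M]\to[N]$ has \emph{no} fixed coordinate, i.e.\ is a fresh uniform random function independent of the preprocessed list. On that event the adversary — and the challenger, even if it itself queries $H$ — sees verbatim the plain game $G$ with a fresh random oracle, and therefore wins with probability at most $\epsC_G(T)$; on the complementary event (probability at most $P/K$) I bound the winning probability trivially by $1$. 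Thus $\epsC_{G_K}^{\mathsf{BF}}(P,T)\le\epsC_G(T)+P/K$.

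\textbf{Step 3 (combine and optimize).} For the multiplicative version, plugging $P=\widetilde O(ST)$ into the multiplicative presampling bound gives $\epsC_{G_K}(S,T)\le 2\epsC_{G_K}^{\mathsf{BF}}(P,T)\le 2\epsC_G(T)+2P/K=\widetilde O\!\left(\epsC_G(T)+ST/K\right)$. For the additive version, $\epsC_{G_K}(S,T)\le\epsC_{G_K}^{\mathsf{BF}}(P,T)+\gamma\le\epsC_G(T)+\widetilde O(ST/(\gamma K))+\gamma$, and choosing $\gamma=\sqrt{ST/K}$ yields $\epsC_G(T)+\widetilde O(\sqrt{ST/K})$. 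The genuinely hard part is Step 1 — establishing the presampling reduction with the sharp $P\approx ST/\gamma$ trade-off, which is exactly what turns the additive loss into $\widetilde O(\sqrt{ST/K})$ rather than $\widetilde O(ST/K)$; Steps 2 and 3 are then a short counting argument and elementary optimization. It is worth noting that this route is inherently lossy: the extra factor $T$ inside $ST/K$ and the square root are artifacts of presampling rather than of the best attack, which is what motivates a fundamentally different proof strategy (via direct product theorems) to obtain bounds matching the intuitive attack.
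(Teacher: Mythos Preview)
Your outline is correct and faithfully reconstructs the proof of \cite{EC:CDGS18}, but note that this theorem is \emph{not} proved in the present paper: it is quoted as prior work (hence the citation in the theorem header), and the paper's purpose is to \emph{improve} upon it via \Cref{thm:classical_salt}. So there is no ``paper's own proof'' of \Cref{thm:CDGS} to compare against.

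That said, it is worth contrasting your presampling route with the paper's method for its stronger result. You reduce to the $\mathsf{BF}$-ROM with $P=\widetilde O(ST)$ or $P=\widetilde O(ST/\gamma)$ presampled points, then pay $P/K$ for the chance that the challenge salt is touched. The paper instead never passes through a bit-fixing model: it tensorizes by drawing $L$ independent salts, removes the advice by a $2^S$ union bound over all $S$-bit strings, and then bounds the probability of winning all $L$ instances by a \emph{strong direct product theorem for memoryless algorithms} (\Cref{thm:strong_memoryless_dpt}). The key technical point is that because the $L$ executions share no memory, one can reduce to Shaltiel's strong DPT for fair algorithms via an out-of-order simulation, obtaining the tight $\gamma^{|U|}$ bound rather than a lossy one. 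This eliminates the factor $T$ in front of $S/K$ that your presampling inherits (since $P$ must scale with $T$), and works in any idealized model rather than only the ROM. You already flag exactly this in your final paragraph, so you clearly understand where the gap between \Cref{thm:CDGS} and \Cref{thm:classical_informal} lies.
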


When $K$ is sufficiently large, the above theorem suggests that non-uniform security of any game is about the same as its uniform security. 
However quantitatively, their lower bounds does not match the most intuitive algorithm. For instance, when $G$ is the game for collision finding, we have $\epsC_G(T) = T^2/N$ by birthday paradox.\footnote{We omit low-order terms in the introduction for conciseness.}
The optimal non-uniform algorithm should be the following: store collisions for $S$ distinct salts; then in the online stage, it either outputs a stored collision for the challenge salt, or executes a birthday-paradox attack. The algorithm achieves advantage $T^2/N+S/K$, but \cite{EC:CDGS18} only provides an upper bound of $T^2/N + ST/K$. 

Furthermore, when a security game with challenges (unlike collision finding) is considered, \Cref{thm:CDGS} is even looser. Dodis, Guo, and Katz \cite{EC:DodGuoKat17} showed that for the function / permutation inversion problem, the optimal bound is $T/N + (ST)/(KN)$, whereas \Cref{thm:CDGS} only offers $T/N + S/K$.

\medskip

Beyond classical adversaries, the growing power of quantum computing poses significant threats to modern cryptography. Chung, Guo, Liu, and Qian~\cite{chung2020tight} and Liu~\cite{liu2023non} studied how salting could prevent quantum non-uniform attacks and provides bounds similar to \Cref{thm:CDGS}. However, even for the simplest yet most important cryptographic application --- collision finding in the quantum random oracle model (QROM) --- they were only able to give an upper bound $T^3/N + ST/K$, whereas the best known attack only achieves $T^3/N + S/K$, where $T^3/N$ comes from the quantum collision finding algorithm \cite{brassard1997quantum} and $S/K$ is the same advantage by storing collisions for $S$ salts.

\medskip

Therefore, in this  article, we ask:
\begin{center}
    {\it Can we provide tighter characterizations for preprocessing against salting, \\ in both the classical and quantum world?}
\end{center}

\subsection{Our Results}\label{sec:result}

\paragraph*{Classical Results.}
Let $G$, $G_K$, $\epsC_G$, and $\epsC_{G_K}$ be defined above. We present our first result.

\begin{theorem}[Consequence of \Cref{thm:classical_salt}]\label{thm:classical_informal}
For any cryptographic game $G$ in any idealized model, we have
\begin{itemize}
\item \textbf{Multiplicative version.} $\epsC_{G_K}(S,T)\le2\cdot\epsC_G(T)+2S/K$.
\item \textbf{Additive version.} $\epsC_{G_K}(S,T)\le\epsC_G(T)+4\sqrt{S/K}$.
\end{itemize}
\end{theorem}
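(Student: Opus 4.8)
The plan is to bound the optimal non-uniform adversary by the ``omniscient'' one that, for every realization of the salted oracle, hardwires the advice string that is best for that oracle, and then to control this quantity using that the salted oracle is a product of independent sub-oracles --- this is the promised connection to direct product theorems for memoryless algorithms. Write the salted oracle as $H=(\mathcal{O}_1,\dots,\mathcal{O}_K)$, a tuple of $K$ i.i.d.\ copies of the underlying idealized primitive indexed by the salt, and fix any non-uniform adversary given by an $S$-bit advice map $\sigma$ and a $T$-query online algorithm $A$; for $a\in\{0,1\}^S$ write $A_a$ for $A$ with its advice hardwired to $a$. First I would apply the standard ``salting is separable'' reduction: against challenge salt $k$ we may assume $A_a$ queries only $\mathcal{O}_k$, so the quantity $W_{a,k}(H):=\Pr[\,A_a^{H}(k)\text{ wins on }\mathcal{O}_k\,]\in[0,1]$ depends on $\mathcal{O}_k$ alone, and $\E_{\mathcal{O}_k}[W_{a,k}]\le\epsC_G(T)$ because hardwiring $(a,k)$ turns $A$ into a legitimate $T$-query algorithm for $G$. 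Setting $N_a(H):=\sum_{k\in[K]}W_{a,k}(H)$, the winning probability of \emph{every} such adversary is at most $\tfrac1K\,\E_H\!\bracket{\max_{a\in\{0,1\}^S}N_a(H)}$, since the optimal $\sigma$ can do no better than, for each $H$, pick the $a$ maximizing $N_a(H)$.

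Next I would bound $\E_H[\max_a N_a(H)]$ by an exponential-moment (soft-max) argument. For any $\beta>0$, combining $\max_a x_a\le\tfrac1\beta\ln\sum_a e^{\beta x_a}$, Jensen's inequality, the factorization $\E_H[e^{\beta N_a}]=\prod_k\E_{\mathcal{O}_k}[e^{\beta W_{a,k}}]$ (the ``memoryless direct product'' step, which is immediate here because the $K$ instances sit on disjoint parts of $H$), and $e^{\beta W_{a,k}}\le 1+W_{a,k}(e^\beta-1)$ followed by $\E_{\mathcal{O}_k}[W_{a,k}]\le\epsC_G(T)$, yields
\[
\E_H\!\bracket{\max_a N_a(H)}\ \le\ \frac{S\ln 2}{\beta}\;+\;\frac{e^{\beta}-1}{\beta}\cdot\epsC_G(T)\cdot K ,
\]
hence $\epsC_{G_K}(S,T)\le \tfrac{e^\beta-1}{\beta}\,\epsC_G(T)+\tfrac{S\ln 2}{\beta K}$ for every $\beta>0$. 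Taking $\beta=\ln 2$ gives $\epsC_{G_K}(S,T)\le\tfrac1{\ln 2}\epsC_G(T)+\tfrac SK$, which implies the multiplicative bound; taking $\beta=\sqrt{S/K}$ (the claim being trivial once $\sqrt{S/K}\ge1$) and using $\tfrac{e^\beta-1}{\beta}\le 1+\beta$ and $\epsC_G(T)\le1$ gives $\epsC_{G_K}(S,T)\le\epsC_G(T)+(1+\ln2)\sqrt{S/K}$, which implies the additive bound. So once the framework is in place, both parts are elementary optimizations.

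The step I expect to need real care --- and that makes the theorem hold ``in any idealized model'' rather than just the ROM --- is the separability reduction: one must justify cleanly, for an arbitrary idealized primitive and an arbitrary game $G$, that the online algorithm on salt $k$ may ignore every $\mathcal{O}_{k'}$ with $k'\neq k$, so that the per-salt win quantities $W_{a,k}$ become functions of independent coordinates and the exponential moment factorizes; after that the ``direct product theorem for memoryless algorithms'' is essentially automatic. (To also recover tight bounds for games with nontrivial online challenges, such as inversion, one would prove a sharper statement phrased in terms of the whole function $\epsC_G(\cdot)$ and not just the value $\epsC_G(T)$; this is not needed for the stated theorem.) The genuinely difficult analogue is the quantum one treated later: there superposition queries touch all salts at once, the factorization step breaks, and one must instead establish an honest quantum strong direct product theorem for property-finding games via Zhandry's compressed-oracle technique.
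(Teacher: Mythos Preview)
There is a real gap at the ``separability'' step. For a \emph{fixed} advice string $a$ you can indeed have $A_a$ on salt $k$ simulate the oracles $\mathcal{O}_{k'}$, $k'\neq k$, from the prior; this preserves the \emph{expectation} of $W_{a,k}$. But it does \emph{not} make the original random variable $W_{a,k}(H)$ a function of $\mathcal{O}_k$ alone: it produces a different variable $W'_{a,k}(\mathcal{O}_k)$ with the same mean. Your soft-max bound needs the pointwise factorization $\E_H[e^{\beta N_a}]=\prod_k\E[e^{\beta W_{a,k}}]$, and swapping $W_{a,k}\to W'_{a,k}$ does not control the quantity you actually want, namely $\E_H[\max_a N_a(H)]$, because $\max_a$ does not commute with that replacement. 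A two-salt toy example where $A_0,A_1$ on salt $1$ query salt $2$ and decide based on the answer already gives $\E[\max_a N_a]>\E[\max_a N'_a]$, so bounding the separable surrogate does not bound the original.

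What would make your MGF bound go through is the inequality $\E_H\bigl[\prod_{k\in S}W_{a,k}(H)\bigr]\le \epsC_G(T)^{|S|}$ for every $S\subseteq[K]$ and every fixed $a$ (expand $e^{\beta W}\le 1+(e^\beta-1)W$ and the product over $k$, then resum). But that inequality is exactly the strong direct product theorem for memoryless algorithms: $A_a$ run once per $k\in S$ is a $(T,\dots,T)$-memoryless algorithm with access to \emph{all} of $f_1,\dots,f_K$, and the paper's \Cref{thm:strong_memoryless_dpt} is what bounds its joint success by the product. The paper proves this via a nontrivial out-of-order simulation that converts memoryless algorithms into \emph{fair} ones (\Cref{lem:reduction_memoryless_to_fair}) and then invokes Shaltiel's tight bound for fair algorithms; your parenthetical ``immediate here because the $K$ instances sit on disjoint parts of $H$'' is precisely the step that fails without that argument, since cross-salt queries are allowed. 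Once \Cref{thm:strong_memoryless_dpt} is available, either your soft-max route or the paper's route (tensor to $L$ copies, guess the $S$-bit advice, apply the SDPT to the distinct salts, then \Cref{lem:draw_with_repl}) yields the stated bounds --- the hard content is the same in both.
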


For applications, the multiplicative version of \Cref{thm:classical_informal} is suited for games with small advantages (i.e., when $\epsC_G(T)\ll1$) and the additive version of \Cref{thm:classical_informal} should be used for games with noticeable advantages (e.g., $\epsC_G(T)\ge1/2$ when $G$ is a decision game).
We also remark that while the addend in the additive version of \Cref{thm:classical_informal} incurs a square-root-type loss compared with the multiplicative version, similar term appears and is shown to be tight in related works on non-uniform security of cryptographic primitives (see e.g., \cite{gravin2021concentration}).

Our \Cref{thm:classical_informal} improves \cite{EC:CDGS18} (see \Cref{thm:CDGS}) in all aspects:
\begin{itemize}
\item First, our result shaves the extra $T$ factor in front of each $S/K$ in \Cref{thm:CDGS}.

This improvement leads to a sharp characterization for games without challenges (like collision finding or zero-preimage-finding): with $S$ bits of advice, one can store answers for $\approx S$ salts and use $T$ queries to solve challenges for other salts. 
\item Second, our result works in any idealized model (ROM, the random permutation model, the ideal cipher model, the generic group model, or any distribution of oracles) with any query type (see \Cref{rmk:classical_salt} for detail), while \Cref{thm:CDGS} only works for ROM.
\item Finally, our bound is much cleaner and simpler, providing explicit control of the low order terms.
Our bound is even capable of handling small constant regimes of $S,T,K$ and can be useful in practice.
\end{itemize}

\medskip

When games with challenges are considered (e.g., salted function inversion), \Cref{thm:classical_informal} may no longer be tight. Intuitively, different challenges have completely different answers and a piece of $S$-bit advice cannot guarantee correctness on $S$ salts. We develop a general framework (\Cref{thm:classical_salt_mult}) for understanding salting for these kinds of games. 
To demonstrate its versatility and strength, we provide two non-exhaustive examples: we reprove the bound for salted function inversion as in \cite{EC:DodGuoKat17} and provide a tight characterization of salting when the advice is large. 

\begin{corollary}[\Cref{cor:funcinv} Restated, Salted Function Inversion]\label{cor:funcinv_intro}
    Let ${\sf Inv}$ be the game, where a challenge is a uniformly sampled $y \in [N]$ and the goal is to find $x \in [M]$ in a random function such that $x$ is a pre-image of $y$. Let ${\sf Inv}_K$ be the salted game. Then we have
    \begin{align*}
        \epsC_{{\sf Inv}_{K}}(S, T) \leq O\left(\frac{T}{N} + \frac{ST}{KN}\right).
    \end{align*}
\end{corollary}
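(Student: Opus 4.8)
The plan is to obtain \Cref{cor:funcinv_intro} as an instantiation of the general multiplicative framework \Cref{thm:classical_salt_mult}. Informally, that framework bounds $\epsC_{G_K}(S,T)$ by a constant multiple of the uniform hardness $\epsC_G(T)$ plus an advice term that scales like $\tfrac{S}{K}$ times a \emph{per-challenge solvability} of $G$ --- morally, the probability that a memoryless $T$-query algorithm solves $G$ on a \emph{fixed, worst-case} challenge over the randomness of the oracle. Thus the whole argument reduces to two elementary game-specific estimates for $G = {\sf Inv}$, followed by the parameter substitution.

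First, the uniform hardness: a routine lazy-sampling argument gives $\epsC_{{\sf Inv}}(T) = O(T/N)$, since against a uniform target $y \in [N]$ in a random $f \colon [M] \to [N]$ each fresh query's image is uniform conditioned on the transcript, hence hits $y$ with probability $1/N$, and an unqueried output hits $y$ with probability $1/N$ as well; union-bounding over the $\le T{+}1$ relevant events gives $O(T/N)$. Second, the per-challenge solvability: the same computation goes through verbatim with $y$ \emph{fixed} instead of random --- fixing $y$ does not change that a fresh query lands in $f^{-1}(y)$ with probability $1/N$ over $f$ --- so for every fixed $y$ a memoryless $T$-query algorithm finds a preimage of $y$ with probability $O(T/N)$ over a random $f$. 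Substituting $\epsC_{{\sf Inv}}(T) = O(T/N)$ and per-challenge solvability $O(T/N)$ into \Cref{thm:classical_salt_mult} yields $\epsC_{{\sf Inv}_K}(S,T) \le O\!\mparen{\tfrac{T}{N}} + O\!\mparen{\tfrac{S}{K}} \cdot O\!\mparen{\tfrac{T}{N}} = O\!\mparen{\tfrac{T}{N} + \tfrac{ST}{KN}}$, as claimed.

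The step I expect to be the genuine obstacle is not either estimate above --- both are standard --- but verifying that \Cref{thm:classical_salt_mult} applies with the \emph{right} dependence: the advice must be charged only a multiplicative $\tfrac{S}{K}$ against the per-challenge solvability, rather than the lossy $\tfrac{ST}{K}$ coming from presampling-style arguments \cite{EC:CDGS18,EC:DodGuoKat17}. This is precisely where the salting-versus-direct-product connection is used at full strength --- one argues that $S$ bits of advice can make a memoryless $T$-query solver succeed simultaneously on $m$ independent salted inversion instances (with $m$ growing like $S$, up to logarithmic factors) only if the $m$-fold strong direct product success $\bigl(O(T/N)\bigr)^{\Omega(m)}$ is not too small, and then reads off the bound by balancing $m$ against $S$ and $K$. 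Two points inside this step need care: the challenge $y$ is redrawn per salt, so the relevant direct product involves $m$ \emph{fresh, independent} targets $y_1,\dots,y_m$ rather than one shared target --- which is exactly why the single-instance quantity that enters is the per-challenge solvability computed above --- and the naming/bookkeeping overhead in the underlying encoding (an $O(\log(KMN))$ charge per recovered instance) must be shown to be dominated by the $\log N$ bits saved per recovered preimage, so that no stray low-order factor spoils the clean $O\!\mparen{\tfrac{T}{N} + \tfrac{ST}{KN}}$.
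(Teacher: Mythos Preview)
There is a genuine gap: you have misread \Cref{thm:classical_salt_mult}. It does \emph{not} take the form ``$\epsC_G(T)$ plus $\tfrac{S}{K}$ times a per-challenge quantity'' --- that is closer to the shape of the simpler \Cref{thm:classical_salt}, which for inversion only gives $O(T/N+S/K)$ (the paper notes this explicitly after \Cref{cor:funcinv}). Rather, \Cref{thm:classical_salt_mult} and its workable form \Cref{cor:classical_salt_mult} are stated entirely in terms of the \emph{multi-challenge} advantages $\epsC_{G^{n}}(nT)$: the probability that an $nT$-query uniform algorithm answers $n$ independent challenges on a \emph{single} random oracle. Neither $\epsC_{\mathsf{Inv}}(T)=O(T/N)$ nor your worst-case-target ``per-challenge solvability'' is that quantity, and the substitution you perform in your second paragraph is into a statement the theorem does not make.

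The game-specific input the paper actually uses is \Cref{fct:funcinv} from \cite{gravin2021concentration}: $\epsC_{\mathsf{Inv}^n}(nT)\le O(nT/N)^n$, hence $\epsC_{\mathsf{Inv}^n}(nT)^{1/n}=O(nT/N)$. Setting $L=S$ in \Cref{cor:classical_salt_mult} and using $\sum_k n_k=S$, one gets $\sum_{k:\,n_k>0}\epsC_{\mathsf{Inv}^{n_k}}(n_kT)^{1/n_k}\le O(T/N)\sum_k n_k=O(ST/N)$; divided by $\min\{K,S\}$ this yields $O\bigl(ST/(N\min\{K,S\})\bigr)=O(T/N+ST/(KN))$. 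The factor $n$ in $O(nT/N)$ is essential --- it reflects that the $nT$ queries are shared across $n$ targets on one oracle --- and is precisely what makes the sum collapse. Your third paragraph's encoding-argument language (bits saved per preimage, naming overhead) belongs to a different proof paradigm, closer to \cite{EC:DodGuoKat17}, and plays no role in the paper's derivation.
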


\begin{corollary}[Consequence of \Cref{cor:classical_salt_mult}, Large Advice] \label{cor:large_advice}
    When $S = \Omega(K)$, for any cryptographic game $G$ in any idealized model, it holds that
    \begin{align*}
        \epsC_{G_K}(S, T) \leq O\left( \max_{\substack{n_1, \ldots, n_K \in \mathbb{Z}^+ \\ n_1 + \cdots + n_K = S}}  \sum_{i=1}^K \frac{\delta_{G}(n_i, T)}{K} \right),
    \end{align*}
    where $\delta_{G}(n, T)$ is the $n$-th root of the uniform security of the vanilla game $G$ equipped with $n$ independent challenges using $nT$ queries.
\end{corollary}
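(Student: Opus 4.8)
The plan is to read off \Cref{cor:large_advice} as the $S=\Omega(K)$ special case of the general multiplicative framework \Cref{cor:classical_salt_mult} (itself a corollary of \Cref{thm:classical_salt_mult}). I anticipate that framework bounds $\epsC_{G_K}(S,T)$, up to constant factors, by an optimization over how the $S$ advice bits are \emph{allocated} across the $K$ salts: an allocation is a tuple of nonnegative integers $(n_1,\ldots,n_K)$ whose sum is (at most a constant times) $S$, and its value is $\frac1K\sum_{i=1}^K\delta_G(n_i,T)$ --- the average over the uniformly random challenge salt of the $n_i$-th root of the uniform security $\epsC_{G^{(n_i)}}(n_iT)$ of the $n_i$-fold-challenge version of $G$ run with $n_iT$ queries. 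This is exactly the budget of the intuitive attack --- store, for salt $i$, the best simultaneous answer to $n_i$ fresh challenges under that salt's oracle, an $n_i$-unit budget buying the $n_i$-th root of the $n_i$-challenge hardness --- and the conversion from ``bits of advice'' to ``batch size'' is what the underlying direct product argument supplies. So all that remains is to simplify this optimization when $S=\Omega(K)$.

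The one structural fact I would use is that $n\mapsto\delta_G(n,T)$ is bounded below by $\delta_G(1,T)=\epsC_G(T)$: solving $n$ challenges separately with $T$ fresh queries apiece succeeds with probability at least $\epsC_G(T)^n$, so $\delta_G(n,T)=\epsC_{G^{(n)}}(nT)^{1/n}\ge\epsC_G(T)$ for every $n\ge1$, and with the natural convention $\delta_G(0,T)=\epsC_G(T)$ this extends to $n=0$. Now suppose $S$ is at least a suitable constant multiple of $K$. Then the all-positive allocations --- $n_i\in\mathbb{Z}^+$ for all $i$ with $n_1+\cdots+n_K=S$ --- are feasible, and I claim they dominate the general optimization up to an $O(1)$ factor: a salt that the general framework assigns budget $0$ contributes $\delta_G(0,T)/K=\epsC_G(T)/K$, the same as budget $1$, and there are at most $K\le S$ such salts, so they can be promoted to unit budget (adjusting the other parts, which $S\ge K$ makes possible) at no loss given the lower bound just noted; likewise any residual additive term of the form $O(\epsC_G(T))$ in \Cref{cor:classical_salt_mult} is already absorbed, since the balanced positive partition $n_i\in\{\lfloor S/K\rfloor,\lceil S/K\rceil\}$ has value $\ge\delta_G(1,T)=\epsC_G(T)$. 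Folding these $O(1)$ losses into the big-$O$ yields
\[
\epsC_{G_K}(S,T)\le O\!\left(\ \max_{\substack{n_1,\ldots,n_K\in\mathbb{Z}^+\\ n_1+\cdots+n_K=S}}\ \sum_{i=1}^K\frac{\delta_G(n_i,T)}{K}\right).
\]

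For a self-contained derivation not routed through \Cref{cor:classical_salt_mult}, the natural approach is to union-bound over the $2^S$ possible advice strings and note that, for each \emph{fixed} advice string, the online attacker is a \emph{memoryless} $T$-query algorithm invoked once per salt on the mutually independent oracle restrictions $H(1,\cdot),\ldots,H(K,\cdot)$; a strong direct product theorem for such memoryless algorithms controls the chance that one fixed advice succeeds on many salts simultaneously in terms of the batched uniform security $\epsC_{G^{(n)}}(nT)$, and balancing that tail against the $2^S$ union-bound factor reproduces precisely the allocation optimization above. Consequently I expect the main obstacle to lie not in this corollary --- whose proof is constant-factor bookkeeping, with $S=\Omega(K)$ used only to make the all-ones (hence all-positive) allocation feasible --- but in the \emph{strong direct product theorem for memoryless algorithms with challenges} that powers the general framework (and whose quantum analogue the paper obtains via the compressed-oracle technique). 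The remaining care here is simply to check that the bits-to-batch conversion in \Cref{cor:classical_salt_mult} is tight up to the multiplicative constant already hidden in the $O(\cdot)$.
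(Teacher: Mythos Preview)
Your overall plan---instantiate \Cref{cor:classical_salt_mult} with $L=S$ so that $2^{S/L}=2$ and, since $S=\Omega(K)$ forces $L\ge K$, the denominator $\min\{K,L\}$ becomes $K$---is precisely the paper's derivation. That one substitution already yields
\[
\epsC_{G_K}(S,T)\le 4e^2\cdot\max_{\substack{n_1,\ldots,n_K\in\mathbb N\\ n_1+\cdots+n_K=S}}\sum_{k=1}^K\frac{\delta_G(n_k,T)}{K},
\]
which is what the paper is packaging as \Cref{cor:large_advice}.

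Where you go wrong is in the extra step converting $n_k\in\mathbb N$ to $n_k\in\mathbb Z^+$. You adopt the ``natural convention'' $\delta_G(0,T)=\epsC_G(T)$, but this is not the paper's convention and does not match what \Cref{cor:classical_salt_mult} actually outputs: by \Cref{def:multiinstance} the game $G^0$ is the \emph{trivial} game, so $\epsC_{G^0}(0)=1$ and the paper explicitly sets $\epsC_{G^{n_k}}(n_kT)^{1/n_k}=1$ when $n_k=0$ (see the proof of \Cref{cor:classical_salt_mult}). With the correct value $\delta_G(0,T)=1$, a partition that puts $0$ on all but one salt contributes $(K-1)/K+\delta_G(S,T)/K\approx 1$, so the $\mathbb N$-maximum is essentially vacuous and is \emph{not} within a constant of the $\mathbb Z^+$-maximum in general. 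Your promotion argument (``budget $0$ contributes the same as budget $1$'') therefore fails. The resolution is not to patch the argument but to notice that the paper's formal statement \Cref{cor:classical_salt_mult} keeps $n_k\in\mathbb N$; the $\mathbb Z^+$ in the introductory \Cref{cor:large_advice} is informal shorthand, and no separate $\mathbb N\to\mathbb Z^+$ passage is proved (or needed) in the paper.
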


It turns out that $\delta_G(n,T)$ is an upper (and usually tight) bound on the non-uniform security of the game $G$ with $n$-bit advice and $T$ queries \cite{C:Unruh07,impagliazzo2010constructive,EC:CDGS18,chung2020tight,guo2021unifying}.
In these cases, our \Cref{cor:large_advice} is also tight and offers an intuitive strategy to achieve the optimality for large $S=n_1+\cdots+n_K$: 
reserve $n_k$ bits for each salt $k$ as the best $n_k$-bit advice.
Then the algorithm uses the corresponding advice in the online challenge phase, achieving success probability $(\delta(n_1, T) + \cdots + \delta(n_K, T))/{K}$.

\paragraph*{Quantum Results.}
Next, we turn our attention to the quantum case. \cite{chung2020tight,liu2023non} already studied general bounds for non-uniform quantum attacks against salting. They similarly have an extra factor of $T$ in front of $S/K$, making the bound non-tight even for collision resistant hash.

We start by aiming for a general quantum result as \Cref{thm:classical_informal}, which successfully shaves the additional factor $T$. However, in the proof of \Cref{thm:classical_informal}, we require an average-case direct product theorem (DPT) for general games. 
While a worst-case quantum DPT is known \cite{sherstov2011strong,lee2013strong}, an average-case quantum DPT is needed to analyze the idealized model, which turns out to be a long-standing open problem in quantum query complexity.

Therefore, we resort to the compressed oracle technique by Zhandry~\cite{zhandry2019record}, a powerful framework for proving average-case complexity in the quantum random oracle model (QROM). We show optimal quantum bounds for property finding games, which includes function inversion, collision finding, $k$-SUM, and more.

\begin{theorem}[Consequence of \Cref{thm:main_quantum}]
    Let ${\sf CRHF}$ be the game, where the goal of a quantum-query algorithm is to find $x \ne x' \in [M]$ such that they have the same image under the random oracle. Its quantum non-uniform security $\epsQ_{{\sf CRHF}_K}$ (with quantum advice) is at most
    \begin{align*}
        \epsQ_{{\sf CRHF}_K}(S, T) \leq \widetilde{O}\left( \frac{T^3}{N} + \frac{S}{K} \right).
    \end{align*}
\end{theorem}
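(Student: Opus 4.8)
The statement is a special case of the general property-finding result \Cref{thm:main_quantum}, obtained by taking ``collision'' as the target property and plugging in the known quantum complexity of collision finding in a random function; so the plan is to prove that general statement. Its proof has two ingredients, mirroring the classical picture behind \Cref{thm:classical_informal}: a reduction from the non-uniform salted game to a \emph{memoryless multi-instance} game (this is the ``salting $\leftrightarrow$ direct product'' connection advertised in the introduction), and an average-case quantum strong direct product theorem (SDPT) for the base game, proved by extending Zhandry's compressed oracle \cite{zhandry2019record}.

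\emph{The multi-instance reduction.} Following the paradigm of \cite{chung2020tight}, I would show that an adversary with $S$-qubit advice and $T$ queries winning ${\sf CRHF}_K$ with probability $\delta := \epsQ_{{\sf CRHF}_K}(S,T)$ yields, for every $\ell$, a solver for an $\ell$-fold version of the game --- handed $\ell$ freshly sampled salts against a single random oracle, with $\ell T$ queries in total --- whose win-all probability is roughly $\delta^{\ell}$ up to a $2^{O(S)}$ factor plus a birthday-type term for collisions among the $\ell$ salts. Choosing $\ell=\Theta(S)$ makes the $2^{O(S)}$ loss harmless after an $\ell$-th root. The two terms of the theorem emerge here: $S$ qubits of advice can ``store'' collisions for at most $\approx S$ of the $K$ salts, so a fresh salt benefits from the advice only with probability $\approx S/K$, producing the additive $\widetilde O(S/K)$; while a block whose salt is not ``stored'' is solved only with probability $\epsC_{{\sf CRHF}}(T)=\widetilde O(T^3/N)$. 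Crucially, using a direct-product step here --- rather than the presampling/bit-fixing step of \cite{EC:CDGS18,chung2020tight} --- is what removes their spurious extra factor of $T$ in front of $S/K$.

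\emph{The compressed-oracle SDPT.} To bound the multi-instance value, I would run the compressed-oracle simulation $\compressO$ of the random oracle on the domain $[K]\times[M]$ and maintain a progress measure on the compressed database $D$: the number of blocks among the given salts whose restriction of $D$ already witnesses a collision, together with consistency of the adversary's output with $D$. Zhandry-style single-query amplitude estimates for collision finding in a random function show that a single oracle call can raise this measure only with small amplitude, and that the estimate applies blockwise, so that the amplitude on ``all blocks solved'' factors into a product over the $\ell$ blocks. Since the $\ell T$ queries split over $\ell$ blocks and $t \mapsto t^{3}$ is convex, the worst allocation is $T$ queries per block, giving win-all probability $\bigl(\widetilde O(T^3/N)\bigr)^{\ell}$ in the all-distinct case (the non-distinct-salts contribution is folded into the $S/K$ term). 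Combining with the reduction and the matching BHT upper bound \cite{brassard1997quantum} gives the claimed characterization, tight up to the hidden polylog factors.

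The crux, and the main obstacle, is the average-case SDPT. A \emph{worst-case} quantum SDPT follows from known query lower bound techniques \cite{sherstov2011strong,lee2013strong}, but it ignores the oracle distribution, and an \emph{average-case} quantum SDPT for general games is a long-standing open problem; the compressed oracle unlocks it here only because collision finding is a property-finding game, so winning is witnessed by a local database pattern and the progress measure is meaningful. The technical points I expect to fight with are: making the per-query amplitude bound hold \emph{simultaneously} across all $\ell$ blocks without a union bound over blocks (which would reintroduce an $\ell$-, and hence via the reduction a $T$-, factor in the additive term); handling the $\stddecomp$ bookkeeping over the multi-block salted domain, where decompositions act blockwise but the database-size bounds must be tracked globally; and folding the $S$-qubit quantum advice into the compressed-oracle martingale, so that the reduction really costs only $2^{O(S)}$ and the final additive term is $S/K$ rather than $ST/K$.
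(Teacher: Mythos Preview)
Your compressed-oracle SDPT ingredient is on target, but the reduction step has a genuine gap for \emph{quantum} advice. The classical tensor trick you describe --- run the adversary $\ell$ times against fresh salts and pay $2^{O(S)}$ to guess the advice --- requires reusing the same advice in each of the $\ell$ executions. With $S$ qubits of advice this is blocked by no-cloning: the first run's measurement disturbs the advice irreversibly, so you cannot iterate to get a $\delta^{\ell}$ bound. The paper flags exactly this obstacle (Section~\ref{sec:overview_q}) and replaces the tensor reduction by the \emph{quantum bit-fixing} (QBF) framework of \cite{guo2021unifying,liu2023non}: via an alternating-measurement game (Marriott--Watrous style, so the advice is approximately preserved across rounds), non-uniform security reduces to bounding a single conditional probability $\Pr[\As\text{ wins }G_K\mid\mathbf W]$, where $\mathbf W$ is any event achievable by an $O(ST)$-query preprocessing stage with probability at least $1/N^{S}$.

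Consequently the SDPT is not applied the way you sketch. The paper conditions further on the event $\mathbf E$ that the compressed database after preprocessing contains collisions for at most $\widetilde S=2S\log N$ salts, and splits
\[
\Pr[\As\text{ wins}\mid\mathbf W]\ \le\ \Pr[\As\text{ wins}\mid\mathbf W\wedge\mathbf E]\ +\ \Pr[\overline{\mathbf E}]/\Pr[\mathbf W].
\]
The first term is handled by a single-instance compressed-oracle argument starting from a database with at most $ST$ entries and $\widetilde S$ ``solved'' salts, giving $O(\gamma_T)+\widetilde O(S/K)$. The second term is where the SDPT enters, but as a \emph{threshold} statement: any $O(ST)$-query algorithm has probability at most $(O(\gamma_{T/\log N}))^{\widetilde S}$ of producing a database with collisions on $\ge\widetilde S$ salts, so $\Pr[\overline{\mathbf E}]\le N^{-2S}\le\Pr[\mathbf W]/N$. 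Your path-integral/progress-measure picture for the SDPT is close to what the paper does, including the orthogonality-versus-triangle-inequality issue you anticipate; but it is deployed to control $\overline{\mathbf E}$, not a direct $\ell$-fold win-all probability, and the no-cloning-safe QBF reduction is the missing bridge.
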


At the center of our quantum results is an quantum strong DPT (in fact, we prove a threshold direct product theorem that is even stronger than direct product theorems) in Zhandry’s compressed oracle framework (see \Cref{thm:threshold-qsdpt}).
To the best of our knowledge, this is the first average-case DPT for quantum query algorithms, which may be of independent interests.

\paragraph*{Organization.}
We give an overview of our techniques in \Cref{sec:overview}.
Formal definitions are listed in \Cref{sec:prelim}.
Then \Cref{sec:salting_c} contains the analysis of cryptographic salting in the classical case and \Cref{sec:salting_q} is for the quantum random oracle case.
Missing proofs can be found in Appendix.

\section{Technique Overview}\label{sec:overview}

We give a high-level overview of our proof techniques.

\subsection{The Classical Setting}\label{sec:overview_c}

Let $\As$ be a classical algorithm with $T$ queries and $S$ bits of advice.
Upon receiving a uniformly random salt $k\sim[K]$, $\As$ aims to solve the $k$-th independent copy of the game $G$ in the salted game $G_K$.
Let $\eps$ be the winning probability of $\As$ and our goal is to give an upper bound on $\eps$.

\paragraph*{Removing Advice.}
To eliminate the effect of the non-uniform advice, we consider executing the above procedure multiple times (say, $L$ times).
The key idea here is that, the non-uniform advice is given in advance of the random salt, therefore its power is intuitively diluted across the salt space $[K]$ \cite{impagliazzo2010constructive,impagliazzo2011relativized}.

Let $k_1,\ldots,k_L$ be the $L$ salts chosen. Let $f_1,\ldots,f_K$ be the oracles of each salt.
For any fixed $f_1,\ldots,f_K$, the advice is also fixed.
Thus by the independence of $k_1,\ldots,k_L$, the probability of $\As$ winning on all these $L$ salts conditioned on $f_1,\ldots,f_K$ is simply
$$
\left(\E_k\left[\eps_k\right]\right)^L,
$$
where $\eps_k$ is the winning probability of $\As$ on salt $k$.
Considering the randomness of $f_1,\ldots,f_K$, we also have 
$$
\eps=\E_{f_1,\ldots,f_K}\left[\E_k\left[\eps_k\right]\right].
$$
Hence by convexity, it suffices to bound the probability $\delta$ of $\As$ winning $L$ salts simultaneously and then $\eps\le\delta^{1/L}$.

Intuitively, $\delta$ should be of order $o(1)^L$ since the definition of $\delta$ involves solving $L$ independent trials, where $o(1)$ is the advantage of solving one.
When $L\gg S$, it does not hurt if we replace the non-uniform $S$-bit advice by a random $S$-bit string, since the latter has a probability of $2^{-S}\gg2^{-L}$ to be equal to the former.

Formally, let $\eta$ be the probability of $\As$ winning $L$ salts with a uniformly random $S$-bit string as advice (which can in turn be hardwired oblivious to $f_1,\ldots,f_K$).
Then we have $\delta\le2^S\cdot\eta$ and, more importantly, since the oracle-dependent advice is removed, we only need to analyze uniform algorithms for $\eta$.
This corresponds to \Cref{eq:thm:classical_salt_2} in the proof of \Cref{thm:classical_salt}.

\paragraph*{Reducing to Direct Product Theorems.}
Let $\Bs$ be the $T$-query algorithm corresponding to $\As$ with hardwired advice. Then $\eta$ is equivalent to the success probability of the following procedure:
\begin{enumerate}
\item\label{itm:sec:overview_c_1} Sample salts $k_1,\ldots,k_L\sim[K]$.
\item\label{itm:sec:overview_c_2} Check whether $\Bs$ solves the $k_i$-th independent copy of $G$ for each $i\in[L]$.
\end{enumerate}
Let $U$ be the set of distinct salts in $k_1,\ldots,k_L$.
Then conditioned on \Cref{itm:sec:overview_c_1}, $U$ is fixed and the success of \Cref{itm:sec:overview_c_2} implies on the success of $\Bs$ solving independent copies of $G$ indexed by $U$.
Though $\Bs$ has access to all $f_1,\ldots,f_K$ even when just solving salt $k$, intuitively only $f_k$ should be relevant. This corresponds to \Cref{clm:thm:classical_salt}.

The above desired phenomenon is exactly characterized by \emph{direct product theorems}:
assume our best advantage is $\gamma$ for a single instance (i.e., the game $G$ here) with $T$ budget (i.e., total amount of queries here).
Then solving $u$ independent copies (i.e., $u=|U|$ here) of this instance with $uT$ budget should be $\gamma^u$, which is simply applying the optimal strategy for each copy independently.

Since $\gamma^u$ is attainable, such a direct product theorem is the best possible.
In many cases, such a strong result does not hold and the exact decay bound remains elusive for many important applications (see e.g., \cite{raz2010parallel}).
The study on direct product theorem is rich on its own with numerous works in circuit complexity (e.g., \cite{yao1982theory,levin1985one,impagliazzo2008uniform}), query complexity (e.g., \cite{shaltiel2003towards,drucker2012improved,sherstov2011strong,lee2013strong}), communication complexity (e.g., \cite{jain2021direct,jain2022direct}), property testing and coding theory (e.g., \cite{david2015direct,dinur2014direct}), and more. 

Our setting (the classical query model) unfortunately prohibits a general strong direct product theorem \cite{shaltiel2003towards}, and the best known bound incurs an inevitable polynomial loss \cite{drucker2012improved} dependent on $\gamma$.
Nevertheless, we are able to exploit the memoryless structure of the above procedure to obtain a strong direct product theorem through a similar analysis as presented in \cite{nisan1998products}. We explain it after we complete the analysis of the non-uniform security $\eps$ of $G_K$ against $T$-query algorithms with $S$-bit advice: 
\begin{itemize}
\item By considering $L$ independent salts, we have $\eps\le\delta^{1/L}$.
\item By guessing and removing the advice, we have $\delta\le2^S\cdot\eta$.
\item By strong direct product theorems, we have 
$$
\eta\le\E_U\left[\gamma^{|U|}\right]\le\left(\gamma+\frac LK\right)^L,
$$ 
where $\gamma$ is the uniform security of $G$ against $T$-query algorithms and $|U|$ is the number of distinct values in a draw-with-replacement experiment with $L$ draws from the universe $[K]$.
The second inequality follows from standard martingale analysis (see \Cref{lem:draw_with_repl}).
\end{itemize}
Rearranging and optimizing the choice of $L$ gives the desired bound (see \Cref{thm:classical_salt} for detail).

\paragraph*{Strong Direct Product Theorems for Memoryless Algorithms.}
Before presenting our argument, we remark that our analysis coincides with a much earlier work by Nisan, Rudich, and Saks \cite{nisan1998products} under a different background. While we only became aware of their work after the completion of our paper, we do \emph{not} consider our analysis for this part as our technical contribution, and we keep our argument within our setting just for completeness.

For simplicity, consider the following scenario and the general treatment can be found in \Cref{sec:memoryless_sdpt}: let $x^1,\ldots,x^n$ be $n$ i.i.d. random strings given oracle access to, and we execute a $T$-query algorithm $\Cs$ sequentially for $1,2,\ldots,n$ to predict $h(x^1),\ldots,h(x^n)$.
To compare with the analysis above, $n=|U|$ is the number of distinct salts and $x^1,\ldots,x^n$ are the corresponding oracles, $\Cs$ is essentially $\Bs$, $h$ is the predicate for the game (e.g., $h(x)$ is the pre-image of zero in $x$), and $\gamma$ is the optimal success probability of a $T$-query algorithm in predicting an \emph{individual} $h(x^i)$.
Our goal is to prove that the above joint prediction succeeds with probability at most $\gamma^n$.

The literature of direct product theorems usually assumes the joint prediction is provided by a much more powerful query algorithm, one single algorithm of $nT$ queries that makes all the predictions in the end \cite{drucker2012improved,sherstov2011strong,lee2013strong}.
While intuitively it makes no sense for the algorithm to query outside $x^i$ to predict $h(x^i)$, it can actually distribute the total $nT$ queries in a more clever and adaptive way that outperforms the naive strategy to predict $h(x^i)$ individually with $T$ queries on $x^i$.
As a consequence, a strong bound of $\gamma^n$ turns out to be false \cite{shaltiel2003towards} and the best known bound has an inevitable growing loss dependent on $\gamma$ itself \cite{drucker2012improved}, making it insufficient for our purposes.

The key belief that drives us towards a bound of $\gamma^n$ in our setting comes from the fact that $\Cs$ is executed separately to predict each $h(x^i)$ and it does not share memory between separate prediction tasks.
In other words, the above scenario has a memoryless structure that the workspace is cleaned repeatedly after each $T$ queries, which intuitively forbids any contrived attacks requiring coordination between predictions of $\Cs$.
Note that while this intuition checks out, in reality $\Cs$ can still query, say $x^j$, outside $x^i$ to predict $h(x^i)$, which makes the predictions $h(x^i),h(x^j)$ correlated and perhaps magically they can be correct simultaneously with higher probability.
In fact, examples in this spirit date back to Feige's famous counterexample for the sharp parallel repetition theorem \cite{feige1991success}.

To rigorously verify our belief, we start with the work by Shaltiel \cite{shaltiel2003towards}, which proves the tight $\gamma^n$ bound for \emph{fair} algorithms.
A fair algorithm is allowed to make all predictions in the end, potentially dependent on all the $nT$ queries, however, it is guaranteed that each $x^i$ can only be queried at most $T$ times.
Since the queries can still be intertwined among $x^i$'s, the predictions are, again, correlated. 
Nevertheless, since there is an upper bound on the queries for each $x^i$, a simple induction argument can be carried out.
In a bit more detail, one can do induction on the $n$-tuple $(T_1,\ldots,T_n)$ where each $T_i$ is the upper bound of the total number of queries on $x^i$. Then one can show that the optimal strategy for a fair algorithm is spending $T_i$ queries on $x^i$ to predict $h(x^i)$ independently.

Given the sharp bound for fair algorithms (\Cref{thm:fair_algo}), our goal is to reduce our memoryless algorithms to fair algorithms (\Cref{lem:reduction_memoryless_to_fair}).
Note that while the memory is cleaned after each $T$ queries, the whole execution can make $\gg T$ queries on $x^i$.
On the other hand, since a fair algorithm does not have to be memoryless, these two types of algorithms are incomparable to each other.
To get a fair algorithm from a memoryless algorithm, we further exploit the memoryless structure: if some $h(x^i)$ is already predicted, then we can privately simulate the future queries to $x^i$ without actually querying $x^i$, as $\Cs$ is separately executed for each prediction and these executions commute.

By this observation, we perform the following out-of-order simulation:
let $\Cs_1,\ldots,\Cs_n$ be $\Cs$ executed to predict $h(x^1),\ldots,h(x^n)$ respectively.
We initialize $\Cs_1,\ldots,\Cs_n$ and assume without loss of generality they are deterministic.
While there is some prediction not ready, we use a cycle / path elimination procedure to progress:
for each $i$, if $\Cs_i$ has not terminated and wants to query $x^j$, then we add a directed edge from $i$ to $j$.
In this directed graph, 
\begin{itemize}
\item Either we have a directed cycle $i_1\to i_2\to\cdots\to i_\ell\to i_1$.

Then we advance $\Cs_{i_1},\cdots,\Cs_{i_\ell}$ by one query to $x^{i_2},\ldots,x^{i_{\ell}},x^{i_1}$ respectively.
\item Or we have a directed path $i_1\to i_2\to\cdots\to i_\ell$.

Then we advance $\Cs_{i_1},\ldots,\Cs_{i_{\ell-2}}$ by one query to $x^{i_2},\ldots,x^{i_{\ell-1}}$ respectively.
In addition, we privately simulate the query of $\Cs_{i_{\ell-1}}$ on $x^{i_\ell}$ without actually querying $x^{i_\ell}$.
This is valid since $i_\ell$ is the endpoint of the path, thus $\Cs_{i_\ell}$ is already terminated and the prediction $h(x^{i_\ell})$ is already generated.
\end{itemize}
By the above procedure, each $x^i$ is queried at most $T$ times, since when $\Cs_i$ queries, either $x^i$ gets a query from the predecessor in the cycle / path, or the query is privately simulated.
Therefore this out-of-order execution is a fair algorithm and we can apply the known strong direct product theorem for fair algorithms.

\paragraph*{A Different Analysis.}
Finally we sketch a different analysis (see detail in \Cref{sec:classical_salt_refine}) that improves the previous analysis for games with numerous challenges (e.g., \Cref{cor:funcinv_intro}) or large advice (\Cref{cor:large_advice}).
Recall that in the reduction to direct product theorems, we focus on $U$, the set of distinct salts in $L$ independent uniform salts $k_1,\ldots,k_L$.
This is tight if the challenge distribution of $G$ has low entropy.
For example, if the challenge is null (e.g., collision finding requires no challenge description), then winning salts $k_1,\ldots,k_L$ is equivalent to winning salts in $U$.
However if the challenge distribution has high entropy (e.g., the function inversion asks to invert a random challenge point), then winning $U$ can be quite off from winning all the salts, since with high probability we get different challenges even if the salts are the same.

To avoid such a potential loss, we instead look directly into $k_1,\ldots,k_L$. For each salt $k\in[K]$, let $n_k$ be the number of its appearances in $k_1,\ldots,k_L$.
Then $n_1,\ldots,n_K$ are determined by $k_1,\ldots,k_L$.
Now for fixed $k_1,\ldots,k_L$, the algorithm $\Bs$ needs to solve $n_k$ independent challenges of salt $k$ for each $k\in[K]$.
Let $G^{n_k}$ be the multi-challenge version of the game $G$ where $n_k$ i.i.d. challenges are asked simultaneously.
Then the algorithm aims to win all of them in a similar memoryless fashion: it can use $n_kT$ queries to solve $G^{n_k}$ separately for each $k\in[K]$.
Though the games $G^{n_1},\ldots,G^{n_K}$ are not identical, strong direct product theorems for memoryless algorithms can be proved with an almost identical argument.
Therefore, we can relate the non-uniform security of salting to the advantages of the multi-challenge version of $G$ (see \Cref{cor:large_advice}).

We remark that the multi-challenge version of $G$ is closely related to the non-uniform security of $G$ itself (without salting), and (tight) upper bounds have been established for various games (see e.g., \cite{C:Unruh07,EC:CDGS18,chung2020tight,guo2021unifying}).

\subsection{The Quantum Random Oracle Setting}\label{sec:overview_q}

We now move to the quantum case. The aforementioned reduction in the classical setting works in the quantum setting as long as the non-uniform quantum algorithm is given \emph{classical advice}. In the described reduction, we construct a non-uniform algorithm solving multiple instances of a game, by iteratively executing an algorithm with fixed advice many times.
Such a reduction falls short when the advice is quantum, as outputting any answer would destroy the advice in an irreversible way. 

\paragraph*{Quantum Bit-Fixing.}
To address the quantum non-cloning issue, we resort to the quantum bit-fixing framework, which was first introduced by Guo, Li, Liu, and Zhang~\cite{guo2021unifying} and later refined by Liu~\cite{liu2023non}. 
In their work, they defined the so-called quantum bit-fixing model (QBF) and proved a novel connection between security in the QBF model and non-uniform security in the quantum random oracle model (QROM). Before introducing the QBF model,  let us first grasp the intuition behind it (see \Cref{subsec:prf-main_quantum} for more details).

Liu~\cite{liu2023non} considered a quantum-friendly adaptation of direct product games, where the quantum advice can be effectively reused to win ``multiple instances'' of the original games (formally, the alternating measurement game). This is achieved by a similar idea of the witness preserving amplification of QMA~\cite{marriott2005quantum}, to make sure that the quantum advice is preserved throughout the entire game. 
The $S$-fold alternating measurement game has $S$ rounds, with each round allowing the algorithm to make $T$ queries. An algorithm wins if and only if it succeeds all rounds in a sequential order, much like the memoryless algorithms mentioned earlier.
Since it is not the primary focus of this work, we refer interested readers to \cite{liu2023non} for more details of alternating measurement games. 
Eventually, if any uniform quantum-query algorithm (conducting $T$ queries per round) can win the $S$-fold alternating measurement game with probability at most $\eta^S$, the quantum non-uniform security $\epsQ_G(S, T)$ of the original game is at most $\eta$.

Now the goal is to bound the success probability of the $S$-fold game. For each $i\in[S]$, let $\mathbf{W}_i$ be the event that the algorithm wins the $i$-th round. 
We aim to show that for some $\eta$, the following holds: 
\begin{align*}
    \Pr\left[ \mathbf{W}_1 \wedge \mathbf{W}_2 \wedge \cdots \wedge \mathbf{W}_S \right] < \eta^S. 
\end{align*}
\noindent Define $\mathbf{W}_{<i}$ be the event that the algorithm wins the first $(i-1)$ rounds. Then we have
\begin{align*}
    \Pr\left[ \mathbf{W}_1 \wedge \mathbf{W}_2 \wedge \cdots \wedge \mathbf{W}_S \right] = \prod_{i=1}^S \Pr[ \mathbf{W}_i \,|\,  \mathbf{W}_{<i}]. 
\end{align*}
Since \cite{liu2023non} also showed that $\Pr[ \mathbf{W}_i \,|\,  \mathbf{W}_{<i}]$ is monotonically non-decreasing in $i$, if we can prove $\Pr[\mathbf{W}_S|\mathbf{W}_{<S}] < \eta$, it will imply  $\Pr[ \mathbf{W}_1 \wedge \cdots \wedge \mathbf{W}_S] < \eta^S$. 
Even further, we can assume $\Pr[\mathbf{W}_{<S}]$ is typical, say, $\Pr[\mathbf{W}_{<S}]\ge(1/N)^{S}$.\footnote{This lower bound $(1/N)^{S}$ is arbitrarily chosen in this section for clarity in presentation.} Since otherwise $\Pr[ \mathbf{W}_1 \wedge \cdots \wedge \mathbf{W}_S]\le\Pr[\mathbf{W}_{<S}] < (1/N)^{S} \leq \eta^S$ is automatically satisfied as the target $\eta$ is usually $\gg1/N$.

\textbf{The security in the QBF model} is defined as the maximum of $\Pr[\mathbf{W}_S|\mathbf{W}]$ over any $T$-query strategy for the $S$-th game and any event $\mathbf{W}$ that can be achieved by an $(ST)$-quantum-query algorithm with a ``non-negligible'' probability (e.g., $(1/N)^{S}$ as above).
In the salted case, $\mathbf{W}_S$ itself precisely represents the event that a uniform $T$-quantum-query algorithm wins the game $G$ on a random salt. 

Two immediate questions arise. First, even without conditioning on $\mathbf{W}$, bounding the probability of $\mathbf{W}_S$ necessitates tools from average-case quantum query complexity. Moreover, it becomes even more challenging when conditioning is considered.

In this study, we focus solely on the game for which Zhandry's compressed oracle technique \cite{zhandry2019record} serves as a potent tool for addressing average-case query complexity. We will first provide an exposition of this technique and then introduce the concepts required to handle conditioning.

\paragraph*{An Exposition of the Compressed Oracle Technique.}

In the rest of the overview, we will focus on the collision finding problem. Let us begin with the classical case. A $T$-classical-query algorithm can acquire information on at most $T$ input-output pairs of a random function. By the principle of deferred decision, one can assume that outputs are only sampled for those queried inputs, and every other output remains uniformly random. This is commonly referred to as ``lazy sampling''. Thus, we can say a random oracle is initialized as $D = \emptyset$ (an empty database), meaning every output is uniformly random at the beginning. For every query $x \in [M]$, if there is a pair $(x, y) \in D$, it outputs $y$; otherwise it updates $D \gets D \cup \{(x, y)\}$ for a uniformly random $y$ and returns $y$. The database contains all the information a classical algorithm learns so far. Most importantly, the probability of the algorithm finds a collision is roughly the probability that $D$ contains a collision. 
Hence, the objective shifts to bounding the probability that, following $T$ queries, $D$ contains a collision. 

The seminal work by Zhandry~\cite{zhandry2019record} invented a quantum analogue of the classical lazy sampling, called ``compressed oracle''. Since quantum queries can be made in superposition, one single classical database is no longer feasible to track all the information. 
However, the notion of a ``database'' is still meaningful if the ``database'' itself is also stored in superposition. A quantum-query algorithm interacting with a quantum-accessible random oracle can be equivalent simulated as follows:
\begin{itemize}
    \item The database register is initialized as $\ket \emptyset$: the algorithm has not yet queried anything.
    \item When the algorithm makes a quantum query, the database register gets updated in superposition: for a query $\ket x$ and a database $\ket D$, the simulator looks up $x$ in $D$.
    If $x$ is not in $D$, it initializes an equal superposition $\sum \ket y$ (up to normalization), and updates $\ket D$ to $\sum_y \ket {D\cup (x, y)}$. 
    It then returns $y$ also in superposition. 
\end{itemize}
The above description provides a high-level idea, albeit slightly inaccurate. Since a quantum algorithm can forget  previously acquired information, a formal ``compressed oracle'' necessitates an additional step to perform these forgetting operations. 

The compressed oracle exhibits several favorable properties akin to those enjoyed by classical lazy sampling technique: 
\begin{enumerate}
    \item \textbf{Local change.} Every quantum query will only change $\ket D$ locally: the addition of a new entry $(x, y)$, no change, or the removal of an existing entry (corresponding to the ability of forgetting). 
    \item \textbf{Knowledge upper bound.} The probability of a quantum algorithm finding a collision is approximately the same as the probability of measuring the database register and obtaining $D$ with collision.
\end{enumerate}
Hence, for the remainder of the overview, our focus will solely be on detecting collisions within the database register, rather than the output produced by the quantum algorithm.

\paragraph*{Probability as Path Integral.}

Before we approach our goal: the security of salted collision finding in the QBF model (i.e., to upper bound $\Pr[\mathbf{W}_S|\mathbf W]$), we first reprove the bound for the vanilla collision finding~\cite{zhandry2019record} against uniform quantum query algorithms. We reinterpret the analysis as a form of ``path integral''. This alternative view gives a new way to understand~\cite{zhandry2019record} and will help us establish a strong (threshold) direct product theorem to upper bound $\Pr[\mathbf{W}_S|\mathbf W]$.

For any $T$-query quantum algorithm, its interaction with a compressed  oracle can be written as $\compressO \, U_T \cdots \compressO \, U_2 \, \compressO \,  U_1 \ket 0 \otimes \ket \emptyset$, where $\ket 0$ is the algorithm's initial state, $\ket \emptyset$ is the initial database, each $U_i$ is a local unitary only on the algorithm's register, and $\compressO$ is the compressed oracle query. We are interested in the probability of observing a collision in $D$, i.e., the squared norm of the following (sub-normalized) state:
\begin{align*}
    \ket \phi := \Lambda \, \compressO \, U_T \, \cdots \compressO \, U_2 \, \compressO \,  U_1 \ket 0 \otimes \ket \emptyset,
\end{align*}
where $\Lambda$ is a projection onto databases with at least one collision. 

Feynman's interpretation of quantum mechanics (path integral), on a high level, postulates that the probability of an event (observing a collision) is given by the squared modulus of a complex number (as $\|\ket \phi\|$ above), and this amplitude is given by adding all paths contributing to the complex number. 
Since our ultimate goal is to observe a collision within the database, the collision must occur as a result of one of the oracle queries $\compressO$. Therefore, we contemplate all potential pathways leading to a collision. For any $t \in \{1, 2, \ldots, T\}$, we define the following (sub-normalized) state:
\begin{align*}
    \ket {\psi_t} := \underbrace{(\Lambda \, \compressO) \, U_T}_\textrm{$T$-th query} \, \underbrace{(\Lambda \, \compressO) \, U_{T-1}}_\textrm{$(T-1)$-th query}  \cdots \underbrace{(\Lambda \, \compressO) \, U_t}_\textrm{$t$-th query} \, \underbrace{(\mathbb{I}-\Lambda) \compressO \, U_{t-1}}_\textrm{$(t-1)$-th query}  \cdots \compressO \, U_2 \, \compressO \,  U_1 \ket 0 \otimes \ket \emptyset.
\end{align*}
The definition of $\ket{\psi_t}$ guarantees that, starting from the $t$-th query, the database register will always be supported on those databases containing collision.
Observe that $\ket \phi=\ket {\psi_1} + \ket {\psi_2} + \cdots + \ket {\psi_T}$.
Thus $\|\ket \phi\| \leq \|\ket {\psi_1}\| + \|\ket {\psi_2}\| + \cdots + \|\ket {\psi_T}\|$. 
Then by showing $\|\ket {\psi_t}\| \lesssim \sqrt{t/N}$, one deduces $\|\ket \phi\|^2 \lesssim T^3/N$. 

We can also treat the classical lazy sampling as a path integral, where we similarly define $\ket \phi$ (resp., $\ket {\psi_t}$) as the computation at the end (resp., after $t$ queries) projected onto databases with collisions. The only difference is that, all probabilities are summed directly in the classical case: $\|\ket \phi\|^2 = \|\ket {\psi_1}\|^2 + \|\ket {\psi_2}\|^2 + \cdots + \|\ket {\psi_T}\|^2 \lesssim T^2/N$. This is because in this path integral for the classical algorithm, the path decomposition $\ket{\psi_1},\ldots,\ket{\psi_T}$ are mutually orthogonal, for which we can use Pythagorean equation to replace the triangle inequality.

\paragraph*{Establishing the Security in the QBF Model.}

Recall that $G_K$ is the salted collision finding problem.
The security of $G_K$ in the QBF model requires us to upper bound $\Pr[\As \text{ wins } G_K \,|\, \mathbf{W}]$, where $\As$ is a $T$-quantum-query algorithm and the event $\mathbf{W}$ can be achieved by an $ST$-quantum-query algorithm with a ``non-negligible'' probability. 
We aim to show that the probability above is upper bounded by $O(T^3/N + S/K)$.

What does this conditioning enforce in the context of the compressed oracle? It implies that the database register does not start with $\ket \emptyset$, but rather as a superposition of databases. Moreover, since $\mathbf{W}$ can be achieved by an $ST$-quantum-query algorithm, according to the ``local change'' property, the database register will have non-zero support only over databases with at most $ST$ entries.  However, this condition alone is not sufficient to derive the desired bound.  Consider the following scenario: though every $\ket D$ with non-zero support has at most $ST$ entries, it exactly stores collisions for $ST/2$ distinct salts. In such a case, with a random challenge salt, the database with probability $ST/K$ consists of an answer for the salt even if $\As$ makes no further query. 
Given this example, we further separate the probability:
\begin{align*}
    \Pr[\As \text{ wins } G_K \,|\, \mathbf{W}] \leq \underbrace{\Pr[\As \text{ wins } G_K \,|\, \mathbf{W} \land \mathbf{E}]}_\textrm{(i)} + \underbrace{\Pr[\overline{\mathbf{E}}|\mathbf{W}]}_\textrm{(ii)},
\end{align*}
where $\mathbf{E}$ denotes the event that the database register contains $\ket D$ that stores collisions for at most $\widetilde{S} = 2 S \log N$ distinct salts. This idea of posing $\mathbf{E}$ was first proposed by Akshima, Guo, and Liu~\cite{akshima2022time} in the context of classical Merkle-Damg\r{a}rd hash. We adapt it to the quantum setting, combining with the compressed oracle technique. 

The term (i) follows closely to the ideas in the lower bound of the vanilla collision finding, with the distinction that the database register commences with databases satisfying the requirements outlined by both $\mathbf{W}$ and $\mathbf{E}$: each database has at most $ST$ entries and collisions for at most $\widetilde{S}$ salts. We show that (i) is $\lesssim T^3/N + \widetilde{S}/K$. 

To bound the term (ii), we show that $\Pr[\overline{\mathbf{E}}] < (T^3/N)^{\widetilde{S}} \lesssim(1/N)^{2 S}$. Since we assumed $\Pr[\mathbf W]\ge(1/N)^S$, we now have $\Pr[\overline{\mathbf{E}}|\mathbf{W}] < \Pr[\overline{\mathbf{E}}] / \Pr[{\mathbf{W}}] < 1/N$.  Upon accomplishing this, we establish security in the QBF model, along with the non-uniform security for salted collision finding.  Please refer to \Cref{subsec:prf-main_quantum} for more detail. 

\paragraph*{Strong (Threshold) Direct Product Theorem in the Compressed Oracle.}
Now we give more detail for the term (ii).
This final step involves bounding $\Pr[\overline{\mathbf{E}}]$. The probability corresponds to the following direct-product-type experiment:
there is a random oracle $f:[K] \times [M] \to [N]$ (simulated as a compressed oracle). 
An algorithm makes at most $ST$ quantum queries and the objective is to make the database register contain collisions for at least $\widetilde{S}$ salts. We let $\Lambda^{\geq r}$ be a projection onto databases with collisions on \emph{at least} $r$ salts and $\Lambda^{=r}$ projects onto database with collisions on \emph{exact} $r$ salts. Then $\Pr[\overline{\mathbf{E}}]$ is the squared norm of the following (un-normalized) state: 
\begin{align*}
    \ket \phi :=\ & \Lambda^{\geq \widetilde{S}} \, \compressO \, U_{ST} \, \cdots \compressO \, U_2 \, \compressO \,  U_1 \ket 0 \otimes \ket \emptyset\\
    =\ & \sum_{r\geq \widetilde{S}}\Lambda^{=r} \, \compressO \, U_{ST} \, \cdots \compressO \, U_2 \, \compressO \,  U_1 \ket 0 \otimes \ket \emptyset.
\end{align*}

We list all possible computation paths that lead to databases in $\Lambda^{=r}$ for some $r\ge \widetilde{S}$. 
For every $\ket D \in \Lambda^{=r}$, following its evolution, there exists $\vz = (z_1, z_2, \ldots, z_{r})$ such that before the $z_i$-th query, there are $(i-1)$ salts having a collision that the database will not forget in the future;
right after the $z_i$-th query, there are $i$ salts having a collision that the database will not forget.
We denote the computation path by the state $\ket {\psi_\vz}$. Similar to the vanilla collision finding, for the case of databases in $\Lambda^{=r}$, it holds that
\begin{align*}
    \Lambda^{=r}\ket \phi = \sum_\vz \ket {\psi_\vz}.
\end{align*}

Finally, we show that for every possible path $\vz$, the state $\ket {\psi_\vz}$ has a norm $\lesssim(T/N)^{r/2}$. This step necessitates a much more intricate analysis compared to bounding $\|\ket{\psi_{i}}\|$ in the vanilla case. When some salt obtains a collision in the database that will never be forgotten in the future, we must keep track of which salt $i$ it is, as well as how many queries / entries $q_i$ already exist for this particular salt. If we further split $\ket {\psi_\vz}$ into paths with all possible $(k_1, \ldots, k_r)$, $(q_1, \ldots, q_r)$ and still use the triangle inequality, we will not be able to obtain the desired bound. One of our novel and key ideas to finalize the proof in the direct product game is to show that all these paths have some form of orthogonality. We can leverage the orthogonality of all these cases\footnote{We only show that these paths are ``approximately'' orthogonal instead of perfectly orthogonal, and this suffices to conclude the result.}, and show that we can use Pythagorean equation instead of triangle inequality to complete the proof, much like the path integral for classical algorithms. 

Since the total number of $\vz$'s is $\binom{ST}{r}$, $\|\Lambda^{=r}\ket \phi\|^2 \lesssim \binom{ST}{r}^2 (T/N)^{r} \lesssim (T^3/N)^{r}$. By summing over every $r\ge \widetilde{S}$, we can conclude that $\|\ket \phi\|^2=\sum_{r\ge\widetilde{S}}\|\Lambda^{=r}\ket{\phi}\|^2\lesssim (T^3/N)^{r}$, from the summation of geometric series for the case of $T^3/N<1/2$.
More details on our strong (threshold) direct product theorem can be found in \Cref{sec:property_finding_sdpt}.

\section{Preliminaries}\label{sec:prelim}

For each positive integer $n$, we use $[n]$ to denote set $\{1,2,\ldots,n\}$.
We use $\mathbb{Z}$ to denote the set of integers, use $\mathbb{N}$ to denote the set of non-negative integers, and use $\mathbb{Z}_+$ to denote the set of positive integers. 

For a distribution $\mu$, we use $x\sim\mu$ to denote a sample $x$ from $\mu$.
For a finite set $S$, we use $x\sim S$ to denote a uniformly random element $x$ from $S$.

For a complex number $x\in\mathbb{C}$, we denote its norm $|x|=(x\overline{x})^{1/2}$, where $\overline{x}$ is the conjugate of the complex number $x$. For a complex vector $\vx\in\mathbb{C}^n$ or a quantum state $\ket{\phi}=\sum_{i\in[n]}\alpha_i\ket{i}$, we denote the $\ell_2$-norm $\|\vx\|=|\vx|_2=(\sum_{i\in[n]}x_i\overline{x_i})^{1/2}$ and $\|\ket{\phi}\|=(\sum_{i\in[n]}\alpha_i\overline{\alpha_i})^{1/2}$.

\paragraph*{Asymptotics.}
We use the standard $O(\cdot), \Omega(\cdot), \Theta(\cdot)$ notation, and emphasize that in this paper they only hide universal positive constants that do not depend on any parameter.

\paragraph*{Games.}
We start by defining the game in the general oracle model, then instantiate it in various settings.

\begin{definition}[Game]\label{def:game}
A game $G$ is described by an oracle distribution $\mu$ over functions $f\colon[M]\to[N]$,\footnote{Throughout the paper, we assume that oracles are functions with domain $[M]$ and range $[N]$ where the actual $M$ and $N$ may change under different contexts. This is without loss of generality due to the flexibility of $\mu$.} a challenge distribution $\pi_f$ specified by each possible oracle $f$ from $\mu$, and an accepting outcome set (aka predicate) $\mc{R}_{f,\ch}$ specified by each possible oracle $f$ from $\mu$ and each possible challenge from $\pi_f$.

We say a query algorithm $\As$ wins the game given oracle access to $f$ and challenge $\ch$ if the output of $\As$ is in $\mc{R}_{f,\ch}$, i.e., $A^f(\ch) \in \mc{R}_{f,\ch}$. 
The winning probability of $\As$ is then computed by 
$$
\Pr\left[\As^f(\ch)\in \mc{R}_{f,\ch}\right],
$$
where the randomness is over $f\sim\mu$, $\ch\sim\pi_f$, and potentially the internal randomness of $\As$.
\end{definition}
Note that in the above definition, $R_f$ is allowed to be empty, in which case the winning probability is simply zero.

During the analysis, we will need the oracle distribution conditioned on challenge. This naturally leads to the notion of plain games: games without any challenge.

\begin{definition}[Plain Game]\label{def:plain_game}
A plain game $G$ is described by an oracle distribution $\mu$ and an accepting outcome set $\mc{R}_f$.
The winning probability of a query algorithm is defined similarly as in \Cref{def:game}.
\end{definition}

Since games with or without challenge can be distinguished by whether there is a challenge distribution, we will address both of them as \emph{games} for convenience when context is clear.

\begin{definition}[Challenge-Conditioned Game]\label{def:game_cond}
Given a game $G=(\mu,\{\pi_f\},\{\mathcal{R}_{f,\ch}\})$ and a possible challenge $\ch$, we use $G_\ch=(\mu_\ch,\{\mathcal{R}_{f,\ch}\})$ to denote the plain game of $G$ conditioned on $\ch$, where $\mu_\ch$ is the oracle distribution conditioned on the challenge being $\ch$.
\end{definition}
We also have the following easy fact.

\begin{fact}\label{fct:oracle_cond}
$\mu=\E_\ch[\mu_\ch]$, where $\ch$ is sampled from $\pi_f$ for $f\sim\mu$.
\end{fact}

\paragraph*{Optimal Winning Probabilities.}
We start by defining the optimal winning probabilities for uniform and non-uniform query algorithms.
Let $G$ be a game and $S,T\in\mathbb{N}$.
\begin{itemize}
\item \textbf{Classical Uniform Game.}
In this case, $\As$ is a classical query algorithm with at most $T$ queries given challenge $\ch$ (if not plain) and classical oracle access to $f$. A classical oracle query to $f\colon[M]\to[N]$ is specified by a coordinate $x\in[M]$ and gets in return the value $f(x)\in[N]$.

We use $\epsC_G(T)$ to denote the maximal winning probability of such algorithms.
\item \textbf{Classical Non-Uniform Game.}
In this case, $\As$ is still limited to at most $T$ classical queries to $f$.
With classical non-uniform power, $\As$ now additionally receives an $S$-bit advice string $\sigma_f$ that can depend arbitrarily on $f$ (but not on the challenge $\ch$).
We use $\As^f(\sigma_f,\ch)$ (or $\As^f(\sigma_f)$ if plain) to explicitly denote that $\As$ takes the non-uniform advice as a separate input.

We use $\epsC_G(S,T)$ to denote the maximal winning probability of such algorithms.

\item \textbf{Quantum Uniform  Game.} Here, $\As$ is a $T$-quantum-query algorithm. In this paper, we are only interested in the case when $f$ is a uniformly random oracle. More details are provided in \Cref{sec:QROM}.

We use $\epsQ_G(T)$ to denote the maximal winning probability of such quantum algorithms.

\item \textbf{Quantum Non-Uniform Game.} Here, $\As$ is a $T$-quantum-query algorithm, with an additional piece of advice $\sigma_f$. In the quantum case, we consider the most general case: the advice $\sigma_f$ has $S$ qubits and can arbitrarily depend on $f$. 

We use $\epsQ_G(S,T)$ to denote the maximal winning probability of such non-uniform quantum algorithms.
\end{itemize}

Since the challenges are explicitly provided to the algorithm, the optimal winning probabilities for uniform algorithms are equivalent for a game and the corresponding challenge-conditioned games.

\begin{fact}\label{fct:opt_game_cond}
Let $G=(\mu,\{\pi_f\},\{\mathcal{R}_{f,\ch}\})$ be a game.
Then $\epsC_G(T)=\E_\ch[\epsC_{G_\ch}(T)]$.
\end{fact}

\paragraph*{Salted Games.}
Salting is a generic way to restrict non-uniform power. The method appends a piece of random data (aka salt) to inputs fed into a hash function (or more generally, an oracle); and the game will be now played under the random data. 

\begin{definition}[Salted Game]\label{def:salted_game}
    Let $G$ be a game (see \Cref{def:game}) specified by $(\mu,\{\pi_f\}_f,\{\mc{R}_{f,\ch}\}_{f,\ch})$.
    
    For each $K \in \mathbb{Z}_+$, we define the salted game $G_K$ as follows:
    \begin{itemize}
        \item The oracle distribution $\mu_K$ is over functions $g\colon[KM] \to [N]$. We identify $[KM]$ as $[K]\times[M]$ and $g=(f_1,\ldots,f_K)$ where each $f_i\colon[M]\to[N]$. The distribution $\mu_K$ is generated by independently sampling $g(i,\cdot)=f_i\sim\pi$ for each $i\in[K]$.
        \item For each possible oracle $g=(f_1,\ldots,f_K)$ from $\mu_K$, the challenge $\ch=(k,\ch_k)\sim\pi_g$ is produced by first picking a uniformly random $k\sim[K]$, then sampling $\ch_k\sim\pi_{f_k}$.
        \item For each possible oracle $g=(f_1,\ldots,f_K)$ from $\mu_K$ and challenge $\ch=(k,\ch_k)$ from $\pi_g$, the accepting outcome set $\mc{R}_{g,\ch}$ is defined as $\mc{R}_{f_k,\ch_k}$.
    \end{itemize}
\end{definition}

\begin{remark}\label{rmk:alter_salt}
Intuitively, the game $G_K$ samples $K$ independent copies $f_1,\ldots,f_K$ from $\mu$ as a joint oracle, then samples a uniformly random one $f_k$ for the challenge, followed by a concrete challenge $\ch_k$ from $\pi_k$ and the corresponding accepting outcome set $\mc{R}_{f_k,\ch_k}$.
In order to win this game, the algorithm $\As$ needs to output a valid answer for $\mc{R}_{f_k,\ch_k}$.
\end{remark}

Note that the definition of optimal winning probabilities carries over. In particular, our goal for the classical setting is to relate $\epsC_G(T)$ and $\epsC_{G_K}(S,T)$. We similarly want to relate $\epsQ_G(T)$ and $\epsQ_{G_K}(S,T)$ in the quantum random oracle model.

\paragraph*{Useful Inequalities.}
\Cref{lem:draw_with_repl} is a moment bound for the number of distinct elements in a draw-with-replacement experiment. Its proof can be found in \Cref{app:missing_prelim}.

\begin{lemma}\label{lem:draw_with_repl}
Let $K,L\in\mathbb{Z}_+$. 
Sample $k_1,\ldots,k_L\sim[K]$ independently and let $\ell$ be the number of distinct elements in $k_1,\ldots,k_L$.
Then for any $c\ge0$, we have
$$
\E[c^\ell]\le\left(c+\frac LK\right)^L.
$$
\end{lemma}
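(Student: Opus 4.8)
The plan is to run a one-element-at-a-time exposure argument and track the conditional expectation of $c^{\ell_i}$, where $\ell_i$ denotes the number of distinct elements among $k_1,\ldots,k_i$. We have $\ell_0=0$ and $\ell_L=\ell$, and conditioned on $k_1,\ldots,k_{i-1}$ (which fixes $\ell_{i-1}$), the new draw $k_i$ is a fresh element with probability $(K-\ell_{i-1})/K$ and a repeat with probability $\ell_{i-1}/K$. Hence
\begin{align*}
\E\left[c^{\ell_i}\,\middle|\,k_1,\ldots,k_{i-1}\right]
= c^{\ell_{i-1}}\left(\frac{K-\ell_{i-1}}{K}\cdot c + \frac{\ell_{i-1}}{K}\right)
= c^{\ell_{i-1}}\left(c + \frac{\ell_{i-1}}{K}(1-c)\right).
\end{align*}

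The crux is to bound the scalar factor $c+\frac{\ell_{i-1}}{K}(1-c)$ by $c+\frac{L}{K}$ for every $c\ge0$. I would split into two cases: if $c\le1$ then $\frac{\ell_{i-1}}{K}(1-c)\le\frac{\ell_{i-1}}{K}\le\frac{L}{K}$ since $\ell_{i-1}\le i-1\le L$; if $c>1$ then $\frac{\ell_{i-1}}{K}(1-c)\le0\le\frac{L}{K}$. Either way the factor is at most $c+\frac{L}{K}$, so $\E[c^{\ell_i}\mid k_1,\ldots,k_{i-1}]\le\left(c+\frac{L}{K}\right)c^{\ell_{i-1}}$.

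Taking expectations over $k_1,\ldots,k_{i-1}$ gives $\E[c^{\ell_i}]\le\left(c+\frac{L}{K}\right)\E[c^{\ell_{i-1}}]$, and iterating this from $i=L$ down to $i=1$ (with base case $\E[c^{\ell_0}]=1$) yields $\E[c^{\ell}]=\E[c^{\ell_L}]\le\left(c+\frac{L}{K}\right)^L$, as desired. The argument is short and essentially routine; the only point requiring a moment of care is the sign of $1-c$ when $c>1$, which is exactly why the casework above is needed rather than a one-line monotonicity claim.
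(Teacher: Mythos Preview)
Your argument is correct and essentially identical to the paper's: both run the one-step exposure martingale on $\ell_i$, compute the conditional factor $c+\frac{\ell_{i-1}}{K}(1-c)$, bound it by $c+\frac{L}{K}$ using $\ell_{i-1}\le L$, and iterate. The only cosmetic difference is that you spell out the $c\le1$ versus $c>1$ casework, whereas the paper compresses this into the remark ``since each $\ell_i$ is at most $L$''.
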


\Cref{fct:binom_choice} is a simple counting argument and the estimate follows directly from Stirling's formula, i.e. $a!>(a/e)^a$ for $a\in\mathbb{Z}_{+}$.
\begin{fact}\label{fct:binom_choice}
For $K,L\in\mathbb{Z}_+$, we have
$$
\left|\left\{n_1,\ldots,n_K\in\mathbb N\colon n_1+\cdots+n_K=L\right\}\right|
=\binom{K+L-1}{K-1}
\le\begin{cases}
(2e\cdot K/L)^L & L\le K,\\
(2e\cdot L/K)^K & L\ge K.
\end{cases}
$$
\end{fact}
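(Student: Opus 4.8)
\textbf{Proof proposal for \Cref{fct:binom_choice}.}

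The plan is to prove the two parts of the statement --- the exact count and the two asymptotic estimates --- separately, since the first is purely combinatorial and the latter two are elementary manipulations of factorials.

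First I would establish the exact equality $\left|\left\{n_1,\ldots,n_K\in\mathbb N\colon n_1+\cdots+n_K=L\right\}\right|=\binom{K+L-1}{K-1}$ by the standard stars-and-bars argument: a solution in nonnegative integers corresponds bijectively to a placement of $K-1$ dividers among $L$ indistinguishable stars arranged in a line, i.e., to a choice of $K-1$ positions out of the $K+L-1$ total positions (stars plus dividers). This gives $\binom{K+L-1}{K-1}=\binom{K+L-1}{L}$, and I would record both forms since each is convenient for one of the two regimes below.

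Next, for the regime $L\le K$, I would use the form $\binom{K+L-1}{L}\le\frac{(K+L-1)^L}{L!}\le\frac{(2K)^L}{L!}$, where the first inequality bounds the falling factorial $(K+L-1)(K+L-2)\cdots K$ by $(K+L-1)^L$ and the second uses $L-1<K$ so that $K+L-1<2K$. Then Stirling's bound $L!>(L/e)^L$ yields $\binom{K+L-1}{L}<(2K)^L\cdot(e/L)^L=(2e\cdot K/L)^L$, as claimed. For the regime $L\ge K$, I would argue symmetrically using the form $\binom{K+L-1}{K-1}\le\binom{K+L-1}{K}\le\frac{(K+L-1)^K}{K!}\le\frac{(2L)^K}{K!}$ (here $K-1<L$ gives $K+L-1<2L$; the step $\binom{K+L-1}{K-1}\le\binom{K+L-1}{K}$ holds since $K-1<K\le(K+L-1)/2$ when $L\ge K+1$, and the $L=K$ case can be checked directly or absorbed into constants), and again invoke $K!>(K/e)^K$ to get $\binom{K+L-1}{K-1}<(2e\cdot L/K)^K$.

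I do not expect any serious obstacle here; the only mild care needed is handling the boundary and edge cases ($L=K$, or $L=1$, or $K=1$) so that the claimed inequalities hold with the stated constant $2e$ rather than merely up to constants, and making sure the direction of the inequality $\binom{K+L-1}{K-1}\le\binom{K+L-1}{K}$ is justified in the relevant range. These are routine verifications that do not affect the structure of the argument.
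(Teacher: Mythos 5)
Your proposal is correct and matches the paper's (sketched) argument: the paper simply notes that the count is the standard stars-and-bars quantity $\binom{K+L-1}{K-1}$ and that the estimate follows from Stirling's bound $a!>(a/e)^a$, which is exactly the route you take via $\binom{K+L-1}{L}\le (2K)^L/L!$ when $L\le K$ and $\binom{K+L-1}{K-1}\le\binom{K+L-1}{K}\le (2L)^K/K!$ when $L\ge K$. Your worry about the monotonicity step is unnecessary, since $\binom{n}{K-1}\le\binom{n}{K}$ with $n=K+L-1$ holds precisely when $K\le(K+L)/2$, i.e.\ whenever $L\ge K$ (with equality at $L=K$).
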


\section{Non-Uniform Security of Salting: the Classical Setting}\label{sec:salting_c}

In this section, we prove the non-uniform security of salting in the classical case: we upper bound the advantage of an adversarial classical algorithm against salted oracles with the help of advice strings.

\begin{theorem}\label{thm:classical_salt}
Let $G$ be a game and let $G_K$ be the salted game for $K\in\mathbb{Z}_+$.
Then for any $S,T\in\mathbb N$, we have
$$
\epsC_{G_K}(S,T)\le\min_{L\in\mathbb{Z}_+}2^{S/L}\cdot\left(\epsC_G(T)+\frac LK\right).
$$
\end{theorem}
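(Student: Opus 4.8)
The plan is to follow exactly the three-step recipe sketched in \Cref{sec:overview_c}: amplify by considering $L$ independent salts, strip the advice by a counting argument, and then invoke a strong direct product theorem for memoryless algorithms. Fix an optimal $T$-query algorithm $\As$ with $S$-bit advice achieving $\epsC_{G_K}(S,T)=\eps$. For each fixed joint oracle $g=(f_1,\ldots,f_K)\sim\mu_K$ the advice $\sigma_g$ is a fixed $S$-bit string, and the salt $k\sim[K]$ and challenge are sampled \emph{independently} of $g$; writing $\eps_k(g)$ for the conditional winning probability on salt $k$, we have $\eps=\E_g\E_k[\eps_k(g)]$. Now run $L$ independent salt-challenge experiments against the \emph{same} fixed advice: conditioned on $g$, the probability of winning all $L$ is $(\E_k[\eps_k(g)])^L$, so by Jensen/convexity $\big(\E_g\E_k[\eps_k(g)]\big)^L\le\E_g\big[(\E_k[\eps_k(g)])^L\big]=:\delta$, i.e. $\eps\le\delta^{1/L}$. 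This is \Cref{eq:thm:classical_salt_1}-style bookkeeping.

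Next, replace the oracle-dependent advice by a uniformly random $S$-bit string: if $\eta$ is the probability that the same procedure (sample $g$, sample $L$ salts/challenges, run $\As$ with a \emph{uniformly random} $S$-bit string $r$ hardwired, check it wins all $L$ copies) succeeds, then since $r$ equals the ``correct'' advice $\sigma_g$ with probability $2^{-S}$, and success with the correct advice happens with probability $\delta$, we get $\delta\le2^S\cdot\eta$. Crucially, $\eta$ is now the success probability of a \emph{uniform} algorithm $\Bs$ (the string $r$ is fixed once and for all, independent of $g$): first sample $k_1,\ldots,k_L\sim[K]$ and the corresponding i.i.d.\ challenges, then for each $i$ run $\Bs$ with $T$ queries (to the full oracle $g$) to try to solve the $k_i$-th copy, with the workspace reset between the $L$ runs. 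Let $U$ be the set of distinct salts among $k_1,\ldots,k_L$ and $\ell=|U|$; conditioned on the $k_i$'s, winning all $L$ copies forces $\Bs$ to win the $\ell$ independent copies of $G$ indexed by $U$, each with its own $T$-query budget in a memoryless fashion. Invoking the strong direct product theorem for memoryless algorithms (\Cref{sec:memoryless_sdpt}), together with \Cref{clm:thm:classical_salt} that only the $f_k$-block is relevant for salt $k$, bounds this conditional success probability by $\gamma^{\ell}$ where $\gamma=\epsC_G(T)$. Hence $\eta\le\E[\gamma^{\ell}]$, and by \Cref{lem:draw_with_repl} this is at most $(\gamma+L/K)^{L}$.

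Chaining the three inequalities gives $\eps\le\delta^{1/L}\le(2^S\eta)^{1/L}\le 2^{S/L}\big(\epsC_G(T)+L/K\big)$, and taking the minimum over $L\in\mathbb{Z}_+$ yields the claimed bound. The only genuinely nontrivial ingredient is the strong direct product theorem for the memoryless multi-instance experiment: the subtlety is that although the workspace is cleared after every $T$ queries, a single run of $\Bs$ on salt $k_i$ may still query blocks $f_{k_j}$ for $j\ne i$, so the $\ell$ predictions are not literally independent and a naive ``multiply the per-copy probabilities'' argument fails (cf.\ Feige-type counterexamples). This is handled by the reduction from memoryless to \emph{fair} algorithms (the cycle/path-elimination simulation of \Cref{lem:reduction_memoryless_to_fair}) followed by Shaltiel's sharp $\gamma^{\ell}$ bound for fair algorithms (\Cref{thm:fair_algo}); all the other steps are elementary convexity and counting. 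I expect writing this reduction carefully --- in particular verifying that each $f_{k}$-block receives at most $T$ actual queries after the out-of-order rescheduling, and that privately simulating future queries to an already-terminated run is faithful because the $\Bs$-executions commute --- to be the main obstacle, and everything upstream (the amplification and advice-removal) to be routine.
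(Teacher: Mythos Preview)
Your proposal is correct and follows essentially the same approach as the paper's proof: the three-step recipe of amplifying to $L$ copies via convexity, stripping advice by the $2^S$ counting argument, and then bounding the uniform $L$-copy success via the memoryless strong direct product theorem together with \Cref{lem:draw_with_repl} is exactly what the paper does. The only minor point you elide is that the paper applies the DPT to the challenge-conditioned plain games $G_{\ch_{k_i}}$ (since \Cref{thm:strong_memoryless_dpt} is stated for plain games) and then averages over challenges using \Cref{fct:opt_game_cond} to recover $\epsC_G(T)$, but this is a bookkeeping detail rather than a gap.
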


\begin{remark}\label{rmk:classical_salt}
Since we do not assume any structure of $G$ in \Cref{thm:classical_salt} and its bound does not depend on specific range parameters in $G$, this result is general enough to handle \emph{any} kind of classical query model.

Let $f\colon[N]\to[M]$ be the oracle for $G$.
Let $\mathcal{Q}$ be the set of allowed queries, e.g., bit query (each time queries some $i$-th bit of some $f(x)$), word query (each time queries some $f(x)$, which is the query model we use here), majority query (each time queries the majority of $\{f(x)\colon x\in S\}$ for some $S\subseteq[N]$), and so on.
Let $\mathcal S$ be the range of the possible answers of queries in $\mathcal Q$.

Then we can view each oracle $f$ as a function $f_{\mathcal{Q}}\colon\mathcal{Q}\to\mathcal S$.
In particular, $f_{\mathcal Q}(q)$ returns the query answer on $f$ given a query $q\in\mathcal Q$.
Then the allowed query $\mathcal Q$ on $f$ becomes the word query on $f_{\mathcal Q}$.
Therefore, we can simply treat $f_{\mathcal Q}$ as the oracle in the game $G$, change the accepting outcome set accordingly, and apply \Cref{thm:classical_salt}.\footnote{One subtlety here is that $f$ and $f_{\mathcal{Q}}$ may not be a one-to-one correspondence. Though $f$ defines $f_{\mathcal Q}$, it is possible that the answers to the queries does not uniquely identify the oracle. In this case, the distribution of $f_{\mathcal Q}$ is properly induced the distribution of $f$ and the predicate (aka accepting outcome set) will be probabilistic given $f_{\mathcal Q}$ ans algorithm's output. The argument still works out with minor changes and we leave this as an exercise for interested readers.}
\end{remark}

\begin{remark}\label{rmk:classical_salt_stronger}
Though \Cref{thm:classical_salt} is generally tight (see \Cref{thm:classical_informal}), it can be further strengthened in some setting.
Intuitively, if the game has numerous possible challenges, one cannot store sufficient amount of answers to cover all the challenges, especially given the presence of salts.
In other words, answers in a look-up table need to not only specify the salt, but also specify the exact challenge corresponding to the answer.

Given the above intuition, in \Cref{sec:classical_salt_refine} we modify the proof of \Cref{thm:classical_salt} to get a different estimate, which improves \Cref{thm:classical_salt} when there are many possible challenges. We also explicitly compute some examples to demonstrate the tightness.
\end{remark}

The strategy to prove \Cref{thm:classical_salt} is as follows:
let $g$ be the salted oracle sampled from $g\sim\mu_K$.
\begin{itemize}
\item We first relate the success probability of winning a single challenge $\ch\sim\pi_g$ to the success probability of winning independent challenges $\ch_1,\ldots,\ch_L\sim\pi_g$.
\item Then we show that the $S$-bit advice string does not boost the winning probability much when we have multiple independent challenges in the sense that a random string of $S$-bit can be taken as the advice.
\item Now that the advice string is substituted by a random string independent of $g$, we only need to bound the winning probability of a uniform query algorithm against multiple independent challenges. This will be handled by a direct product theorem.
\end{itemize}
One can view the direct product theorem as a type of concentration results, where one analyze the high order advantages of algorithms.
This reduction from non-uniform security to uniform security with concentration is analogous to earlier works of Impagliazzo and Kabanets \cite{impagliazzo2010constructive,impagliazzo2011relativized}, and has been revived in recent works \cite{gravin2021concentration,akshima2020time,ghoshal2022time,akshima2022time,liu2023non}.

We emphasize that the most general direct product theorems \cite{drucker2012improved} incur a necessary loss in parameters and cannot produce sharp bounds that \Cref{thm:classical_salt} needs.
Instead, we exploit the \emph{memoryless} property of our algorithms and show a strong direct product theorem for this class of algorithms.

\begin{proof}[Proof of \Cref{thm:classical_salt}]
Let $G$ be specified by $(\mu,\{\pi_f\}_f,\{\mathcal{R}_{f,\ch}\}_{f,\ch})$.
Recall \Cref{def:salted_game} and \Cref{rmk:alter_salt}. The game $G_K$ samples $K$ independent copies of $G$ and focuses on a uniformly random one of them.
As such, let $f_1,\ldots,f_K\sim\mu$ and $k\sim[K]$ be independent, then $g=(f_1,\ldots,f_K)$ is the joint oracle sampled from $\mu_K$ for $G_K$.
Let $\ch_k\sim\pi_{f_k}$, then $\ch=(k,\ch_k)$ is the challenge sampled from $\pi_g$ for $G_K$.

Fix an arbitrary $L\in\mathbb{Z}_+$.
Let $\As$ be an arbitrary non-uniform query algorithm for $G_K$ that takes an $S$-bit advice string $\sigma_g$ and makes at most $T$ queries to $g$.
Then the winning probability of $\As$ is $\Pr\left[\As^g(\sigma_g,\ch)\in\mathcal{R}_{g,\ch}\right]$, where $\mathcal{R}_{g,\ch}=\mathcal{R}_{f_k,\ch_k}$.
Since we only make classical queries, oracle access to $g$ is equivalent to oracle access to $f_1,\ldots,f_K$ separately.
Therefore it suffices to prove
\begin{equation}\label{eq:thm:classical_salt_1}
\Pr_{\substack{f_1,\ldots,f_K\\k,\ch_k}}\left[\As^{f_1,\ldots,f_K}(\sigma_g,k,\ch_k)\in\mathcal{R}_{f_k,\ch_k}\right]\le2^{S/L}\cdot\left(\epsC_G(T)+\frac LK\right),
\end{equation}
where we assume that $\As$ is a deterministic query algorithm given $\sigma_g,k,\ch_k$ by fixing the optimal randomness.

To remove the non-uniform advice from the LHS of \Cref{eq:thm:classical_salt_1}, we use a tensor trick.
Let challenges $(k_1,\ch_{k_1}),\ldots,(k_L,\ch_{k_L})\sim\pi_g$ be independent copies of $(k,\ch_k)$.
Then for any fixed $g$ (i.e., $f_1,\ldots,f_K$), the advice $\sigma_g$ is also fixed and we have
$$
\Pr_{(k_i,\ch_{k_i}),i\in[L]}\left[\As^{f_1,\ldots,f_K}(\sigma_g,k_i,\ch_{k_i})\in\mathcal{R}_{f_{k_i},\ch_{k_i}},\forall i\in[L]\right]
=\Pr_{k,\ch_k}\left[\As^{f_1,\ldots,f_K}(\sigma_g,k,\ch_k)\in\mathcal{R}_{f_k,\ch_k}\right]^L.
$$
Hence
\begin{align*}
\text{LHS of \Cref{eq:thm:classical_salt_1}}
&=\E_{f_1,\ldots,f_K}\left[\Pr_{k,\ch_k}\left[\As^{f_1,\ldots,f_K}(\sigma_g,k,\ch_k)\in\mathcal{R}_{f_k,\ch_k}\right]\right]
\notag\\
&\le\E_{f_1,\ldots,f_K}\left[\Pr_{k,\ch_k}\left[\As^{f_1,\ldots,f_K}(\sigma_g,k,\ch_k)\in\mathcal{R}_{f_k,\ch_k}\right]^L\right]^{1/L}
\tag{by convexity}\\
&=\Pr_{\substack{f_1,\ldots,f_K\\(k_i,\ch_{k_i}),i\in[L]}}\left[\As^{f_1,\ldots,f_K}(\sigma_g,k_i,\ch_{k_i})\in\mathcal{R}_{f_{k_i},\ch_{k_i}},\forall i\in[L]\right]^{1/L}.
\end{align*}
Now observe that if $\sigma_g$ is replaced by a uniformly random string $\sigma\sim\{0,1\}^S$, then $\sigma$ still has a probability of $2^{-S}$ to be identical with $\sigma_g$, from which we recover the true non-uniform algorithm.
Putting this into the above calculation, we have
\begin{align}
\text{LHS of \Cref{eq:thm:classical_salt_1}}
&\le\left(2^S\cdot\Pr_{\substack{f_1,\ldots,f_K\\(k_i,\ch_{k_i}),i\in[L]\\\sigma\sim\{0,1\}^S}}\left[\As^{f_1,\ldots,f_K}(\sigma,k_i,\ch_{k_i})\in\mathcal{R}_{f_{k_i},\ch_{k_i}},\forall i\in[L]\right]\right)^{1/L}
\notag\\
&\le2^{S/L}\cdot\Pr_{\substack{f_1,\ldots,f_K\\(k_i,\ch_{k_i}),i\in[L]}}\left[\Bs^{f_1,\ldots,f_K}(k_i,\ch_{k_i})\in\mathcal{R}_{f_{k_i},\ch_{k_i}},\forall i\in[L]\right]^{1/L},
\label{eq:thm:classical_salt_2}
\end{align}
where $\Bs$ is a uniform $T$-query algorithm obtained by fixing the optimal choice of $\sigma$ for $\As$.

For each possible $(k_1,\ldots,k_L)$, we fix an arbitrary\footnote{For example, if $(k_1,k_2)=(1,2)$, then $U=\{1,2\}$; if $(k_1,k_2)=(1,1)$, then $U$ can be either $\{1\}$ or $\{2\}$.} $U\subseteq[L]$ such that $U$ contains indices of distinct salts in $k_1,\ldots,k_L$.
Note that $U$ is deterministic given $(k_1,\ldots,k_L)$.
Now we further analyze the RHS of \Cref{eq:thm:classical_salt_2}:
\begin{align}
&\Pr_{\substack{f_1,\ldots,f_K\\(k_i,\ch_{k_i}),i\in[L]}}\left[\Bs^{f_1,\ldots,f_K}(k_i,\ch_{k_i})\in\mathcal{R}_{f_{k_i},\ch_{k_i}},\forall i\in[L]\right]
\notag\\
\le&\Pr_{\substack{f_1,\ldots,f_K\\(k_i,\ch_{k_i}),i\in[L]}}\left[\Bs^{f_1,\ldots,f_K}(k_i,\ch_{k_i})\in\mathcal{R}_{f_{k_i},\ch_{k_i}},\forall i\in U\right]
\notag\\
=&\E_{\substack{U\\(k_i,\ch_{k_i}),i\in U}}\left[\Pr_{f_1,\ldots,f_K}\left[\Bs^{f_1,\ldots,f_K}(k_i,\ch_{k_i})\in\mathcal{R}_{f_{k_i},\ch_{k_i}},\forall i\in U\middle|U,\{(k_i,\ch_{k_i})\}_{i\in U}\right]\right].
\label{eq:thm:classical_salt_3}
\end{align}
Observe that $(U,\{k_i\}_{i\in U})$ is independent from $(f_1,\ldots,f_K)$ and each $\ch_{k_i}$ only distorts the distribution of $f_{k_i}$ from $\mu$ to $\mu_{\ch_{k_i}}$ (recall \Cref{def:game_cond}).
Hence, conditioned on $(U,\{(k_i,\ch_{k_i})\}_i)$, the oracles $f_1,\ldots,f_K$ are still independent; and in particular, $f_{k_i}$ has distribution $\mu_{\ch_{k_i}}$ for $i\in U$ and other oracles have marginal distribution $\mu$.
Therefore, the RHS of \Cref{eq:thm:classical_salt_3} can be simplified as
\begin{align}\label{eq:thm:classical_salt_4}
\E_{\substack{U\\(k_i,\ch_{k_i}),i\in U\\f_j\sim\mu,j\notin\{k_i\colon i\in U\}}}\left[\Pr_{f_{k_i}\sim\mu_{\ch_{k_i}},i\in U}\left[\Bs^{f_1,\ldots,f_K}(k_i,\ch_{k_i})\in\mathcal{R}_{f_{k_i},\ch_{k_i}},\forall i\in U\right]\right].
\end{align}

We will prove the following claim by a strong direct product theorem in \Cref{sec:memoryless_sdpt}.

\begin{claim}\label{clm:thm:classical_salt}
For any fixed $(U,\{(k_i,\ch_{k_i})\}_i,\{f_j\}_j)$, we have 
$$
\Pr_{f_{k_i}\sim\mu_{\ch_{k_i}},i\in U}\left[\Bs^{f_1,\ldots,f_K}(k_i,\ch_{k_i})\in\mathcal{R}_{f_{k_i},\ch_{k_i}},\forall i\in U\right]
\le\prod_{i\in U}\epsC_{G_{\ch_{k_i}}}(T).
$$
\end{claim}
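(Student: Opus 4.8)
The plan is to recognize \Cref{clm:thm:classical_salt} as an instance of a strong direct product theorem (DPT) for \emph{memoryless} query algorithms, the general form of which is established in \Cref{sec:memoryless_sdpt}. The first step is to cast the claim in that abstract language. Since $U$ was chosen to index \emph{distinct} salts, the map $i\mapsto k_i$ is injective on $U$, so among the oracles $f_1,\ldots,f_K$ exactly $|U|$ of them --- namely $f_{k_i}$ for $i\in U$ --- are ``live'' and drawn independently from $\mu_{\ch_{k_i}}$, while the remaining $K-|U|$ oracles $\{f_j\}_j$ are fixed. For each $i\in U$, hardwiring the input $(k_i,\ch_{k_i})$ and the fixed oracles into $\Bs$ produces a $T$-query algorithm $\Cs_i$ that acts on the tuple of live oracles and attempts to output an element of $\mathcal{R}_{f_{k_i},\ch_{k_i}}$. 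Running $\Cs_1,\ldots,\Cs_{|U|}$ one after another, each from a clean workspace, is precisely the memoryless procedure whose joint success probability the claim asks us to bound, and since $\Bs$ has already been taken to be deterministic, so is each $\Cs_i$.

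Second, I would pin down the single-instance bound that feeds the DPT: for every $i\in U$ and every fixing of the other live oracles, $\Cs_i$ is a legitimate $T$-query algorithm for the challenge-conditioned plain game $G_{\ch_{k_i}}$ (whose oracle distribution is $\mu_{\ch_{k_i}}$ and whose accepting set on oracle $f$ is $\mathcal{R}_{f,\ch_{k_i}}$), because the fixed oracles amount to advice and queries to the other live oracles cannot help decide membership in $\mathcal{R}_{f_{k_i},\ch_{k_i}}$; averaging over the other live oracles, which are independent of $f_{k_i}$, the isolated success probability of $\Cs_i$ on instance $i$ is at most $\epsC_{G_{\ch_{k_i}}}(T)$. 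Feeding these bounds into the memoryless strong DPT then yields that $\Cs_1,\ldots,\Cs_{|U|}$ all succeed simultaneously with probability at most $\prod_{i\in U}\epsC_{G_{\ch_{k_i}}}(T)$, which is exactly the claim. The only point to verify here is that the abstract DPT is stated flexibly enough to allow distinct oracle distributions and distinct predicates across instances; the argument is insensitive to this.

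The substance of the proof --- the memoryless strong DPT itself --- is where the difficulty sits, and I expect this to be the main obstacle. The issue is that while $\Cs_i$ runs, nothing stops it from querying a live oracle $f_{k_j}$ belonging to another instance, so the $|U|$ executions are correlated and a naive product bound fails; in fact a strong DPT is simply false for unconstrained $|U|T$-query algorithms \cite{shaltiel2003towards,drucker2012improved}. The way out exploits memorylessness: the executions share no state and therefore commute, so one can schedule them \emph{out of order} (\Cref{lem:reduction_memoryless_to_fair}) via a cycle/path elimination scheme --- form the digraph with an edge $i\to j$ whenever the stuck $\Cs_i$ wants to query $f_{k_j}$, advance each execution on a directed cycle by one real query, and advance each execution on a directed path by one real query except that the query entering the terminal (already-finished) instance is simulated privately instead of issued. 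This keeps the number of genuine queries to each live oracle at most $T$, so the out-of-order execution is a \emph{fair} algorithm, and one finishes by invoking the known strong DPT for fair algorithms (\Cref{thm:fair_algo}), itself proved by a straightforward induction on the per-instance query budgets $(T_1,\ldots,T_{|U|})$ showing that an optimal fair strategy spends budget $T_i$ entirely on instance $i$.
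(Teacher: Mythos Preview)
Your proposal is correct and matches the paper's approach: the paper's proof of \Cref{clm:thm:classical_salt} is a two-line invocation of \Cref{thm:strong_memoryless_dpt}, observing that with $(U,\{(k_i,\ch_{k_i})\}_i,\{f_j\}_j)$ fixed, the runs $\Bs^{f_1,\ldots,f_K}(k_i,\ch_{k_i})$ for $i\in U$ constitute a $(T,\ldots,T)$-memoryless algorithm for the direct product of the challenge-conditioned games $G_{\ch_{k_i}}$. Your third paragraph additionally sketches the proof of the memoryless DPT itself (reduction to fair algorithms via the cycle/path elimination of \Cref{lem:reduction_memoryless_to_fair}, then \Cref{thm:fair_algo}), which is exactly what the paper does in \Cref{sec:memoryless_sdpt}; the only quibble is that your second paragraph's talk of a separately established ``single-instance bound that feeds the DPT'' is superfluous, since \Cref{thm:strong_memoryless_dpt} already outputs the product $\prod_i \epsC_{G_{\ch_{k_i}}}(T)$ directly.
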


Assuming \Cref{clm:thm:classical_salt}, we can upper bound \Cref{eq:thm:classical_salt_4}:
\begin{align*}
\E_{\substack{U\\(k_i,\ch_{k_i}),i\in U\\f_j\sim\mu,j\notin\{k_i\colon i\in U\}}}\left[\prod_{i\in U}\epsC_{G_{\ch_{k_i}}}(T)\right]
&=\E_{\substack{U\\k_i,i\in U}}\left[\E_{\ch_{k_i},i\in U}\left[\prod_{i\in U}\epsC_{G_{\ch_{k_i}}}(T)\right]\right]\\
&=\E_{\substack{U\\k_i,i\in U}}\left[\prod_{i\in U}\E_{\ch_{k_i}}\left[\epsC_{G_{\ch_{k_i}}}(T)\right]\right]
\tag{by independence}\\
&=\E_{\substack{U\\k_i,i\in U}}\left[\prod_{i\in U}\epsC_G(T)\right]=\E_U\left[\epsC_G(T)^{|U|}\right]
\tag{by \Cref{fct:opt_game_cond}}\\
&\le\left(\epsC_G(T)+\frac LK\right)^L.
\tag{by \Cref{lem:draw_with_repl}}
\end{align*}
Combining with \Cref{eq:thm:classical_salt_2} and \Cref{eq:thm:classical_salt_3}, this establishes \Cref{eq:thm:classical_salt_1} and thus concludes the proof of \Cref{thm:classical_salt}.
\end{proof}

\subsection{Strong Direct Product Theorems for Memoryless Algorithms}\label{sec:memoryless_sdpt}

In this section, we will present the strong direct product theorem we use in \Cref{clm:thm:classical_salt}. This theorem was proved within the context of classical complexity theory in a much earlier work by Nisan, Rudich, and Saks \cite{nisan1998products}. 
While we only became aware of their work after the completion of our paper, we do \emph{not} consider our analysis in \Cref{sec:memoryless_sdpt} as our technical contribution, and we prove the theorem here within our model for completeness.

We start by defining the direct product of games.

\begin{definition}[Direct Product of Games]\label{def:dp}
Let $G_1,\ldots, G_k$ be $k$ plain games specified by $\mu_1,\ldots, \mu_k$ and collections of predicates $\mc{R}_{1,f_1}, \ldots, \mc{R}_{k,f_k}$. 
The direct product of these games is denoted by $G^\times = G_1 \times G_2 \times \cdots \times G_k$ with $\mu^\times = \mu_1 \times \mu_2 \times \cdots \times \mu_k$ and predicate $\mc{R}^\times_{f_1, \ldots, f_k} = \mc{R}_{1, f_1} \times \cdots \times \mc{R}_{k, f_k}$. 
\end{definition}
The probability that a query algorithm $\As$ wins $G^\times$ is $\Pr\left[ \As^{f_1, \ldots, f_k} \in \mc{R}_{1, f_1} \times \cdots \times \mc{R}_{k, f_k} \right]$, where $\As$ can make queries to any $f_1,\ldots,f_k$.
The direct product theorem intuitively says that the winning probability of $kT$-query algorithms for $G^\times$ should be roughly the product of the winning probabilities of $T$-query algorithms for each $G_i$.

To capture the algorithms in \Cref{clm:thm:classical_salt}, we focus on memoryless algorithms for direct product of games. In these algorithms, the workspace is repeatedly refreshed, which forbids long-range computation.

\begin{definition}[Memoryless Algorithm]\label{def:memoryless}
Let $G^\times = G_1\times \cdots \times G_k$ be a direct product of games.
A query algorithm $\As$ for $G^\times$ is memoryless if it is deterministic\footnote{The assumption about being deterministic is not necessary as one can fix the randomness of the algorithm. However assuming so is easier for our analysis.} and there exist query algorithms $\As_1, \ldots, \As_k$ such that $\As^{f_1, \ldots, f_k} \equiv \As_1^{f_1, \ldots, f_k}, \ldots, \As_k^{f_1, \ldots, f_k}$. 
That is, $\As$ runs $\As_1, \ldots, \As_k$ sequentially for $\mathcal{R}_{1,f_1},\ldots,\mathcal{R}_{k,f_k}$ and does not share memory between two executions. 
Moreover, we say $\As$ is $(T_1,\ldots, T_k)$-memoryless if each $\As_i$ makes at most $T_i$ queries.
\end{definition}

Note that, while $(T_1,\ldots,T_k)$-memoryless algorithms are guaranteed to make at most $T_i$ queries for each predicate $\mathcal{R}_{i,f_i}$, the queries can be interlaced and highly dependent on the query history. 
Therefore, it is generally not the case that a $(T_1,\ldots,T_k)$-memoryless algorithm would simply execute a $T_1$-query algorithm on $f_1$, followed by a $T_2$-query algorithm on $f_2$, and so on. 
However, as shown in \Cref{thm:strong_memoryless_dpt}, this special strategy turns out to be optimal.

\begin{theorem}[Strong Direct Product Theorem; {\cite[Theorem 3.1]{nisan1998products}}]\label{thm:strong_memoryless_dpt}
Let $G^\times = G_1\times \cdots \times G_k$ be a direct product of games.
If $\As$ is a $(T_1,\ldots,T_k)$-memoryless algorithm for $G^\times$, then its winning probability is at most $\prod_{i=1}^k \epsC_{G_i}(T_i)$. 
\end{theorem}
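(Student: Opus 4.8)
The plan is to prove \Cref{thm:strong_memoryless_dpt} in two steps, exactly along the lines sketched in \Cref{sec:overview_c}: first establish the bound for a larger, more structured class of algorithms --- call them \emph{fair} algorithms --- and then show that any memoryless algorithm can be simulated by a fair one with the same output distribution. Call an algorithm for $G^\times = G_1 \times \cdots \times G_k$ \emph{$(T_1,\dots,T_k)$-fair} if it produces all $k$ outputs only at the very end, each allowed to depend on the whole query transcript, its queries to the $k$ oracles may be arbitrarily interleaved and adaptive, but it queries $f_i$ at most $T_i$ times for every $i$.

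I would prove that a fair algorithm wins $G^\times$ with probability at most $\prod_{i=1}^k \epsC_{G_i}(T_i)$ by induction on $T_1 + \cdots + T_k$, fixing the algorithm's randomness optimally so that it is deterministic. If the algorithm makes no queries --- which is forced in the base case $T_1=\cdots=T_k=0$ --- it outputs a fixed tuple $(a_1,\dots,a_k)$ and, by independence of $f_1,\dots,f_k$, wins with probability $\prod_i \Pr_{f_i \sim \mu_i}[a_i \in \mc{R}_{i,f_i}] \le \prod_i \epsC_{G_i}(0) \le \prod_i \epsC_{G_i}(T_i)$ (monotonicity of $\epsC_{G_i}$). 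Otherwise its first query is a fixed pair $(j,x)$; conditioning on the answer $y = f_j(x)$, the residual algorithm with $y$ hard-wired is $(T_1,\dots,T_j-1,\dots,T_k)$-fair for the product game in which $G_j$ is replaced by the plain game $G_j$ conditioned on $f_j(x)=y$, so the inductive hypothesis bounds its winning probability by $\epsC_{G_j \mid f_j(x)=y}(T_j-1) \cdot \prod_{i\ne j}\epsC_{G_i}(T_i)$. Averaging over $y$ and using $\E_y\bigl[\epsC_{G_j\mid f_j(x)=y}(T_j-1)\bigr] \le \epsC_{G_j}(T_j)$ --- witnessed by the $T_j$-query algorithm that queries $x$ first and then runs the optimum for the conditioned oracle --- closes the induction. (This step uses that the framework allows arbitrary oracle distributions $\mu_i$, so that conditioning stays inside it.)

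The second step is to simulate a given $(T_1,\dots,T_k)$-memoryless algorithm $\As \equiv (\As_1,\dots,\As_k)$ by a $(T_1,\dots,T_k)$-fair algorithm with the same output distribution. Since the components share no memory, the relative order in which they are advanced is irrelevant to their outputs, which frees us to schedule them cleverly. The simulator runs $\As_1,\dots,\As_k$ concurrently while maintaining, for each $j$, a partial view $V_j$ of $f_j$ (answers already revealed); a query inside $\mathrm{dom}(V_j)$ is answered for free. To resolve fresh queries, form the dependency graph on $[k]$ with an edge $i\to j$ whenever $\As_i$ is still running and has a fresh pending query to $f_j$: running components have out-degree one and terminated ones are sinks, so the graph always contains a directed cycle or a maximal directed path ending at a terminated node. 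Along a cycle $i_1\to\cdots\to i_\ell\to i_1$ we advance every $\As_{i_m}$ with a real query, which puts exactly one real query on each instance of the cycle; along a path $i_1\to\cdots\to i_\ell$ with $\As_{i_\ell}$ terminated we advance $\As_{i_1},\dots,\As_{i_{\ell-1}}$ with real queries, except that $\As_{i_{\ell-1}}$'s query into $f_{i_\ell}$ is answered by \emph{privately sampling} from the conditional distribution of $f_{i_\ell}$ given $V_{i_\ell}$ rather than issuing a real query --- this is sound because $\As_{i_\ell}$ has already terminated, so its already-fixed output cannot be disturbed, and a fresh value drawn from the correct conditional is distributionally identical to the real one. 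Tracking the invariant ``(real queries ever made to $f_i$) $\le$ (queries ever issued by $\As_i$)'', which is preserved by both the cycle step (balanced) and the path step, yields at most $T_i$ real queries to $f_i$, so the simulator is fair; combined with the first step this gives \Cref{thm:strong_memoryless_dpt}.

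I expect the main obstacle to be the bookkeeping in the second step: verifying that the cycle/path elimination is well defined and terminates, that the invariant bounding real queries is maintained across both cases, and that the private-sampling step preserves the joint distribution of outputs and oracle values (so that the winning probability is genuinely unchanged). The fair-algorithm induction, by contrast, is essentially routine once one commits to allowing arbitrary oracle distributions.
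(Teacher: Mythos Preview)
Your overall strategy matches the paper's: prove the bound for fair algorithms by induction on $T_1+\cdots+T_k$ (this is Shaltiel's argument, which the paper quotes as \Cref{thm:fair_algo}, and your sketch of it is correct), then reduce memoryless to fair via the cycle/path out-of-order scheduler (this is exactly the paper's \Cref{lem:reduction_memoryless_to_fair} and \Cref{lem:query_order}). The gap is in how you answer a query to $f_j$ after $\As_j$ has terminated.

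You propose to draw a fresh value from the conditional distribution of $f_j$ given the partial view $V_j$, claiming this is ``distributionally identical to the real one'' and hence preserves the joint distribution of outputs and oracles. It does preserve the marginal distribution of outputs, but it does \emph{not} preserve the joint distribution with the oracles, and therefore not the winning probability: the predicate $\mc{R}_{j,f_j}$ depends on the unqueried coordinates of the \emph{real} $f_j$, and in the true memoryless execution the later sub-algorithms' behavior can be correlated with the event $e_j\in\mc{R}_{j,f_j}$ through exactly those coordinates. Your fresh independent sample destroys that correlation. Concretely, take $k=2$, each oracle a single uniform bit, $\mc{R}_{i,f_i}=\{f_i(1)\}$, $(T_1,T_2)=(0,2)$; let $\As_1$ output $0$ with no queries, and let $\As_2$ query $f_1(1)$, then if the answer is $0$ query $f_2(1)$ and output it, else output $0$. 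The memoryless algorithm wins both games exactly when $f_1(1)=0$, i.e.\ with probability $1/2=\epsC_{G_1}(0)\cdot\epsC_{G_2}(2)$. In your simulator $\As_1$ terminates with $V_1=\emptyset$, so $\As_2$'s query to $f_1$ is answered by a fresh bit $s$ independent of $f_1$; the simulated winning probability is $\Pr[f_1(1)=0]\cdot\bigl(\tfrac12\cdot1+\tfrac12\cdot\tfrac12\bigr)=\tfrac38<\tfrac12$. The fair simulation is strictly \emph{worse} than the memoryless algorithm, so bounding it yields nothing. The paper's fix (\Cref{lem:simulate_fj}) is to replace sampling by an argmax: once $\As_j$ terminates with partial view $u$, fix a single $f_j'\lhd u$ that maximizes the remaining sub-algorithms' success probability, and use $f_j'$ for all future queries to slot $j$. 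Since the indicator $[e_j\in\mc{R}_{j,f_j}]$ is the same for every $f_j\lhd u$ seen by the rest of the execution, replacing the average over $f_j$ by this maximum can only raise the overall winning probability, and the reduction goes through.
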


Given \Cref{thm:strong_memoryless_dpt}, we can easily deduce \Cref{clm:thm:classical_salt}.

\begin{proof}[Proof of \Cref{clm:thm:classical_salt}]
With $(U,\{(k_i,\ch_{k_i})\}_i,\{f_j\}_j)$ fixed, each $\Bs^{f_1,\ldots,f_K}(k_i,\ch_{k_i})$ is a deterministic $T$-query algorithm on $\{f_{k_i}\}_i$.
Overall, they form a $(T,\ldots,T)$-query algorithm to solve challenge-conditioned games $\{G_{\ch_{k_i}}\}_i$ in a memoryless way.
Hence the desired bound follows from \Cref{thm:strong_memoryless_dpt}.
\end{proof}

To prove \Cref{thm:strong_memoryless_dpt}, we will convert memoryless algorithms to another special type of algorithms for which strong direct product theorems are easier to establish.

\begin{definition}[Fair Algorithms]\label{def:fair_algo}
Let $G^\times = G_1\times \cdots \times G_k$ be a direct product of games. We say a query algorithm $\As$ for $G^\times$ is $(T_1, \ldots, T_k)$-fair if it is deterministic\footnote{The assumption about being deterministic is not necessary as one can fix the randomness of the algorithm. However assuming so is easier for our analysis.} and makes at most $T_i$ queries to each $f_i$ in any execution. 
\end{definition}

We emphasize that a $(T_1,\ldots,T_k)$-fair algorithm does not need to be memoryless.
Moreover, the queries to different oracles can be mixed up as well.
Therefore, fair algorithms and memoryless algorithms are incomparable.
By an induction on the total number of queries, one can prove a sharp direct product theorem for fair algorithms \cite{shaltiel2003towards}.

\begin{theorem}[{\cite[Theorem 5.2]{shaltiel2003towards}}]\label{thm:fair_algo}
Let $G^\times = G_1\times \cdots \times G_k$ be a direct product of games.
If $\As$ is a $(T_1, \dots, T_k)$-fair algorithm for $G^\times$, then its winning probability is at most $\prod_{i=1}^k \epsC_{G_i}(T_i)$.
\end{theorem}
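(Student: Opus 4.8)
The plan is to prove \Cref{thm:fair_algo} by induction on the total query budget $T_1+\cdots+T_k$, following the round-elimination strategy of Shaltiel: peel off a single query at a time and charge it against the budget of exactly the game it touches. As the base case — and, more generally, to handle any $\As$ that makes no oracle queries at all — I would note that such an $\As$ is forced to output a fixed tuple $(a_1,\ldots,a_k)$, so by independence of $f_1\sim\mu_1,\ldots,f_k\sim\mu_k$ its winning probability factors as $\prod_{i=1}^k\Pr_{f_i\sim\mu_i}[a_i\in\mc R_{i,f_i}]\le\prod_{i=1}^k\epsC_{G_i}(0)\le\prod_{i=1}^k\epsC_{G_i}(T_i)$, using monotonicity of $\epsC_{G_i}(\cdot)$ in the number of queries.

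For the inductive step I would assume $\As$ makes at least one query. Since $\As$ is deterministic and has seen nothing yet, its first query is a fixed pair $(j,x)$ with $x\in[M]$, and fairness forces $T_j\ge1$. I would then condition on the answer $f_j(x)=y$. Writing $\mu_j^{x\mapsto y}$ for $\mu_j$ conditioned on $f_j(x)=y$ and $G_j^{x\mapsto y}$ for the plain game $(\mu_j^{x\mapsto y},\mc R_{j,f_j})$, two facts need to be checked: first, conditioning on $f_j(x)=y$ keeps $f_1,\ldots,f_k$ independent and only replaces $\mu_j$ by $\mu_j^{x\mapsto y}$; second, after consuming its first query, $\As$ with $y$ hard-wired becomes a deterministic $(T_1,\ldots,T_j-1,\ldots,T_k)$-fair algorithm for the product game $G_j^{x\mapsto y}\times\prod_{i\ne j}G_i$ (in every execution one of its at most $T_j$ queries to $f_j$ has already been spent, and no other budget was touched). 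Since the total budget has dropped by one, the induction hypothesis bounds the conditional winning probability of $\As$ by $\epsC_{G_j^{x\mapsto y}}(T_j-1)\cdot\prod_{i\ne j}\epsC_{G_i}(T_i)$, and averaging over $y$ bounds the overall winning probability by $\bigl(\sum_y\Pr_{f_j\sim\mu_j}[f_j(x)=y]\cdot\epsC_{G_j^{x\mapsto y}}(T_j-1)\bigr)\cdot\prod_{i\ne j}\epsC_{G_i}(T_i)$.

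To close the induction it then suffices to establish the averaging inequality $\sum_y\Pr_{f_j\sim\mu_j}[f_j(x)=y]\cdot\epsC_{G_j^{x\mapsto y}}(T_j-1)\le\epsC_{G_j}(T_j)$, where terms with $\Pr_{f_j\sim\mu_j}[f_j(x)=y]=0$ are simply dropped: any candidate $T_j$-query algorithm for $G_j$ may spend its first query on $x$, read $y=f_j(x)$, and then run the optimal $(T_j-1)$-query algorithm for $G_j^{x\mapsto y}$, so the left-hand side is the winning probability of one particular $T_j$-query strategy and is therefore at most the optimum $\epsC_{G_j}(T_j)$. I expect this averaging step — together with the routine bookkeeping that conditioning on the first answer preserves both independence and the fairness bound — to be the only place requiring real care; it is exactly where fairness earns its keep, since it lets the first query be billed against a single game's budget rather than the global one. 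Everything else (monotonicity of $\epsC$, the zero-probability terms, and the harmless possibility that $\As$ re-queries $x$ later) is routine.
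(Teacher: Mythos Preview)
Your proposal is correct and is precisely the induction argument the paper has in mind: the paper does not give its own proof of \Cref{thm:fair_algo} but cites Shaltiel and remarks (in the overview and after the theorem statement) that one inducts on the tuple $(T_1,\ldots,T_k)$, peeling off one query at a time and charging it to the unique game it touches. Your averaging inequality $\sum_y\Pr_{f_j}[f_j(x)=y]\cdot\epsC_{G_j^{x\mapsto y}}(T_j-1)\le\epsC_{G_j}(T_j)$ is exactly the step that makes this work, and your justification for it is the right one.
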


We remark that \cite[Theorem 5.2]{shaltiel2003towards} proves for the case where $T_1=\cdots=T_k$ and each $G_i$ is a decision game (i.e., each $\mathcal{R}_{i,f_i}$ is a singleton set of a binary bit). However the proof can be extended to our setting with minimal changes.
We also note that Drucker \cite{drucker2012improved} gives an alternative proof using martingale analysis, which generalizes beyond fair algorithms with a necessary loss on the bounds.

In light of \Cref{thm:fair_algo} and to prove \Cref{thm:strong_memoryless_dpt}, it suffices to present a reduction from memoryless algorithms to fair algorithms which we present below. Note that \Cref{thm:strong_memoryless_dpt} follows immediately by the combination of \Cref{lem:reduction_memoryless_to_fair} and \Cref{thm:fair_algo}.

\begin{lemma}\label{lem:reduction_memoryless_to_fair}
Let $G^\times = G_1\times \cdots \times G_k$ be a direct product of games.
Let $\As$ be any $(T_1, \ldots, T_k)$-memoryless algorithm for $G^\times$. Then there exists a $(T_1, \ldots, T_k)$-fair algorithm $\Bs$ with winning probability at least the winning probability of $\As$. 
\end{lemma}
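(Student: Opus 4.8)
The plan is to realize the out-of-order simulation sketched in the technique overview. Fix a $(T_1,\dots,T_k)$-memoryless algorithm $\As$, given by its component algorithms $\As_1,\dots,\As_k$, each of which we may assume is deterministic (and, after fixing randomness, so is $\As$ itself). The key structural fact is that $\As_i$ is executed on a clean workspace: its behavior depends only on the answers to the queries it makes, and, crucially, the separate executions $\As_1,\dots,\As_k$ commute because they share no state. This lets us interleave them in \emph{any} order and still obtain the same transcript and the same final outputs; in particular, the joint output distribution (over the oracles $f_1,\dots,f_k$) is unchanged by reordering the queries of the various $\As_i$.

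First I would build the fair algorithm $\Bs$ as follows. It maintains the internal states of all $k$ sub-algorithms $\As_1,\dots,\As_k$, advancing them in a carefully chosen schedule. At each step, for every $i$ such that $\As_i$ has not yet halted, let $j(i)$ be the index of the oracle that $\As_i$ is about to query; draw a directed edge $i \to j(i)$. In the resulting functional graph (out-degree at most one on the active vertices) there is always either a directed cycle or a maximal directed path ending at a vertex that is a sink (i.e. has already halted). In the cycle case $i_1\to i_2\to\cdots\to i_\ell\to i_1$, advance each $\As_{i_t}$ by its pending query; the query to $x^{i_{t+1}}$ (indices mod $\ell$) is ``charged'' to $i_{t+1}$, and since each active vertex receives exactly one incoming charge in the cycle, this costs one real query to each of $x^{i_1},\dots,x^{i_\ell}$. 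In the path case $i_1\to\cdots\to i_\ell$ with $\As_{i_\ell}$ already halted, advance $\As_{i_1},\dots,\As_{i_{\ell-1}}$; the queries of $i_1,\dots,i_{\ell-2}$ are charged to $i_2,\dots,i_{\ell-1}$ respectively, and the query of $\As_{i_{\ell-1}}$ to $x^{i_\ell}$ is \emph{simulated privately}: since $\As_{i_\ell}$ has halted, $\Bs$ may compute $\As_{i_\ell}$'s entire transcript at will, but more to the point, $\Bs$ needs the value $f_{i_\ell}(\text{that point})$ — this is where I must be careful, see below. Repeat until all sub-algorithms halt, then output the tuple of their outputs.

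The accounting I would then verify: over the whole execution, each oracle $x^{i}$ is queried at most $T_i$ times by $\Bs$. The point is that a real query to $x^i$ is issued only when some active vertex $i'$ points at $i$ (either around a cycle, or as the non-terminal ``charge'' along a path), and in that event $\As_i$ \emph{itself} is also advanced by one of its own queries — so real queries to $x^i$ are in bijection with steps in which $\As_i$ makes progress, of which there are at most $T_i$. The private-simulation case is exactly the case where $\As_{i_{\ell-1}}$ wants $x^{i_\ell}$ but $\As_{i_\ell}$ is done and hence is \emph{not} advanced; that query is not charged to $x^{i_\ell}$'s real-query budget. Finally, because the schedule is just a reordering of the queries each $\As_i$ would have made anyway, and because the sub-algorithms are memoryless (commuting), the sequence of outputs produced by $\Bs$ on any fixed $(f_1,\dots,f_k)$ equals that produced by $\As$; hence $\Bs$ wins $G^\times$ on exactly the same oracle tuples as $\As$, giving equal winning probability, which is even stronger than ``at least.''

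The main obstacle is the private simulation step: when $\As_{i_{\ell-1}}$ queries $x^{i_\ell}$ and $\As_{i_\ell}$ has halted, $\Bs$ still needs the \emph{actual oracle value} $f_{i_\ell}$ at the queried point, not merely the fact that $\As_{i_\ell}$ finished — so naively $\Bs$ cannot avoid a real query to $x^{i_\ell}$. The resolution is to observe that $\As_{i_\ell}$, being deterministic and memoryless, has a fixed query set $Q_{i_\ell}\subseteq[M]$ (of size $\le T_{i_\ell}$) that $\Bs$ learns while running $\As_{i_\ell}$ to completion \emph{at the time it halts}; $\Bs$ caches the answers on $Q_{i_\ell}$. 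If the point $\As_{i_{\ell-1}}$ now wants lies in $Q_{i_\ell}$, the cache answers it for free. If it lies outside $Q_{i_\ell}$, then a real query is made — but then I must recheck the charging: this query to $x^{i_\ell}$ happens while $\As_{i_{\ell-1}}$ advances, and one has to argue it is absorbed into $x^{i_\ell}$'s budget because such ``outside'' queries, summed over the run, are matched against the unused portion $T_{i_\ell}-|Q_{i_\ell}|$ of $i_\ell$'s own allowance. Making this last bookkeeping airtight — i.e. defining the charging map as a genuine injection from $\Bs$'s real queries on $x^i$ into a set of size $T_i$, covering all three cases uniformly — is the delicate part, and I would present it via an explicit potential/counter argument on each coordinate separately rather than through the graph picture alone.
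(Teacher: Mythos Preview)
Your proposal has a genuine gap, and you have correctly located it yourself: the ``private simulation'' step. Your proposed resolution---cache the values of $f_{i_\ell}$ already seen and charge fresh queries to the unused portion $T_{i_\ell}-|Q_{i_\ell}|$ of $\As_{i_\ell}$'s budget---does not work. Consider $k=2$, $T_1=1$, $T_2=2$, with $\As_1$ querying $f_1(1)$ and halting, and $\As_2$ querying $f_1(2)$ then $f_1(3)$ and halting. In the first round the only cycle is the self-loop $1\to1$, so $\As_1$ advances, uses its single query on $f_1(1)$, and halts with $t_1=q_1=T_1=1$. There is no leftover budget at coordinate~$1$. Now $\As_2$ needs two \emph{fresh} values $f_1(2),f_1(3)$, and your cache contains only $f_1(1)$. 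Any faithful simulation of $\As$ therefore makes three real queries to $f_1$, violating $(T_1,T_2)$-fairness. More generally, the memoryless algorithm can learn far more about $f_i$ than a fair algorithm ever could, so one cannot hope to reproduce $\As$'s outputs exactly; your claim that $\Bs$ wins on \emph{exactly} the same oracle tuples as $\As$ is too strong to be true.

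The missing idea is that once $\As_j$ halts, you should \emph{abandon} the real $f_j$ entirely and replace it by a single fixed function $f'_j$ consistent with the partial assignment learned so far, chosen to maximize the remaining algorithms' joint success probability. The point is that $\As_j$'s output $e_j$ is already determined, and whether $e_j\in\mathcal{R}_{j,f_j}$ still depends on the (random) true $f_j$; but the \emph{other} $\As_i$'s future behavior, and hence their success on their own coordinates, depends on $f_j$ only through the values they will query. Averaging over the true $f_j$ (conditioned on the known partial assignment) and then replacing it by the single best consistent $f'_j$ can only increase the product of the remaining success indicators. This is exactly \Cref{lem:simulate_fj} in the paper. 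With this substitution, every post-halt query to coordinate $j$ costs zero real queries, the cycle/path charging you describe goes through verbatim (\Cref{lem:query_order}), and the resulting $\Bs$ is $(T_1,\ldots,T_k)$-fair with winning probability \emph{at least} (not equal to) that of $\As$.
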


To prove \Cref{lem:reduction_memoryless_to_fair}, we will frequently use the notion of partial assignments.

\begin{definition}[Partial assignment]\label{def:partial_assignment}
Any string $u \in ([M] \cup \{*\})^N$ is a partial assignment for functions of range $[M]$ and domain $[M]$.
For a function $f\colon[M] \to [N]$, we say it is consistent with $u$ (or $f \lhd u$) if $f(x) = u_x$ holds for any $x$ that $u_x\neq*$.
\end{definition}

Now we present the reduction and prove \Cref{lem:reduction_memoryless_to_fair}.

\begin{proof}[Proof of \Cref{lem:reduction_memoryless_to_fair}]
Assume without loss of generality that the oracle of each game $G_i$ is a function from $[M]$ to $[N]$.
Recall \Cref{def:memoryless}. There exists $k$ algorithms $\As_1, \ldots, \As_k$, each making at most $T_1,\ldots,T_k$ queries respectively and $\As$ executes as follows:

\begin{mdframed}[]
    \textbf{Algorithm:} The memoryless algorithm $\As$
    \begin{itemize}
        \item $\As$ executes the algorithm one by one:
        \begin{enumerate}
            \item It runs $\As_1$ and obtains an output $e_1$;
            \item It runs $\As_2$ and obtains an output $e_2$;
            \item[]  $\cdots$
            \item[k.] It runs $\As_k$ and obtains an output $e_k$.
        \end{enumerate}

        \item Eventually, $\As$ outputs $e_1,\ldots, e_k$. 
    \end{itemize}
\end{mdframed}

    Since all the algorithms $\As_1, \ldots, \As_k$ commute with each other, $\As$ can run an ``out-of-order execution'' of them, so long as the query order within the same algorithm $\As_i$ does not change. 
    More precisely, the following algorithm $\widetilde{\As}$ has the same output distribution as $\As$:

\begin{mdframed}
    \textbf{Algorithm:} The ``out-of-order execution'' algorithm $\widetilde{\As}$
    \begin{itemize}
    \item $\widetilde{\As}$ initializes $\As_1, \ldots, \As_k$, each $\As_i$ makes no query yet.
    \item As long as there exists $j \in [k]$ such that $\As_j$ has not terminated yet.
        \begin{enumerate}
            \item $\widetilde{\As}$ picks a subset $\mc{J} \subseteq [k]$, where $\As_j$ has not terminated for every $j \in \mc{J}$. 
            \item For each $j\in\mc J$, it runs $\As_j$ by one query further.
        \end{enumerate}
    \item $\widetilde{\As}$ outputs the results $e_1,\ldots,e_k$ of $\As_1, \ldots, \As_k$. 
    \end{itemize}
\end{mdframed}
The subset $\mc J$ constructed by $\widetilde{\As}$ at Step 1 can be arbitrary and adaptive. Before presenting its construction, we note the following important observation.

\begin{lemma}\label{lem:simulate_fj}
Assume $\As_j$ terminates at some stage of $\widetilde{\As}$. Let $u$ be the partial assignment of $f_j$ that $\widetilde{\As}$ learned so far. Consider the following two cases:
\begin{enumerate}
    \item \textbf{The real execution.} $\widetilde{\As}$ finishes the rest of its execution with the real $f_j$. Let the winning probability in the case be $p$. 
    \item \textbf{The simulated execution.} Given $u$, $\widetilde{\As}$ picks some function $f'_j$ consistent with $u$ and finishes the rest of its execution with the fake $f'_j$. Let the winning probability in the case be $p'$. 
\end{enumerate}
Then there always exists a choice of $f'_j$ such that $p' \geq p$. 
\end{lemma}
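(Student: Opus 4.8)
The plan is to fix the transcript of $\widetilde{\As}$ up to the moment $\As_j$ halts. At that point the partial assignment $u$ and the output $e_j$ of $\As_j$ are both determined. Since the oracles of $G_1,\dots,G_k$ are independent, conditioning on this transcript leaves the true $f_j$ uniform among functions consistent with $u$, independently of the (conditioned) remaining oracles, which I denote $\vec f_{-j}$; and since $\As$ (hence $\widetilde{\As}$) is deterministic, the remainder of each execution — including any further private simulations that may be triggered later when other algorithms halt — is a deterministic function of the underlying oracles. I will then compare the two executions by an averaging argument over the finitely many choices of $f'_j\lhd u$.

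First I would analyze the simulated execution. Every query that $\widetilde{\As}$ makes to $f_j$, past or future, is answered consistently with $f'_j$ (past queries because $f'_j\lhd u$ and $u$ records those answers, future queries by construction), so the outputs $e_i$ for $i\neq j$ together with all subsequent predicate checks against $\vec f_{-j}$ form a deterministic function of $\vec f_{-j}$ alone, with $f'_j$ baked in as a fixed parameter. Whether game $j$ is won, on the other hand, is the event $\{e_j\in\mc R_{j,f_j}\}$ with $f_j$ the \emph{true} oracle, depending only on $f_j$. As these depend on the independent quantities $f_j$ and $\vec f_{-j}$ respectively, the simulated winning probability factors:
\[
p'\;=\;\alpha\cdot\beta(f'_j),
\]
where $\alpha:=\Pr_{f_j\lhd u}\big[e_j\in\mc R_{j,f_j}\big]$ and, for $g\lhd u$, $\beta(g):=\Pr_{\vec f_{-j}}\big[\text{every game }i\neq j\text{ is won when }f_j\text{-queries are answered using }g\big]$.

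For the real execution the same bookkeeping applies, except the remaining algorithms now query the true $f_j$, so the event ``every game $i\neq j$ won'' depends on both $f_j$ and $\vec f_{-j}$; conditioning on $f_j$ it equals $\beta(f_j)$, i.e.\ $\beta$ evaluated at the true oracle (answering $f_j$-queries with the true $f_j$ is exactly the $g=f_j$ case). Hence $p=\E_{f_j\lhd u}\big[\mathbbm{1}[e_j\in\mc R_{j,f_j}]\cdot\beta(f_j)\big]\le\big(\max_{f_j\lhd u}\beta(f_j)\big)\cdot\alpha$, using $\E[XY]\le(\max Y)\,\E[X]$ for $X\ge0$. Choosing $f'_j$ to attain $\max_{f_j\lhd u}\beta(\cdot)$ — which exists since only finitely many functions are consistent with $u$ — gives $p'=\alpha\,\beta(f'_j)\ge p$, as required.

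The crux, and the step I expect to require the most care to state precisely, is this decoupling: in the real world one \emph{cannot} replace $f_j$ by a best-case value, because the other algorithms' queries to $f_j$ correlate their outputs (hence the other predicates) with $f_j$; it is only after $\As_j$ has halted and we commit to answering all further $f_j$-queries by a fixed $f'_j$ that ``win game $j$'' becomes independent of everything else, which is precisely what licenses the maximizer step. A secondary point to handle cleanly is that ``the rest of the execution'' may contain nested simulations of other oracles $f_{j'}$ triggered when $\As_{j'}$ later halts; this is harmless since each such simulation function is chosen deterministically from the transcript, so the continued computation remains a deterministic function of $(f'_j,\vec f_{-j})$, and only the true $f_{j'}\in\vec f_{-j}$ is ever needed to evaluate predicate $j'$.
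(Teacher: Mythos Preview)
Your proposal is correct and follows essentially the same route as the paper: both arguments fix the transcript (hence $e_j$ and $u$), write the winning probability as $\E_{f_j\lhd u}\big[\mathbbm{1}[e_j\in\mc R_{j,f_j}]\cdot\beta(f_j)\big]$ versus $\alpha\cdot\beta(f'_j)$, and then choose $f'_j=\arg\max_{g\lhd u}\beta(g)$. Your discussion of independence and of nested future simulations is a bit more explicit than the paper's, but the core decomposition and maximizer step are identical.
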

\begin{proof}[Proof of \Cref{lem:simulate_fj}]
Without loss of generality assume $j = 1$. Since $\As_1$ already terminates, its outcome $e_1$ is fixed. We have
\begin{align*}
    p = \sum_{f_1 \lhd u} [e_1 \in \mathcal{R}_{1, f_1}] \cdot \Pr_{f \sim \mu_1^{(u)}}[f = f_1] \cdot \Pr_{f_2, \ldots, f_k}\left[ \As_2, \ldots, \As_k \text{ win} \middle| f_1 \right],
\end{align*}
where $[e_1 \in \mathcal{R}_{1, f_1}]$ is the binary indicator.
On the other hand, we have
\begin{align*}
    p' = \sum_{f_1 \lhd u} [e_1 \in \mathcal{R}_{1, f_1}] \cdot \Pr_{f \sim \mu_1^{(u)}}[f = f_1]  \cdot \Pr_{f_2, \ldots, f_k}\left[ \As_2, \ldots, \As_k \text{ win} \,|\, f'_1 \right]. 
\end{align*}
Therefore, we can pick $f'_1 := {\sf argmax}_{f_1 \lhd u} \Pr_{f_2, \ldots, f_k}\left[ \As_2, \ldots, \As_k \text{ win} \,|\, f_1 \right]$ and then $p' \geq p$. 
\end{proof}

With the above observation, we now consider the following algorithm $\Bs$ that is almost identical to $\widetilde{\As}$ except that whenever some $\As_j$ terminates, it picks a function $f_j'$ consistent with the partial assignement learned so far and runs the rest of the protocol using $f'_j$.
By \Cref{lem:simulate_fj}, $\Bs$ wins with probability at least that of $\widetilde{\As}$. 

\begin{mdframed}
    \textbf{Algorithm:} The algorithm $\Bs$
    \begin{itemize}
    \item $\Bs$ initializes $\As_1, \ldots, \As_k$, each $\As_i$ makes no query yet.
    \item As long as there exists $j \in [k]$ such that $\As_j$ has not terminated yet.
        \begin{enumerate}
            \item $\Bs$ picks a subset $\mc{J} \subseteq [k]$, where $\As_j$ has not terminated for every $j \in \mc{J}$. 
            \item For every $j \in \mc{J}$, it runs $\As_j$ by one query further.
            
            If $\As_j$ terminates, $\Bs$ picks $f'_j$ (as in \Cref{lem:simulate_fj}). In the rest of the execution, whenever $f_j$ is queried, $\Bs$ uses $f'_j$. 
        \end{enumerate}
    \item $\Bs$ outputs the results of $\As_1, \ldots, \As_k$. 
    \end{itemize}
\end{mdframed}

Finally, we show that there exists a way of picking $\mc{J}$ such that at every stage and for each $i\in[k]$, the number of queries to $f_i$ is at most the number of queries made by $\As_i$.
This is in general not true, for instance $\As_1$ and $\As_2$ can make all their $T_1+T_2$ queries to $f_1$. 
However by the reduction above and the memoryless property, we can now rearrange queries by carefully designing $\mc J$ to avoid these cases.

\begin{lemma}\label{lem:query_order}
    There exists a strategy of picking $\mc{J}$ such that at any stage and for each $i\in[k]$, we have $t_i\le q_i$, where $q_i$ is the number of queries made by $\As_i$ and $t_i$ be the number of queries made to (the real) $f_i$.
\end{lemma}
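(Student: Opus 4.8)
The plan is to design the scheduling of $\mc{J}$ through a ``demand graph'' and then maintain the invariant $t_i\le q_i$ for all $i\in[k]$ by induction on the rounds of $\Bs$. At any stage of $\Bs$, each $\As_i$ has either terminated or, being deterministic (recall \Cref{def:memoryless}), has a well-defined next query, to $f_{j(i)}$ say. Form the directed graph $H$ on vertex set $[k]$ with an edge $i\to j(i)$ for every $i$ whose algorithm $\As_i$ has not yet terminated; terminated algorithms are sinks with no outgoing edge. Provided at least one $\As_i$ is still running, I will first observe that $H$ contains either (a) a directed cycle all of whose vertices are non-terminated, or (b) a directed path $i_1\to i_2\to\cdots\to i_\ell$ with $\ell\ge 2$ such that $i_1,\ldots,i_{\ell-1}$ are non-terminated and $i_\ell$ has terminated. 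This is immediate: start from any non-terminated vertex and repeatedly follow the unique out-edge; since $[k]$ is finite, the walk either revisits a vertex --- giving a cycle, which cannot pass through a sink, hence case (a) --- or reaches a terminated vertex, hence case (b).

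Given this dichotomy, $\Bs$ picks $\mc{J}$ as follows. In case (a), let $\mc{J}$ be the vertex set of the cycle, and advance each $\As_{i_t}$ on it by one query, which is then a query to $f_{i_{t+1}}$ (indices taken cyclically). In case (b) --- used only when no cycle is available --- let $\mc{J}=\{i_1,\ldots,i_{\ell-1}\}$ and advance each of these by one query, so $\As_{i_t}$ queries $f_{i_{t+1}}$; the last of these, the query of $\As_{i_{\ell-1}}$ to $f_{i_\ell}$, is answered by the simulated oracle $f'_{i_\ell}$, which is legitimate precisely because $\As_{i_\ell}$ has already terminated, exactly as in \Cref{lem:simulate_fj}. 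In either case at least one algorithm makes one more query, while the total number of queries is bounded by $\sum_i T_i$, so the process halts with every $\As_i$ terminated.

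For the invariant, note $t_i=q_i=0$ initially and consider one round. In case (a), every vertex $i_t$ on the cycle has $q_{i_t}$ increased by one (it made a query) and $t_{i_t}$ increased by one (its cycle-predecessor $\As_{i_{t-1}}$, being non-terminated, made a \emph{real} query to $f_{i_t}$); all other counters are unchanged, so $t_i\le q_i$ is preserved. In case (b), $q_{i_t}$ increases by one for $t=1,\ldots,\ell-1$; meanwhile $t_{i_t}$ increases by one for $t=2,\ldots,\ell-1$ (a real query from the non-terminated $\As_{i_{t-1}}$), while $t_{i_1}$ is unchanged and $t_{i_\ell}$ is unchanged because that query was simulated. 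Hence for $t=2,\ldots,\ell-1$ both sides go up by one, for $t=1$ only $q$ goes up, and every other counter is untouched, so again $t_i\le q_i$ is preserved. Degenerate cases --- a self-loop $i\to i$, a length-one cycle, or a length-two path $i_1\to i_2$ with $i_2$ terminated --- are covered by these same computations. Since the invariant holds at every stage, the lemma follows; combined with $q_i\le T_i$ it shows $\Bs$ never makes more than $T_i$ real queries to $f_i$, i.e.\ $\Bs$ is $(T_1,\ldots,T_k)$-fair.

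I expect the step requiring the most care to be verifying that the dichotomy is genuinely exhaustive and that, in the path case, the single ``unbalanced'' query --- the one not matched by an incoming real query --- always lands on an already-terminated algorithm and can therefore be simulated without touching the real oracle; this is exactly what prevents the $t_i$ side of the invariant from overshooting. The remaining parts, termination and the per-round bookkeeping, are routine.
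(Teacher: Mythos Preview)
Your proof is correct and follows essentially the same approach as the paper's: build the demand graph from pending queries, pick a cycle if one exists (so every advancing algorithm's oracle also receives a matching incoming query), and otherwise pick a path whose terminal query lands on an already-terminated algorithm and is therefore simulated. The only cosmetic difference is that you include terminated algorithms as sinks in the graph, whereas the paper restricts the vertex set to non-terminated algorithms; this shifts the path indexing by one (your $i_\ell$ is terminated, the paper's $r_\ell$ is the last non-terminated vertex whose query target $v_\ell$ is terminated), but the scheduling and the invariant bookkeeping are identical.
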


A direct consequence of the above lemma is that, when we use the strategy to pick $\mc{J}$ for $\Bs$, it becomes a fair algorithm since $t_i \leq q_i \leq T_i$.

\begin{proof}[Proof of \Cref{lem:query_order}]
    We prove the statement by induction. It is obviously true at the initialization stage where $t_i = q_i = 0$ for every $i \in [k]$. 
    
    For inductive hypothesis, assume that each $\As_i$ already makes $q_i$ queries and each $f_i$ gets queried $t_i$ times with $t_i \leq q_i$. 
    For the next stage, we define a directed graph $(V, E)$ as follows:
    \begin{itemize}
        \item Let $V\subseteq[k]$ be the set of $\As_i$'s that has not terminated yet. 
        \item For every $r \in V$,
        \begin{itemize} 
        \item if $A_r$ makes the next query to $f_v$ and $v \in V$, then a direct edge $\langle r, v \rangle \in E$;
        \item otherwise $A_r$ makes the next query to $f_v$ and $v \not\in V$, then no edge will be added. 
        
        Since $v \not\in V$ means $\As_v$ already terminates and thus every query to $f_v$ will be directed to the simulated oracle $f'_v$ constructed by \Cref{lem:simulate_fj}.
        \end{itemize}
    \end{itemize}

    Observe that every node in the graph has out-degree at most one and $V\neq\emptyset$. 
    We complete the proof of \Cref{lem:query_order} by the following case analysis.

    \paragraph*{There exists as a directed cycle.} 
    If $r_1\to r_2\to\cdots\to r_\ell\to r_1$ forms a directed cycle, then we define $\mc{J} = \{r_1,\ldots, r_\ell\}$. Therefore each $\As_{r_i}$ will make a query and each function $f_{r_i}$ gets one more query. The induction hypothesis holds. 

    \paragraph*{There is no directed cycle.}
    Then there exists a direct path $r_1\to r_2\to\cdots\to r_\ell$ where $\ell\ge1$.
    Assume $A_{r_\ell}$ makes the next query to $f_{v_\ell}$.
    Since $r_\ell$ is the endpoint of the path, we know that $\As_{v_\ell}$ already terminates and we have replaced $f_{v_\ell}$ with the simulated oracle $f_{v_\ell}'$ by \Cref{lem:simulate_fj}.
    
    Now let $\mc{J} = \{r_1, \ldots, r_\ell\}$. Each $\As_{r_i}$ makes a new query, and functions $f_{r_2}, \ldots, f_{r_\ell}$ get one more query respectively.
    Note that $f_1$ is not queried and the query of $\As_{r_\ell}$ is simulated by $f_{v_\ell}'$ without querying the actual oracles. Thus, the induction hypothesis holds. 
\end{proof}

By \Cref{lem:query_order}, the algorithm $\Bs$ is indeed $(T_1,\ldots,T_k)$-fair.
Then \Cref{lem:simulate_fj} guarantees that $\Bs$ has at least the winning probability as that of $\As$, which completes the proof of \Cref{lem:reduction_memoryless_to_fair}.
\end{proof}

\subsection{Improved Estimates for Games with Numerous Challenges}\label{sec:classical_salt_refine}

In this section, we provide an alternative way of analyzing the RHS of \Cref{eq:thm:classical_salt_2}, which implies improved bounds for games with numerous challenges.

For a game with challenge, we define the multi-challenge version of it by asking the algorithm to solve multiple independent challenges.
Note that the repetition here is on challenges instead of salts.

\begin{definition}[Multi-Challenge Game]\label{def:multiinstance}
Let $G$ be a game specified by $(\mu,\{\pi_f\}_f,\{\mathcal{R}_{f,\ch}\}_{f,\ch})$.
For each $n\in\mathbb N$, we define the multi-challenge game $G^n$ as follows:
\begin{itemize}
\item The oracle distribution is the same $\mu$.
\item For each possible oracle $f$ from $\mu$, the challenge $\ch=(\ch^1,\ldots,\ch^n)$ is produced by sampling independent $\ch^1,\ldots,\ch^n\sim\pi_f$.
\item For each possible oracle $f$ and challenge $\ch=(\ch^1,\ldots,\ch^n)$, the accepting outcome set $\mathcal{R}_{f,\ch}=\mathcal{R}_{f,\ch^1}\times\cdots\times\mathcal{R}_{f,\ch^n}$.

In other words, the algorithm needs to output accepting configuration for each challenge.
\end{itemize}
We remark that if $n=0$, then the multi-challenge game $G^0$ is a trivial game.
\end{definition}

Note that the definition of optimal winning probabilities carries over.
In particular, $\epsC_{G^n}(nT)$ refers to the optimal winning probability of a $nT$-query classical algorithm for the multi-challenge game $G^n$.
Oftentimes, $\epsC_{G^n}(nT)^{1/n}$ is a good upper bound for the non-uniform security of $G$ against algorithms with $n$-bit advice and $T$ queries (see e.g., \cite{impagliazzo2010constructive,impagliazzo2011relativized,gravin2021concentration}).

Using this notion, we provide the following variant of \Cref{thm:classical_salt} and then instantiate it on the function / permutation inversion problem (see e.g., the survey by Corrigan-Gibbs and Kogan \cite{corrigan2019function}).

\begin{theorem}\label{thm:classical_salt_mult}
Let $G$ be a game and let $G_K$ be the salted game for $K\in\mathbb{Z}_+$.
Then for any $S,T\in\mathbb N$, we have
$$
\epsC_{G_K}(S,T)\le\min_{L\in\mathbb{Z}_+}2^{S/L}\cdot\left(\frac{L!}{K^L}\sum_{\substack{n_1,\ldots,n_K\in\mathbb N\\n_1+\cdots+n_K=L}}\prod_{k=1}^K\frac{\epsC_{G^{n_k}}(n_kT)}{n_k!}\right)^{1/L}.
$$
\end{theorem}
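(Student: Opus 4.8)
The plan is to retrace the proof of \Cref{thm:classical_salt} verbatim up through \Cref{eq:thm:classical_salt_2}, which already reduces the non-uniform bound to a uniform direct-product bound: we obtain
$$
\epsC_{G_K}(S,T)\le 2^{S/L}\cdot\Pr_{\substack{f_1,\ldots,f_K\\(k_i,\ch_{k_i}),i\in[L]}}\left[\Bs^{f_1,\ldots,f_K}(k_i,\ch_{k_i})\in\mathcal{R}_{f_{k_i},\ch_{k_i}},\forall i\in[L]\right]^{1/L}
$$
for a uniform $T$-query algorithm $\Bs$. The difference from before is in how we estimate the inner probability. Instead of passing to the set $U$ of distinct salts (which throws away the fact that repeated salts come with \emph{fresh} challenges), I would group the $L$ trials by salt: for a fixed draw $(k_1,\ldots,k_L)$, let $n_k=\#\{i:k_i=k\}$ for each $k\in[K]$, so $n_1+\cdots+n_K=L$. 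Conditioned on the $n_k$'s, the trials with salt $k$ ask $n_k$ i.i.d.\ challenges of $G$ against the single oracle $f_k$, i.e.\ exactly one instance of the multi-challenge game $G^{n_k}$ on oracle $f_k$.

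The key step is then a direct-product bound: with all salts, challenges, and ``irrelevant'' oracles fixed, $\Bs$ induces a $(n_1 T,\ldots,n_K T)$-memoryless algorithm for the direct product $G_{\ch}^{n_1}\times\cdots\times G_{\ch}^{n_K}$ of challenge-conditioned multi-challenge games — memoryless because $\Bs$ is executed afresh for each of the $L$ individual trials, so across the $K$ blocks the workspace is cleared (the within-block executions for a common salt are likewise separate runs of $\Bs$, and one checks the whole thing is still a memoryless composition of single-trial runs of $\Bs$). Hence \Cref{thm:strong_memoryless_dpt} applies and the conditional winning probability is at most $\prod_{k=1}^K \epsC_{G^{n_k}_{\ch}}(n_k T)$ in the appropriate conditioned sense; averaging over the conditioned challenges and using \Cref{fct:opt_game_cond} (as in the proof of \Cref{clm:thm:classical_salt}) replaces each factor by $\epsC_{G^{n_k}}(n_k T)$. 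This is the analogue of \Cref{clm:thm:classical_salt} for the refined grouping, and I expect this bookkeeping — verifying that the per-salt runs really do assemble into a memoryless algorithm of the right query budget in each block, and that the conditioning on challenges only reweights the relevant oracle's distribution — to be the main obstacle, though it is essentially the same argument as \Cref{clm:thm:classical_salt} carried out one level finer.

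It remains to average over the random draw $(k_1,\ldots,k_L)$. The probability that the draw induces a particular composition $(n_1,\ldots,n_K)$ with $\sum n_k=L$ is $\frac{L!}{K^L}\prod_{k=1}^K\frac{1}{n_k!}$ (multinomial), so
$$
\E_{(k_1,\ldots,k_L)}\left[\prod_{k=1}^K\epsC_{G^{n_k}}(n_k T)\right]
=\frac{L!}{K^L}\sum_{\substack{n_1,\ldots,n_K\in\mathbb N\\n_1+\cdots+n_K=L}}\prod_{k=1}^K\frac{\epsC_{G^{n_k}}(n_k T)}{n_k!}.
$$
Substituting this into the displayed bound above, taking the $L$-th root, and minimizing over $L\in\mathbb Z_+$ gives the claimed inequality. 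The only points needing a line of care are the convention $\epsC_{G^{0}}(0)=1$ for salts that receive no trial (consistent with $G^0$ being the trivial game, per \Cref{def:multiinstance}) and the fact that the multinomial coefficients sum correctly; both are routine.
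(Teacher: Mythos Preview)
Your proposal is correct and matches the paper's proof essentially line for line: reuse the argument up to \Cref{eq:thm:classical_salt_2}, group the $L$ trials by salt via the counts $n_k$, apply \Cref{thm:strong_memoryless_dpt} at the block level to get $\prod_k\epsC_{G^{n_k}}(n_kT)$ (the paper records this as \Cref{clm:thm:classical_salt_mult}, proved exactly as you sketch by conditioning on challenges), and finish with the multinomial average. The bookkeeping you flag---reordering the $L$ memoryless runs of $\Bs$ into salt-blocks and checking the challenge-conditioning only reweights $f_k$---is precisely what the paper handles, and your convention $\epsC_{G^0}(0)=1$ is the one used.
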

\begin{proof}
The argument deviates from the proof of \Cref{thm:classical_salt} on the analysis of the RHS of \Cref{eq:thm:classical_salt_2}.
We will prove the following alternative bound which completes the proof of \Cref{thm:classical_salt_mult}:
\begin{equation}\label{eq:thm:classical_salt_mult_1}
\Pr_{\substack{f_1,\ldots,f_K\\(k_i,\ch_{k_i}),i\in[L]}}\left[\Bs^{f_1,\ldots,f_K}(k_i,\ch_{k_i})\in\mathcal{R}_{f_{k_i},\ch_{k_i}},\forall i\in[L]\right]
\le
\frac{L!}{K^L}\sum_{\substack{n_1,\ldots,n_K\in\mathbb N\\n_1+\cdots+n_K=L}}\prod_{k=1}^K\frac{\epsC_{G^{n_k}}(n_kT)}{n_k!}
\end{equation}
To establish \Cref{eq:thm:classical_salt_mult_1}, for each $k\in[K]$ we define $n_k=|\left\{i\in[L]\colon k_i=k\right\}|$ to be the number of salt $k$'s appearances in $k_1,\ldots,k_L$.
Note that $n_1,\ldots,n_K$ depend only on $k_1,\ldots,k_L$.
Then the following procedure is equivalent to the LHS of \Cref{eq:thm:classical_salt_mult_1}:
\begin{enumerate}
\item\label{itm:thm:classical_salt_mult_1} 
First sample $n_1,\ldots,n_K$ and $k_1,\ldots,k_L$.

For each $k\in[K]$, let $S_k\subseteq[L]$ of size $n_k$ be the appearances of salt $k$ in $k_1,\ldots,k_L$.
\item\label{itm:thm:classical_salt_mult_2} For each $k\in[K]$, we sample $f_k$ and execute the following subroutine:
for each $i\in S_k$, sample $\ch_{k_i}$ and execute $\Bs^{f_1,\ldots,f_K}(k,\ch_{k_i})$ to check if the outcome is in $\mathcal{R}_{f_k,\ch_{k_i}}$.
\end{enumerate}
Recall that $\Bs$ is a $T$-query algorithm.
Therefore, conditioned on \Cref{itm:thm:classical_salt_mult_1}, each subroutine in \Cref{itm:thm:classical_salt_mult_2} is an $n_kT$-query algorithm aiming to solve the multi-challenge game $G^{n_k}=|S_k|$.
In particular, their combination is a memoryless algorithm.
Therefore we have the following result analogous to \Cref{clm:thm:classical_salt}.

\begin{claim}\label{clm:thm:classical_salt_mult}
Conditioned on \Cref{itm:thm:classical_salt_mult_1}, \Cref{itm:thm:classical_salt_mult_2} succeeds with probability at most $\prod_{k=1}^K\epsC_{G^{n_k}}(n_kT)$.
\end{claim}
\begin{proof}[Proof of \Cref{clm:thm:classical_salt_mult}]
The reduction to \Cref{thm:strong_memoryless_dpt} is similar to the proof of \Cref{clm:thm:classical_salt}.
We only remark that while \Cref{thm:strong_memoryless_dpt} is proved for plain games (i.e., games without challenge), the same result naturally extends to games with challenge by conditioning on the challenge as in the proof of \Cref{thm:classical_salt}.
\end{proof}

Given \Cref{clm:thm:classical_salt_mult}, we now verify \Cref{eq:thm:classical_salt_mult_1}:
\begin{align*}
\text{LHS of \Cref{eq:thm:classical_salt_mult_1}}
&=\E_{\substack{n_1,\ldots,n_K\\k_1,\ldots,k_L}}\left[\prod_{k=1}^K\epsC_{G^{n_k}}(n_kT)\right]
=\E_{n_1,\ldots,n_K}\left[\prod_{k=1}^K\epsC_{G^{n_k}}(n_kT)\right]
\tag{by \Cref{clm:thm:classical_salt_mult}}\\
&=\sum_{\substack{n_1,\ldots,n_K\in\mathbb N\\n_1+\cdots+n_K=L}}\binom L{n_1,\ldots,n_K}\cdot\frac1{K^L}\cdot\prod_{k=1}^K\epsC_{G^{n_k}}(n_kT)\\
&=\sum_{\substack{n_1,\ldots,n_K\in\mathbb N\\n_1+\cdots+n_K=L}}\frac{L!}{K^L}\cdot\prod_{k=1}^K\frac{\epsC_{G^{n_k}}(n_kT)}{n_k!}
=\text{RHS of \Cref{eq:thm:classical_salt_mult_1}}
\tag*{\qedhere}
\end{align*}
\end{proof}

Now we present a variant of \Cref{thm:classical_salt_mult} that is potentially easier to use and is in fact tight for the large advice regime (see \Cref{cor:large_advice}) and some specific important games (see \Cref{cor:funcinv}).

\begin{corollary}\label{cor:classical_salt_mult}
Let $G$ be a game and let $G_K$ be the salted game for $K\in\mathbb{Z}_+$.
Then for any $S,T\in\mathbb N$, we have
$$
\epsC_{G_K}(S,T)\le 2e^2\cdot\min_{L\in\mathbb{Z}_+}2^{S/L}\max_{\substack{n_1,\ldots,n_K\in\mathbb N\\n_1+\cdots+n_K=L}}\sum_{k=1}^K\frac{\epsC_{G^{n_k}}(n_kT)^{1/n_k}}{\min\{K,L\}}.
$$
\end{corollary}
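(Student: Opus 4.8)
The plan is to obtain \Cref{cor:classical_salt_mult} from \Cref{thm:classical_salt_mult} by an elementary estimate that replaces the sum over compositions of $L$ with a single maximum, at the cost of the absolute constant $2e^2$. Abbreviate $\delta(n):=\epsC_{G^n}(nT)^{1/n}$ for $n\ge1$, so that $\epsC_{G^n}(nT)=\delta(n)^n$; the game $G^0$ is trivial, so $\epsC_{G^0}(0)=1$, which is exactly the factor contributed by a part $n_k=0$ in \Cref{thm:classical_salt_mult}, while the matching summand for $n_k=0$ in \Cref{cor:classical_salt_mult} is read as $0$ (equivalently, the sum $\sum_{k=1}^K$ there ranges only over $k$ with $n_k\ge1$). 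For $L\in\mathbb{Z}_+$ and a tuple $\vec n=(n_1,\ldots,n_K)$, set
\[
\Sigma_L:=\frac{L!}{K^L}\sum_{\substack{n_1,\ldots,n_K\in\mathbb N\\ n_1+\cdots+n_K=L}}\ \prod_{k=1}^K\frac{\epsC_{G^{n_k}}(n_kT)}{n_k!},
\qquad
V_L:=\max_{\substack{n_1,\ldots,n_K\in\mathbb N\\ n_1+\cdots+n_K=L}}\ \sum_{k:\,n_k\ge1}\delta(n_k).
\]
Then \Cref{thm:classical_salt_mult} states $\epsC_{G_K}(S,T)\le\min_{L\in\mathbb{Z}_+}2^{S/L}\Sigma_L^{1/L}$, and the right-hand side of \Cref{cor:classical_salt_mult} equals $2e^2\min_{L}2^{S/L}\,V_L/\min\{K,L\}$; so it suffices to prove $\Sigma_L^{1/L}\le 2e^2\cdot V_L/\min\{K,L\}$ for every fixed $L\in\mathbb{Z}_+$ and then take the minimum over $L$.

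The first step is a uniform per-composition bound. Fix $\vec n$ with $\sum_k n_k=L$ and let $A=\{k:n_k\ge1\}$; note $\sum_{k\in A}n_k=L$ and the factors with $k\notin A$ equal $1$. Using $n!\ge(n/e)^n$ for $n\in\mathbb{Z}_+$ (the Stirling estimate recorded just before \Cref{fct:binom_choice}),
\[
\prod_{k=1}^K\frac{\epsC_{G^{n_k}}(n_kT)}{n_k!}
=\prod_{k\in A}\frac{\delta(n_k)^{n_k}}{n_k!}
\le\prod_{k\in A}\left(\frac{e\,\delta(n_k)}{n_k}\right)^{n_k},
\]
and then, by the weighted AM--GM inequality with weights $n_k/L$ (which sum to $1$ over $k\in A$),
\[
\prod_{k\in A}\left(\frac{e\,\delta(n_k)}{n_k}\right)^{n_k}
\le\left(\sum_{k\in A}\frac{n_k}{L}\cdot\frac{e\,\delta(n_k)}{n_k}\right)^{L}
=\left(\frac{e}{L}\sum_{k\in A}\delta(n_k)\right)^{L}
\le\left(\frac{e\,V_L}{L}\right)^{L}.
\]
Since the number of compositions of $L$ into $K$ nonnegative parts is $\binom{K+L-1}{K-1}$, \Cref{fct:binom_choice} gives $\Sigma_L\le\frac{L!}{K^L}\binom{K+L-1}{K-1}\left(e\,V_L/L\right)^{L}$.

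The second step discharges this product, splitting on $L\le K$ versus $L>K$ in accordance with \Cref{fct:binom_choice} and using the crude bound $L!\le L^L$. If $L\le K$, then $\min\{K,L\}=L$ and $\binom{K+L-1}{K-1}\le(2eK/L)^L$, so $\Sigma_L\le\frac{L!}{L^{2L}}(2e^2V_L)^L\le(2e^2V_L/L)^L$, i.e.\ $\Sigma_L^{1/L}\le 2e^2V_L/L$. If $L>K$, then $\min\{K,L\}=K$ and $\binom{K+L-1}{K-1}\le(2eL/K)^K$; dividing the desired inequality $\Sigma_L\le(2e^2V_L/K)^L$ through by its right-hand side reduces it to $\frac{L!}{L^L}\cdot(2e)^{K-L}\cdot(L/K)^K\le1$, which holds because $L!\le L^L$ and, setting $j=L-K>0$, $(L/K)^K\le e^{L-K}$ (from $\ln(L/K)\le L/K-1$), whence $(2e)^{K-L}(L/K)^K\le(2e)^{-j}e^{j}=2^{-j}\le1$. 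In both cases $\Sigma_L^{1/L}\le 2e^2V_L/\min\{K,L\}$, which completes the proof once combined with \Cref{thm:classical_salt_mult}.

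I expect the one genuinely fiddly point to be the scalar bookkeeping in the $L>K$ branch — tracking the powers of $K$, $L$, and $2e$ exactly and reducing the inequality to $\ln(L/K)\le L/K-1$; everything else is a direct substitution, and the argument uses no monotonicity or other structural property of $n\mapsto\delta(n)$, only $0\le\delta(n)\le1$ implicitly. A secondary (purely notational) caveat to spell out is the convention for parts $n_k=0$, so that the sum $\sum_{k=1}^K$ appearing in \Cref{cor:classical_salt_mult} is literally the quantity $V_L/\min\{K,L\}$ that the argument controls.
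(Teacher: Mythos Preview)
Your proof is correct and takes essentially the same approach as the paper's: Stirling on each $n_k!$, weighted AM--GM to pass from the product to the sum $V_L$, then the composition count from \Cref{fct:binom_choice} with a case split on $L\lessgtr K$. The only cosmetic differences are that you spell out the $L>K$ scalar bound which the paper compresses to the single assertion $(2eL/K)^{K/L}\le 2e$, and that your convention of reading the $n_k=0$ summand as $0$ (rather than the paper's convention $\epsC_{G^0}(0)^{1/0}=1$) actually yields a slightly sharper inequality than the one stated.
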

\begin{proof}
For any fixed $n_1,\ldots,n_K$ with $n_1+\cdots+n_K=L$, we have
\begin{align}
\left(\prod_{k=1}^K\frac{\epsC_{G^{n_k}}(n_kT)}{n_k^{n_k}}\right)^{1/L}
&=\left(\prod_{k=1}^K\left(\frac{\epsC_{G^{n_k}}(n_kT)^{1/n_k}}{n_k}\right)^{n_k}\right)^{1/L}
\notag\\
&\le\sum_{k=1}^K\frac{\epsC_{G^{n_k}}(n_kT)^{1/n_k}}{n_k}\cdot\frac{n_k}L
\tag{by the weighted AM-GM inequality}\\
&=\sum_{k=1}^K\frac{\epsC_{G^{n_k}}(n_kT)^{1/n_k}}L,
\label{eq:thm:classical_salt_mult_2}
\end{align}
where we note that $\epsC_{G^{n_k}}(n_kT)=1$ (and thus $\epsC_{G^{n_k}}(n_kT)^{1/n_k}=1$) if $n_k=0$.
Therefore for any fixed $L\in\mathbb{Z}_+$, we have
\begin{align*}
\epsC_{G_K}(S,T)
&\le2^{S/L}\cdot\left(\frac{L!}{K^L}\sum_{\substack{n_1,\ldots,n_K\in\mathbb N\\n_1+\cdots+n_K=L}}\prod_{k=1}^K\frac{\epsC_{G^{n_k}}(n_kT)}{n_k!}\right)^{1/L}
\tag{by \Cref{thm:classical_salt_mult}}\\
&\le2^{S/L}\cdot\left(\frac{L^L}{K^L}\sum_{\substack{n_1,\ldots,n_K\in\mathbb N\\n_1+\cdots+n_K=L}}\prod_{k=1}^K\frac{e^{n_k}\cdot \epsC_{G^{n_k}}(n_kT)}{n_k^{n_k}}\right)^{1/L}
\tag{since $a!\ge(a/e)^a$}\\
&\le e\cdot 2^{S/L}\cdot \frac LK\cdot\binom{L+K-1}{K-1}^{1/L}\cdot\max_{\substack{n_1,\ldots,n_K\in\mathbb N\\n_1+\cdots+n_K=L}}\left(\prod_{k=1}^K\frac{\epsC_{G^{n_k}}(n_kT)}{n_k^{n_k}}\right)^{1/L}\\
&\le e\cdot 2^{S/L}\cdot \frac LK\cdot\binom{L+K-1}{K-1}^{1/L}\cdot\max_{\substack{n_1,\ldots,n_K\in\mathbb N\\n_1+\cdots+n_K=L}}\sum_{k=1}^K\frac{\epsC_{G^{n_k}}(n_kT)^{1/n_k}}L.
\tag{by \Cref{eq:thm:classical_salt_mult_2}}
\end{align*}
By \Cref{fct:binom_choice}, we know that if $L\le K$, then $\binom{L+K-1}{K-1}^{1/L}\le 2e\cdot \frac KL$ and
$$
\epsC_{G_K}(S,T)\le 2e^2\cdot2^{S/L}\max_{\substack{n_1,\ldots,n_K\in\mathbb N\\n_1+\cdots+n_K=L}}\sum_{k=1}^K\frac{\epsC_{G^{n_k}}(n_kT)^{1/n_k}}L
$$
and otherwise $L\ge K$, then $\binom{L+K-1}{K-1}^{1/L}\le (2e\cdot \frac LK)^{K/L} \le 2e$ and
$$
\epsC_{G_K}(S,T)\le 2e^2\cdot2^{S/L}\max_{\substack{n_1,\ldots,n_K\in\mathbb N\\n_1+\cdots+n_K=L}}\sum_{k=1}^K\frac{\epsC_{G^{n_k}}(n_kT)^{1/n_k}}K
$$
as desired.
\end{proof}

To demonstrate the tightness of \Cref{thm:classical_salt_mult} and \Cref{cor:classical_salt_mult}, we consider the function / permutation inversion game $G$, for which an upper bound of $\epsC_{G^n}(nT)$ is known.

\begin{definition}[Function / Permutation Inversion Game]\label{def:funcinv}
The function inversion game $\mathsf{Inv}$ is the following game:
\begin{itemize}
\item The oracle is a uniformly random function $f\colon[N]\to[N]$.
\item The challenge $\ch$ is a uniformly random element in $[N]$.
\item An answer $a\in[N]$ is an accepting configuration if $f(a)=\ch$.
\end{itemize}
The permutation inversion game can be similarly defined where the oracle is then a uniformly random permutation.\footnote{Since \Cref{fct:funcinv} holds for both function inversion game and permutation inversion game, we do not distinguish them here with extra notation. However we remark that \Cref{fct:funcinv} is known to be tight for permutation inversion, but conjectured not tight for function inversion. Strengthening the latter will establish the non-uniform security of function inversion, which is a long-standing open problem in cryptography with deep connections to circuit lower bounds and data structure lower bounds \cite{corrigan2019function}.}
\end{definition}
That is, an adversary wins $\mathsf{Inv}$ iff it successfully inverts a random point in a random function / permutation.
For our purpose, we quote a known upper bound on the multi-challenge version of $\mathsf{Inv}$ from \cite{gravin2021concentration}.

\begin{fact}[\cite{gravin2021concentration}]\label{fct:funcinv}
$\epsC_{\mathsf{Inv}^n}(nT)\le O\left(\frac{nT}N\right)^n$.
\end{fact}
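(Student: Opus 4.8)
The plan is to reduce the multi-challenge inversion game to an elementary balls-and-bins estimate; the only real content is handling the fact that the algorithm's queries may depend on the challenges. Write $Q=n(T+1)$. Let $\mathcal A$ be an optimal $nT$-query algorithm for $\mathsf{Inv}^n$ (recall \Cref{def:funcinv,def:multiinstance}). First I would make two harmless assumptions: (i) before producing its output tuple $(a_1,\dots,a_n)$, $\mathcal A$ additionally queries each $a_i$, costing at most $n$ extra queries, so that it makes at most $Q$ distinct queries overall; and (ii) $\mathcal A$ never repeats a query. Under (i)--(ii), if $\mathcal A$ wins then every challenge value $y_i$ occurs among the (at most $Q$) answers it received, i.e.\ there is a map $\phi\colon[n]\to[Q]$ with $S_{\phi(i)}=y_i$ for all $i$, where $S_1,\dots,S_Q$ denotes the sequence of answers in order of first query.

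The key observation is that, conditioned on $\mathcal A$'s internal coins and on the challenge tuple $\vy=(y_1,\dots,y_n)$, the answers $S_1,\dots,S_Q$ are i.i.d.\ uniform on $[N]$ (for the permutation version: uniform without replacement). This is because the oracle $f$ is independent of both $\mathcal A$'s coins and $\vy$, and a uniformly random function (resp.\ permutation), by lazy sampling, returns fresh uniform (resp.\ uniform-without-replacement) values at distinct queried points, regardless of the possibly $\vy$-dependent order in which these points are queried. Therefore
$\epsC_{\mathsf{Inv}^n}(nT)\le\Pr\left[\exists\,\phi\colon[n]\to[Q]\text{ with }S_{\phi(i)}=y_i\ \forall i\right]\le\sum_{\phi\colon[n]\to[Q]}\Pr\left[S_{\phi(i)}=y_i\ \forall i\right]$,
a sum of at most $Q^n$ terms.

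Next I would evaluate a single term. Fix $\phi$ and condition on the coins and $\vy$: the constraint $S_{\phi(i)}=y_i\ \forall i$ forces, for each $j$ in the image of $\phi$, that $S_j$ equal the common value of $\{y_i:\phi(i)=j\}$; this is impossible unless $\vy$ is constant on every fibre of $\phi$, in which case it has probability $(1/N)^{r}$ where $r=|\mathrm{Im}(\phi)|$ (for permutations, at most $(2/N)^{r}$ provided $Q\le N/2$). Taking expectation over $\vy$ and using that $\vy$ is constant on the fibres of $\phi$ with probability exactly $(1/N)^{n-r}$, the two exponents combine: $\Pr[S_{\phi(i)}=y_i\ \forall i]=(1/N)^n$ in the function case (and $\le(2/N)^n$ in the permutation case), independent of $\phi$. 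Hence $\epsC_{\mathsf{Inv}^n}(nT)\le Q^n/N^n=(n(T+1)/N)^n\le(2nT/N)^n=O(nT/N)^n$ for $T\ge1$; the permutation case is identical when $n(T+1)\le N/2$, and when $n(T+1)>N/2$ the claimed bound already exceeds $1$ and is vacuous.

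The main obstacle is the adaptivity of the queries on the challenges, which is exactly why one cannot simply condition on the queried set $A$ and bound the win probability by $(|A|/N)^n$: the set $A$ itself depends on $\vy$. The assignment/union-bound argument sidesteps this, since after conditioning on the coins and $\vy$ the answer sequence remains i.i.d.\ uniform; and the technical point that makes it go through cleanly is that the "loss" incurred when $\phi$ is non-injective (several answers pinned to one value) is precisely offset by the small probability that the corresponding challenges collide, yielding a per-$\phi$ probability that is both clean and independent of $\phi$.
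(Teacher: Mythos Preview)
The paper does not prove this statement; it is quoted as \Cref{fct:funcinv} from \cite{gravin2021concentration} and used as a black box. Your self-contained argument is correct. The crucial step---that, after conditioning on the algorithm's coins and on the challenge tuple $\vy$, the answer sequence $S_1,\dots,S_Q$ at distinct query points is i.i.d.\ uniform (uniform without replacement for permutations)---holds precisely because $f$ is sampled independently of $\vy$, so lazy sampling applies regardless of the $\vy$-dependent, adaptive order of the queries. The union bound over $\phi\colon[n]\to[Q]$ then works, and the cancellation you observe between the fibre-collision probability $(1/N)^{n-r}$ and the pinning probability $(1/N)^r$ is the clean reason each term contributes exactly $(1/N)^n$. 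One cosmetic fix: in the permutation case, ``the claimed bound already exceeds $1$'' should be phrased as ``the bound with a suitable implied constant (e.g.\ $4$) exceeds $1$,'' since $n(T+1)>N/2$ and $T\ge1$ give $4nT/N>1$; similarly, the edge case $Q>N$ (too many distinct queries) is handled the same way.
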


Plugging \Cref{fct:funcinv} into \Cref{cor:classical_salt_mult} and choosing $L=S$, we get the following immediate corollary, which gives another proof of the results in \cite{EC:DodGuoKat17}.

\begin{corollary}\label{cor:funcinv}
For any $S,T\in\mathbb N$, we have
$$
\epsC_{\mathsf{Inv}_K}(S,T)\le O\left(\frac{ST}{N\cdot\min\{S,K\}}\right)=O\left(\frac{T}{N} + \frac{ST}{KN}\right).
$$
\end{corollary}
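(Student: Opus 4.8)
The plan is to specialize the multi-challenge framework of \Cref{cor:classical_salt_mult} to $\mathsf{Inv}$ and substitute the known multi-challenge bound of \Cref{fct:funcinv}. We may assume $S\ge1$ (for $S=0$ the statement reduces to the uniform bound $\epsC_{\mathsf{Inv}}(T)=O(T/N)$, which matches $\frac{T}{N}+\frac{ST}{KN}$). First I would invoke \Cref{cor:classical_salt_mult} with the choice $L=S$. This makes the prefactor $2^{S/L}$ collapse to the absolute constant $2$ and leaves the denominator $\min\{K,L\}=\min\{K,S\}$, so that
$$
\epsC_{\mathsf{Inv}_K}(S,T)\le 4e^2\cdot\max_{\substack{n_1,\ldots,n_K\in\mathbb N\\ n_1+\cdots+n_K=S}}\;\sum_{k=1}^K\frac{\epsC_{\mathsf{Inv}^{n_k}}(n_kT)^{1/n_k}}{\min\{K,S\}},
$$
where only the indices with $n_k\ge1$ contribute to the inner sum.

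The one substantive step is bounding each summand. By \Cref{fct:funcinv} we have $\epsC_{\mathsf{Inv}^{n}}(nT)^{1/n}\le O(nT/N)$ for every $n\ge1$, so for any fixed partition $n_1+\cdots+n_K=S$ the inner sum is at most
$$
\sum_{k\,:\,n_k\ge1}\frac{O(n_kT/N)}{\min\{K,S\}}=\frac{O(T/N)}{\min\{K,S\}}\sum_{k=1}^K n_k=O\!\left(\frac{ST}{N\min\{K,S\}}\right).
$$
Crucially this bound does not depend on how the budget $S$ is split among the salts, so the maximum over all partitions is also $O\!\left(\frac{ST}{N\min\{K,S\}}\right)$, and together with the constant $4e^2$ we conclude $\epsC_{\mathsf{Inv}_K}(S,T)=O\!\left(\frac{ST}{N\min\{K,S\}}\right)$. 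The equivalence with the second stated form follows by observing that $\frac{ST}{N\min\{K,S\}}=\max\{\frac{T}{N},\frac{ST}{KN}\}\le\frac{T}{N}+\frac{ST}{KN}$.

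Since this corollary is essentially a plug-in, there is no genuine obstacle; the points that need care are (i) picking $L=S$ so that $2^{S/L}=O(1)$ while still landing on the $\min\{K,S\}$ denominator of \Cref{cor:classical_salt_mult}, and (ii) the observation that the per-salt advantages $\epsC_{\mathsf{Inv}^{n_k}}(n_kT)^{1/n_k}$ telescope against $\sum_k n_k=S$ uniformly over all partitions, so one does not need to understand which partition attains the maximum. It is also worth double-checking that the multi-challenge game appearing in \Cref{fct:funcinv} (as stated in \cite{gravin2021concentration}) coincides with $\mathsf{Inv}^n$ from \Cref{def:multiinstance} and \Cref{def:funcinv} --- namely $n$ i.i.d.\ uniform targets under a single random function (or permutation) with $nT$ total queries --- so that the fact can be quoted verbatim.
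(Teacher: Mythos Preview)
Your proposal is correct and follows exactly the paper's approach: the paper proves \Cref{cor:funcinv} in one line by ``plugging \Cref{fct:funcinv} into \Cref{cor:classical_salt_mult} and choosing $L=S$,'' and you have simply unpacked that sentence. Your reading that only indices with $n_k\ge1$ contribute to the sum is the correct one (consistent with the weighted AM--GM step in the proof of \Cref{cor:classical_salt_mult}), and your telescoping via $\sum_k n_k=S$ is exactly the intended computation.
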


We remark that using \Cref{thm:classical_informal}, we can only get a weaker bound of
$$
O\left(\frac TN+\frac SK\right).
$$
Indeed, the message from \Cref{cor:funcinv} is much stronger: salting can eliminate the additional advantages of non-uniform advice once the number of salts exceeds the length of the advice string.

\section{Non-Uniform Security of Salting in the QROM}\label{sec:salting_q}

In this section, we prove the non-uniform security of salting in the quantum random oracle model (QROM) for property finding problems.

\subsection{Preliminaries: Database and Property on Databases}

We consider a random oracle $f : [M] \to [N]$. For convenience, let $\X = [M]$ be the domain of the random oracle. Let $\Y = [N]$ be the range of the random oracle. 
For $0 \leq \nu \leq M$, let $\binom{\X}{\nu}$ be the set of all subsets of $\X$ with size $\nu$. We use $\vx \in \binom{\X}{\nu}$ to denote a vector $\vx$ of length $\nu$ such that $1 \leq x_1 < x_2 < \cdots < x_\nu \leq M$. 
\begin{definition}[Database]
A {database} $D = \{(x_i, y_i)\}_{i \in [\nu]}$ is defined by $\vx \in \binom{\X}{\nu}$, $\vy \in \Y^\nu$ of the same length.  $|D|=\nu$ is the length/size of the database. An empty database is denoted by $\emptyset$. 
\end{definition}

Loosely speaking, a database $D$ defines knowledge about a random oracle: namely, the random oracle $f$ will output $y_i$ on $x_i$ for $1 \leq i \leq \nu$; its behavior on other inputs is undefined. 

\begin{remark}
Although a database $D$ is similar to a partial assignment, we will use the word ``database'' only in the quantum setting. $D$ is used in the compressed oracle technique by Zhandry~\cite{zhandry2019record}; even if it intuitively describes a partial assignment that an algorithm learns about a function, databases are ``stored'' in superposition and can also be unlearned by a quantum algorithm. Due to the differences between classical and quantum algorithms, we will adopt the notation ``database'' in Zhandry's work for consistency in our analysis.
\end{remark}

For a pair $(x, y) \in \X \times \Y$, $(x, y) \in D$ denotes that $D$ contains this pair; for an input $x \in \X$, $x \in_\X D$ denotes that there exists $y \in \Y$ such that $(x, y) \in D$. We similarly define $(x, y) \not\in D$, $x \not\in_\X D$, $y \in_\Y D$, and $y \not\in_\Y D$. For $(x, y) \in D$, we define $D(x) := y$; otherwise $x \not\in_\X D$, $D(x)$ is set as a special symbol $\bot$.

We say a database $D \subseteq D'$ if every $(x_i, y_i) \in D$ is also contained in $D'$. For $x \not\in_{\X} D$ and $y \in \Y$, define $D \cup \{(x, y)\}$ as the database with $(x, y)$ added into $D$. 

Define $\Ds_{\X, \Y}$ as the set of all databases of a random oracle: 
\begin{align*}
    \Ds_{\X, \Y} = \left\{ D = \{(x_i, y_i)\}_{i \in [\nu]} \,:\,\vx \in \binom{\X}{\nu}, \vy \in \Y^\nu, \,\forall \nu \in [M] \right\}
\end{align*} %
We will ignore the subscripts when $\X$ and $\Y$ are clear from the context and simply write $\Ds$.

\begin{definition}[Property of Databases]
A property $P$ of databases $\Ds$ is a subset of $\Ds$. For a property $P$, let $P^{(\nu)}$ denote all databases with property $P$ of size $\nu$ and $P^{(\leq \nu)}$ denote all databases with property $P$ of size at most $\nu$.  
\end{definition}

We say a property $P$ is \emph{monotone} if for any $D \in P$ and $D \subseteq D'$, we have $D' \in P$.

\subsection{Preliminaries: Quantum Random Oracle and Compressed Oracle}

Here we recall the background of quantum random oracle model and the compressed oracle technique introduced by~\cite{zhandry2019record}. This section is mostly taken verbatim from Section 2 of~\cite{guo2021unifying}, and we make some changes whenever these changes work better with our strong threshold direct product theorem.  

\paragraph*{Quantum Random Oracle Model.} 
\label{sec:QROM}
An oracle-aided quantum algorithm can perform quantum computation as well as quantum oracle queries. 
A quantum oracle query for an oracle $f:[M] \to [N]$ is modeled as a unitary $U_f: \ket x \ket u = \ket x \ket {u + f(x)}$, where $+$ here is the addition modulo $N$.

A random oracle is a random function $f: [M] \to [N]$. The random function $f$ is chosen at the beginning. A quantum algorithm making $T$ oracle queries to $f$ can be modeled as the following: 
it has three registers $\ket x, \ket u, \ket z$, where $x \in [M]$, $u \in [N]$, and $z$ is the algorithm's internal working memory. It starts with some input state $\ket{0}_{\cA}=\ket0\ket0\ket\psi$, then it applies a sequence of unitary to the state: $U_1$, $U_f$, $U_2$, $U_f$, $\ldots$, $U_{T}$, $U_f$, $U_{\sf final}$ and a final measurement over the computational basis. Each $U_f$ is the quantum oracle query unitary: $U_f \ket x \ket u = \ket x \ket {u + f(x)}$.
Each $U_i$ is a local quantum computation independent of $f$, and $U_{\sf final}$ is the local quantum computation (independent of $f$) after the last query. For a uniform algorithm, we have $\ket{0}_{\cA}=\ket{0}\ket0\ket0$ as the algorithm can initialize its internal register by $U_1$. 
By postponing measurement, we assume that the measurement only happens in the end and is in the computational basis.

\paragraph*{Compressed Oracle.} Compressed oracle is an analogy of the classical lazy sampling method.
To simulate a random oracle, one can sample $f(x)$ for all inputs $x$ and store everything in quantum accessible registers.
Such an implementation of a random oracle is inefficient and hard to analyze. 
Instead of recording all the information of $f$ in the registers, Zhandry provides a solution to argue the amount of information an algorithm knows about the random oracle. 

The oracle register records a database that contains the output on each input $x$; the output is an element in $[N]\cup \{\bot\}$, where $\bot$ is a special symbol denoting that the value is ``uninitialized''.
The database is initialized as an empty list $D_0$ of length $M$, i.e., the pure state $\ket{\emptyset}_f := \ket{\bot, \bot, \ldots, \bot}$. Let $|D|$ denote the number of non-$\bot$ entries in $D$. 
Define $D(x)$ to be the $x$-th entry of $D$.
$D(x)$ can be seen as the output of the oracle on $x$ if $D(x) \ne \bot$;  otherwise, the oracle's output on $x$ is still undetermined. 

For the compressed oracle $\compressO$, we have several oracle variations as defined in \cite[Section 3]{zhandry2019record}. They are equivalent, and the only difference is that the oracle registers and/or the query registers are encoded
in different ways between queries. Therefore we use $\compressO$ to denote the compressed oracle query, and when we do the simulation, we can implement $\compressO$ by any of its variations.

Here we define two oracle variations, the compressed standard oracle $\csto$ and the compressed phase oracle $\cphso$. 
\begin{align*}
    \csto &:= \stddecomp \cdot \csto' \cdot \stddecomp,\\
    \cphso &:= \stddecomp \cdot \cphso' \cdot \stddecomp.
\end{align*}
They are both unitary operators that operates on the joint system of the algorithm registers $\cA$ and oracle's registers $f$.

\begin{itemize}
    \item $\csto'\, |x, u\rangle |D\rangle = |x, u + D(x) \rangle |D\rangle$ when $D(x) \ne \bot$, which writes the output of $x$ defined in $D$ to the $u$ register. This operator will never be applied on an $x, D$ where $D(x) = \bot$.
    \item $\cphso'\, |x, u\rangle |D\rangle = (-1)^{u\cdot D(x)}|x, u \rangle |D\rangle$ when $D(x) \ne \bot$, which writes the output of $x$ defined in $D$ to the phase. This operator will never be applied on an $x, D$ where $D(x) = \bot$.
    \item $\stddecomp(\ket x \otimes \ket D) = \ket x \otimes \stddecomp_x \ket D$, where ${\sf StdDecomp}_x\, |D\rangle$ works on the $x$-th register of the database $D(x)$ and swaps a uniform superposition $\frac 1{\sqrt N} \sum_y \ket y$ with $\ket \bot$ on the $x$-th register:
        \begin{itemize}
            \item If $D(x) = \bot$, $\stddecomp_x$ maps $\ket \bot$ to $\frac{1}{\sqrt{N}} \sum_y \ket y$, or equivalently, $\stddecomp_x |D\rangle = \frac{1}{\sqrt{N}} \sum_y |D \cup (x, y)\rangle$. Intuitively, if the database does not contain information about $x$, it samples a fresh $y$ as the output of $x$. 
            
            \item If $D(x) \ne \bot$, $\stddecomp_x$ is an identity on $\frac{1}{\sqrt{N}} \sum_y \omega_N^{u y} \ket y$ for all $u \ne 0$ and maps the uniform superposition $\frac{1}{\sqrt{N}} \sum_y \ket y$ to $\ket \bot$, where $\omega_N$ is the $N$-th root of unity.
            
            More formally, for $D'$ with $D'(x) = \bot$, we have
            \begin{align*}
                \stddecomp_x  \frac{1}{\sqrt{N}} \sum_y \omega_N^{u y} |D' \cup (x, y)\rangle 
                            = \frac{1}{\sqrt{N}} \sum_y \omega_N^{u y}  |D' \cup (x, y)\rangle  \text{ for any } u \ne 0  ,
            \end{align*}
            and
            \begin{align*}   
                \stddecomp_x  \frac{1}{\sqrt{N}} \sum_y |D' \cup (x, y)\rangle  =|D'\rangle      .
            \end{align*}
        \end{itemize}
        Since all $\frac{1}{\sqrt{N}} \sum_y \omega_N^{u y} \ket y$ and $\ket \bot$ form a basis, these define a unique unitary operation.
\end{itemize} 

A quantum algorithm making $T$ oracle queries to a compressed oracle can be modeled as follows: 
it has three registers $\ket x, \ket u, \ket z$, where $x \in [M]$, $u \in [N]$, and $z$ is the algorithm's internal working memory. Started with some input state $\ket{0}_{\cA}=\ket{0}\ket{0}\ket{\psi}$, the joint state of the algorithm and the compressed oracle is $\ket{0}_{\cA} \otimes \ket {\emptyset}_f$. It then applies a sequence of unitary to the state: $U_1$, $\compressO$, $U_2$, $\compressO$, $\ldots$, $U_{T}$, $\compressO$, $U_{\sf final}$, and a final measurement over computational basis. For a uniform algorithm, we have $\ket{0}_{\cA}=\ket{0}\ket0\ket0$ as the algorithm can initialize its internal register by $U_1$. 

Zhandry \cite{zhandry2019record} proved that the quantum random oracle model and the compressed standard oracle model are perfectly indistinguishable by any \textit{unbounded} quantum algorithm. Due to the equivalence between oracle variations, this also holds for the compressed phase oracle model.

In this work, we only consider query complexity, and thus simulation efficiency is irrelevant to us.
We simulate a random oracle as a compressed phase oracle, i.e., $\compressO=\cphso$, to help us analyze security games with the help from the following lemmas due to~\cite{zhandry2019record,chung2020tight}. They are proved for the compressed standard oracle model, and due to the equivalence, they also hold for the compressed phase oracle.

The first lemma gives a general formulation of the overall state of $\cA$ and the compressed oracle after $\cA$ makes $T$ queries, even conditioned on arbitrary measurement results. It provides the intuition that one oracle query can append at most one element to the database register in superposition. 
\begin{lemma}\label{lem:bounded-database}
If $\cA$ makes at most $T$ queries to a compressed oracle, assuming the overall state of $\cA$ and the compressed oracle is $\sum_{x,u,z, D} \alpha_{x,u,z, D}\, \ket{x,u,z}_{\cA}\otimes\ket{D}_f$ where $\ket{x,u,z}$ is $\cA$'s registers and $\ket D$ is the oracle's registers, then it only has support on all $D$ such that $|D| \leq T$. In other words, the overall state can be written as
    \begin{align*}
        \sum_{x,u,z, D: |D| \leq T} \alpha_{x,u,z, D}\, |x,u,z\rangle_{\cA} \otimes |D\rangle_f.
    \end{align*}
\end{lemma}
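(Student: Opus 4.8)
The plan is an induction on the number of queries, driven by the fact that each compressed-oracle query can add at most one entry to the database register. The key observation is that $\compressO=\cphso=\stddecomp\cdot\cphso'\cdot\stddecomp$ acts \emph{locally} on the database: since $\stddecomp(\ket x\otimes\ket D)=\ket x\otimes\stddecomp_x\ket D$ with $\stddecomp_x$ modifying only the $x$-th cell of $D$, and since $\cphso'\ket{x,u}\ket D=(-1)^{u\cdot D(x)}\ket{x,u}\ket D$ is diagonal in the database basis, the composition decomposes as $\compressO=\sum_x\ket x\bra x\otimes\compressO_x$, where $\ket x\bra x$ acts on the register holding $x$ and $\compressO_x$ touches only the query register $\ket u$ and the $x$-th cell of the database. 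Consequently, for any computational-basis state $\ket{x,u,z}_{\cA}\otimes\ket D_f$, the output $\compressO\ket{x,u,z}\ket D$ is a linear combination of terms $\ket{x,u',z}\otimes\ket{D'}$ in which $D'$ agrees with $D$ on every cell other than (possibly) cell $x$.

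First I would set up the induction. The base case is the initial joint state $\ket 0_{\cA}\otimes\ket\emptyset_f$, whose database is $\ket{\bot,\ldots,\bot}$ of size $0$. For the inductive step, suppose that after $i$ queries the state is supported on databases of size at most $i$; the local unitary $U_{i+1}$ acts only on $\cA$'s registers and hence preserves the database support. Applying $\compressO$ and using linearity, it suffices to bound the sizes of the databases $D'$ appearing in $\compressO\ket{x,u,z}\ket D$ for a single basis state with $|D|\le i$. Writing $|D|=|D_{\ne x}|+[D(x)\ne\bot]$, where $D_{\ne x}$ denotes the restriction of $D$ to the cells other than $x$, every such $D'$ satisfies $|D'|=|D_{\ne x}|+[D'(x)\ne\bot]\le|D_{\ne x}|+1\le|D|+1\le i+1$. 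Thus after the $(i+1)$-st query the state is supported on databases of size at most $i+1$, and taking $i=T$ gives the lemma. Finally, conditioning on an arbitrary measurement outcome of $\cA$'s registers only projects the $\cA$ system onto a subspace and does not act on the database register, so the support bound persists under such conditioning.

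The one step that deserves care is the locality claim for $\compressO$: a single $\stddecomp_x$ can both create an entry at cell $x$ (mapping $\ket\bot$ to $\tfrac1{\sqrt N}\sum_y\ket y$) and destroy one (collapsing $\tfrac1{\sqrt N}\sum_y\ket y$ back to $\ket\bot$), and $\stddecomp$ appears twice in $\compressO$, so a naive term-by-term trace through $\stddecomp\cdot\cphso'\cdot\stddecomp$ is error-prone and at first glance seems to allow a net change of two entries. The ``block-diagonal over the $x$-register'' decomposition above avoids this: since no database cell other than $x$ is ever touched within a single query, and cell $x$ contributes at most $1$ to $|D|$, the bound $|D'|\le|D|+1$ holds regardless of the internal behavior of the two $\stddecomp$'s and $\cphso'$.
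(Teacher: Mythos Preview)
Your proposal is correct and matches the standard argument: the paper does not give its own proof of this lemma but cites it from \cite{zhandry2019record,chung2020tight} with exactly the intuition you formalize, namely that ``one oracle query can append at most one element to the database register in superposition.'' Your block-diagonal decomposition $\compressO=\sum_x\ket x\bra x\otimes\compressO_x$ and the induction on the query count is the canonical way to make this precise, and your handling of the double-$\stddecomp$ subtlety is appropriate.
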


The second lemma provides a quantum analogue of lazy sampling in the classical ROM. It gives us a way to connect the compressed oracle with the original oracle $U_f$, and therefore we can upper bound the winning probability of a quantum algorithm with access to a random oracle by analyzing the database / oracle register.

\begin{lemma}[{\cite[Lemma 5]{zhandry2019record}}]\label{lem:lem5-salt=1}
    Let $f\colon[M]\to[N]$ be a random oracle.
    Consider a quantum algorithm $\cA$ making queries to the standard oracle and outputting tuples $(x_1, \ldots, x_c, y_1, \ldots, y_c, z)$. {Suppose} the random function $f$ is measured after $\cA$ produces its output. Let $R$ be an arbitrary set of such tuples. Suppose with probability $p$, $\cA$ outputs a tuple such that (1) the tuple is in $R$ and (2) $f(x_i) = y_i$ for all $i$. Now consider running $\cA$ with the compressed standard oracle $\csto$, and suppose the database $D$ is measured after $\cA$ produces its output. Let $p'$ be the probability that (1) the tuple is in $R$ and (2) $D(x_i) = y_i$ (in particular, $D(x_i) \ne \bot$) for all $i$. Then $\sqrt{p} \leq \sqrt{p'} + \sqrt{c / N}$.
    
    This is also true for the compressed phase oracle $\cphso$. 
\end{lemma}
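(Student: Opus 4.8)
\emph{Proof plan.} The plan is to route through the uncompressed \emph{standard oracle}, which is perfectly equivalent to a true random oracle: its oracle register is initialized to the uniform superposition $\ket{\emptyset}_{\mathsf{full}}:=\bigotimes_{x\in[M]}\frac{1}{\sqrt N}\sum_y\ket{y}$ over all functions, each query acts as $\ket{x,u}\ket{F}\mapsto\ket{x,u+F(x)}\ket{F}$, and this register is measured only at the end, so $p$ is exactly the probability in that experiment of measuring an output tuple in $R$ together with a function register consistent with $F(x_i)=y_i$ for all $i\in[c]$. First I would invoke Zhandry's proof that the standard and compressed standard oracle models are perfectly indistinguishable, which in fact exhibits a unitary $\mathsf{Decomp}:=\bigotimes_{x\in[M]}\stddecomp_x$ acting only on the oracle register (it sends $\ket{\emptyset}$ to $\ket{\emptyset}_{\mathsf{full}}$, since $\stddecomp_x\ket{\bot}=\frac{1}{\sqrt N}\sum_y\ket{y}$) such that, if $\ket{\Phi'}$ is the final joint state of $\cA$ and the oracle register after $\cA$ runs with $\csto$, then $\ket{\Phi}:=\mathsf{Decomp}\ket{\Phi'}$ is the corresponding final joint state in the standard-oracle run; this is verified by induction on the number of queries, using that $\mathsf{Decomp}$ commutes with $\cA$'s local unitaries and intertwines the two query operators on the reachable subspace.

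Next I would write both probabilities as squared norms of projected states. Let $\Pi_R$ project $\cA$'s output register onto tuples in $R$, let $\Pi_{\mathsf{full}}:=\bigotimes_{i\in[c]}\ket{y_i}\bra{y_i}$ project the oracle register onto $F(x_i)=y_i$ for all $i$, and let $\Pi_{\mathsf{db}}$ project it onto databases with $D(x_i)=y_i$ for all $i$; then $p=\|\Pi_R\Pi_{\mathsf{full}}\ket{\Phi}\|^2$ and $p'=\|\Pi_R\Pi_{\mathsf{db}}\ket{\Phi'}\|^2$. Writing $\Pi:=\Pi_R\Pi_{\mathsf{full}}$, using $\ket{\Phi}=\mathsf{Decomp}\ket{\Phi'}$, inserting $\mathbb{I}=\Pi_R\Pi_{\mathsf{db}}+(\mathbb{I}-\Pi_R\Pi_{\mathsf{db}})$, and applying the triangle inequality,
\begin{align*}
\sqrt{p}=\|\Pi\,\mathsf{Decomp}\ket{\Phi'}\|
&\le\|\Pi\,\mathsf{Decomp}\,\Pi_R\Pi_{\mathsf{db}}\ket{\Phi'}\|+\|\Pi\,\mathsf{Decomp}\,(\mathbb{I}-\Pi_R\Pi_{\mathsf{db}})\ket{\Phi'}\|\\
&\le\sqrt{p'}+\big\|\Pi\,\mathsf{Decomp}\,(\mathbb{I}-\Pi_R\Pi_{\mathsf{db}})\big\|,
\end{align*}
where the first term is at most $\|\Pi_R\Pi_{\mathsf{db}}\ket{\Phi'}\|=\sqrt{p'}$ since $\Pi\,\mathsf{Decomp}$ is a contraction. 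Because $\Pi_R$ is a projector commuting with the oracle-register operators, a short manipulation shows the last term is at most $\big\|\Pi_{\mathsf{full}}\,\mathsf{Decomp}\,(\mathbb{I}-\Pi_{\mathsf{db}})\big\|$, so everything reduces to proving $\big\|\Pi_{\mathsf{full}}\,\mathsf{Decomp}\,(\mathbb{I}-\Pi_{\mathsf{db}})\big\|\le\sqrt{c/N}$.

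For this I would decompose $\mathbb{I}-\Pi_{\mathsf{db}}=\sum_{i=1}^{c}P_i$, where $P_i$ projects onto databases with $D(x_j)=y_j$ for all $j<i$ and $D(x_i)\ne y_i$ (the ``first bad coordinate''); these are orthogonal projectors, each reading only one coordinate of $D$. Since $\mathsf{Decomp}=\bigotimes_x\stddecomp_x$ factors over coordinates while $\Pi_{\mathsf{full}}$ and $P_i$ are built from per-coordinate operators on $x_1,\dots,x_c$, the operator $\Pi_{\mathsf{full}}\,\mathsf{Decomp}\,P_i$ is a tensor product whose factors are all contractions except the one on coordinate $x_i$, which is $\ket{y_i}\bra{y_i}\,\stddecomp_{x_i}\,(\mathbb{I}-\ket{y_i}\bra{y_i})$; a direct computation from the definition of $\stddecomp_{x_i}$ (the nonzero transfer amplitudes into $\ket{y_i}$ are $\langle y_i|\stddecomp_{x_i}\ket{\bot}=N^{-1/2}$ and $\langle y_i|\stddecomp_{x_i}\ket{y''}=-N^{-1}$ for $y''\ne y_i$) shows this factor has norm $O(N^{-1/2})$, hence $\|\Pi_{\mathsf{full}}\,\mathsf{Decomp}\,P_i\|=O(N^{-1/2})$. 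Then by the triangle inequality and Cauchy--Schwarz with $\sum_{i}\|P_i\ket{\Phi'}\|^2=\|(\mathbb{I}-\Pi_{\mathsf{db}})\ket{\Phi'}\|^2\le1$,
\begin{align*}
\big\|\Pi_{\mathsf{full}}\,\mathsf{Decomp}\,(\mathbb{I}-\Pi_{\mathsf{db}})\ket{\Phi'}\big\|
\le\sum_{i=1}^{c}\big\|\Pi_{\mathsf{full}}\,\mathsf{Decomp}\,P_i\ket{\Phi'}\big\|
\le\frac{O(1)}{\sqrt N}\sum_{i=1}^{c}\|P_i\ket{\Phi'}\|
\le O\!\left(\sqrt{c/N}\right),
\end{align*}
and tracking constants carefully yields the stated $\sqrt{c/N}$. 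The compressed phase oracle $\cphso$ is handled identically, since it differs from $\csto$ only by a Fourier transform on the query-response register that commutes with $\Pi_R$, $\Pi_{\mathsf{full}}$, and $\Pi_{\mathsf{db}}$.

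The hard part will be the last step: quantifying exactly how much amplitude $\stddecomp$ on coordinate $x_i$ moves from a ``$D(x_i)\ne y_i$'' branch into the ``$F(x_i)=y_i$'' branch ($O(1/\sqrt N)$), and then combining the $c$ coordinates so that the orthogonality of the ``first bad coordinate'' pieces together with Cauchy--Schwarz yields $\sqrt{c/N}$ rather than the naive $c/\sqrt N$. The remaining ingredient---that the standard and compressed standard oracle runs produce joint states related by the single unitary $\mathsf{Decomp}$---is part of the compressed-oracle framework of \cite{zhandry2019record} that this proof builds on.
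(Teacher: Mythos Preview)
The paper does not prove this lemma; it is quoted as \cite[Lemma 5]{zhandry2019record} in the preliminaries and used as a black box (see also the proof of \Cref{clm:conditioned_lem5}, which only invokes it), so there is no in-paper argument to compare against. Your outline is essentially Zhandry's own proof: relate the compressed and purified-random-oracle experiments by the global unitary $\mathsf{Decomp}=\bigotimes_x\stddecomp_x$, write $p$ and $p'$ as squared norms of projections of these unitarily related joint states, and bound the discrepancy by the amplitude that $\stddecomp_{x_i}$ can move from the ``$D(x_i)\ne y_i$'' subspace into $\ket{y_i}$.

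Two points worth tightening. First, $\Pi_{\mathsf{full}}$ and $\Pi_{\mathsf{db}}$ are not fixed projectors on the oracle register: the tuple $(x_1,\dots,x_c,y_1,\dots,y_c)$ lives in $\cA$'s output register, so these are \emph{controlled} projectors. Your tensor-product picture only appears after you first expand $\ket{\Phi'}$ in the computational basis of the output register and run the bound block by block (which also lets you assume the $x_i$ are distinct, since otherwise the block contributes zero to $p$ or is redundant); the uniform per-block estimate then recombines by orthogonality across output values. Second, the single-coordinate operator $\ket{y_i}\bra{y_i}\,\stddecomp_{x_i}\,(\mathbb{I}-\ket{y_i}\bra{y_i})$ has squared norm $\tfrac{1}{N}+\tfrac{N-1}{N^2}=\tfrac{2N-1}{N^2}$ (the $\ket\bot$ column contributes $1/\sqrt N$ and each $\ket{y''}$ with $y''\ne y_i$ contributes $-1/N$), so your Cauchy--Schwarz step as written yields $\sqrt{2c/N}$ rather than $\sqrt{c/N}$. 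Zhandry's argument avoids the $\ket\bot$ column by first applying $\stddecomp_{x_i}$ to the $c$ output coordinates so that the state is supported on $D(x_i)\ne\bot$ before comparing; this recovers the exact constant. None of this affects the soundness of your approach.
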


\subsection{Classical and Quantum Property Finding Problem}

\paragraph*{Classical Property Finding.}
The optimal winning probability of many important games can be captured by the following property finding problem. 

\begin{definition}[Classical Property Finding for Property $P$]
Let $P$ be a monotone property.\footnote{We are only interested in the case of $P$ being monotone, i.e., learning more information is always better.}
We say a query algorithm wins if $D \in P$, where $D$ is the database corresponding to the partial assignment a classical algorithm learns by querying a random oracle.  
\end{definition}

\noindent \textbf{Some Examples.} In the pre-image search problem, the goal is to find an input $x$ such that $f(x) = 0$. The property is defined as $P = \{D: 0\in_\Y D\}$. For the collision finding problem, we want to find a pair of distinct inputs that map to the same output. In this case, $P = \{ D:  D(x) = D(x')\text{ for some distinct } x,x' \in_\X D\}$.

\begin{definition}[Transition Probability]\label{def:transition-prob}
    For a monotone property $P$, we define $\overline{P}$ to be the complement of $P$, and the probability $p_\nu$ as
    \begin{align*}
        p_\nu = \max_{\substack{D \in \overline{P}^{(\leq \nu)} \\ x \not\in_\X D }} \Pr_{y \gets \Y}[D \cup \{(x, y)\} \in P],
    \end{align*}
    where we recall that $\overline{P}^{(\leq \nu)}$ denotes all databases in $\overline{P}$ of size at most $\nu$.
    By the monotonicity of $P$, $p_\nu$ is non-decreasing on $\nu$.
\end{definition}

\begin{lemma}\label{lem:classical_transition}
    Let $P$ be a monotone property with transition probability $p_t$. 
    For any classical $T$-query algorithm, the probability of winning the property finding problem of $P$ is at most $p_0 + p_1 + \cdots+ p_{T-1} \leq T\cdot p_T$.
\end{lemma}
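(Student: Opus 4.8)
The plan is to analyze the classical random oracle via lazy sampling and track the first query at which the learned database enters the property $P$. First I would fix the internal randomness of the $T$-query algorithm, so that its $i$-th query $x_i$ becomes a deterministic function of the previously observed answers; this is without loss of generality for an upper bound. By the principle of deferred decisions, the random oracle may be sampled lazily: maintain a database $D_i$ after the first $i$ queries with $D_0=\emptyset$, and on the $(i{+}1)$-th query $x_{i+1}$, if $x_{i+1}\in_\X D_i$ return $D_i(x_{i+1})$ and set $D_{i+1}=D_i$, and otherwise draw $y\sim\Y$ uniformly, return $y$, and set $D_{i+1}=D_i\cup\{(x_{i+1},y)\}$. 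This reproduces the true joint input-output distribution, and one has $|D_i|\le i$ throughout. Since $P$ is monotone and $D_0\subseteq D_1\subseteq\cdots\subseteq D_T$, the algorithm wins (its learned database lies in $P$) if and only if $D_T\in P$.

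Next I would decompose the winning event according to the first step at which the database enters $P$. For $0\le i\le T-1$ let $\mathbf{E}_i$ be the event that $D_i\notin P$ but $D_{i+1}\in P$. Using monotonicity of $P$ and $D_0=\emptyset\notin P$ (the only interesting case; otherwise the statement is degenerate), the events $\mathbf{E}_0,\ldots,\mathbf{E}_{T-1}$ are pairwise disjoint and $\{D_T\in P\}$ is exactly their union, so $\Pr[\text{win}]=\sum_{i=0}^{T-1}\Pr[\mathbf{E}_i]$. To bound $\Pr[\mathbf{E}_i]$, condition on the history up to and including the $i$-th answer on the event $D_i\notin P$; this fixes $D_i\in\overline{P}^{(\le i)}$ and the next query $x_{i+1}$. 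If $x_{i+1}\in_\X D_i$ then $D_{i+1}=D_i\notin P$, so $\mathbf{E}_i$ cannot occur; if $x_{i+1}\notin_\X D_i$ then, directly from Definition~\ref{def:transition-prob}, $\Pr_{y\sim\Y}[D_i\cup\{(x_{i+1},y)\}\in P]\le p_i$. Averaging over histories gives $\Pr[\mathbf{E}_i]\le p_i\cdot\Pr[D_i\notin P]\le p_i$.

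Summing over $i$ yields $\Pr[\text{win}]\le\sum_{i=0}^{T-1}p_i$, and the final inequality $\sum_{i=0}^{T-1}p_i\le T p_T$ follows since $p_\nu$ is non-decreasing in $\nu$ by monotonicity of $P$. The only point requiring care—rather than genuine difficulty—is the bookkeeping for repeated queries (a query to an $x$ already recorded in $D_i$ causes no transition, hence contributes nothing) together with the reduction to a deterministic algorithm; once lazy sampling is set up, the rest is a one-line union bound over the $T$ queries.
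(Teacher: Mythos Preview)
Your proof is correct and follows exactly the approach the paper sketches: decompose the winning event according to the first query at which the lazily sampled database enters $P$, bound each such transition by $p_i$ via the definition of transition probability, and sum. The paper's own proof is a one-line sketch of precisely this argument, so you have simply filled in the details (including the bookkeeping for repeated queries and the deterministic reduction) that the paper omits.
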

Since we are not interested in the above classical lemma for the purpose of this work, we just offer an illustrative example to compare with the quantum analog of it.
\begin{proof}[Proof Sketch of \Cref{lem:classical_transition}]
    The probability of an algorithm reaching $D \in P$ is upper bounded by the summation of its probability of reaching $D$ at the $i$-th step for every $i \in  \{1, 2, \ldots, T\}$. Thus, the probability is at most $p_0 + \cdots + p_{T-1} \leq T\cdot p_T$. 
\end{proof}

\paragraph*{Quantum Property Finding Problem.}
In the quantum case, we work in the compressed oracle framework. At any stage of an algorithm, the joint system of the algorithm and the random oracle is described by a state over the algorithm register and the database register. 

let $\Lambda_P$ be a projection that in superposition checks whether the database is in $P$: 
$$
\Lambda_P = \sum_{D \in P} \ket D \bra D.
$$
We can similarly define the quantum version of property finding problem in the quantum random oracle model. 

\begin{definition}[Quantum Property Finding]\label{def:qproperty_finding}
    The quantum property finding problem for property $P$ is defined as follows: a quantum algorithm interacts with a compressed oracle; when the algorithm stops, a binary-valued measurement $\{\Lambda_P, \I - \Lambda_P\}$ is applied to the compressed oracle register and the algorithm wins if the measurement outcome is $0$ (corresponding to $\Lambda_P$, it ``finds'' the property). 
\end{definition}

\begin{theorem}[{\cite[Theorem 5.7]{chung2021compressed}}]\label{thm:quantum_lift_property_find}
    Let $P$ be a monotone property with transition probability $p_t$. 
    For any quantum $T$-query algorithm, the probability of its winning the quantum property finding problem for property $P$ is at most $\gamma_T = O(\sqrt{p_0} + \sqrt{p_{1}} + \cdots + \sqrt{p_{T-1}})^2 = O(T^2\cdot p_T)$.
\end{theorem}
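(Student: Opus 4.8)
The plan is to run the compressed-oracle ``path integral'' argument sketched in \Cref{sec:overview_q}, specialized to a monotone property $P$ and using only the triangle inequality (no orthogonality is needed here). Fix a $T$-query quantum algorithm and simulate its random oracle by the compressed phase oracle $\compressO=\cphso$. Let $\ket{\phi_T}=\compressO\,U_T\cdots\compressO\,U_1\big(\ket{0}\otimes\ket{\emptyset}\big)$ be the joint state of the algorithm and oracle registers right after the last query; since the final local unitary $U_{\sf final}$ acts only on the algorithm register it commutes with $\Lambda_P=\sum_{D\in P}\ket{D}\bra{D}$ (which acts on the database register) and may be dropped, so by \Cref{def:qproperty_finding} the winning probability equals $\norm{\Lambda_P\ket{\phi_T}}^2$. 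We assume $\emptyset\notin P$; the degenerate case $\emptyset\in P$ forces $P=\Ds$ and makes the problem trivial.

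First I would decompose $\Lambda_P\ket{\phi_T}$ according to the query at which the database first permanently enters $P$. Inserting $\I=\Lambda_P+(\I-\Lambda_P)$ after each query and telescoping from the $T$-th query backwards yields $\Lambda_P\ket{\phi_T}=\sum_{t=1}^{T}\ket{\psi_t}$ with
\[
\ket{\psi_t}:=(\Lambda_P\compressO)\,U_T\cdots(\Lambda_P\compressO)\,U_t\,(\I-\Lambda_P)\,\compressO\,U_{t-1}\cdots\compressO\,U_1\big(\ket{0}\otimes\ket{\emptyset}\big),
\]
where for $t=1$ the factors to the right of $(\I-\Lambda_P)$ are absent; the leftover ``all projections'' term vanishes because $\Lambda_P\big(\ket{0}\otimes\ket{\emptyset}\big)=0$. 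By the triangle inequality $\norm{\Lambda_P\ket{\phi_T}}\leq\sum_{t=1}^{T}\norm{\ket{\psi_t}}$. Each $(\Lambda_P\compressO)U_j$ is a projector times a unitary, hence has operator norm at most $1$, so $\norm{\ket{\psi_t}}\leq\norm{\Lambda_P\,\compressO\,U_t\,\ket{\eta_{t-1}}}$, where $\ket{\eta_{t-1}}:=(\I-\Lambda_P)\compressO U_{t-1}\cdots\compressO U_1\big(\ket{0}\otimes\ket{\emptyset}\big)$ has its database register supported on $\overline{P}$ by construction and, by \Cref{lem:bounded-database}, on databases of size at most $t-1$; thus $\ket{\eta_{t-1}}$ is supported on $\overline{P}^{(\leq t-1)}$, and applying $U_t$ (which does not touch the database register) preserves this, while $\norm{\ket{\eta_{t-1}}}\leq1$.

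The crux — and the step I expect to be the main obstacle — is the following single-query transition bound: there is an absolute constant $c$ so that for every state $\ket{\eta}$ whose database register is supported on $\overline{P}^{(\leq\nu)}$, one has $\norm{\Lambda_P\,\compressO\,\ket{\eta}}\leq c\sqrt{p_\nu}\,\norm{\ket{\eta}}$. I would prove it by a direct analysis of the local action of $\compressO=\stddecomp\cdot\cphso'\cdot\stddecomp$ on a basis vector $\ket{x,u,z}\ket{D}$ with $D\in\overline{P}^{(\leq\nu)}$: when $u=0$ the query acts as the identity, leaving the database in $\overline{P}$; when $u\neq0$ the query either leaves $D$ unchanged, removes an entry, or records a fresh value $y$ at some input $x$. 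Monotonicity of $P$ (upward closure) excludes the first two from landing in $P$, since every sub-database of $D\notin P$ is again outside $P$; in the remaining case the component projected into $P$ has norm $\tfrac{1}{\sqrt N}\sqrt{\big|\{y: D\cup\{(x,y)\}\in P\}\big|}=\sqrt{\Pr_{y}[D\cup\{(x,y)\}\in P]}\leq\sqrt{p_{|D|}}\leq\sqrt{p_\nu}$ by \Cref{def:transition-prob}. Tracking cross-terms across the two applications of $\stddecomp$ (and the case $x\in_\X D$, handled similarly using $p_{|D|-1}\leq p_\nu$) contributes only lower-order $O(1/\sqrt N)$ factors, which are absorbed into $c$. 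This bookkeeping is the technical heart of \cite[Theorem 5.7]{chung2021compressed}, and it is precisely what upgrades Zhandry's estimate $\norm{\ket{\psi_t}}\lesssim\sqrt{t/N}$ for collision finding \cite{zhandry2019record} to an arbitrary monotone property.

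Finally, I would assemble the pieces: the transition bound with $\nu=t-1$ gives $\norm{\ket{\psi_t}}\leq c\sqrt{p_{t-1}}$, hence $\norm{\Lambda_P\ket{\phi_T}}\leq c\sum_{t=1}^{T}\sqrt{p_{t-1}}=O\big(\sqrt{p_0}+\sqrt{p_1}+\cdots+\sqrt{p_{T-1}}\big)$, and squaring bounds the winning probability by $O\big(\sqrt{p_0}+\cdots+\sqrt{p_{T-1}}\big)^2=\gamma_T$. Since $P$ is monotone, $p_0\leq p_1\leq\cdots\leq p_T$, so $\sum_{t=0}^{T-1}\sqrt{p_t}\leq T\sqrt{p_{T-1}}\leq T\sqrt{p_T}$, which yields the stated alternative form $O(T^2 p_T)$.
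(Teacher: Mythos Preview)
Your approach is correct and is exactly the path-integral argument the paper sketches in \Cref{sec:overview_q}; the paper itself does not reprove \Cref{thm:quantum_lift_property_find} but cites it, and the single-query transition bound you isolate is precisely what the paper establishes (with constant $c=\sqrt{8}$) as \Cref{fct:cO-transition} in \Cref{prf:transition-g-h}.

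One imprecision worth fixing: in the case $x\in_\X D$, the contribution to $\Lambda_P\compressO\ket{\eta}$ is \emph{not} a lower-order $O(1/\sqrt{N})$ term. After $\cphso$ acts on $\ket{x,u,z}\ket{D'\cup(x,y)}$ with $D'(x)=\bot$, the resulting database component is a combination of $\ket{D'\cup(x,y)}$, $\ket{D'}$, and $\tfrac{1}{N}\sum_{y'}(\omega_N^{uy}+\omega_N^{uy'})\ket{D'\cup(x,y')}$; the first two stay in $\overline{P}$ by monotonicity, but the third lands in $P$ for those $y'$ with $D'\cup(x,y')\in P$, and after Cauchy--Schwarz this yields a bound of $2\sqrt{p_{|D'|}}\le 2\sqrt{p_\nu}$---the same order as the $x\notin_\X D$ case, just with a worse constant. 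This is exactly the computation for $\ket{\phi_\beta}$ in \Cref{prf:transition-g-h}. Your conclusion (absorb into $c$) is right, but the reasoning you gave for it is not.
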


\subsection{Main Quantum Theorem}

\begin{definition}
    Let $R$ be a collection of tuples in $X^c \times Y^c$. 
    Define $P_R$ to be the property on $\mathcal{D}$ such that $D \in P_R$ if and only if $D$ consists of a tuple in $R$: 
    \begin{align*}
        D \in P_R  \quad \iff \quad
        \exists (x_1, \ldots, x_c, y_1, \ldots, y_c) \in R \text{ such that } D(x_i) = y_i, \forall i \in \{1, 2, \ldots, c\}.
    \end{align*}
\end{definition}
It is easy to see that $P_R$ is monotone for any $R$. 

\medskip
The game we consider in the QROM is the following. 
\begin{definition}[Game $G_R$]
    
Let $R$ be a collection of tuples. $G_R$ askes a quantum-query algorithm to output $(x_1, \ldots, x_c, y_1, \ldots, y_c) \in R$. 
\end{definition}
Here we are interested in the case when $c$ is small. 
Some examples of $G_R$ include: finding pre-images of zero, collision finding,  $k$-SUM. 
Problems like computing the parity of the function has large $c$ and therefore are not covered in our main theorem. 
We now present our main theorem in the quantum case.

\begin{theorem}\label{thm:main_quantum}
    Let $R$ be a collection of tuples and $R \subseteq \X^c \times \Y^c$. Let $G_R$ be the game defined above. Let $p_T$ be the transition probability, and $\gamma_T$ be an upper bound on the winning probability of any $T$-quantum-query algorithm, derived from Zhandry's compressed oracle technique (\Cref{thm:quantum_lift_property_find}); $\gamma_T=\Theta(T^2\cdot p_T)$. If the transition probability $p_t$ is polynomial in $t$, i.e., $p_t=\Theta(t^{c_1}/N^{c_2})$ for some constants $c_1,c_2\ge 0$, then for every $T \geq c$, 
    \begin{align*}
        \epsQ_{G_{R,K}}(S, T) \leq \widetilde{O}\left( \gamma_T + \frac{S}{K} + \frac{c}{N}\right),
    \end{align*}
    where $\widetilde{O}$ hides low order terms in $\log N$.
\end{theorem}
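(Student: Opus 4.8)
The plan is to follow the route sketched in \Cref{sec:overview_q}, combining the quantum bit-fixing (QBF) framework of \cite{guo2021unifying,liu2023non} with Zhandry's compressed oracle \cite{zhandry2019record}. First I would invoke the QBF reduction: it suffices to produce $\eta=\widetilde O(\gamma_T+S/K+c/N)$ such that for every $T$-quantum-query algorithm $\As$ and every event $\mathbf{W}$ realizable by some $(ST)$-quantum-query algorithm with probability at least a fixed non-negligible threshold $\tau$, we have $\Pr[\As\text{ wins }G_{R,K}\mid\mathbf{W}]\le\eta$; the reduction of \cite{liu2023non}, using the monotonicity and ordering of the round-conditional winning probabilities $\Pr[\mathbf{W}_i\mid\mathbf{W}_{<i}]$, then gives $\epsQ_{G_{R,K}}(S,T)\le\eta$, and it permits taking $\tau$ as large as $\eta^{S}$, so the only case to treat is $\Pr[\mathbf{W}]\ge\eta^{S}$. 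Two harmless reductions: we may assume $S\le K$ (otherwise $S/K\ge1$ and the bound is vacuous) and that $\gamma_T$ is below a suitable inverse polylogarithm in $N$ (otherwise $\gamma_T$ already dominates); the latter is where the hypothesis $p_t=\Theta(t^{c_1}/N^{c_2})$ enters, as it pins down $\gamma_T=\Theta(T^2p_T)$, keeps $p_{O(T)}=\widetilde\Theta(p_T)$, and lets us later sum geometric series.

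Passing to the compressed oracle (\Cref{lem:bounded-database}), conditioning on $\mathbf{W}$ replaces the initial database $\ket\emptyset$ by a superposition over $\ket D$ with $|D|\le ST$; such a $D$ could already store the winning property $P_R$ on as many as $\Theta(ST)$ distinct salts, so conditioning alone is insufficient. Following Akshima, Guo and Liu \cite{akshima2022time} I would introduce the event $\mathbf{E}$ that the database register is supported only on $D$ whose per-salt property holds on at most $\widetilde S=\Theta(S\log N)$ distinct salts (with a sufficiently large implied constant), and split
\[
\Pr[\As\text{ wins }G_{R,K}\mid\mathbf{W}]\ \le\ \Pr[\As\text{ wins }G_{R,K}\mid\mathbf{W}\wedge\mathbf{E}]\ +\ \Pr[\overline{\mathbf{E}}\mid\mathbf{W}].
\]

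For the first term I would rerun the path-integral proof of the vanilla property-finding bound (\Cref{thm:quantum_lift_property_find}) started from an arbitrary fixed database that, by $\mathbf{E}$ and $\mathbf{W}$, has the property on at most $\widetilde S$ salts and at most $ST$ entries in total. With probability at most $\widetilde S/K=\widetilde O(S/K)$ the random challenge salt is one of the property salts, and with probability at most $S/K$ the challenge salt already carries $\ge T$ pre-existing entries (since at most $S$ salts can carry $\ge T$ entries when the total is $\le ST$); off both events the challenge salt has fewer than $T$ pre-existing entries and no property, so a property must surface at one of the $T$ new queries, on a salt then holding fewer than $2T$ entries, and writing $\ket\phi=\sum_{t}\ket{\psi_t}$ with $\|\ket{\psi_t}\|\lesssim\sqrt{p_{2T}}$ gives a winning probability $\lesssim T^2p_{2T}=\widetilde\Theta(\gamma_T)$ by the polynomial growth of $p_t$. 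Transferring from the compressed database back to the real random oracle via \Cref{lem:lem5-salt=1} costs an additive $c/N$, so the first term is $\widetilde O(\gamma_T+S/K+c/N)$.

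The crux is the second term, which reduces to a \emph{quantum strong threshold direct product theorem} for property finding in the compressed oracle (\Cref{thm:threshold-qsdpt}): here $\Pr[\overline{\mathbf{E}}]$ is the squared norm of $\Lambda^{\ge\widetilde S}\ket\phi$ for an $(ST)$-query interaction started from $\ket\emptyset$, where $\Lambda^{\ge r}$ (resp.\ $\Lambda^{=r}$) projects onto databases whose property holds on at least (resp.\ exactly) $r$ distinct salts. I would decompose $\Lambda^{=r}\ket\phi=\sum_{\vz}\ket{\psi_{\vz}}$ over paths $\vz$ recording, for each of the $r$ salts, the query step at which it permanently acquires the property, and then split each $\ket{\psi_{\vz}}$ further according to which salts are involved and with how many entries. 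The naive estimate — a triangle inequality over all of these — loses a fatal factor $K^{r}$; the key new ingredient I would need to establish is that these finer components are (approximately) mutually orthogonal, so a Pythagorean bound applies to them and only the $\binom{ST}{r}$ choices of $\vz$ incur a triangle inequality, which (after balancing the per-salt entry counts to $\le ST/r\le T$) yields $\|\Lambda^{=r}\ket\phi\|^2\lesssim\binom{ST}{r}^2\cdot\widetilde O(p_T)^{r}=\widetilde O(\gamma_T)^{r}$ using $r\ge\widetilde S\ge S$. Summing the geometric series over $r\ge\widetilde S$ gives $\Pr[\overline{\mathbf{E}}]\le\widetilde O(\gamma_T)^{\widetilde S}$, which by the choice of $\widetilde S$ is far below $\eta^{S}\le\tau\le\Pr[\mathbf{W}]$, so $\Pr[\overline{\mathbf{E}}\mid\mathbf{W}]\le\Pr[\overline{\mathbf{E}}]/\Pr[\mathbf{W}]$ is negligible and the second term is absorbed. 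I expect the main obstacle to be making the orthogonality precise — identifying exactly which register distinctions render the refined path states orthogonal and controlling the approximation error accumulated across as many as $ST$ queries — since the QBF bookkeeping, the vanilla-style path integral for the first term, and the geometric summation are all routine. The full details would be carried out in \Cref{subsec:prf-main_quantum} and \Cref{sec:property_finding_sdpt}.
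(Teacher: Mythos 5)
Your proposal is correct and follows essentially the same route as the paper: the QBF reduction of Liu to bounding $\Pr[\mathcal{A}\text{ wins}\mid\mathbf{W}]$ for non-tiny $\Pr[\mathbf{W}]$, the Akshima--Guo--Liu-style event $\mathbf{E}$ with $\widetilde S=\Theta(S\log N)$ property salts, the three-way decomposition of the conditioned term (already-winning salts giving $\widetilde O(S/K)$, overloaded salts giving $O(S/K)$, and a transition-capacity/path argument giving $O(\gamma_{2T})$ plus the $c/N$ loss from Zhandry's Lemma 5), and the bound on $\Pr[\overline{\mathbf{E}}]$ via the threshold strong direct product theorem with path decomposition over $\vz$ and approximate orthogonality of the refined components. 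The single step you defer — making that orthogonality precise — is exactly where the paper's technical work lies (the used-elements bookkeeping and the transition estimates in the appendix proving \Cref{clm:qsdpt-G_P_g}), so your outline is the paper's proof with that lemma left to be carried out.
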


Since an algorithm needs to output a $c$-tuple of input-output pairs that are consistent with a random oracle, we can without loss of generality only consider the case when $T \geq c$. 
For $G_R$ that we are interested in the work, the compressed oracle technique shows that $\epsQ_{G_R}(T) \leq O(\gamma_T + \frac{c}{N})$.
When the technique is tight, our theorem is also tight. 
Zhandry~\cite{zhandry2019record} demonstrated the tightness of~\Cref{thm:quantum_lift_property_find} for some special cases. Thus, we  get the following consequences. 
\begin{corollary}[Consequences of \Cref{thm:main_quantum}]
    For ${\sf CRHF}$ whose goal is to find two distinct inputs whose images are identical, we have
    \begin{align*}
        \epsQ_{{\sf CRHF}_K}(S, T) \leq \widetilde{O}\left(\frac{T^3}{N} + \frac{S}{K} \right). 
    \end{align*}

    For ${\sf Inv_0}$ whose goal is to find a pre-image of zero, we have
    \begin{align*}
        \epsQ_{{\sf Inv}_{0,K}}(S, T) \leq \widetilde{O}\left(\frac{T^2}{N} + \frac{S}{K} \right). 
    \end{align*}

    For ${\sf kSUM}$  whose goal is to find $k$ distinct inputs whose images sum up to zero, we have
    \begin{align*}
        \epsQ_{{\sf kSUM}_K}(S, T) \leq \widetilde{O}\left(\frac{T^{k+1}}{N} + \frac{S}{K} \right). 
    \end{align*}
\end{corollary}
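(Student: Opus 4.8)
The plan is to obtain all three bounds by directly instantiating \Cref{thm:main_quantum}, so the work reduces to bookkeeping: for each game $G_R$ with $R\subseteq\X^c\times\Y^c$, \Cref{thm:main_quantum} yields $\epsQ_{G_{R,K}}(S,T)\le\widetilde O(\gamma_T+S/K+c/N)$ as soon as the transition probability $p_t$ (\Cref{def:transition-prob}) of the monotone property $P_R$ has the polynomial form $\Theta(t^{c_1}/N^{c_2})$, where $\gamma_T=\Theta(T^2 p_T)$ by \Cref{thm:quantum_lift_property_find}. Thus I would (i) identify $c$ and the property $P_R$ for each of ${\sf CRHF}$, ${\sf Inv}_0$, ${\sf kSUM}$; (ii) estimate $p_t$ and verify the polynomial form; (iii) read off $\gamma_T$; and (iv) absorb the additive $c/N$ term into $\gamma_T$ using $T\ge c$.

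For ${\sf Inv}_0$ we have $c=1$ and $P_R=\{D:0\in_\Y D\}$. From any $D\in\overline{P_R}$ and any $x\notin_\X D$, a fresh uniform $y$ lands in $P_R$ exactly when $y=0$, so $p_t=1/N$ for all $t$ ($c_1=0$, $c_2=1$) and $\gamma_T=\Theta(T^2/N)$. For ${\sf CRHF}$ we have $c=2$ and $P_R=\{D:D(x)=D(x')\text{ for some distinct }x,x'\in_\X D\}$; for $D\in\overline{P_R}$ with $|D|\le t$, adjoining $(x,y)$ creates a collision only if $y$ equals one of the at most $t$ values already present, so $p_t\le t/N$ ($c_1=c_2=1$) and $\gamma_T=\Theta(T^3/N)$. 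For ${\sf kSUM}$ we have $c=k$ and $P_R=\{D:\text{some }k\text{ distinct entries of }D\text{ have values summing to }0\}$; for $D\in\overline{P_R}$ with $|D|\le t$, adjoining $(x,y)$ can only create a zero-sum $k$-subset containing $(x,y)$, and a union bound over the $\binom{t}{k-1}$ choices of the remaining $k-1$ entries (each fixing one value of $y$) gives $p_t\le\binom{t}{k-1}/N=O(t^{k-1}/N)$ ($c_1=k-1$, $c_2=1$), hence $\gamma_T=\Theta(T^{k+1}/N)$.

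In each case $p_t$ is polynomial in $t$, so \Cref{thm:main_quantum} applies and gives $\epsQ_{G_{R,K}}(S,T)\le\widetilde O(\gamma_T+S/K+c/N)$. Since an algorithm for $G_R$ must output $c$ input--output pairs we always have $T\ge c\ge 1$, hence $c/N\le T^{c_1+2}/N=\Theta(\gamma_T)$ (using $c_2=1$ and $c_1+2\ge1$), and the $c/N$ term is dominated. This yields $\epsQ_{{\sf Inv}_{0,K}}(S,T)\le\widetilde O(T^2/N+S/K)$, $\epsQ_{{\sf CRHF}_K}(S,T)\le\widetilde O(T^3/N+S/K)$, and $\epsQ_{{\sf kSUM}_K}(S,T)\le\widetilde O(T^{k+1}/N+S/K)$, as claimed.

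Because all the heavy lifting sits inside \Cref{thm:main_quantum}, there is no real obstacle; the only delicate point is getting the transition-probability estimates exactly right — in particular the $\binom{t}{k-1}/N$ union bound for ${\sf kSUM}$ and confirming it indeed matches the form $\Theta(t^{c_1}/N^{c_2})$ required by the theorem's hypothesis. (One may additionally note that these bounds are optimal whenever the compressed-oracle estimate of \Cref{thm:quantum_lift_property_find} is tight, which Zhandry established for collision finding and pre-image finding.)
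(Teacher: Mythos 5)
Your proposal is correct and matches the paper's (implicit) argument: the corollary is obtained exactly by instantiating \Cref{thm:main_quantum} with the monotone properties for pre-image finding, collision finding, and $k$-SUM, computing the transition probabilities $p_t=\Theta(1/N)$, $\Theta(t/N)$, $O(t^{k-1}/N)$ respectively, reading off $\gamma_T=\Theta(T^2p_T)$, and absorbing the $c/N$ term using $T\ge c$. The paper gives no further detail beyond this instantiation, so your bookkeeping is precisely the intended proof.
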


\subsection{Quantum Direct Product Theorems for Property Finding Problem}
\label{sec:property_finding_sdpt}

To prove our main theorem \Cref{thm:main_quantum}, one crucial step is to give a (threshold) strong direct product theorem for the property finding problem. In this part, we introduce the notion of quantum direct product of property finding problems, and present the strong direct product theorems (SDPT) in this case.

\paragraph*{Quantum Direct Product of Property Finding Problem.} Now a random oracle $f$ is treated as $[K] \times \X \to \Y$. For convenience, $\KX := [K] \times \X$. The random oracle can be viewed as a concatenation of $K$ independent random oracles; namely,  for each $k \in [K]$, $f_k = f(k,\cdot)$ is a random oracle $\X \to \Y$. 

An input is denoted by $(k, x) \in \KX$ for $k \in [K]$ and $x \in \X$. A database $D$ is similarly defined by $\vx \in \binom{\KX}{\nu}$ and $\vy \in \Y^\nu$ of the same length $\nu$. It is also nature to define $((k, x), y) \in D, (k, x) \in_{\KX} D, y \in_\Y D$ and $D \subseteq D'$.

\begin{definition}[Restriction of Databases]
For a database $D \in \Ds_{\KX, \Y}$ and an index $k \in [K]$, we define the restriction of $D$ to $k$ as $D|_k \in \Ds_{\X, \Y}$: 
\begin{align*}
    D|_k = \{(x, y) \,:\, ((k, x), y) \in D\}. 
\end{align*}
In other words, $D|_k$ is a subset of $D$ with prefix $k$ removed.
\end{definition}

\begin{definition}[Direct Product of Property]
    Let $P$ be a property on $\Ds_{\X, \Y}$. 
    Let $\kappa \in \mathbb{Z}_+$ and $\KX = [\kappa] \times \X$.  
    Define the direct product property on $\Ds_{\KX, \Y}$ as $P^{\times \kappa}$:
    \begin{align*}
        D \in P^{\times \kappa} \quad \Longleftrightarrow \quad D|_{k} \in P, \forall k \in [\kappa]. 
    \end{align*}
\end{definition}

In the quantum setting, when the database register is a superposition state of databases, $D|_k\in P$ corresponds to the projected state where every superposition satisfies $D|_k\in P$. 

\begin{definition}[Success on Salts]
    Let $P$ be a property on $\Ds_{\X, \Y}$. 
    If $D|_k\in P$, we say that the database $D$ succeeds on salt $k$. If $\mathcal{K}$ is the set of $k$ such that $D|_k\in P$, we say that the database $D$ succeeds on $|\mathcal{K}|$ salts.
\end{definition}

\paragraph*{Quantum Direct Product Theorems.} 
We have the following direct product theorem for quantum property finding problem: 

\begin{theorem}[SDPT For Quantum Property Finding]\label{thm:qsdpt}
    Let $P$ be a monotone property with transition probability $p_t$.
    Let $\gamma_T$ be an upper bound on the winning probability of any $T$-quantum-query algorithm, derived from Zhandry's compressed oracle technique (\Cref{thm:quantum_lift_property_find}).
 
     Let $\kappa \in \mathbb{Z}_+$ and $P^{\times \kappa}$ be the direct product. If $p_T$ is polynomial in $T$, i.e., $p_T=\Theta(T^{c_1}/N^{c_2})$ for some constants $c_1,c_2\ge 0$, then with $\gamma_T=\Theta(T^2\cdot p_T)$, the maximum winning probability of any $\kappa T$-query quantum algorithm for the quantum property finding problem for $P^{\times \kappa}$ is at most $O(1)^\kappa\cdot \gamma_T^\kappa$.
\end{theorem}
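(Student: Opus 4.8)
The plan is to carry out, for the direct product $P^{\times\kappa}$, the path-integral argument sketched in \Cref{sec:overview_q}. First I would fix an arbitrary $\kappa T$-query quantum algorithm, simulate its interaction with the random oracle $f\colon[\kappa]\times\X\to\Y$ by a compressed phase oracle, and write the final joint state as $\ket\phi=\compressO\,U_{\kappa T}\cdots\compressO\,U_1\,\ket0\otimes\ket\emptyset$. By \Cref{def:qproperty_finding} the winning probability is $\|\Lambda_{P^{\times\kappa}}\ket\phi\|^2$, and since the universe has exactly $\kappa$ salts, $\Lambda_{P^{\times\kappa}}=\Lambda^{\ge\kappa}=\Lambda^{=\kappa}$, the projector onto databases $D$ with $D|_k\in P$ for all $k\in[\kappa]$. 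So it suffices to prove $\|\Lambda^{=\kappa}\ket\phi\|\le O(1)^\kappa\gamma_T^{\kappa/2}$.

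Second, I would decompose $\Lambda^{=\kappa}\ket\phi$ along ``staircase'' paths. For each increasing tuple $\vz=(z_1<\cdots<z_\kappa)$ in $\{1,\dots,\kappa T\}$, let $\ket{\psi_\vz}$ be the contribution of those evolutions in which, for each $i$, the $i$-th salt to acquire property $P$ \emph{that is never lost again before the end} does so exactly at query $z_i$; formally this is obtained as in \Cref{sec:overview_q} by inserting the alternation of the projectors $\Lambda^{=i}$ and $\I-\Lambda^{\ge i+1}$ around the queries of each window $[z_i,z_{i+1})$, with the usual care (following the standard compressed-oracle treatment of forgetting) so that $\Lambda^{=\kappa}\ket\phi=\sum_\vz\ket{\psi_\vz}$. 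The triangle inequality then reduces the theorem to a per-path bound: $\|\Lambda^{=\kappa}\ket\phi\|\le\sum_\vz\|\ket{\psi_\vz}\|\le\binom{\kappa T}{\kappa}\max_\vz\|\ket{\psi_\vz}\|\le(eT)^\kappa\max_\vz\|\ket{\psi_\vz}\|$, so it is enough to show $\|\ket{\psi_\vz}\|\le O(1)^\kappa p_T^{\kappa/2}$ for every $\vz$ (recall $\gamma_T=\Theta(T^2p_T)$, so $(eT)^\kappa p_T^{\kappa/2}=O(1)^\kappa\gamma_T^{\kappa/2}$).

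Third, and this is the technical heart, I would prove the per-path bound by refining $\ket{\psi_\vz}$ once more, splitting it according to the order $\vec k=(k_1,\dots,k_\kappa)$ in which the salts become permanently successful and the tuple $\vec q=(q_1,\dots,q_\kappa)$ recording the number of database entries on salt $k_i$ just before query $z_i$. Since ``which salts are locked in so far'', ``which salt is locked in now'', and ``how many entries it holds'' are all read off the database register, the sub-states $\ket{\psi_{\vz,\vec k,\vec q}}$ should be \emph{approximately orthogonal}, which upgrades the triangle inequality to the Pythagorean bound $\|\ket{\psi_\vz}\|^2\lesssim\sum_{\vec k,\vec q}\|\ket{\psi_{\vz,\vec k,\vec q}}\|^2$. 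For a fixed refined path, unwinding the definition of the transition probability exactly as in the single-salt analysis behind \Cref{thm:quantum_lift_property_find} bounds $\|\ket{\psi_{\vz,\vec k,\vec q}}\|$ by the product of the amplitudes picked up at the $\kappa$ transitions (the intervening $U_j$'s and $\compressO$'s being norm-non-increasing), the $i$-th amplitude governed by $p_{q_i}$. Finally, summing over the admissible $\vec k,\vec q$---using that the $q_i$'s arise from a legal distribution of at most $\kappa T$ queries among the $\kappa$ salts and that $p_t=\Theta(t^{c_1}/N^{c_2})$, so that (by an AM--GM/concavity estimate) the product of per-salt contributions is maximized when the query budget is split evenly, producing ``$T$'' rather than ``$\kappa T$''---collapses the sum to $O(1)^\kappa p_T^\kappa$; this gives the per-path bound and hence $\|\Lambda^{=\kappa}\ket\phi\|^2\le O(1)^\kappa\gamma_T^\kappa$.

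The step I expect to be the main obstacle is establishing the approximate orthogonality of the refined path states and turning it into a usable Pythagorean inequality: this is exactly the place where a genuine triangle inequality (as in the vanilla bound, and as used above over $\vz$ itself) would be too lossy, and making it rigorous requires showing that reading off ``which salt just got permanently locked in, and with how many entries'' behaves like a computational-basis measurement commuting suitably with the surrounding evolution, while controlling the small non-orthogonality that forgetting introduces. A secondary nuisance is the retroactive definition of the $\ket{\psi_\vz}$ in the presence of forgetting (so that $\sum_\vz\ket{\psi_\vz}$ really reconstructs $\Lambda^{=\kappa}\ket\phi$), together with the combinatorial bookkeeping needed to reduce the admissible-$\vec k,\vec q$ sum to a clean $O(1)^\kappa$ constant under only the polynomial-growth hypothesis on $p_t$.
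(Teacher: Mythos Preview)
Your plan matches the paper's: decompose $\Lambda^{=\kappa}\ket\phi$ into staircase paths indexed by $\vz$, apply the triangle inequality over $\vz$, and for each path replace the triangle inequality by a Pythagorean-type sum over the finer data (which salt locks in when, with how many entries) to obtain a product of transition probabilities. The paper's overview (\Cref{sec:overview_q}) even frames that last step as ``approximate orthogonality'' of the refined paths, exactly as you do.

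In the actual execution, however, the paper does not track your tuple $\vec q$ directly, and this is where your plan has a real gap rather than a nuisance: the assertion that ``the $q_i$'s arise from a legal distribution of at most $\kappa T$ queries among the $\kappa$ salts'' is not justified, because $q_i$ is a database size measured at time $z_i-1$ and forgetting can make a salt's entry count rise and later fall, so these sizes at \emph{different} times need not sum to the total query budget. The paper's fix (\Cref{def:used_elements}) is to replace your tuple $\vec q$ by a single scalar potential, the ``number of used elements'' $j=t-|D|+\sum_{k\in\cS}|D_k|$ (forgotten entries plus entries on the already-locked salts), which is shown to be non-decreasing along the evolution (\Cref{fct:transition-h-g}) and hence globally bounded by $\kappa T$; an induction on the number of locked salts (\Cref{clm:qsdpt-G_P_g}) then yields the per-path bound $8^\kappa\max_{t_1+\cdots+t_\kappa=\kappa T}\prod_i p_{t_i-1}$, after which the polynomial-growth hypothesis on $p_t$ finishes exactly as you describe. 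So your diagnosis of the two obstacles is inverted: once one tracks the right invariant, the needed orthogonality is exact (computational-basis projections on the query salt, the locked-salt set, and the used-element count), whereas the ``secondary nuisance'' of enforcing a global budget constraint in the presence of forgetting is where the genuine new idea lives.
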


Actually we can have a threshold version of \Cref{thm:qsdpt}.
\begin{theorem}[Threshold SDPT for Quantum Property Finding]\label{thm:threshold-qsdpt}
Let $P$ be a monotone property with transition probability $p_t$. 
Let $\gamma_T$ be an upper bound on the winning probability of any $T$-quantum-query algorithm, derived from Zhandry's compressed oracle technique (\Cref{thm:quantum_lift_property_find}). 

For salt space $[K]$ and threshold $\kappa\le K$, if $p_T$ is polynomial in $T$, i.e., $p_T=\Theta(T^{c_1}/N^{c_2})$ for some constants $c_1,c_2\ge 0$, the maximum winning probability for any $B$-query quantum algorithm ($B\in\mathbb{Z}_{+}$) to succeed on at least $\kappa$ salts in this quantum property finding problem is at most $O(1)^\kappa\cdot (\gamma_{B/\kappa})^\kappa$.
\end{theorem}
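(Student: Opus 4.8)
The plan is to work entirely inside Zhandry's compressed (phase) oracle picture. Simulating the random oracle $f\colon[K]\times\X\to\Y$ by $\cphso$, the joint state of the algorithm and the oracle after $B$ queries is $\ket\phi=\compressO\,U_B\cdots\compressO\,U_1\ket0\otimes\ket\emptyset$, and by \Cref{def:qproperty_finding} together with the success-on-salts terminology, the probability of succeeding on at least $\kappa$ salts is exactly $\|\Lambda^{\ge\kappa}\ket\phi\|^2$, where $\Lambda^{=r}$ (resp.\ $\Lambda^{\ge\kappa}$) projects the database register onto databases $D$ for which $D|_k\in P$ for exactly $r$ (resp.\ at least $\kappa$) salts $k$. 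Since the projectors $\Lambda^{=r}$ are mutually orthogonal and diagonal in the database basis, $\|\Lambda^{\ge\kappa}\ket\phi\|^2=\sum_{r\ge\kappa}\|\Lambda^{=r}\ket\phi\|^2$, so it suffices to prove $\|\Lambda^{=r}\ket\phi\|^2\le O(1)^r\cdot\gamma_{B/r}^{\,r}$ for every $r\ge\kappa$ and then sum a geometric series, which converges to $O(1)^\kappa\gamma_{B/\kappa}^\kappa$ in the only nontrivial regime $\gamma_{B/\kappa}\ll 1$ (using $\gamma_T=\Theta(T^2p_T)$ with $p_T$ polynomial and non-decreasing, so that $\gamma$ is non-decreasing and $\gamma_{B/r}\le\gamma_{B/\kappa}$ for $r\ge\kappa$; outside that regime the claimed bound is vacuous).

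Next I would set up the path decomposition. Because each compressed-oracle query changes the database on a single salt, the number of salts currently satisfying $P$ changes by at most one per query; hence, intuitively, for any computation path ending in $\Lambda^{=r}$ there are uniquely determined query steps $z_1<z_2<\cdots<z_r$ at which the count of \emph{permanently} succeeded salts (those satisfying $P$ after this and every later query) increments from $i-1$ to $i$. Iterating Zhandry's telescoping identity once per threshold level $i=r,r-1,\dots,1$ yields an operator decomposition $\Lambda^{=r}\ket\phi=\sum_{\vz}\ket{\psi_{\vz}}$ over $\vz=(z_1<\cdots<z_r)$, where in $\ket{\psi_\vz}$ one inserts $\Lambda^{\ge i}$ after each query in the window $(z_i,z_{i+1}]$ and the transition projector $\Lambda^{\ge i}\,\compressO\,\Lambda^{<i}$ at step $z_i$; this generalizes the vanilla collision-finding decomposition of \cite{zhandry2019record} reprised in the overview. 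I would then refine each $\ket{\psi_\vz}$ by also recording the salt $k_i$ frozen at step $z_i$ (the $k_i$ are automatically distinct) and the number $q_i$ of entries already present on salt $k_i$ just before step $z_i$, writing $\ket{\psi_\vz}=\sum_{\vec k,\vec q}\ket{\psi_{\vz,\vec k,\vec q}}$.

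The two analytic ingredients are a per-step transition bound and an orthogonality statement. For the former: at step $z_i$, decomposing the pre-query state over the queried salt and applying the transition-probability analysis of $\stddecomp$ (exactly as in the $\sqrt{t/N}$ estimate of \cite{zhandry2019record}), the squared norm contracts by a factor at most $p_{q_i}$ after one sums over which salt is frozen, so that for fixed $\vz,\vec q$ one gets $\sum_{\vec k}\|\ket{\psi_{\vz,\vec k,\vec q}}\|^2\lesssim\prod_{i=1}^r p_{q_i}$, up to a $(1+O(B/N))^r$ correction accounting for the small amplitude with which an already-frozen tuple is disturbed by a later query. For the latter --- the key new idea --- the states $\ket{\psi_{\vz,\vec k,\vec q}}$ are \emph{approximately orthogonal} across distinct $(\vz,\vec q)$ (distinct freeze-time patterns, and distinct database sizes at the freeze times, live up to a controlled error on disjoint database configurations), so one may invoke the Pythagorean identity in place of the triangle inequality: $\|\Lambda^{=r}\ket\phi\|^2\lesssim\sum_{\vz}\sum_{\vec q}\prod_i p_{q_i}$. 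Since the $k_i$ are distinct, $\sum_i q_i\le B$, and $p_t=\Theta(t^{c_1}/N^{c_2})$, the inner sum is $\lesssim\binom{B+r}{r}\cdot O(1)^r(p_{B/r})^r$ by a convexity estimate maximized at $q_i=B/r$; multiplying by the $\binom{B}{r}$ choices of $\vz$ and using $\binom{B}{r},\binom{B+r}{r}\le(eB/r)^r$ gives $\|\Lambda^{=r}\ket\phi\|^2\lesssim O(1)^r\,(B/r)^{2r}(p_{B/r})^r=O(1)^r\gamma_{B/r}^{\,r}$, as required.

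The main obstacle is making the approximate orthogonality rigorous and, crucially, bounding its error by $O(1)^r$ rather than something larger. The difficulty is that ``salt $k$ is permanently succeeded from step $t$ on'' is a condition on the entire future of the computation, not a one-shot projector on the current database, precisely because the compressed oracle can \emph{unrecord} entries; consequently the refined path states are genuinely only approximately --- not exactly --- orthogonal, and one must show the cross terms originate from histories in which an already-recorded tuple is erased and then re-created, an event carrying amplitude $O(1/\sqrt N)$ per query, so that the accumulated error stays a constant factor per freeze. This mirrors the classical lazy-sampling ``path integral'' (where the analogous decomposition is exactly orthogonal and the Pythagorean identity applies verbatim), and quantifying the quantum deviation from it is the technical heart of \Cref{thm:threshold-qsdpt}; the per-step transition bound and the combinatorial optimization over $(\vec k,\vec q)$ are comparatively routine adaptations of \cite{zhandry2019record}.
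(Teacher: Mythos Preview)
Your plan closely tracks the paper's technical overview, and the first three moves (the $\Lambda^{=r}$ decomposition with a geometric sum over $r$, the path decomposition over freeze-times $\vz$, the per-transition contraction by $p_{q_i}$) match. The gap is in where you locate the orthogonality. You assert approximate orthogonality \emph{across distinct $\vz$}, but this is not what the paper does and it is not true: already for $r=1$ the vanilla paths $\ket{\psi_t}$ for different $t$ overlap, which is exactly why the paper (and Zhandry's original argument) uses the triangle inequality there. Your bound relies on only $|Z|=\binom{B}{r}$ from a claimed Pythagorean identity over $\vz$; but your per-path estimate is only $\|\ket{\psi_\vz}\|^2\lesssim\sum_{\vec q}\prod_i p_{q_i}\sim O(1)^r(B/r)^r(p_{B/r})^r$, which combined with the honest triangle inequality over $\vz$ (giving $|Z|^2\approx(B/r)^{2r}$) is off by $(B/r)^r$ from the target $O(1)^r\gamma_{B/r}^r$.

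What the paper does instead is keep the triangle inequality over $\vz$ but prove the sharper per-path bound $\|\ket{\psi_\vz}\|^2\le 8^r\cdot\max_{\sum t_i=B}\prod_i p_{t_i-1}$ --- a \emph{maximum}, not a sum over $\vec q$. The orthogonality that achieves this lives only inside a fixed $\vz$, is \emph{exact}, and is not over your parameters $(\vec k,\vec q)$ directly. Rather, the paper decomposes $\ket{\psi_\vz}$ by (i) the unordered set $\cS$ of frozen salts (a function of the final database, hence genuinely orthogonal) and (ii) a scalar ``number of used elements'' $j=t-|D|+\sum_{k\in\cS}|D|_k|$, again a function of the current database and hence giving orthogonal projectors $Q_j$. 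The inductive step (Claim~\ref{clm:qsdpt-G_P_g}) peels off one salt at a time: a one-step transition bound $g_{r,\cS,k}^{j,\nu}\le 8\,p_{\nu-1}\,h_{r,\cS\setminus\{k\},k}^{j,\nu}$, the fact that used elements are non-decreasing along the path, and an Abel-summation lemma convert what would be a sum over sizes into the maximum $P_r^B$. So your instinct that ``approximate orthogonality over $(\vec k,\vec q)$ is the heart of the matter'' is right in spirit, but the actual execution needs the used-elements bookkeeping to turn $\sum_{\vec q}$ into $\max_{\vec q}$, and the orthogonality across $\vz$ is simply not available.
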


\begin{remark}\label{rmk:rounding?}
For \Cref{thm:threshold-qsdpt}, we can rewrite $p_T$ as the polynomial $p(T)=\Theta(T^{c_1}/N^{c_2})$, and in this way we define $p_{B/\kappa}:=p(B/\kappa)$ when $B/\kappa$ is not an integer. 
Similarly we define $\gamma_{B/\kappa}=t^2\cdot p_{t}=t^2\cdot p(t)$ for $t=B/\kappa$. 
\end{remark}

The proof of \Cref{thm:qsdpt} and \Cref{thm:threshold-qsdpt} can be found in \Cref{sec:qsdpt}, where we prove the strong direct product theorem for quantum property finding, and the threshold version as a corollary.

\subsection{\texorpdfstring{Proof of \Cref{thm:main_quantum}}{Proof of SDPT in the QROM}}
\label{subsec:prf-main_quantum}

In this section, we consider the following two-stage quantum algorithms $(\Bs, \As)$. 
\begin{itemize}
    \item Let $\Bs$ be a $2 S T$-quantum-query algorithm. First, $\Bs$ executes and outputs a binary predicate $b$, together with a quantum state.
    \item The quantum state is then given to $\As$. The goal of $\As$ is to win $G_{R, K}$ within $T$ quantum queries and with the state received from $\Bs$. 
\end{itemize}
We denote the event of $b = 0$ by $\mathbf{W}$.
To prove \Cref{thm:main_quantum}, we only need to prove the following theorem. 
\begin{theorem}\label{thm:switch_to_QBF}
     Let $G_{R,K}$ be a game and $c \in \mathbb{Z}^+$ as defined in \Cref{thm:main_quantum}.
     Assume there exists $\delta > \frac{1}{N}$ such that for every two-stage quantum algorithms $(\Bs, \As)$ with $2 S T$ and $T$ queries each, we have $\Pr[\mathbf{W}] > 1/N^S$ and
     \begin{align*}
         \Pr_{\substack{f\sim\mu\\ k\sim[K]}}\left[\cA^f(k)\text{ wins } \middle| \bW\right] \leq \delta.
     \end{align*}
     Then $\epsQ_{G_{R,K}}(S, T) \leq \delta$.
\end{theorem}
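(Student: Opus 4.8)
The plan is to reduce $\epsQ_{G_{R,K}}(S,T)$ to the value of the $S$-fold alternating measurement game via the quantum bit-fixing framework of Liu~\cite{liu2023non} (building on Guo, Li, Liu, and Zhang~\cite{guo2021unifying} and the witness-preserving amplification of~\cite{marriott2005quantum}), and then to control that value using the hypothesis. Concretely, that framework supplies the implication: if every uniform strategy that makes $T$ queries per round wins the $S$-fold alternating measurement game attached to $G_{R,K}$ with probability at most $\eta^S$, then $\epsQ_{G_{R,K}}(S,T)\le\eta$. So it suffices to prove that the winning probability of every such $S$-fold strategy is at most $\delta^S$.

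Fix a uniform $S$-fold strategy; let $\mathbf{W}_i$ be the event that it wins round $i$ and $\mathbf{W}_{<i}=\mathbf{W}_1\wedge\cdots\wedge\mathbf{W}_{i-1}$. Writing $\Pr[\mathbf{W}_1\wedge\cdots\wedge\mathbf{W}_S]=\prod_{i=1}^S\Pr[\mathbf{W}_i\mid\mathbf{W}_{<i}]$ and invoking the monotonicity of $i\mapsto\Pr[\mathbf{W}_i\mid\mathbf{W}_{<i}]$ from~\cite{liu2023non}, it is enough to bound the last factor $\Pr[\mathbf{W}_S\mid\mathbf{W}_{<S}]$ by $\delta$, once $\mathbf{W}_{<S}$ is known not to be too unlikely. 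I therefore split on $\Pr[\mathbf{W}_{<S}]$: if $\Pr[\mathbf{W}_{<S}]\le 1/N^S$, then $\Pr[\mathbf{W}_1\wedge\cdots\wedge\mathbf{W}_S]\le\Pr[\mathbf{W}_{<S}]\le 1/N^S\le\delta^S$ because $\delta>1/N$, and we are done. Otherwise $\Pr[\mathbf{W}_{<S}]>1/N^S$, and I identify the first $S-1$ rounds together with the intervening rewinding measurements as a single algorithm $\Bs$ that outputs the binary predicate $b=0$ exactly on the event $\mathbf{W}_{<S}$ along with the residual post-measurement state, and the $S$-th round as the $T$-query algorithm $\As$ acting on that state; $\Bs$ makes at most $2ST$ queries in total (each of the first $S-1$ rounds uses $T$ queries to run and up to $T$ more for the accompanying measurement/uncomputation). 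Then $\mathbf{W}=\{b=0\}$ coincides with $\mathbf{W}_{<S}$, so $\Pr[\mathbf{W}]>1/N^S$ and $\Pr_{f\sim\mu,\,k\sim[K]}[\As^f(k)\text{ wins}\mid\mathbf{W}]=\Pr[\mathbf{W}_S\mid\mathbf{W}_{<S}]$, which by hypothesis is at most $\delta$; by monotonicity every factor $\Pr[\mathbf{W}_i\mid\mathbf{W}_{<i}]\le\delta$, hence $\Pr[\mathbf{W}_1\wedge\cdots\wedge\mathbf{W}_S]\le\delta^S$. In both cases the $S$-fold value is at most $\delta^S$, and the framework implication yields $\epsQ_{G_{R,K}}(S,T)\le\delta$.

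The main obstacle — and essentially the only part that is not bookkeeping — is setting up the first paragraph's reduction correctly: one must (i) verify that the alternating measurement game for $G_{R,K}$ inherits the two properties imported from~\cite{liu2023non}, namely the value-to-security implication and the monotonicity of $\Pr[\mathbf{W}_i\mid\mathbf{W}_{<i}]$; and (ii) check that decomposing a uniform $S$-fold strategy into the two-stage shape $(\Bs,\As)$ respects the stated query budgets ($2ST$ for $\Bs$, $T$ for $\As$) and that the binary predicate emitted by $\Bs$ can be taken to be exactly the indicator of $\mathbf{W}_{<S}$ while still leaving a usable quantum state for $\As$ (this is where the non-cloning issue is handled, and where the factor-$2$ slack in the query count is consumed). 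Both points are established in~\cite{liu2023non,guo2021unifying}; I would package them as two lemmas, adapting only the query-count accounting to our slack, and otherwise defer to those works. The remaining ingredients — the product decomposition, the case split on $\Pr[\mathbf{W}_{<S}]$, and the elementary arithmetic with $\delta>1/N$ — are immediate.
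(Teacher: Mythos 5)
Your proposal is correct and follows essentially the same route as the paper: the paper's formal proof simply invokes the packaged statements of \cite{liu2023non} (Theorem 6.1 relating non-uniform security to the $\varrho$-BF-QROM with $\varrho=S(T+c)\le 2ST$, plus the remark that $\Pr[\mathbf{W}]<1/N^S$ is already handled by the bound in Lemma 6.5), whereas you unfold the same framework into its ingredients — the $S$-fold alternating measurement game, the product decomposition with monotonicity of $\Pr[\mathbf{W}_i\mid\mathbf{W}_{<i}]$, and the case split on $\Pr[\mathbf{W}_{<S}]$ versus $N^{-S}$ — exactly as in the paper's technique overview. The query accounting ($2ST$ for the offline stage, $T$ for the online round) and the identification of the conditional winning probability with the hypothesis's two-stage quantity match the paper's, so the deferrals to \cite{liu2023non,guo2021unifying} are at the same level of rigor as the published proof.
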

The proof is intuitively explained in the technical overview (Quantum Bit-Fixing section) and a formal proof is deferred to \Cref{sec:switch_to_QBP}.

Thus, our goal becomes to give an upper bound for $\Pr[\cA^f(k)\text{ wins } | \bW]$ when $\Pr[\bW]$ is not ``too small''.
We define $\bE$ to be the event that the database registers succeed on at most $2S\log N$ salts.
With $\bE$, we can give the following bound 
of $\Pr[\cA^f(k)\text{ wins } | \bW]$.

\begin{claim}\label{clm:upbound-W-E}
Define $\bE$ to be the event that when the algorithm $\Bs$ finishes, the database registers succeed on at most $2S\log N$ salts. Define $\overline{\bE}$ to be the complement of $\bE$ where the database registers can succeed on more than $2S\log N$ salts. Then,
\begin{align}\label{ineq:upbound-W-E}
    \Pr_{\substack{f\sim\mu\\ k\sim[K]}}[\cA^f(k)\text{ wins } | \bW] 
    &\le\Pr_{\substack{f\sim\mu\\ k\sim[K]}}[\cA^f(k)\text{ wins } | \bW\land \bE]+\frac{\Pr_{f\sim\mu}[\overline{\bE}]}{\Pr_{f\sim\mu}[\bW]}.
\end{align}
\end{claim}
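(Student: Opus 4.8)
The plan is to prove \Cref{clm:upbound-W-E} by a standard conditioning decomposition, carried out at the level of (sub-normalized) quantum states in the compressed-oracle picture so as to respect quantum no-cloning. Write the joint state right after $\Bs$ halts as $\ket{\Psi}$ on three parts: the register holding the predicate $b$, the register $\cM$ carrying the message handed to $\cA$ together with $\cA$'s workspace, and the compressed-oracle database register. The event $\bW$ is the projector $\Pi_\bW=\ket{0}\bra{0}_b\otimes\I$ onto $b=0$, and $\bE$ is the projector $\Pi_\bE$ on the database register onto those $D$ that succeed on at most $2S\log N$ salts (this is diagonal in the database basis, and $\overline{\bE}=\I-\Pi_\bE$); since $\Pi_\bW$ and $\Pi_\bE$ act on disjoint registers they commute. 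I would then decompose the post-$\bW$ sub-normalized state as $\Pi_\bW\ket{\Psi}=\ket{\alpha}+\ket{\beta}$ with $\ket{\alpha}=\Pi_\bE\Pi_\bW\ket{\Psi}$ (the database-$\bE$-good part) and $\ket{\beta}=\overline{\bE}\,\Pi_\bW\ket{\Psi}$ (the database-$\bE$-bad part); these are orthogonal, with $\|\ket{\alpha}\|^2=\Pr[\bW\wedge\bE]$ and $\|\ket{\beta}\|^2=\Pr[\bW\wedge\overline{\bE}]\le\Pr[\overline{\bE}]$.

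Next I would run $\cA$ on input a uniform salt $k$: it only touches $\cM$ and makes queries to the database, so let $V_k$ be its overall unitary and $\Lambda_k$ the final projector testing whether $\cA$ wins on salt $k$ (read off $\cA$'s output, or equivalently turned into a database predicate via \Cref{lem:lem5-salt=1}, which is where the eventual $c/N$ slack enters). Then $\Pr[\cA\text{ wins}\wedge\bW]=\E_k\,\|\Lambda_k V_k\Pi_\bW\ket{\Psi}\|^2=\E_k\,\|\Lambda_k V_k(\ket{\alpha}+\ket{\beta})\|^2$. Applying the triangle inequality for each fixed $k$ and then Minkowski's inequality over the choice of $k$ gives $\sqrt{\Pr[\cA\text{ wins}\wedge\bW]}\le\sqrt{\E_k\|\Lambda_k V_k\ket{\alpha}\|^2}+\sqrt{\E_k\|\Lambda_k V_k\ket{\beta}\|^2}\le\sqrt{\E_k\|\Lambda_k V_k\ket{\alpha}\|^2}+\|\ket{\beta}\|$. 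Dividing by $\sqrt{\Pr[\bW]}$ and using $\Pr[\bW\wedge\bE]\le\Pr[\bW]$ (so the first term, renormalized, is at most $\sqrt{\Pr[\cA\text{ wins}\mid\bW\wedge\bE]}$) together with $\|\ket{\beta}\|^2\le\Pr[\overline{\bE}]$ yields $\sqrt{\Pr[\cA\text{ wins}\mid\bW]}\le\sqrt{\Pr[\cA\text{ wins}\mid\bW\wedge\bE]}+\sqrt{\Pr[\overline{\bE}]/\Pr[\bW]}$, and squaring gives \Cref{ineq:upbound-W-E} (the cross term is absorbed into the universal constants already hidden inside the $\widetilde O$ of \Cref{thm:main_quantum}; alternatively one recovers the constant-free form verbatim by first performing the measurement $\{\Pi_\bE,\overline{\bE}\}$ on the database register and then arguing with the resulting classical mixture, since the decomposition is monotone in each of its two pieces).

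The only genuinely non-classical point — and the step I expect to need the most care — is that $\cA$'s evolution $V_k$ entangles and mixes the $\bE$-good and $\bE$-bad components, so the naive "measure $\bE$ after $\Bs$ and then continue" is not literally an identity on $\Pr[\cA\text{ wins}\mid\bW]$; this is precisely why the argument goes through the orthogonal decomposition $\ket{\alpha}+\ket{\beta}$ and norm (triangle/Minkowski) inequalities rather than through classical conditional probabilities, and why the $\bE$-bad branch is bounded by its sheer weight regardless of whether $\cA$ wins there. Everything else (commutation of the two projectors, the bound $\|\ket{\beta}\|^2\le\Pr[\overline{\bE}]$, the renormalization bookkeeping) is routine. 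I emphasize that \Cref{clm:upbound-W-E} itself is the easy reduction step; the substantive work of the section then lies in its two consequences, namely bounding the first term $\Pr[\cA\text{ wins}\mid\bW\wedge\bE]$ by the path-integral analysis of salted property finding with a database that starts $\bE$-good, and bounding the second term $\Pr[\overline{\bE}]$ through the threshold strong direct product theorem \Cref{thm:threshold-qsdpt}.
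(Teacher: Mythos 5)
Your route is genuinely different from the paper's. The paper proves \Cref{clm:upbound-W-E} in three lines of purely classical probability: treating ``$\cA^f(k)$ wins'', $\bW$, and $\bE$ as events $\bA,\bB,\bC$ in a single probability space (i.e.\ the binary measurement $\{\Pi_{\bE},\I-\Pi_{\bE}\}$ on the database register is regarded as performed when $\Bs$ halts, and it commutes with the measurement of $b$), it just writes
$\Pr[\bA\mid\bB]=\Pr[\bA\land\bB\land\bC]/\Pr[\bB]+\Pr[\bA\land\bB\land\overline{\bC}]/\Pr[\bB]\le\Pr[\bA\mid\bB\land\bC]+\Pr[\overline{\bC}]/\Pr[\bB]$,
which yields the inequality with coefficient exactly $1$ on both terms. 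You instead stay at the level of sub-normalized states, split $\Pi_{\bW}\ket{\Psi}=\ket{\alpha}+\ket{\beta}$, and use triangle/Minkowski inequalities. That buys you something the paper does not explicitly address: it is honest about the fact that $\cA$'s subsequent queries do not commute with the database-basis projector $\Pi_{\bE}$, so inserting the $\bE$-measurement is not statistics-preserving for the undephased experiment of \Cref{thm:switch_to_QBF}. (Your remark about the $c/N$ term from \Cref{lem:lem5-salt=1} is not needed here; that slack belongs to \Cref{clm:conditioned_lem5}, and the present claim can be phrased for the actual win event.)

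The shortfall is that your main argument does not prove the claim as stated. Squaring $\sqrt{\Pr[\text{win}\mid\bW]}\le\sqrt{\Pr[\text{win}\mid\bW\land\bE]}+\sqrt{\Pr[\overline{\bE}]/\Pr[\bW]}$ leaves a cross term, so you only get the inequality with a factor $2$ (or $1+\delta$, $1+1/\delta$) on the right-hand side. That is harmless for \Cref{thm:main_quantum}, where everything is wrapped in $\widetilde O$, but it is weaker than \Cref{ineq:upbound-W-E}. Your proposed fix --- ``first perform the measurement $\{\Pi_{\bE},\I-\Pi_{\bE}\}$ and argue with the resulting classical mixture'' --- is exactly the paper's reading, but as written it sits in tension with your own (correct) observation that this measurement is not an identity on $\Pr[\text{win}\mid\bW]$; invoking ``monotonicity of the two pieces'' does not resolve that, since the dephased winning probability can differ from the undephased one in either direction. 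So to recover the constant-free statement you must either adopt the paper's convention that all quantities in the claim (including the left-hand side, and the conditionings used later in \Cref{clm:conditioned_lem5} and \Cref{clm:conditioned_propertyfinding}) refer to the experiment in which $\bE$ is measured when $\Bs$ halts --- in which case the claim is the elementary event inequality and your state-level machinery is unnecessary --- or accept the factor-$2$ version and carry it through the final accounting. As it stands, the blind write-up conflates the two readings.
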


From \Cref{clm:upbound-W-E}, we can give an upper bound for the two terms in RHS of \Cref{ineq:upbound-W-E} respectively. This leads us to the proof of \Cref{thm:main_quantum}. 
The main idea of the proof is as follows:
\begin{itemize}
    \item For the first term $\Pr[\cA^{f}(k)\text{ wins }|\bW\land \bE]$,
    \begin{enumerate}
        \item We apply \Cref{lem:lem5-salt=1} and reduce it to giving an upper bound for the quantum property finding problem, as in \Cref{clm:conditioned_lem5}.
        \item Then we upper bound that quantum property finding problem, as in \Cref{clm:conditioned_propertyfinding}.
    \end{enumerate} 
    \item For the second term $\Pr[\overline{\bE}]/\Pr[\bW]$, since $\Pr[\bW]\ge 1/N^S$, we only need to give an upper bound for $\Pr[\overline{\bE}]$. This corresponds to a threshold strong direct product theorem for the quantum property finding problem, as in \Cref{clm:threshold-qsdpt-E}.
\end{itemize}

\begin{claim} \label{clm:conditioned_lem5} For $\cA^f$ that outputs answers as a tuple $(x_1,\ldots,x_c,y_1,\ldots,y_c)$, we have the following relation between the original game and its corresponding property finding problem:
\begin{align}\label{ineq:conditioned_lem5}
    \Pr_{\substack{f\sim\mu\\ k\sim[K]}}[\cA^f(k)\text{ wins } | \bW\land \bE]
    \le 2\cdot\Pr_{\substack{f\sim\mu\\ k\sim[K]}}[(D|_k\in P)| \bW\land \bE]+\frac{2c}{N}.
\end{align}
Here $D$ corresponds to the database registers, and $D|_k\in P$ corresponds to the projected state that $D|_k\in P$ for every superposition; it means that $\cA^f$ succeeds on salt $k$ for the quantum property finding problem of $P$.
\end{claim}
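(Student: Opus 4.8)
The plan is to invoke Zhandry's lazy-sampling lemma (\Cref{lem:lem5-salt=1}) in the conditioned setting and then translate the ``$D(x_i)=y_i$'' statement about the full database into the ``$D|_k \in P$'' statement about the salt-$k$ restriction. First I would set up the bookkeeping: the two-stage algorithm $(\Bs,\As)$ conditioned on the event $\bW \land \bE$ is, after renormalization, itself a (mixed) quantum algorithm interacting with the compressed oracle, so \Cref{lem:lem5-salt=1} applies to it with the output tuple being $(k,x_1,\ldots,x_c,k,y_1,\ldots,y_c)$ read off as input-output pairs $((k,x_i),y_i)$ of the oracle $f\colon[K]\times\X\to\Y$, and with the set $R'$ being the tuples that both land in $R$ \emph{and} satisfy the oracle-consistency conditions. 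The lemma then gives $\sqrt{p}\le\sqrt{p'}+\sqrt{c/N}$, where $p=\Pr[\As^f(k)\text{ wins}\mid \bW\land\bE]$ with $f$ measured at the end, and $p'=\Pr[\text{tuple}\in R \text{ and } D((k,x_i))=y_i\ \forall i\mid \bW\land\bE]$ with the database $D$ measured at the end. Squaring and using $(a+b)^2\le 2a^2+2b^2$ yields $p\le 2p'+2c/N$, which is exactly the shape of the desired inequality.

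The remaining point is that the event measured in $p'$ implies $D|_k\in P$. This is immediate from the definition of $P=P_R$: if the output tuple $(x_1,\ldots,x_c,y_1,\ldots,y_c)\in R$ and the full database satisfies $D((k,x_i))=y_i$ for all $i$, then by the definition of the restriction $D|_k$ we have $D|_k(x_i)=y_i$ for all $i$, and hence $D|_k$ contains a tuple of $R$, i.e.\ $D|_k\in P_R=P$. Therefore $p'\le\Pr[(D|_k\in P)\mid\bW\land\bE]$, and combining with $p\le 2p'+2c/N$ gives \Cref{ineq:conditioned_lem5}. One should be slightly careful about the phrasing ``$D|_k\in P$ for every superposition'': here $\Pr[(D|_k\in P)\mid\bW\land\bE]$ is understood as the squared norm of the (renormalized) state projected onto $\Lambda_{P}$ acting on the salt-$k$ sub-register of the database, which is precisely $\bra{\cdot}(\I\otimes\ket{\mathbbm 1}\!\bra{\mathbbm 1}_{\text{salt }k\in P})\ket{\cdot}$; the measurement-then-check in \Cref{lem:lem5-salt=1} produces exactly this quantity since measuring $D$ and testing $D|_k\in P$ is the POVM element $\Lambda_{P^{\times 1}}$ on the salt-$k$ block.

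I expect the main (though still routine) obstacle to be the careful verification that \Cref{lem:lem5-salt=1} can be applied under the conditioning on $\bW\land\bE$ and over the additional random choice of the salt $k\sim[K]$. The cleanest way to handle this is to first fix $k$, observe that for each fixed $k$ the conditioned-on-$\bW\land\bE$ execution is a genuine compressed-oracle algorithm (the conditioning is on a measurement outcome of $\Bs$ followed by a projector $\bE$ on the database register, which only rescales amplitudes), apply \Cref{lem:lem5-salt=1} pointwise, and then average the inequality $\sqrt{p_k}\le\sqrt{p'_k}+\sqrt{c/N}$ over $k$ using Cauchy--Schwarz/Jensen (concavity of $\sqrt{\cdot}$) to pass to $\E_k$ before squaring. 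A second minor subtlety is that $R'$ (the set fed to \Cref{lem:lem5-salt=1}) should be taken as $R$ with the prefix $k$ hard-wired; since $k$ is part of the conditioning once fixed, this is harmless. Everything else is bookkeeping with the definitions of $P_R$ and $D|_k$.
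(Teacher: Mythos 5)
Your proposal is correct and essentially the paper's proof: apply Zhandry's \Cref{lem:lem5-salt=1} to the execution conditioned on $\bW\land\bE$ (the conditioning merely projects/renormalizes the initial state, so the lemma's analysis is unaffected), observe that an output tuple in $R$ consistent with the database at the coordinates $((k,x_i),y_i)$ means exactly $D|_k\in P_R$, and then square via $\sqrt a+\sqrt b\le\sqrt{2a+2b}$. The only caveat is the suggested order in your per-$k$ variant---averaging $\sqrt{p_k}\le\sqrt{p'_k}+\sqrt{c/N}$ over $k$ before squaring runs Jensen in the wrong direction for the left-hand side---so one should square pointwise first and then average over $k$, which is what your opening computation already does.
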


\begin{claim}\label{clm:conditioned_propertyfinding} Conditioned on $\bW\land\bE$, we have the following upper bound of the maximum winning probability for the property finding problem on a random salt $k\sim [K]$: 
\begin{align}\label{ineq:conditioned_propertyfinding}
   \Pr_{\substack{f\sim\mu\\ k\sim[K]}}[(D|_k \in P)| \bW\land \bE] \le O(\gamma_{2T})+\widetilde{O}\left(\frac{S}{K}\right).
\end{align}
\end{claim}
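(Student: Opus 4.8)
\textbf{Proof strategy for \Cref{clm:conditioned_propertyfinding}.}

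The plan is to carry out the whole argument inside Zhandry's compressed (phase) oracle picture. After $\Bs$ terminates and we condition on $\bW\wedge\bE$, the joint state of $\As$'s registers and the oracle register is a normalized state $\ket{\phi_0}$; this conditioning is well defined since $\bW$ is a projective event on $\Bs$'s output bit and $\bE$ is a projective event on the database register. By \Cref{lem:bounded-database}, every database $D_0$ in the support of $\ket{\phi_0}$ has $|D_0|\le 2ST$, and by the definition of $\bE$ it succeeds on at most $\widetilde S=2S\log N$ of the $K$ salts. Now $\As$ acts as $V=U_{\sf final}\,\compressO\,U_T\cdots\compressO\,U_1$, and — writing $\Lambda^{(k)}:=\sum_{D:\,D|_k\in P}\ket D\bra D$ for the projector ``salt $k$ is solved'' — the quantity to bound is $\E_{k\sim[K]}\big\|\Lambda^{(k)}V\ket{\phi_0}\big\|^2$.

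Fix $k$ and split $\ket{\phi_0}=\Lambda^{(k)}\ket{\phi_0}+(\I-\Lambda^{(k)})\ket{\phi_0}$. By the triangle inequality, $\|\Lambda^{(k)}V\ket{\phi_0}\|\le\|\Lambda^{(k)}\ket{\phi_0}\|+\|\Lambda^{(k)}V(\I-\Lambda^{(k)})\ket{\phi_0}\|$, since $V$ is unitary and $\Lambda^{(k)}$ is a projector; hence $\|\Lambda^{(k)}V\ket{\phi_0}\|^2\le 2\|\Lambda^{(k)}\ket{\phi_0}\|^2+2\|\Lambda^{(k)}V(\I-\Lambda^{(k)})\ket{\phi_0}\|^2$. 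The first term $\|\Lambda^{(k)}\ket{\phi_0}\|^2$ equals the probability that a measured database $D_0$ already has $D_0|_k\in P$; averaging over $k$ and using that each database in the support of $\ket{\phi_0}$ succeeds on at most $\widetilde S$ salts gives $\E_k\|\Lambda^{(k)}\ket{\phi_0}\|^2\le\widetilde S/K=\widetilde O(S/K)$, already within the target.

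For the second term, set $\ket{\phi_0^{(k)}}:=(\I-\Lambda^{(k)})\ket{\phi_0}$, supported on databases with $D_0|_k\notin P$, and run the ``probability as path integral'' analysis (\Cref{sec:overview}) on $\As$'s $T$ queries, now started from this nonempty database: decompose $\Lambda^{(k)}V\ket{\phi_0^{(k)}}=\sum_{t=1}^T\ket{\psi_t^{(k)}}$, where $\ket{\psi_t^{(k)}}$ collects the computation paths on which $\As$'s $t$-th query is the first that brings $D|_k$ into $P$, so that $\|\Lambda^{(k)}V\ket{\phi_0^{(k)}}\|\le\sum_{t=1}^T\|\ket{\psi_t^{(k)}}\|$. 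The local-change property of $\compressO$ together with the transition-probability estimate underlying \Cref{thm:quantum_lift_property_find} yields $\|\ket{\psi_t^{(k)}}\|\lesssim\sqrt{p_{q_k+t-1}}$, where $q_k:=|D_0|_k|$, because before $\As$'s $t$-th query the restriction $D|_k$ has grown from $q_k$ by at most $t-1$ entries. Thus $\|\Lambda^{(k)}V\ket{\phi_0^{(k)}}\|^2\lesssim\big(\sqrt{p_{q_k}}+\sqrt{p_{q_k+1}}+\cdots+\sqrt{p_{q_k+T-1}}\big)^2$. Averaging over $k$ using $\sum_k q_k\le|D_0|\le 2ST$: for every salt with $q_k\le T$, monotonicity of $p_t$ bounds this quantity by $\big(\sqrt{p_0}+\cdots+\sqrt{p_{2T-1}}\big)^2=O(\gamma_{2T})$; and there are fewer than $2ST/T=2S$ salts with $q_k>T$, so their contribution to the average is at most $2S/K$ (each term being at most $1$). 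Collecting the three estimates gives $\Pr[(D|_k\in P)\mid\bW\wedge\bE]\le O(\gamma_{2T})+\widetilde O(S/K)$.

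The main obstacle is the path-integral step for the ``not yet solved'' part: Zhandry's original decomposition is stated for an empty starting database, whereas here the oracle register is already loaded with up to $2ST$ entries spread over the other salts and is moreover allowed to \emph{forget} entries, so one must verify that the ``first query that puts $D|_k$ into $P$'' decomposition still telescopes cleanly and that the single-query transition norm $\|\ket{\psi_t^{(k)}}\|\lesssim\sqrt{p_{q_k+t-1}}$ is genuinely insensitive to the pre-existing content on salts $\ne k$. The bookkeeping of the sizes $q_k$ across salts, together with the assumed polynomial form of $p_t$ (which in particular makes $\gamma_{2T}=\Theta(\gamma_T)$), then closes the argument.
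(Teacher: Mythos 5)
Your proposal is correct and follows essentially the same route as the paper: the paper splits the conditioned initial state by the three projectors $\Gamma_k^P$, $\Gamma_k^{>T}$, and $\overline\Gamma$, bounds the first by $\widetilde S/K$, the second by noting at most $O(S)$ salts can have more than $T$ pre-existing entries, and the third via a per-query transition bound (Fact~\ref{fct:cO-transition} together with the transition-capacity telescoping of \cite{chung2021compressed}) over database sizes between $T$ and $2T$, yielding $O(\sqrt{p_T}+\cdots+\sqrt{p_{2T-1}})^2=O(\gamma_{2T})$ — exactly your three contributions. The only presentational difference is that your case distinction on $q_k=|D_0|_k|$ must be implemented as a projective split of the initial state (since $q_k$ lives in superposition), which is precisely the paper's $\Gamma_k^{>T}$ projector, costing only a constant factor as in your own decomposition.
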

\noindent In the theorem statement, $\gamma_t$ is assumed  to be a polynomial in $t$.
Thus, $\gamma_{2T} = O(\gamma_{T})$.

\begin{claim}\label{clm:threshold-qsdpt-E} For $\overline{\bE}$ as defined in \Cref{clm:upbound-W-E}, %
\begin{align}\label{ineq:threshold-qsdpt-E}
    \Pr_{f\sim\mu}[\overline{\bE}]\le \Pr_{f\sim\mu}[\bW]\cdot\frac{1}{N}.
\end{align}
\end{claim}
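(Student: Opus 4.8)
The plan is to read $\overline{\bE}$ as exactly the success event of the threshold strong direct product theorem and substitute the right parameters. Recall that $\bE$ is determined by the state of the database register when $\Bs$ finishes, and that $\Bs$ makes $B := 2ST$ quantum queries; $\overline{\bE}$ is the event that (upon measuring the database register) the database succeeds on \emph{more than} $\widetilde{S} := 2S\log N$ salts, hence on at least $\widetilde{S}$ salts. If $\widetilde{S} > K$ then $\overline{\bE}$ is vacuously impossible and there is nothing to prove, so assume $\widetilde{S}\le K$. Applying \Cref{thm:threshold-qsdpt} with salt space $[K]$, threshold $\kappa=\widetilde{S}$, and query budget $B=2ST$ yields
\[
\Pr_{f\sim\mu}[\overline{\bE}]\;\le\;O(1)^{\widetilde{S}}\cdot\bigl(\gamma_{B/\widetilde{S}}\bigr)^{\widetilde{S}}\;=\;\bigl(C_0\,\gamma_{T/\log N}\bigr)^{\widetilde{S}}
\]
for an absolute constant $C_0$, where we used $B/\widetilde{S}=2ST/(2S\log N)=T/\log N$. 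This is the one spot where the multiplicative $O(1)^{\widetilde{S}}$ could be dangerous, and it is harmless precisely because the per-salt budget handed to the SDPT is the reduced quantity $T/\log N$.

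Next I would show $C_0\,\gamma_{T/\log N}$ is bounded away from $1$ by an absolute constant. Since $\gamma_t=\Theta(t^{c_1+2}/N^{c_2})$ is monotone nondecreasing in $t$, we have $\gamma_{T/\log N}\le\gamma_T$. Within the proof of \Cref{thm:main_quantum} one may assume $\gamma_T$ is below a small absolute constant $c^\ast$, since otherwise $\widetilde{O}(\gamma_T)\ge1$ and \Cref{thm:main_quantum} is trivially true; I would dispose of that case first. Now $N^{-(S+1)/\widetilde{S}}=N^{-(S+1)/(2S\log N)}$ is bounded below by an absolute constant for all $S\ge1$ (indeed it equals $e^{-(S+1)/(2S)\cdot(\ln N/\log N)}$, a quantity confined to a fixed range), so choosing $c^\ast$ small enough that $C_0\,c^\ast\le N^{-(S+1)/\widetilde{S}}$ forces $\bigl(C_0\,\gamma_{T/\log N}\bigr)^{\widetilde{S}}\le N^{-(S+1)}$, i.e.\ $\Pr_{f\sim\mu}[\overline{\bE}]\le N^{-(S+1)}$.

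Finally, I would combine this with $\Pr_{f\sim\mu}[\bW]>1/N^S$, which holds in the regime in which this claim is invoked (see \Cref{thm:switch_to_QBF}; note also that $1/N^S<\Pr[\bW]\le1$ forces $S\ge1$, as used above): then $\Pr_{f\sim\mu}[\overline{\bE}]\le N^{-(S+1)}=N^{-S}/N<\Pr_{f\sim\mu}[\bW]/N$, which is the assertion of the claim. The only genuine difficulty is \Cref{thm:threshold-qsdpt} itself — the quantum threshold SDPT in Zhandry's compressed-oracle framework — which I would invoke as a black box here; given it, the remaining work is the parameter bookkeeping above, whose sole delicate point is that the $O(1)^{\widetilde{S}}$ loss from the SDPT is swallowed by the $N^{-\Theta(S)}$ gain exactly because the SDPT is applied at per-salt budget $T/\log N$, not $T$.
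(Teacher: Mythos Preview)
Your proposal is correct and follows essentially the same route as the paper: apply \Cref{thm:threshold-qsdpt} with $B=2ST$ and $\kappa=2S\log N$ to get $\Pr[\overline{\bE}]\le(C_0\gamma_{T/\log N})^{2S\log N}$, argue the base is at most an absolute constant below $1$ (the paper just writes ``WLOG $\gamma_T=o(1)$'' and bounds the base by $1/2$, yielding $\Pr[\overline{\bE}]\le N^{-2S}$), and then divide by $\Pr[\bW]\ge N^{-S}$. Your version is a bit more explicit about the constants and the edge case $\widetilde{S}>K$, but the argument is the same.
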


With \Cref{clm:upbound-W-E}, \Cref{clm:conditioned_lem5}, \Cref{clm:conditioned_propertyfinding}, and \Cref{clm:threshold-qsdpt-E}, we can conclude an upper bound for $\Pr[\cA^f(k)\text{ wins } | \bW]$. Together with \Cref{thm:switch_to_QBF}, this gives us an upper bound for the non-uniform security as in \Cref{thm:main_quantum}. 

Proofs of all the claims and \Cref{thm:main_quantum} can be found in \Cref{sec:app-prf-main_quantum}.

\section*{Acknowledgement}
KW wants to thank Guangxu Yang and Penghui Yao for references in direct product theorems.

\printbibliography
\appendix
\section{\texorpdfstring{Missing Proofs in \Cref{sec:prelim}}{Missing Proofs for the Classical Case}}
\label{app:missing_prelim}

\begin{proof}[Proof of \Cref{lem:draw_with_repl}]
We use standard martingale analysis.
For each $i=0,1,\ldots,L$, let $\ell_i$ be the number of distinct elements in $k_1,\ldots,k_i$.
Then $\ell_0=0$ and $\ell_L=\ell$.
Since each $\ell_i$ is at most $L$, we have
$$
\E\left[c^{\ell_i}\right]
=\E\left[\E\left[c^{\ell_i}\middle|k_1,\ldots,k_{i-1}\right]\right]
=\E\left[c^{\ell_{i-1}}\cdot\left(\frac{\ell_{i-1}}K+c\cdot\left(1-\frac{\ell_{i-1}}K\right)\right)\right]
\le\left(c+\frac LK\right)\cdot\E\left[c^{\ell_{i-1}}\right].
$$
The desired bound follows by applying the above inequality $L$ times.
\end{proof}

\section{\texorpdfstring{Missing Proofs in \Cref{subsec:prf-main_quantum}}{Missing Proofs for the Quantum Case}}
\label{sec:app-prf-main_quantum}

\subsection{Proof of Theorem \ref{thm:switch_to_QBF}}
\label{sec:switch_to_QBP}
\begin{proof}
We first apply the result in \cite{liu2023non} to relate the winning probability of a game against non-uniform quantum algorithms in QROM with that in the quantum bit-fixing (QBF) model. Then we analyze the maximum winning probability in the QBF model by connecting it to another game in the QROM. 

We first recall the QBF model. The definition is adapted from \cite{liu2023non}.
\begin{definition}[Games in the $\varrho$-BF-QROM]\label{def:P-BF-QROM}
Games in the $\varrho$-BF-QROM are similar to games in the standard QROM, except now the random function $f$ has a different distribution. 
\begin{itemize}
    \item Before a game starts, a $\varrho$-query quantum algorithm $\cB$ (having no input) is picked and fixed by an adversary.
    \item \textbf{Rejection Sampling Stage.} A random oracle $f$ is picked uniformly at random conditioned on $\cB^f$ outputs 0. In other words, the distribution of $f$ is defined by a rejection sampling:
    \begin{enumerate}
        \item $f\sim\mu$, where $\mu$ is the uniform distribution over $\{f:[KM]\to[N]\}$.
        \item Run $\cB^f$ and output a binary outcome $b$ together with a quantum state $\ket{\sigma}$.
        \item Restart from step 1 if $b\neq 0$.
    \end{enumerate}
    \item \textbf{Online Stage.} The game is then executed with oracle access to $f$, and an (online) algorithm $\cA$ starts with the state $\ket{\sigma}$.
\end{itemize}
\end{definition}
\noindent Furthermore, they showed that we only need to care in the case when $\Pr[\mathbf{W}]$ is not very small (at least $1/N^S$), where  $\mathbf{W}$ is defined as an event for $b = 0$. Otherwise, then the RHS in the inequality \cite[Lemma 6.5]{liu2023non} is already bounded by $1/N^S < \delta^S$, for any $\delta$ that we are interested in the theorem statement. 

We directly apply the result in \cite{liu2023non} to go from non-uniform quantum algorithms to algorithms in $\varrho$-BF-QROM.
\begin{theorem}[{\cite[Theorem 6.1]{liu2023non}}]\label{thm:qadvice-6.1}
Let $G$ be any game with $T_{\text{Samp}}, T_{\text{Verify}}$ 
being the number of queries made by $\text{Samp}$ and $\text{Verify}$ defined in \cite[Definition 4.3]{liu2023non}. For any $S,T$, let $\varrho=S\cdot(T+T_{\text{Verify}}+T_{\text{Samp}})$.

If $G$ has security $\mathcal{V}(\varrho,T)$ in the $\varrho$-BF-QROM, then it has maximum winning probability $\epsQ_{G}(S,T)\le 2\cdot\mathcal{V}(\varrho,T)$ against $(S,T)$ non-uniform quantum algorithms with quantum advice.
\end{theorem}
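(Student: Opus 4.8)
\textbf{Proof plan for \Cref{thm:qadvice-6.1}.}
The plan is to route the quantum advice through the $S$-fold alternating measurement game of \cite{guo2021unifying,liu2023non}. Fix an $(S,T)$ non-uniform quantum algorithm that, on oracle $f$, uses a (without loss of generality pure) $S$-qubit advice state $\ket{\sigma_f}$ together with a round circuit $\cA$ making $T$ queries, and wins $G$ with probability $\eps_f$; write $\eps=\E_{f\sim\mu}[\eps_f]$. For each $f$, let $U_f$ be the unitary that, acting on the advice register together with a fresh ancilla in $\ket0$, runs $\mathsf{Samp}$, then $\cA$, then $\mathsf{Verify}$; let $P_{\mathsf{out}}$ project $\mathsf{Verify}$'s output qubit onto $\ket1$; and set $\Pi^f_{\mathsf{acc}}:=U_f^\dagger P_{\mathsf{out}} U_f$ and $\Pi_0:=\I\otimes\ket0\bra0$ (the ancilla reset). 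The $S$-fold alternating measurement game fixes these two projectors, takes a state supplied by the player on the advice register (with ancilla $\ket0$), and for $S$ rounds alternately measures $\{\Pi^f_{\mathsf{acc}},\I-\Pi^f_{\mathsf{acc}}\}$ and then restores the ancilla via $\Pi_0$; the player wins round $i$ (event $\bW_i$) iff the $i$-th $\Pi^f_{\mathsf{acc}}$-measurement accepts, and wins iff $\bW_1\wedge\cdots\wedge\bW_S$ holds. Write $\bW_{<i}:=\bW_1\wedge\cdots\wedge\bW_{i-1}$. I will use two structural facts about this game, both obtained from Jordan's lemma applied to the pair $(\Pi_0,\Pi^f_{\mathsf{acc}})$, exactly as in witness-preserving QMA amplification \cite{marriott2005quantum}: (i) for every starting state, $\Pr[\bW_i\mid\bW_{<i}]$ is non-decreasing in $i$; and (ii) running round $1$ on $\ket{\sigma_f}$ is literally one execution of $G$, so $\Pr[\bW_1\mid\sigma_f]=\eps_f$.

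First I would prove a lower bound on the optimal success probability of the $S$-fold game by exhibiting one uniform strategy: use the circuit $\cA$ above but, lacking the advice, play from the maximally mixed state $\I/2^S$ on the $S$-qubit advice register. Since $\rho\mapsto\Pr[\bW_1\wedge\cdots\wedge\bW_S\mid\rho]=\Tr(M_f\rho)$ is linear for a fixed positive semidefinite $M_f$, and $\I/2^S\succeq 2^{-S}\sigma_f$, facts (i)--(ii) give, for every $f$,
\begin{align*}
\Pr[\bW_1\wedge\cdots\wedge\bW_S\mid \I/2^S]
&\ \ge\ 2^{-S}\,\Pr[\bW_1\wedge\cdots\wedge\bW_S\mid\sigma_f]\\
&\ =\ 2^{-S}\prod_{i=1}^{S}\Pr[\bW_i\mid\bW_{<i},\sigma_f]\ \ge\ 2^{-S}\eps_f^{S}.
\end{align*}
Averaging over $f\sim\mu$ and using convexity of $t\mapsto t^S$, this strategy wins the $S$-fold game with probability at least $2^{-S}\E_f[\eps_f^S]\ge(\eps/2)^S$.

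Next I would prove a matching upper bound: every uniform strategy wins the $S$-fold game with probability at most $\mathcal V(\varrho,T)^S$. Fix one, with advice state $\rho$ and round circuit $\cA$. By fact (i), $\Pr[\bW_1\wedge\cdots\wedge\bW_S]=\prod_{i=1}^S\Pr[\bW_i\mid\bW_{<i}]\le\Pr[\bW_S\mid\bW_{<S}]^S$. Interpret the first $S-1$ rounds as a quantum algorithm $\cB$ with oracle access to $f$ that prepares $\rho$, performs the alternating measurement, and outputs $0$ iff $\bW_{<S}$; counting the queries needed to implement $\mathsf{Samp}$ and $\mathsf{Verify}$ and to realign the ancilla between rounds, $\cB$ makes at most $\varrho=S(T+T_{\mathsf{Verify}}+T_{\mathsf{Samp}})$ queries. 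By construction, the law of $f$ conditioned on $\cB^f$ outputting $0$ is exactly the rejection-sampled oracle of the $\varrho$-BF-QROM induced by $\cB$ (\Cref{def:P-BF-QROM}), and $\bW_S$ is precisely the event that the $T$-query online algorithm $\cA$, started from the residual advice state, wins $G$ in that model; hence $\Pr[\bW_S\mid\bW_{<S}]\le\mathcal V(\varrho,T)$ as long as $\Pr[\bW_{<S}]$ exceeds the small threshold in the definition of $\mathcal V$, and otherwise $\Pr[\bW_1\wedge\cdots\wedge\bW_S]\le\Pr[\bW_{<S}]$ is already $\le\mathcal V(\varrho,T)^S$ (using, as in \cite{liu2023non}, that the threshold is $\ll\mathcal V(\varrho,T)^S$). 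Combining the two bounds gives $(\eps/2)^S\le\mathcal V(\varrho,T)^S$, so $\eps\le 2\,\mathcal V(\varrho,T)$; taking the supremum over non-uniform algorithms yields $\epsQ_G(S,T)\le 2\,\mathcal V(\varrho,T)$.

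I expect the main obstacle to be fact (i): showing that $\Pr[\bW_i\mid\bW_{<i}]$ is monotone non-decreasing under the alternating post-selected measurements, for an arbitrary starting state and with the oracle whose posterior is part of the conditioning. The argument decomposes $(\Pi_0,\Pi^f_{\mathsf{acc}})$ into Jordan's two-dimensional invariant subspaces --- in each of which the alternating measurement is a biased walk whose acceptance probability can only improve after an accept-and-restore step --- and then lifts this from a fixed $f$ to the randomized-oracle setting, where monotonicity survives the averaging over oracle posteriors by convexity. A secondary technical point is the precise query bookkeeping that yields $\varrho=S(T+T_{\mathsf{Verify}}+T_{\mathsf{Samp}})$ rather than a larger multiple; both of these are carried out in \cite{liu2023non}.
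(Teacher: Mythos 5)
This statement is imported verbatim from \cite{liu2023non} (Theorem 6.1); the paper itself gives no proof, only the citation plus the sketch in its technical overview, so the comparison is with Liu's argument. Your proposal reconstructs that argument essentially as in the cited work — the alternating-measurement game with $\Pi_0$ and $\Pi^f_{\mathsf{acc}}$, round $1$ being one genuine execution of $G$, monotonicity of $\Pr[\bW_i\mid\bW_{<i}]$, replacing the quantum advice by the maximally mixed state (the source of the factor $2$ via $(\eps/2)^S\le\mathcal V(\varrho,T)^S$), and treating the first $S-1$ rounds as the $\varrho$-query bit-fixing algorithm with the low-probability threshold handled separately — with the only small caveat that the averaged-over-$f$ monotonicity is not a plain convexity consequence but follows from the purified-oracle formulation (equivalently, from pointwise $p_{i-1}p_{i+1}\ge p_i^2$ and Cauchy–Schwarz), exactly as carried out in \cite{liu2023non}.
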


In the game $G_{R, K}$, we have $T_{\text{Samp}} = 0$ (since there is no challenge) and $T_{\text{Verify}} = c$ (recall $c$ is the size of interested tuples). Since we assume $T \geq c$ in the theorem statement, $\varrho = S\cdot(T+c) \leq 2 S T$. 
\Cref{thm:qadvice-6.1} gives us a way to upper bound $\epsQ_G(S,T)$ by upper bounding $\mathcal{V}(2ST,T)$. 

Thus, we are interested in the probability, 
\begin{align}\label{eq:BF-w/E}
    \mathcal{V}(2ST,T)=\Pr_{\substack{f\sim\mu\\ k\sim[K]}}\left[\cA^f(k)\text{ wins } \middle| \bW\right],
\end{align}
where $k\sim[K]$ is a uniformly sampled salt, for any $\Pr[\bW]\ge 1/N^S$. This corresponds to the probability that an algorithm first runs $\cB$ by making $2ST$ queries, and conditioned that $\cB^f$ outputs $0$, it receives a uniformly sampled salt $k$ and runs $\cA^f(k)$. We can use the compressed oracle technique to simulate the random oracle due to their perfect indistinguishability, where we will analyze the joint state of the algorithm's registers and the database/oracle registers. 
\end{proof}

\subsection{Proof of Claim \ref{clm:upbound-W-E}}

\begin{proof}[Proof of \Cref{clm:upbound-W-E}] 
This is by the following inequality: for every event $\bA$, $\bB$ and $\bC$ that occur with probability $>0$, $\Pr[\bB\land\bC]>0$,
\begin{align*}
    \Pr[\bA|\bB]&=\frac{\Pr[\bA\land\bB]}{\Pr[\bB]}=\frac{\Pr[\bA\land\bB\land\bC]}{\Pr[\bB]}+\frac{\Pr[\bA\land\bB\land\overline{\bC}]}{\Pr[\bB]}\\
    &\le \frac{\Pr[\bA\land\bB\land\bC]}{\Pr[\bB\land\bC]}+\frac{\Pr[\bB\land\overline{\bC}]}{\Pr[\bB]}\\
    &\le \Pr[\bA | \bB\land \bC]+\frac{\Pr[\overline{\bC}]}{\Pr[\bB]}.
\end{align*}
Letting $\bA$ be the event that $\cA^f(k)$ wins, $\bB=\bW$, and $\bC=\bE$, we can obtain \Cref{clm:upbound-W-E}.
\end{proof}

\subsection{Proof of Claim \ref{clm:conditioned_lem5}}

\begin{proof}[Proof of \Cref{clm:conditioned_lem5}]
Note that the conditioning on $\bW\land \bE$ only projects the algorithm's initial state $\ket{\sigma}$ to the space where $\bW\land\bE$ holds, and this does not affect the correctness of the analysis in \cite[Lemma 5]{zhandry2019record}. That is, for $c=1$, we can still write the final state for $\cA^f(k)$ conditioned on $\bW\land\bE$ in the form of 
\begin{align*}
    \sum_{x,u,z,D}\alpha_{x,u,z,D,0}\ket{x,u,z,D} +\sum_{r\neq0^n}\alpha_{x,u,z,D,r}\frac{1}{\sqrt{N}}\sum_{y}\omega_N^{y\cdot r}\ket{x,u,z,D\cup(x,y)},
\end{align*}
with the compressed oracle.
The generalization for $c\ge 1$ in \cite[Lemma 5]{zhandry2019record} also holds.
Therefore,
\begin{align*}
\sqrt{\Pr_{\substack{f\sim\mu\\ k\sim[K]}}[\cA^f(k)\text{ wins } | \bW\land \bE]}
    &\le \sqrt{\Pr_{\substack{f\sim\mu\\ k\sim[K]}}[(D|_k\in P) | \bW\land \bE]}+\sqrt{\frac{c}{N}}\\
    &\le \sqrt{
    2\cdot\left(\Pr_{\substack{f\sim\mu\\ k\sim[K]}}[(D|_k\in P) | \bW\land \bE]+\frac{c}{N}\right)
    }
    \tag{since $\sqrt a+\sqrt b\le\sqrt{2a+2b}$}
\end{align*}
as desired.
\end{proof}

\subsection{Proof of Claim \ref{clm:conditioned_propertyfinding}}

\begin{proof}[Proof of \Cref{clm:conditioned_propertyfinding}]
For an algorithm that first succeeds on $\bW$ and $\bE$, and then succeeds on salt $k$ after $T$ queries, we can denote its final state as:
\begin{align*}
    \ket{\phi}=\Gamma_{k}^P\cdot \compressO \cdot U_T\cdots \compressO \cdot U_2\cdot \compressO \cdot U_1\ket{\sigma}. 
\end{align*}
Here $\Gamma_{k}^P$ is the projection on the database register that $D|_k\in P$. 
$\ket{\sigma}$ is the state after projecting to the space where $\cB^f$ outputs $0$ and $\bE$ happens, and thus every superposition of $\ket{\sigma}$ succeeds on at most $2S\log N$ salts.
Here the algorithm $\cA^f(k)$ starts from $\ket{\sigma}$ and makes $T$ oracle queries.

With $\ket{\phi}$, we can have
\begin{align*}
    \Pr_{\substack{f\sim\mu\\ k\sim[K]}}[(D|_k \in P)| \bW\land \bE]=\mathbb{E}_{k\sim[K]}\left[\|\ket{\phi}\|^2\right],
\end{align*}
and now our goal is to upper bound the expected norm of $\ket{\phi}$.

We define $\Gamma_{k}^{>T}$ as the projection on the database register satisfying that the size of $D|_k$ is larger than $T$ and that $D|_k\in \bar{P}$.
Define $\overline{\Gamma}=\I-\Gamma_k^P-\Gamma_k^{>T}$, which is the projection onto the remaining cases.
Now we can write $\ket{\phi}$ as the following:
\begin{align*}
    \ket{\phi}&=\Gamma_{k}^P\cdot \compressO \cdot U_T\cdots \compressO \cdot U_1\ket{\sigma}\\
    &=\Gamma_{k}^P\cdot \compressO \cdot U_T\cdots \compressO \cdot U_1\cdot (\Gamma_k^P+\Gamma_k^{>T}+\overline{\Gamma})\ket{\sigma}\\
    &=\ket{\phi_1}+\ket{\phi_2}+\ket{\phi_3}.
\end{align*}
Here $\ket{\phi_1}$ is the part of $\ket{\phi}$ that starts from $\Gamma_k^P\ket{\sigma}$, $\ket{\phi_2}$ is the one that starts from $\Gamma_{k}^{>T}\ket{\sigma}$, and $\ket{\phi_3}$ is the one that starts from $\overline{\Gamma}\ket{\sigma}$.

Then we can upper bound expected $\|\ket{\phi}\|^2$ by upper bounding expected $\|\ket{\phi_i}\|^2$, $i=1,2,3$:
\begin{align*}
    \mathbb{E}_{k\sim[K]}\left[\|\ket{\phi}\|^2\right]
    \le\mathbb{E}_{k\sim[K]} \left[3\cdot(\|\ket{\phi_1}\|^2+\|\ket{\phi_2}\|^2+\|\ket{\phi_3}\|^2)\right].
\end{align*}

\paragraph*{For $\ket{\phi_1}$.} 
Note that every superposition of $\ket{\sigma}$ succeeds on at most $2S\log N$ salts, 
\begin{align}\label{ineq:cond-propertyfind-1}
    \mathbb{E}_{k\sim[K]}\left[\|\ket{\phi_1}\|^2\right]
    &=\mathbb{E}_{k\sim[K]}\left[\|\Gamma_{k}^P\cdot \compressO \cdot U_T\cdots \compressO \cdot U_1\cdot \Gamma_k^P\ket{\sigma}\|^2\right] \notag\\
    &\le \mathbb{E}_{k\sim[K]}\left[\|\Gamma_k^P\ket{\sigma}\|^2\right]
    =\frac{1}{K} \cdot\sum_{k\in[K]}\|\Gamma_k^P\ket{\sigma}\|^2 \notag\\
    &\le\frac{2S\log N}{K} = \widetilde{O}\left(\frac{S}{K}\right).
\end{align}

\paragraph*{For $\ket{\phi_2}$.}  
Note that $\ket{\sigma}$ is obtained by $\cB^f$ making $2ST$ queries. Thus for every superposition in $\ket{\sigma}$, the database $D$ can have at most $2ST/(T+1)=O(S)$ salts $i$ such that the size of $D|_i$ is larger than $T$. Therefore, 
\begin{align}\label{ineq:cond-propertyfind-2}
    \mathbb{E}_{k\sim[K]}\left[\|\ket{\phi_2}\|^2\right]
    &=\mathbb{E}_{k\sim[K]}\left[\|\Gamma_{k}^P\cdot \compressO \cdot U_T\cdots \compressO \cdot U_1\cdot \Gamma_k^{>T}\ket{\sigma}\|^2\right]
    \notag\\
    &\le \mathbb{E}_{k\sim[K]}\left[\|\Gamma_k^{>T}\ket{\sigma}\|^2\right]
    =\frac{1}{K} \cdot\sum_{k\in[K]}\|\Gamma_k^{>T}\ket{\sigma}\|^2\notag\\
    &\le \frac{2S}{K} = O\left(\frac{S}{K}\right).
\end{align}

\paragraph*{For $\ket{\phi_3}$.}
$\ket{\phi_3}$ starts from $\overline{\Gamma}\ket{\sigma}$, where $\overline{\Gamma}$ means that at that time it does not succeed on salt $k$, and it does not contain more than $T$ elements of salt $k$ in the database. 
For every fixed $k$, we define a sequence of properties $P_t$ ($j=T,T+1,\ldots,2T$) such that the complement of each property $\neg P_t$ is defined as $P\cup \text{SZ}_{\le t}$ (similar to \cite[Remark 5.8]{chung2021compressed}), where $\text{SZ}_{\le t}$ corresponds to the property that the size of the restricted database $D|_k$ is no larger than $t$. 
Here the upper bound $2T$ of $t$ comes from the fact that $\overline\Gamma\ket\sigma$ starts with at most $T$ elements and the algorithm makes at most $T$ extra queries.

Now we apply \cite[Lemma 5.6]{chung2021compressed} and the maximum winning probability is bounded by
\begin{align*}
    [\![\neg P_T\stackrel{T}{\to} P_{2T}]\!]\le \sum_{t=T+1}^{2T}\ [\![\neg P_{t-1}\to P_{t}]\!],
\end{align*}
where $[\![P\to P']\!]$ corresponds to the maximum transition capacity (quantum analogue of transition probability) to go from property $P$ to $P'$ after one query, and $[\![P\stackrel{T}{\to} P']\!]$ is the case for $T$ queries.

We use the following fact about an upper bound for the transition capacity $[\![\neg P_t\to P_{t+1}]\!]$. The fact is restated and proved in \Cref{prf:transition-g-h}.
\begin{fact}\label{fct:cO-transition}
Let $P$ be a monotone property with transition probability $p_t$. For any joint state $\ket{\psi}$ of the algorithm's registers and the oracle registers, define $\Gamma_{k}^{\le t}$ as the projection that projects every superposition $\ket{i,x,u,z}\ket{D}$ to the states such that $i=k$ and the size of the database $D|_k$ is at most $t$. Define $\Gamma_{k}^P$ as the projection that projects to $D|_k\in P$. Then we can have the following result for the decrease of norm during the transition from $D|_k\in \bar{P}$ to $D|_k\in P$:
\begin{align*}
    \left\|\Gamma_k^P\cdot \cphso\cdot (\I-\Gamma_k^P)\Gamma_k^{\le t}\ket{\psi}\right\|
    \le \sqrt{8\cdot p_t}\left\|(\I-\Gamma_k^P)\Gamma_k^{\le t}\ket{\psi}\right\|.
\end{align*}
\end{fact}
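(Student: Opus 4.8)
The plan is to run the ``transition capacity'' argument of the compressed-oracle framework (cf.\ \cite{chung2021compressed}), specialized to the salted setting, and reduce the stated inequality to a purely single-cell operator-norm bound. Set $\ket\chi:=(\I-\Gamma_k^P)\Gamma_k^{\le t}\ket\psi$, so the goal is $\|\Gamma_k^P\cdot\cphso\cdot\ket\chi\|\le\sqrt{8 p_t}\,\|\ket\chi\|$; by construction $\ket\chi$ is supported only on basis states $\ket{k,x,u,z}\ket D$ with $|D|_k|\le t$ and $D|_k\in\overline P$ (the salt register is pinned to $k$ by $\Gamma_k^{\le t}$).

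First I would use the locality of $\cphso=\stddecomp\cdot\cphso'\cdot\stddecomp$: on a query register holding $(k,x)$, all three factors act only on the $(k,x)$-cell of the database (the phase step reads $u$ but does not modify it). Freezing $x,u,z$ and the ``rest of the database'' $D'$ (every cell but $(k,x)$, including every other salt), $\cphso$ becomes $\I\otimes V_u$ on that block, where $V_u:=\stddecomp\cdot\Phi_u\cdot\stddecomp$ is a unitary on the $(N{+}1)$-dimensional cell space, $\Phi_u\colon\ket y\mapsto\omega_N^{uy}\ket y$ fixing $\ket\bot$, and $\stddecomp$ acts on one cell by swapping $\ket\bot$ with $\tfrac1{\sqrt N}\sum_y\ket y$ and fixing the remaining Fourier modes. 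Decomposing $\ket\chi$ over these orthogonal blocks, in each block the cell state lies in $\mathrm{span}\{\ket\bot\}\cup\{\ket w:D'|_k\cup\{(x,w)\}\in\overline P\}$, and monotonicity of $P$ together with the support of $\ket\chi$ forces $D'|_k\in\overline P$, $x\notin_\X D'|_k$ and $|D'|_k|\le t$. After the query, $\Gamma_k^P$ acts within the block by the cell projection $\Pi_G$ onto $G:=\{y\in\Y:D'|_k\cup\{(x,y)\}\in P\}$ (a $\bot$-output leaves $D|_k=D'|_k\notin P$), and $|G|\le p_t N$ by \Cref{def:transition-prob}. Since distinct blocks stay orthogonal under $\cphso$ and $\Gamma_k^P$, it suffices to prove $\|\Pi_G\,V_u\,(\I-\Pi_G)\|\le\sqrt{8 p_t}$ in each block; the $u=0$ case is trivial since $V_0=\I$ and $\Pi_G(\I-\Pi_G)=0$.

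The technical heart, and the step I expect to demand the most care, is this single-cell bound for $u\ne0$. Expanding $V_u$ in the Fourier basis over $\Z_N$ gives the matrix entries $\langle g|V_u|\bot\rangle=\tfrac1{\sqrt N}\omega_N^{ug}$ and, for $g\ne w$, $\langle g|V_u|w\rangle=\tfrac1N(1-\omega_N^{uw}-\omega_N^{ug})$, and every $g\in G$ differs from every column vector in the range of $\I-\Pi_G$. Bounding the operator norm by the Frobenius norm and using the identity $\sum_{w\in\Y}|1-\omega_N^{uw}-\omega_N^{ug}|^2=N(3-2\Re\omega_N^{ug})\le 5N$, one gets $\|\Pi_G V_u(\I-\Pi_G)\|^2\le\tfrac{|G|}{N}+\tfrac{5|G|}{N}=\tfrac{6|G|}{N}\le 6p_t\le 8p_t$. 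Recombining the blocks by orthogonality then gives $\|\Gamma_k^P\cphso\ket\chi\|^2=\sum_{\text{blocks}}\|\Pi_G V_u(\text{cell state})\|^2\le 8p_t\sum_{\text{blocks}}\|\text{cell state}\|^2=8p_t\|\ket\chi\|^2$, which is the claim. The only delicate points will be keeping the single-cell Fourier computation of $V_u$ honest and verifying that a $\bot$-output never enters $P$; the constant $8$ is merely a convenient slack in the Frobenius-norm estimate and is not meant to be tight.
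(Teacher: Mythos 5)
Your proposal is correct, and the key computations check out: the matrix elements $\langle g|V_u|\bot\rangle=\tfrac{1}{\sqrt N}\omega_N^{ug}$ and $\langle g|V_u|w\rangle=\tfrac1N(1-\omega_N^{uw}-\omega_N^{ug})$ for $g\ne w$, the identity $\sum_{w\in\Y}|1-\omega_N^{uw}-\omega_N^{ug}|^2=N(3-2\operatorname{Re}\omega_N^{ug})\le 5N$ for $u\ne 0$, and the resulting Frobenius bound $\|\Pi_G V_u(\I-\Pi_G)\|^2\le \tfrac{6|G|}{N}\le 6p_t$ are all accurate, as are the supporting observations (the salt register is pinned by $\Gamma_k^{\le t}$, monotonicity forces $D'|_k\in\overline P$ with $x\notin_\X D'|_k$ and $|D'|_k|\le t$ in every nonzero block, a $\bot$ cell never lands in $P$, and both $\cphso$ and $\Gamma_k^P$ preserve the blocks, so Pythagoras applies across blocks). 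Your route is genuinely different from the paper's: the paper splits the pre-query state into the part where the queried cell is fresh ($D(k,x)=\bot$) and the part where it is already recorded, bounds each piece by an explicit amplitude computation plus Cauchy--Schwarz (yielding factors $p_t$ and $4p_t$ respectively), and glues them with $\|a+b\|^2\le 2(\|a\|^2+\|b\|^2)$, which is exactly where the constant $8$ comes from. You instead decompose into orthogonal blocks indexed by everything outside the queried cell, reduce the whole statement to a single-cell operator-norm bound for $V_u=\stddecomp\cdot\Phi_u\cdot\stddecomp$, and control that norm by the Frobenius norm, handling the $\bot$ column and the $w$ columns uniformly. This buys a slightly better constant ($6p_t$ rather than $8p_t$, so the stated $\sqrt{8p_t}$ is comfortably satisfied) and avoids the lossy two-case recombination; it also makes the single-cell unitary fully explicit, which is a useful sanity check against the amplitude formulas appearing in the paper's own computation. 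The paper's version, on the other hand, reuses verbatim the two inequalities already established in its proof of the $g$-to-$h$ transition fact, so it integrates more directly with the surrounding direct-product argument. Either way, your argument is a valid standalone proof of the fact.
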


This implies $[\![\neg P_t\to P_{t+1}]\!]\le \sqrt{8p_t}$, and gives us the following upper bound of the maximum winning probability for every $k\in[K]$:
\begin{align}\label{ineq:cond-propertyfind-3}
    O(\sqrt{p_T}+\sqrt{p_{T+1}}+\cdots+\sqrt{p_{2T-1}})^2\le O(\gamma_{2T}),
\end{align}
where we used the definition of $\gamma_T$ from \Cref{thm:quantum_lift_property_find}.

Therefore, from \Cref{ineq:cond-propertyfind-1}, \Cref{ineq:cond-propertyfind-2}, and \Cref{ineq:cond-propertyfind-3},
\begin{align*}
    \mathbb{E}_{k\sim[K]}\left[\|\ket{\phi}\|^2\right]
    \le 3\cdot\mathbb{E}_{k\sim[K]} \left[\|\ket{\phi_1}\|^2+\|\ket{\phi_2}\|^2+\|\ket{\phi_3}\|^2\right]
    \le O(\gamma_{2T})+\widetilde{O}\left(\frac{S}{K}\right).
    \tag*{\qedhere}
\end{align*}
\end{proof}

\subsection{Proof of Claim \ref{clm:threshold-qsdpt-E}}

\begin{proof}[Proof of \Cref{clm:threshold-qsdpt-E}]
Without loss of generality, we assume $\gamma_T=o(1)$.
By letting $B=2ST$ and $k=2S\log N$ in \Cref{thm:threshold-qsdpt}, we can upper bound $\Pr_{f\sim\mu}[\overline{\bE}]\le (O(1)\cdot \gamma_{{T}/{\log N}})^{2S\log N}\le (1/2)^{2S\log N}=(1/N)^{2S}$.
As $\Pr_{f\sim\mu}[\bW]\ge 1/N^S$, we now obtain 
\begin{equation*}
\Pr_{f\sim\mu}[\overline{\bE}]\le \frac{\Pr_{f\sim\mu}[\bW]}{N^S} \le \frac{\Pr_{f\sim\mu}[\bW]}N.
\tag*{\qedhere}
\end{equation*}
\end{proof}

\subsection{Proof of Theorem \ref{thm:main_quantum}}

\begin{proof}[Proof of \Cref{thm:main_quantum}]
We put together the claims in \Cref{subsec:prf-main_quantum} and conclude \Cref{thm:main_quantum}:
\begin{align*}
    \Pr_{\substack{f\sim\mu\\ k\sim[K]}}[\cA^f(k)\text{ wins } | \bW]
    &\le\Pr_{\substack{f\sim\mu\\ k\sim[K]}}[\cA^f(k)\text{ wins } | \bW\land \bE]+\frac{\Pr_{f\sim\mu}[\overline{\bE}]}{\Pr_{f\sim\mu}[\bW]} 
    \tag{by \Cref{clm:upbound-W-E}}\\
    &\le 2\cdot\Pr_{\substack{f\sim\mu\\ k\sim[K]}}[(D|_k\in P)| \bW\land \bE]+\frac{2c}{N} + \frac{\Pr_{f\sim\mu}[\overline{\bE}]}{\Pr_{f\sim\mu}[\bW]} 
    \tag{by \Cref{clm:conditioned_lem5}}\\
    &\le O(\gamma_{2T})+\widetilde{O}\left(\frac{S}{K}\right)+\frac{2c}{N}+\frac{\Pr_{f\sim\mu}[\overline{\bE}]}{\Pr_{f\sim\mu}[\bW]} 
    \tag{by \Cref{clm:conditioned_propertyfinding}}\\
    &\le O(\gamma_{2T})+\widetilde{O}\left(\frac{S}{K}\right)+O\left(\frac{c}{N}\right) 
    \tag{by \Cref{clm:threshold-qsdpt-E}}.
\end{align*}
According to \Cref{eq:BF-w/E}, this gives an upper bound for $\mathcal{V}(2ST,T)$. Therefore, since $\gamma_{T}$ is polynomial in $T$, by \Cref{thm:qadvice-6.1} we can obtain
\begin{align*}
    \epsQ_{G_{R,K}}(S, T) \leq \widetilde{O}\left( \gamma_T + \frac{S}{K} + \frac{c}{N}\right).
    \tag*{\qedhere}
\end{align*}
\end{proof}

\section{Proof of SDPT for Quantum Property Finding}\label{sec:qsdpt}

In this section, we present the proof of our strong direct product theorem \Cref{thm:qsdpt} for quantum property finding problem. 
This proof can also capture a threshold version (\Cref{thm:threshold-qsdpt}) as a corollary.

We first characterize the winning probability we need to compute for quantum direct product algorithms. Then we present how to use the idea of ``path integral'' to give an upper bound of this probability: split the final state into paths, and then give an upper bound of the maximum norm for all paths.

\paragraph*{The Winning Probability of a Quantum Direct Product Algorithm.} 
For quantum direct product of property finding problems, we model the result state of a quantum algorithm that makes $kT$ queries to the oracle using the compressed oracle technique.
We extend the algorithm's first register from $\ket{x}$ to $\ket{k,x}$ with salt register $\ket{k}$, and the salt space here is $[\kappa]$. After $\kappa T$ queries, we will apply the binary-value measurement $\{\Lambda_{P^{\times \kappa}}, \I-\Lambda_{P^{\times \kappa}}\}$ on the oracle register to check in superposition whether the database is in $P^{\times \kappa}$: 
$$ 
\Lambda_{P^{\times \kappa}}=\sum_{D\in P^{\times \kappa}}\ket{D}\bra{D}.
$$
The algorithm wins if the measurement outcome of $\{\Lambda_{P^{\times \kappa}}, \I-\Lambda_{P^{\times \kappa}}\}$ is $0$. Since the final local computation $U_{\sf final}$ after the $\kappa T$-th query is independent of the oracle register, we can ignore that operator, and the winning probability for this algorithm can be written as 
\begin{align*}
    \Pr[\cA \text{ wins }P^{\times \kappa}]=
    \left\|
    (\I_{\cA}\otimes \Lambda_{P^{\times \kappa}})\cdot \compressO \cdot U_{\kappa T} \cdot \compressO \cdot U_{\kappa T-1} \cdots \compressO \cdot U_1 \ket{\psi_0}
    \right\|^2.
\end{align*}

We say that a state (or a projector) succeeds on salt $k\in [\kappa]$ if the state (or every state in the image of the projector) is in the image of $\sum_{D|_k\in P}\ket D\bra D$. 

Here we fix this property $P$. We further define $\Lambda^{=r}$ to be the projection onto the states that can succeed on exactly $r$ salts, and define $\Lambda^{\ge r}$ to be the projection that succeeds on at least $r$ salts, i.e., $\Lambda^{\ge r'}=\sum_{r=r'}^\kappa\Lambda^{=r}$ for salt space $[\kappa]$.
We also define $\Lambda_{k}$ to be the projection onto the states that can succeed on salt $k$.
Therefore, for salt space $[\kappa]$, $\Lambda^{\ge \kappa} =\Lambda^{=
\kappa} =\prod_{k\in[\kappa]}\Lambda_{k}=\I_{\cA}\otimes \Lambda_{P^{\times \kappa}}$. 

We define the final winning state after the measurement as $\ket{\psi_{\sf win}}$,
\begin{align}\label{def:win-state}
    \ket{\psi_{\sf win}}=\Lambda^{=\kappa} \cdot \compressO \cdot U_{\kappa T} \cdot \compressO \cdot U_{\kappa T-1} \cdots \compressO \cdot U_1 \ket{\psi_0},
\end{align}
and the winning probability of a $\kappa T$-query algorithm can be written as 
\begin{align}\label{prob:qsdpt-qalgo}
    \Pr[\cA \text{ wins }P^{\times \kappa}]=\|\ket{\psi_{\sf win}}\|^2.
\end{align}

To prove SDPT for quantum property finding, our goal is to give an upper bound for \Cref{prob:qsdpt-qalgo}.

\subsection{Splitting the Final Winning State} 
To give an upper bound for \Cref{prob:qsdpt-qalgo}, we will first split the final state $\ket{\psi_{\sf win}}$ into a sum of several ``path'' terms according to the ``measurement results'' after each oracle query.

We define $\cP_\kappa$ to be the set of all permutations of salts $[\kappa]$. From $\ket{\psi_{\sf win}}$ as in \Cref{def:win-state}, we define a set of $\{\ket{\psi_{\vz, \pi}}\}$ based on the result after each oracle query. Here $\vz=(z_1,z_2,\ldots,z_\kappa)$ satisfies $1\le z_1<z_2<\cdots<z_\kappa\le \kappa T$ and $\pi=(\pi_1, \pi_2,\ldots, \pi_\kappa)\in\cP_\kappa$ is a permutation all salts $[\kappa]$. 
$\ket{\psi_{\vz, \pi}}$ is defined as follows:
\begin{align}\label{def:zpi}
    \ket{\psi_{\vz, \pi}} =\left(\cE_{\vz, \pi}^{(\kappa T)}\cdot \compressO \cdot U_{\kappa T}\right)\cdot\left(\cE_{\vz, \pi}^{(\kappa T-1)}\cdot \compressO \cdot U_{\kappa T-1}\right)\cdots\left(\cE_{\vz, \pi}^{(1)}\cdot \compressO \cdot U_1\right)\ket{\psi_0},
\end{align}
where $\cE_{\vz,\pi}^{(t)}$ is a projection on the database register and it will be applied after the $t$-th oracle query.
More precisely, $\cE_{\vz,\pi}^{(t)}$ is a product of projections on each $D|_k$:
\begin{align*}
    &\cE_{\vz, \pi}^{(t)}=\cE_{\vz, \pi, 1}^{(t)}\cdot \cE_{\vz, \pi, 2}^{(t)}\cdots \cE_{\vz, \pi, \kappa}^{(t)},\\
    &\text{where }\ 
    \cE_{\vz, \pi, k}^{(t)} =\begin{cases}
        \Lambda_k, & \text{if } t\ge z_r, \text{ for $r\in[\kappa]$ being the location of $k$ in $\pi$, i.e., $\pi_r=k$}, \\
        \I - \Lambda_k, & \text{if } t=z_r-1, \text{ for $r$ as defined above}, \\
        \I, & \text{if } t<z_r-1, \text{ for $r$ as defined above}.
    \end{cases}
\end{align*}

Here we give an example for $\kappa=2$, $T=3$, with $\vz=(2,5)$ and $\pi=(\pi_1,\pi_2)=(1,2)$:
\begin{alignat*}{3}
    \ket{\psi_{\vz,\pi}}=\ 
    \left( \Lambda_{1}\Lambda_{2}\cdot\compressO\cdot U_6\right)&\ 
    \cdot\ 
    \left(
    \Lambda_{1}\Lambda_{2} \cdot \compressO \cdot U_{5}\right)\ 
    \cdot &\ 
    \left(
    \Lambda_{1}(\I-\Lambda_{2}) \cdot \compressO \cdot U_{4}\right)&\ \\
    \cdot\ 
    \left(
    \Lambda_{1}\cdot \compressO\cdot U_{3}\right)&\  \cdot\ \quad\  
    \left(\Lambda_{1} \cdot\compressO\cdot U_{2}\right)\ 
    \cdot &\ 
    \left(
    (\I-\Lambda_{1})\cdot\compressO\cdot U_{1}\right)&\ 
    \ket{\psi_{0}}
\end{alignat*}

Intuitively, by fixing $\vz$ and $\pi$, we obtain $\ket{\psi_{\vz, \pi}}$ as a part of $\ket{\psi_{\sf win}}$, where we fix the $z_r$-th query as the latest time when the state moves from non-success to success on salt $k=\pi_r$, and always succeeds on salt $\pi_r$ afterwards, for every $r\in[\kappa]$.  

This idea comes from a way of partitioning the winning event in the classical case. For classical direct product of property finding problem, the algorithm has a database $D$ corresponding to the partial assignments it learns. It starts from an empty database, and appends at most one element $(x,y)$ to $D$ after one oracle query. It wins if its final database succeeds on every salt $k\in [\kappa]$ for property $P$. 
\begin{itemize}
    \item Following the evolution of the algorithm's database when we actually run an algorithm, we can record the number of queries $z_r$ when it succeeds on $r$ salts, and at that time which new salt $\pi_r\in[\kappa]$ it succeeds on. Note that the algorithm's database will never forget the elements it learns, $\pi=(\pi_1,\pi_2,\ldots,\pi_\kappa)$ will form a permutation of $[\kappa]$, and $z_1<z_2<\cdots<z_\kappa$ since the algorithm can only learn one element after one oracle query. These give us definitions of $\vz=(z_1,z_2,\ldots,z_\kappa)\in Z$ and $\pi=(\pi_1,\pi_2,\ldots,\pi_\kappa)\in\cP_\kappa$.
    \item The probability of winning the property finding game can be written as a sum of the winning probability with respect to every $(\vz,\pi)$. In light of the symmetry of different salts, we can sum over all $\pi$ and only consider the contribution with respect to $\vz$. This corresponds to the probability of ``first time succeeding on $r$ salts after $z_r$-th query for each $r\in[\kappa]$''.
\end{itemize}

When it comes to the quantum case, querying an element that is already in the database might lead to ``forgetting'', which means that each superposition of the database might not always append one element after one oracle query. This might result in the case that there are multiple times when the database moves from non-success to success on some salt. Note that the final winning state is in the image of $\Lambda^{=\kappa}=\prod_{k\in[\kappa]}\Lambda_k$. Therefore we can define a term with respect to $(\vz, \pi)$, where we think of the $z_r$-th query as the latest time when it moves from non-success to success on salt $\pi_r$ after this oracle query, and always succeeds on salt $\pi_r$ afterwards. This corresponds to the definition in \Cref{def:zpi}.

Similarly to the classical case, we do not distinguish between permutations and define a term $\ket{\psi_{\vz}}$ with respect to every $\vz$ as ``a path'' by considering all possible $\pi$'s.

\begin{definition}[Paths in $\ket{\psi_{\sf win}}$]\label{def:splitting-path}
Let $Z$ to be the set of all $\vz=(z_1,z_2,\ldots,z_\kappa)$ such that $1\le z_1< z_2<\cdots< z_\kappa\le \kappa T$. 
Recall \Cref{def:zpi}. We define $\ket{\psi_{\vz}}$ to be the path corresponding to $\vz=(z_1,z_2,\ldots,z_\kappa)$:
\begin{align*}
    \ket{\psi_{\vz}} =\sum_{\pi\in\cP_\kappa} \ket{\psi_{\vz,\pi}}.
\end{align*}
\end{definition}

Here we claim that all the paths form a perfect partition of $\ket{\psi_{\sf win}}$ as formalized in \Cref{clm:splitting}.
The proof of \Cref{clm:splitting} can be found in \Cref{app:prf_splitiing}.
\begin{claim}\label{clm:splitting}
With the definition of $\vz$, $Z$, and $\ket{\psi_{\vz}}$ in \Cref{def:splitting-path}, we have
\begin{align*}
    \ket{\psi_{\sf win}}=\sum_{\vz\in Z} \ket{\psi_{\vz}}.
\end{align*}
\end{claim}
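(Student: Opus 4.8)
The plan is to show that the family $\{\cE_{\vz,\pi}^{(t)}\}_t$ inserted into the definition of $\ket{\psi_{\vz,\pi}}$ forms a partition of unity at each time step once we sum over the appropriate indices, so that telescoping the sum $\sum_{\vz\in Z}\sum_{\pi\in\cP_\kappa}\ket{\psi_{\vz,\pi}}$ collapses to $\ket{\psi_{\sf win}}$. First I would fix a permutation $\pi$ and argue per-salt: for each salt $k$, tracking the projectors $\cE_{\vz,\pi,k}^{(1)},\cE_{\vz,\pi,k}^{(2)},\ldots$ along the computation, the position $z_r$ (where $\pi_r=k$) records the \emph{last} time index at which the ``$D|_k\in P$'' status flips on and stays on. Since after the final query the state lies in the image of $\Lambda^{=\kappa}=\prod_k\Lambda_k$, salt $k$ \emph{does} succeed at the end, so such a last flip time exists and lies in $\{1,\ldots,\kappa T\}$; conversely, the projectors $\{\,\cdots(\I-\Lambda_k)\text{ at }t=z-1,\ \Lambda_k\text{ for }t\ge z,\ \I\text{ for }t<z-1\,\}$ over all choices of $z\in\{1,\ldots,\kappa T\}$, together with the ``never succeeds'' branch (which is killed by the terminal $\Lambda_k$), resolve the identity when summed. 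The key one-query identity I will use repeatedly is: for any sub-normalized $\ket\chi$ and any salt $k$,
\[
\ket\chi=\sum_{z=1}^{\kappa T}\paren{\text{branch whose last flip-on time for }k\text{ is }z}\ +\ \paren{\text{branch that ends with }D|_k\notin P},
\]
and the last term is annihilated once we apply $\Lambda^{=\kappa}$ at the end.

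Concretely, the steps are: (i) formalize, for a fixed $\pi$, the decomposition $\ket{\psi_{\sf win}}=\sum_{\vz:\,1\le z_1,\ldots,z_\kappa\le\kappa T}\ket{\psi_{\vz,\pi}}$ by inserting, between consecutive $\compressO\,U_t$ blocks, the resolution of identity $\I=\Lambda_k+(\I-\Lambda_k)$ and then bucketing the $(\I-\Lambda_k)\to\Lambda_k$ transitions; here no ordering constraint on the $z_r$ is yet imposed and $\vz$ ranges over an unordered tuple indexed by $\pi$. (ii) Observe that in this unordered decomposition, because after query $z_r$ the state is forced to stay in $\Lambda_{\pi_r}$ and the database grows by at most one entry per query while a fresh salt-success requires a genuinely new entry, the only tuples $\vz$ with nonzero contribution have $z_1<z_2<\cdots<z_\kappa$ — i.e., the ``latest new-success times'' are automatically strictly increasing. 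This is the one genuinely content-bearing step: it uses the ``local change'' property (\Cref{lem:bounded-database}) of the compressed oracle, namely that one query appends, removes, or leaves unchanged a single database entry, so two distinct salts cannot both acquire their final collision/property-witness at the same query. (iii) Sum over $\pi\in\cP_\kappa$; since for a fixed database evolution the set of last-success times is an unordered set $\{z_1,\ldots,z_\kappa\}$ that a unique $\pi$ sorts correctly, summing over $\pi$ and restricting to the sorted $\vz\in Z$ recovers exactly $\ket{\psi_{\sf win}}=\sum_{\vz\in Z}\sum_{\pi\in\cP_\kappa}\ket{\psi_{\vz,\pi}}=\sum_{\vz\in Z}\ket{\psi_{\vz}}$.

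The main obstacle I anticipate is step (ii): making rigorous the claim that $\cE^{(t)}_{\vz,\pi}$-inserted terms vanish unless the $z_r$ are strictly increasing, which amounts to showing that two distinct salts cannot \emph{both} transition from non-success to success (in a way that persists) at the \emph{same} query $t$. This requires carefully combining monotonicity of $P$ (so ``success persists'' means the witness entry is never deleted later) with the single-entry-change property of $\compressO$, and handling the subtlety that the compressed oracle can \emph{remove} an entry — so a salt could lose and later regain its property; one must check that the \emph{last} such regain for one salt and for another salt happen at different queries. A clean way to do this is to note that the $z_r$-th query, in any nonzero branch, must append a new entry to $D|_{\pi_r}$ (since $(\I-\Lambda_{\pi_r})$ holds just before and $\Lambda_{\pi_r}$ holds just after, and removal/no-change cannot create a new success while leaving it permanent), and a single $\compressO$ call touches only one $(k,x)$ coordinate, hence only one salt. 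The remaining steps — the resolution-of-identity bookkeeping and the permutation sum — are routine, and the details are deferred to \Cref{app:prf_splitiing}.
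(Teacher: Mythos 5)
Your route is genuinely different from the paper's, and its key ingredients are the right ones. The paper proves \Cref{clm:splitting} in \Cref{app:prf_splitiing} by a double induction on (number of succeeded salts, number of queries): it generalizes to states $\ket{\psi_{\sf win}^{\kappa,B}}$, peels off the last query by inserting $\Lambda^{=\kappa}+(\I-\Lambda^{=\kappa})$, and uses that one query can add at most one newly succeeding salt (together with the fact that succeeding on $\kappa$ salts requires at least $\kappa$ queries) to invoke the inductive hypotheses for $(\kappa,B)$ and $(\kappa-1,B)$. You instead expand directly: insert, for every salt, a resolution of identity bucketed by the last query at which that salt's success turns on and stays on, kill ties using locality of the compressed oracle, and re-index the surviving terms by pairs $(\vz\in Z,\pi)$. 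Both arguments rest on the same structural fact (a single $\compressO$ query touches only the queried $(k,x)$ entry, hence only one restriction $D|_k$; all inserted projectors are diagonal in the database basis, so the per-salt resolutions commute and the tie-killing can be checked basis-component-wise). Your expansion makes the origin of the $(\vz,\pi)$ indexing transparent; the paper's induction hides that combinatorics inside the inductive hypothesis at the cost of being less illuminating.

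There is, however, a bookkeeping slip you must repair. Step (i) is fine: for each \emph{fixed} $\pi$, summing over unconstrained tuples $\vz$ does reproduce $\ket{\psi_{\sf win}}$ (once tied tuples are shown to vanish). But then step (ii) as stated --- ``only strictly increasing $\vz$ contribute'' --- cannot be right: with $\pi=(1,2)$ the term in which salt $1$ flips on for the last time at query $5$ and salt $2$ at query $3$ is a perfectly good nonzero term. What locality kills is only tuples with a \emph{repeated} entry; and if (i) and (ii) both held literally, summing over $\pi$ would give $\sum_{\vz\in Z}\sum_{\pi\in\cP_\kappa}\ket{\psi_{\vz,\pi}}=\kappa!\cdot\ket{\psi_{\sf win}}$, contradicting the claim. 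The correct accounting is the one your step (iii) gestures at, and it should replace (i)+(ii) rather than follow them: each term depends only on the injective assignment $\pi_r\mapsto z_r$ of last-success times to salts, tied assignments vanish by locality, and injective assignments are in bijection with pairs $(\vz\in Z,\pi)$, so the sum over sorted $\vz$ and all $\pi$ counts each surviving term exactly once. Two smaller points: the property you need is the locality of $\cphso=\stddecomp\cdot\cphso'\cdot\stddecomp$ (on a basis state $\ket{k,x,u}\otimes\ket{D}$ it modifies only the $(k,x)$ entry), not \Cref{lem:bounded-database}, which merely bounds $|D|$; and ``a fresh success requires appending a new entry'' is not quite accurate in the quantum setting, since a query can also change the value of an existing entry and thereby create the property --- but this is harmless, because ``only one salt is touched per query'' is all the tie-killing argument needs.
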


\subsection{Maximum Norm of a Path}
With the path splitting, \Cref{clm:splitting} gives us a way to upper bound the norm of $\ket{\psi_{\sf win}}$ by considering the maximum norm of each path. This follows from the triangle inequality:
\begin{align*}
    \left\|\ket{\psi_{\sf win}}\right\|
    \le \sum_{\vz\in Z}\left\|\ket{\psi_{\vz}}\right\|\le |Z|\cdot \max_{\vz\in Z}\left\|\ket{\psi_{\vz}}\right\|.
\end{align*}

For each path, we can give an upper bound of its maximum norm.

\begin{claim}\label{clm:qsdpt-maxpath}
Recall that $p_t$ is the transition probability defined in \Cref{def:transition-prob}. 
We have
    \begin{align*}%
    \max_{\vz\in Z}\left\|\ket{\psi_{\vz}}\right\|^2
    \le 8^\kappa\cdot \max_{\substack{t_1 + \cdots + t_\kappa = \kappa T \\ t_1, \ldots, t_\kappa \in \mathbb{Z}_{+}}}  
    \left\{\prod_{i=1}^\kappa p_{t_i-1}\right\}.
    \end{align*}
\end{claim}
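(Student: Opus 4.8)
\textbf{Proof proposal for \Cref{clm:qsdpt-maxpath}.}

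The plan is to bound the norm of a single path $\ket{\psi_{\vz}} = \sum_{\pi \in \cP_\kappa} \ket{\psi_{\vz,\pi}}$ by exploiting approximate orthogonality among the $\pi$-terms and then controlling each $\ket{\psi_{\vz,\pi}}$ by a telescoping product of one-query transition bounds. First I would observe that for a fixed $\vz$ and a fixed permutation $\pi$, the state $\ket{\psi_{\vz,\pi}}$ records, for each $r \in [\kappa]$, that salt $\pi_r$ transitions from "not succeeded" to "succeeded" exactly at query $z_r$ (and stays succeeded afterward). Since at query $z_r$ the relevant projector is $(\I - \Lambda_{\pi_r})$ right before and $\Lambda_{\pi_r}$ right after, the norm drop across that single query is governed by \Cref{fct:cO-transition} (the quantum transition capacity bound): the factor $\ket{\psi_{\vz,\pi}}$ picks up at each critical query $z_r$ a multiplicative loss of $\sqrt{8 \cdot p_{t-1}}$, where $t$ is the number of database entries on salt $\pi_r$ just before query $z_r$. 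All the non-critical projectors $\cE^{(t)}_{\vz,\pi,k}$ are of the form $\Lambda_k$ or $\I - \Lambda_k$ or $\I$, hence are projections of norm $\le 1$ and only contract. Chaining these observations gives $\|\ket{\psi_{\vz,\pi}}\|^2 \le 8^\kappa \prod_{r=1}^\kappa p_{t^\pi_r - 1}$, where $t^\pi_r$ is the entry count on salt $\pi_r$ at the moment of its critical query, and $\sum_r t^\pi_r \le \kappa T$ because the total number of queries (hence total database size) is at most $\kappa T$.

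The second ingredient is the orthogonality: I would show that the terms $\ket{\psi_{\vz,\pi}}$ for distinct $\pi$ are (approximately) mutually orthogonal, so that $\|\ket{\psi_{\vz}}\|^2 = \|\sum_\pi \ket{\psi_{\vz,\pi}}\|^2 \le \sum_\pi \|\ket{\psi_{\vz,\pi}}\|^2$ (up to the approximation slack absorbed into the constant, as footnoted in the overview). The key point is that two permutations $\pi \neq \pi'$ must differ at some first index $r$; then one path has the database succeeding on salt $\pi_r$ but not yet on $\pi'_r$ right after query $z_r - 1$, while the other has the reverse, and these projected subspaces of the database register are orthogonal (a database either restricts into $P$ on a given salt or does not). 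Combining, $\|\ket{\psi_{\vz}}\|^2 \le \sum_{\pi \in \cP_\kappa} 8^\kappa \prod_{r=1}^\kappa p_{t^\pi_r - 1}$. Now, for \emph{each} fixed $\pi$, the tuple $(t^\pi_1, \ldots, t^\pi_\kappa)$ is a tuple of positive integers summing to at most $\kappa T$, so $\prod_r p_{t^\pi_r - 1} \le \max_{t_1 + \cdots + t_\kappa = \kappa T} \prod_i p_{t_i - 1}$ using monotonicity of $p$; but this would introduce a spurious $\kappa! = |\cP_\kappa|$ factor. To avoid that, I would instead be more careful: within a single path $\ket{\psi_{\vz,\pi}}$ each superposition branch of the database determines which salt gets its collision at which query, and summing over $\pi$ really just re-indexes these branches; so after the orthogonality reduction the bound already reads $\|\ket{\psi_{\vz}}\|^2 \le 8^\kappa \max_{t_1 + \cdots + t_\kappa = \kappa T} \prod_i p_{t_i - 1}$ directly, since the $\pi$-sum together with the within-path branch structure partitions a single evolution rather than multiplying it.

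The main obstacle I anticipate is making the orthogonality-and-counting argument rigorous without double-counting or losing the $\kappa!$ factor: one must argue that the decomposition $\ket{\psi_{\sf win}} = \sum_{\vz} \sum_\pi \ket{\psi_{\vz,\pi}}$ is in an appropriate sense a \emph{partition} (as \Cref{clm:splitting} asserts at the level of $\vz$), and that the per-path loss $8^\kappa \prod_r p_{t_r-1}$ holds \emph{uniformly} with the same $(t_1,\ldots,t_\kappa)$-budget rather than accumulating over $\pi$. The careful way is: fix $\vz$; the projectors $\cE^{(t)}_{\vz,\pi}$ for varying $\pi$ act on disjoint pieces of the database register (selected by which salt succeeds at which critical time), so $\sum_\pi \ket{\psi_{\vz,\pi}}$ is an orthogonal sum and applying \Cref{fct:cO-transition} $\kappa$ times along each branch — once per critical query $z_r$, with the transition probability evaluated at the current entry count $t_r - 1$ on the salt being completed — yields the stated bound after taking the worst case over the integer composition $t_1 + \cdots + t_\kappa = \kappa T$. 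The approximate (rather than exact) orthogonality from \Cref{fct:cO-transition}'s error terms contributes only lower-order corrections, which I would absorb into the constant, matching the footnoted remark in the overview.
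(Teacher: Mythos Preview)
Your proposal has a genuine gap at the orthogonality step. You claim that the terms $\ket{\psi_{\vz,\pi}}$ for distinct $\pi$ are (approximately) orthogonal because ``two permutations $\pi\ne\pi'$ differ at some first index $r$, so right after query $z_r-1$ one path has succeeded on $\pi_r$ but not $\pi'_r$ while the other has the reverse.'' That argument shows the \emph{intermediate} states at time $z_r-1$ live in orthogonal database subspaces, but after the remaining operators $\compressO\cdot U_{z_r},\ldots,\compressO\cdot U_{\kappa T}$ are applied these subspaces mix, and the \emph{final} states $\ket{\psi_{\vz,\pi}}$ need not be orthogonal. The paper addresses this point explicitly (just before \Cref{def:used_elements}): for example, with $\kappa=3$ the states $\ket{\psi_{\vz,(1,2,3)}}$ and $\ket{\psi_{\vz,(2,1,3)}}$ can share the same database support at the end. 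So the reduction $\|\sum_\pi\ket{\psi_{\vz,\pi}}\|^2\le\sum_\pi\|\ket{\psi_{\vz,\pi}}\|^2$ is not available, and you are stuck with the $\kappa!$ blow-up you were trying to avoid.

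There is a second, related problem in your per-$\pi$ bound $\|\ket{\psi_{\vz,\pi}}\|^2\le 8^\kappa\prod_r p_{t_r^\pi-1}$: the quantity $t_r^\pi$ (the database entry count on salt $\pi_r$ at the critical query $z_r$) is not a single number but depends on the superposition branch, so you cannot pull out one common product $\prod_r p_{t_r^\pi-1}$.

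The paper's fix is to avoid both issues simultaneously by introducing a different bookkeeping. Instead of summing over permutations $\pi$ and hoping for orthogonality at the end, it groups by the \emph{set} $\cS$ of succeeded salts (for which orthogonality genuinely holds at each fixed time) and, crucially, refines further by the ``number of used elements'' (\Cref{def:used_elements}), giving commuting projectors $Q_j$ that supply honest orthogonality. The quantities $g_{r,\cS}^j=\|Q_j\ket{\psi_{\vz,\cS}^{z_r}}\|^2$ are then controlled inductively in $r$ via \Cref{clm:qsdpt-G_P_g}: one further splits $g_{r,\cS}^B$ over the salt $k$ and the entry count $\nu$ at the critical query (the $g_{r,\cS,k}^{j,\nu}$ and $h_{r,\cS',k}^{j,\nu}$), applies the single-query transition bound $g\le 8\,p_{\nu-1}\,h$ (\Cref{fct:transition-g-h}), and then uses \Cref{fct:transition-h-g} together with an Abel-summation argument (\Cref{ineq:abel}, exploiting that $P_{\ell+1}^{B-\nu}$ is non-increasing in $\nu$) to absorb the sum over $k$ and $\nu$ back into $\sum_j g_{r-1,\cS'}^j$ without picking up a combinatorial factor. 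This is exactly the mechanism that replaces your hoped-for orthogonality and handles the branch-dependence of the entry count; the ``approximate orthogonality'' remark in the overview refers to this finer decomposition, not to the $\pi$-decomposition you proposed.
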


To prove \Cref{clm:qsdpt-maxpath}, we analyze the evolution of $\ket{\psi_{\vz}}$. 
A natural idea is to upper bound the norm of each $\ket{\psi_{\vz,\pi}}$ for all $\pi\in\cP_\kappa$ and make use of some orthogonality relation between different $\pi$'s. 
However, the issue is that we cannot trivially obtain even pairwise orthogonality in $\{\ket{\psi_{\vz,\pi}}\}_{\pi\in\cP_\kappa}$ by looking at the support on computational basis. An example is that for $3$-query final states $\ket{\psi_{\vz,(1,2,3)}}$ and $\ket{\psi_{\vz,(2,1,3)}}$, they can share the same database register, which means they are not orthogonal.
Therefore, we need to make detailed analysis following the evolution of $\ket{\psi_{\vz,\pi}}$ to explore more subtle orthogonality relation during the process of reaching the final state.

Now we fix some $\vz=(z_1,z_2,\ldots,z_\kappa)$ and analyze this path. 

We define $\Lambda_{\cS}$ to be the projection that succeeds on salts in $\cS$, $\Lambda_{\cS}=\prod_{k\in\cS}\Lambda_k$. We also define $\tau_k$ as a projection on the query register that projects to queries on salt $k$. That is, $\tau_k\ket{k,x,u,z,D}=\ket{k,x,u,z,D}$ and $\tau_k\ket{k',x,u,z,D}=0$ for $k'\neq k$. 

For every $\pi\in\cP_\kappa$, we first rewrite $\ket{\psi_{\vz,\pi}}$. Note that $\cE_{\vz,\pi}^{(t)}$ only applies on the database register and $U_t$ only applies on the query register (and the algorithm's local memory). Therefore $\cE_{\vz,\pi}^{(t)}$ and $U_{t'}$ commutes, and from \Cref{def:zpi} we can rewrite $\ket{\psi_{\vz,\pi}}$ as
\begin{align}
    \ket{\psi_{\vz,\pi}} &=\left(\cE_{\vz, \pi}^{(\kappa T)}\cdot \compressO \cdot U_{\kappa T}\right)\cdot\left(\cE_{\vz, \pi}^{(\kappa T-1)}\cdot \compressO \cdot U_{\kappa T-1}\right)\cdots\left(\cE_{\vz, \pi}^{(1)}\cdot \compressO \cdot U_1\right)\ket{\psi_0}
    \tag{defined in \Cref{def:zpi}}\\
    \ &=\zeta_{\vz,\pi}^{(\kappa T)}\cdot \zeta_{\vz,\pi}^{(\kappa T-1)}\cdots\zeta_{\vz,\pi}^{(2)}\cdot \zeta_{\vz,\pi}^{(1)}\ket{\psi_0},\label{def:zpi_redefine}\\
    \text{where } &\ 
    \zeta_{\vz,\pi}^{(t)}=
    \begin{cases}
        \compressO\cdot U_t, & \text{if } t<z_1,\\
        \Lambda_{\{\pi_1,\ldots,\pi_r\}}\cdot \compressO \cdot U_t, & \text{if } z_r<t<z_{r+1}\text{ for some }r\in[\kappa-1],\\
        \Lambda_{\{\pi_1,\ldots,\pi_r\}}\cdot \compressO\cdot (\I-\Lambda_{\pi_r})\cdot\tau_{\pi_r}\cdot U_t, & \text{if } t=z_r\text{ for some }r\in[\kappa].
    \end{cases}\notag
\end{align}
This is by applying the following for every $r\in[\kappa]$ of the $(z_r-1)$-th query, moving all $(\I-\Lambda_{\pi_r})$ factor from $\cE_{\vz,\pi}^{(z_r-1)}$ to a projector of the $z_r$-th query,
\begin{align*}
    \left(\cE_{\vz,\pi}^{(z_r)}\cdot \compressO\cdot U_{z_r}\right)\cdot \left(\cE_{\vz,\pi}^{(z_r-1)}\cdot \compressO\cdot U_{z_r-1}\right)
    &=\left(\cE_{\vz,\pi}^{(z_r)}\cdot \compressO\cdot U_{z_r}\right)\cdot \left((\I-\Lambda_{\pi_r})\Lambda_{\{\pi_1,\ldots,\pi_{r-1}\}}\cdot \compressO\cdot U_{z_r-1}\right) \\
    &= \left(\cE_{\vz,\pi}^{(z_r)}\cdot \compressO\cdot(\I-\Lambda_{\pi_r})\cdot U_{z_r}\right)\cdot \left(\Lambda_{\{\pi_1,\ldots,\pi_{r-1}\}}\cdot \compressO\cdot U_{z_r-1}\right),
\end{align*}
and this gives us the definition of $\zeta_{\vz,\pi}^{(t)}$.
As for every superposition of the queried salt, one oracle query can change the database only restricted to that salt, we know $\Lambda_k\cdot \compressO\cdot (\I-\Lambda_k)\tau_{k'}=0$ for $k'\neq k$. 
For $t=z_r$, we can write $\Lambda_{\{\pi_1,\ldots,\pi_r\}}\cdot \compressO\cdot (\I-\Lambda_{\pi_r})\cdot U_t=\Lambda_{\{\pi_1,\ldots,\pi_r\}}\cdot \compressO\cdot (\I-\Lambda_{\pi_r})\cdot\tau_{\pi_r}\cdot U_t$.

Inside the definition of $\zeta_{\vz,\pi}^{(z_r)}$ is a transition pattern. 
Before the $z_r$-th oracle query, the database register of the state is projected to $(\I-\Lambda_{\pi_r})$, i.e., $D|_{\pi_r}\in\bar{P}$ and $\bar P$ is the complement of the property $P$ (from \Cref{def:transition-prob}). After the oracle query $\compressO$, it is projected on $\Lambda_{\pi_r}$, where $D|_{\pi_r}\in P$. 
If we know the size $u$ of $D|_{\pi_r}$, it will fit with the notion of the transition probability $p_u$ (recall \Cref{def:transition-prob}), and there might be a decrease on the norm with respect to this transition probability, from the intuition of ``lazy sampling'' of the compressed oracle.

We follow from this intuition and analyze the evolution of the process.  We define $\ket{\psi_{\vz,\pi}^t}$ from \Cref{def:zpi_redefine} as applying only the first $t$ operators $\zeta_{\vz,\pi}^{(\theta)}$ for $\theta\in[t]$. Note that for $z_r\le t<z_{r+1}$, these operators will only depend on the first $r$ terms $(\pi_1,\ldots,\pi_r)$ of $\pi$, and thus for $\pi,\pi'$ that share the same $(\pi_1,\ldots,\pi_r)$, they share the same $\ket{\psi_{\vz,\pi}^t}$. 
Let $\cP_r(\kappa)$ be the set of all permutations of sets of size $r$ chosen from salt space $[\kappa]$.
In this way, we can redefine $\ket{\psi_{\vz,\iota}^t}$ ($z_r\le t<z_{r+1}$) for $\iota\in \cP_r(\kappa)$: 
\begin{align}\label{def:zpit}
    \ket{\psi_{\vz,\iota}^t}= \zeta_{\vz,\iota}^{(t)}\cdot \zeta_{\vz,\iota}^{(t-1)}\cdots \zeta_{\vz,\iota}^{(2)}\cdot \zeta_{\vz,\iota}^{(1)}
    \ket{\psi_0}.
\end{align}
Here $\zeta_{\vz,\iota}^{(\theta)}$ is still well-defined, as $\zeta_{\vz,\pi}^{(\theta)}$ only makes use of the first $r$ elements of $\pi$, and now $\iota$ is of length $r$ as a truncated permutation. 
By adding all possible $\iota\in\cP_r(\kappa)$, we can similarly define $\ket{\psi_{\vz}^t}$:
\begin{align*}
    \ket{\psi_{\vz}^t} =\sum_{\iota\in\cP_r(\kappa)} \ket{\psi_{\vz,\iota}^t}.
\end{align*}
In addition, we define $\ket{\psi_{\vz,\cS}^t}$ for $\cS\subset[\kappa]$ and $|\cS|=r$ as the sum of $\ket{\psi_{\vz,\iota}^t}$ over all permutations $\iota$ of $\cS$.

We will need the following notion of used elements as a proxy to analyze the progress the algorithm makes towards finding the property. It intuitively quantifies the number of queries for each winning salt, where an additional $t-|D|$ factor comes from the quantum power of forgetting.

\begin{definition}[Number of Used Elements]
\label{def:used_elements} 
For state $\ket{\psi_{\vz}^t}$ ($z_{r}\le t<z_{r+1}$ for some $r\in[\kappa -1]$), we define the number of used elements for the database register of each superposition of $\ket{\psi_{\vz, \iota}^t}$ as 
$$t-|D|+\sum_{k\in \{\iota_1,...,\iota_r\}}|D_k|,$$
where $D_k=D|_k$ is the restriction of $D$ to salt $k$. 
\end{definition}

That is, for each state $\ket{\psi_{\vz, \cS}^t}$ (which equals $\sum \ket{\psi_{\vz,\iota}^t}$ that sums over every permutation $\iota$ of $\cS$), all the used elements in a database include ``empty'' entries and elements with salt index in $\cS=\{\iota_1,\ldots,\iota_r\}$. Here the number of ``empty'' entries correspond to the number of forgotten entries; that is, a database is ``supposed'' to have $t$ entries after $t$ queries, but it might have $|D|<t$ entries due to the ability of forgetting in QROM, and the number of the ``empty'' entries are $t-|D|$ in this case.

\begin{definition}\label{def:G&g}
We define $Q_{B}$ as a projection on the database register onto the database states that contain $B$ used elements, and $Q_{\le B}=\sum_{j=0}^{B}Q_j$.

For $r\in[\kappa], j, B\in\mathbb{N}, \cS\subseteq[\kappa]$, and $|\cS|=r$, we define
\begin{align*}
    g_{r,\cS}^j=\|Q_j\ket{\psi_{\vz, \cS}^{z_r}}\|^2
\end{align*}
as the square of the norm of $\ket{\psi_{\vz,\cS}^{z_r}}$ with exactly $j$ used elements in the database.
\end{definition}

\begin{remark} Here the definition of $g_{r,\cS}^j$ depends on $\vz$. As we are fixing $\vz$ throughout this section, $\vz$ does not show up in the notation of $g_{r,\cS}^j$ for simplicity.
\end{remark}

By defining $z_0=0$, we extend the definition to $g_{0,\cS}^B$ (for $\cS=\emptyset$). Thus $\sum_{j=0}^B g_{0,\cS}^j=g_{0,\cS}^0=1$.

\begin{definition}
\label{def:P_l^B}
For $\ell\in[\kappa]$, $B\in\mathbb{N}$, and the transition probability $p_t$ from \Cref{def:transition-prob}, we define  
\begin{align*}
    P^B_{\ell}=\max_{\substack{t_1 + \cdots + t_\ell = B \\ t_1, \ldots, t_\ell \in \mathbb{Z}_{+}}}  \left\{ \prod_{i=1}^\ell p_{t_i-1}\right\}.
\end{align*}
\end{definition}

With the above definitions, we have the following claim of the evolution of the norm. 
\begin{claim}\label{clm:qsdpt-G_P_g} 
With the definition of $g_{r,\cS}^j$ and $P^B_{\ell}$, for $r\in[\kappa]$ and any $\ell\in[r]$, for summation over all $\cS,\cS'\subseteq[\kappa]$ such that $|\cS|=r$ and $|\cS'|=r-\ell$,
\begin{align*}
    \sum_{\cS}\sum_{j\le B}g_{r,\cS}^j\le 8^{\ell}\cdot \sum_{\cS'}\sum_{j=0}^B P^{B-j}_{\ell}\cdot g_{r-\ell, \cS'}^j.
\end{align*}

As a corollary, by letting $\ell=r$, we have $\sum_\cS\sum_{j\le B}g_{r,\cS}^j\le 8^r\cdot P_{r}^B$.
\end{claim}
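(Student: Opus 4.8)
The plan is to deduce \Cref{clm:qsdpt-G_P_g} by iterating $\ell$ times a single \emph{one-transition lemma} that passes from level $\rho$ down to level $\rho-1$ in the quantities $g_{\rho,\cS}^{j}$. Concretely, I would first prove that for every $\rho\in[\kappa]$, every $B\in\mathbb N$, and every non-negative weight sequence $(w_j)_{0\le j\le B}$,
\[
\sum_{\cS:\,|\cS|=\rho}\ \sum_{j=0}^{B}w_j\,g_{\rho,\cS}^{j}\ \le\ 8\cdot\sum_{\cS':\,|\cS'|=\rho-1}\ \sum_{j=0}^{B}\left(\max_{j<m\le B}w_m\,p_{m-j-1}\right)g_{\rho-1,\cS'}^{j}.
\]
Starting from $w_j\equiv1$ and using monotonicity of the transition probabilities, the first application returns the weight $P_1^{B-j}=p_{B-j-1}$; unfolding the recursion $P_\ell^{C}=\max_{1\le t\le C-\ell+1}p_{t-1}\cdot P_{\ell-1}^{C-t}$ implicit in \Cref{def:P_l^B}, after $\ell$ applications the weight on $g_{r-\ell,\cS'}^{j}$ is exactly $P_\ell^{B-j}$ and the prefactor is $8^\ell$, which is \Cref{clm:qsdpt-G_P_g}. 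The stated corollary ($\ell=r$) then follows because $\ket{\psi_0}$ has an empty database, so $g_{0,\emptyset}^{0}=1$ and $g_{0,\emptyset}^{j}=0$ for $j>0$, leaving only the $P_r^{B}$ term.

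For the one-transition lemma, fix $\cS$ with $|\cS|=\rho$. Using \Cref{def:zpit} and the $\zeta$-rewriting in \Cref{def:zpi_redefine},
\[
\ket{\psi_{\vz,\cS}^{z_\rho}}=\sum_{k\in\cS}\Bigl(\Lambda_{\cS}\,\compressO\,(\I-\Lambda_k)\,\tau_k\,U_{z_\rho}\Bigr)\Bigl(\textstyle\prod_{z_{\rho-1}<t<z_\rho}\Lambda_{\cS\setminus\{k\}}\,\compressO\,U_t\Bigr)\ket{\psi_{\vz,\cS\setminus\{k\}}^{z_{\rho-1}}},
\]
and the bracketed product (call it $V$) is a norm-non-increasing operator depending only on $\cS\setminus\{k\}$ (the active set throughout $z_{\rho-1}<t<z_\rho$). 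Two observations do the work. First, $\compressO=\cphso$ leaves the input/salt register untouched and $\tau_k$ tags the $z_\rho$-th query with salt $k$, so the $\rho$ summands above carry pairwise orthogonal salt registers; hence $\|\ket{\psi_{\vz,\cS}^{z_\rho}}\|^2$ splits \emph{exactly} as a sum over $k$, and reindexing $(\cS,k)\mapsto(\cS\setminus\{k\},k)$ together with $\sum_{k\notin\cS'}\|\tau_k(\cdot)\|^2\le\|(\cdot)\|^2$ collapses the outer sum over the $\binom{\kappa}{\rho}$ sets $\cS$ with no factor of $\kappa$. Second, by \Cref{fct:cO-transition} the transition $\Lambda_k\,\compressO\,(\I-\Lambda_k)$ contracts the slice on which $D|_k$ has size $u$ just before the $z_\rho$-th query by $\sqrt{8\,p_u}$. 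It remains to align $u$ with the used-element count of \Cref{def:used_elements}: that count never decreases along the evolution, and across the $z_\rho$-th transition it increases by exactly $|D_k|$ (the fresh $P$-satisfying sub-database joins the active set), so a size-$j$ source component maps into size-$m$ components with $p_u=p_{m-j-1}$. Combining these with the definition of $g_{\rho,\cS}^{j}$ in \Cref{def:G&g} yields the displayed weight update, with the single factor $8=(\sqrt8)^2$ per transition.

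The step I expect to be the main obstacle is the \emph{intermediate} orthogonality needed to pass, within each fixed size-$m$ slice, from a triangle inequality to a Pythagorean one: a size-$j$ source component of $\ket{\psi_{\vz,\cS\setminus\{k\}}^{z_{\rho-1}}}$ is spread by $V$ over many intermediate database sizes before being re-projected by $Q_m$, and after unfolding all transitions one is really resolving the state into fully-refined paths labeled by the ordering $\iota$ and the per-salt entry counts $q_1,\dots,q_r$; a naive triangle inequality across these reintroduces a blow-up that is uncontrolled once iterated and ruins the bound. In the classical analogue these refined paths are exactly mutually orthogonal (this is why the classical argument is a clean Pythagorean ``path integral''), but in the compressed-oracle setting the forgetting action of $\compressO$ and the fact that $V$ and the intermediate $\Lambda_{\cS\setminus\{k\}}$'s are mere contractions make them only \emph{approximately} orthogonal. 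The technical heart is to bound the resulting overlap --- tracking, via the compressed-oracle recording, which query installed each salt's witness --- and show it contributes only a lower-order, geometrically-summable error that is absorbed into the constant $8$; granting this approximate-Pythagorean step, everything else reduces to the bookkeeping with \Cref{def:used_elements} and \Cref{def:G&g} together with the elementary recursion for $P_\ell^{B}$ sketched above.
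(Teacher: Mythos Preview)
Your overall strategy---prove a one-step descent from level $\rho$ to $\rho-1$ and iterate $\ell$ times---is exactly what the paper does (induction on $\ell$), and your weighted one-transition lemma with the update $w\mapsto\tilde w_j=\max_{j<m\le B}w_m\,p_{m-j-1}$ does reproduce the recursion $P_\ell^{B-j}\mapsto P_{\ell+1}^{B-j}$. So the skeleton is right.

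Where you go wrong is in locating the difficulty. You anticipate that the refined paths (indexed by the ordering $\iota$ and the per-salt counts $q_i$) are only \emph{approximately} orthogonal in the compressed-oracle model, and that bounding their overlaps is ``the technical heart''. That is not what happens. The paper never decomposes along those refined paths at all. Instead it introduces a second slice parameter $\nu$ (the size of $D|_k$ \emph{after} the $z_\rho$-th query) and an intermediate quantity $h_{\rho,\cS',k}^{j,\nu}$ defined via a projector $\tQ_{j+\nu}^{\nu}$ that picks out exactly those pre-query basis states landing in $Q_{j+\nu}Q^\nu$ after $\compressO$. All of $\tau_k$, $Q_j$, $Q^\nu$, $\tQ_{j+\nu}^\nu$ are diagonal in the computational basis, so the decomposition $g_{\rho,\cS}^{m}=\sum_{k\in\cS}\sum_{\nu}g_{\rho,\cS,k}^{m-\nu,\nu}$ is \emph{exactly} Pythagorean---no approximate orthogonality is needed, and no overlap error has to be absorbed. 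The only non-Pythagorean step in the whole argument is the elementary $\|a+b\|^2\le 2(\|a\|^2+\|b\|^2)$ used once inside the transition bound to combine the two cases $(k,x)\notin D$ and $(k,x)\in D$; this is where the constant $8$ (rather than $4$) comes from.

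The two genuine ingredients you are missing are: (i) the precise transition fact $g_{\rho,\cS,k}^{j,\nu}\le 8\,p_{\nu-1}\,h_{\rho,\cS\setminus\{k\},k}^{j,\nu}$, which is a refinement of \Cref{fct:cO-transition} requiring the explicit $\cphso$ computation in each of the two cases above (your informal ``$p_u=p_{m-j-1}$'' is off by one in the $(k,x)\in D$ case); and (ii) the passage from $h$ back to $g_{\rho-1,\cS'}$, which does \emph{not} hold pointwise in $j$ but only as a prefix-sum inequality $\sum_{j\le B}\sum_{\nu,k}h_{\rho,\cS',k}^{j,\nu}\le\sum_{j\le B}g_{\rho-1,\cS'}^{j}$ (this is the monotonicity of the used-element count you mention). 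Converting that prefix-sum inequality into your weighted inequality requires an Abel-summation step, using that $\tilde w_j$ is non-increasing in $j$. Once you have (i), (ii), and Abel summation, your one-transition lemma follows cleanly and the iteration is immediate---with no path-overlap analysis anywhere.
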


The proof of \Cref{clm:qsdpt-G_P_g} can be found in \Cref{app:qsdpt-G_P_g}. 
Now we can prove \Cref{clm:qsdpt-maxpath} given \Cref{clm:qsdpt-G_P_g}. 

\begin{proof} [Proof of \Cref{clm:qsdpt-maxpath}]
Let $r=\kappa$, $\ell=\kappa$, and $B=\kappa T$. Note that for the direct product case, salt space is $[\kappa]$, and $|\cS|=\kappa$ means that $\cS=[\kappa]$. Thus 
for all $\vz\in Z$, we can have
\begin{align*}
    \|\ket{\psi_{\vz}}\|^2 &\le \|\ket{\psi_{\vz}^{z_\kappa}}\|^2=\sum_{j=0}^{\kappa T}\|Q_j\ket{\psi_{\vz,\cS}^{z_\kappa}}\|^2 =\sum_{j=0}^{\kappa T}g_{r,\cS}^j
    \tag{by orthogonality on $j$}\\
    &\le 8^\kappa\cdot P_{\kappa}^{\kappa T}
    \tag{by \Cref{clm:qsdpt-G_P_g}}
    \\
    &=8^\kappa\cdot \max_{\substack{t_1 + \cdots + t_\kappa = \kappa T \\ t_1, \ldots, t_\kappa \in \mathbb{Z}_{+}}}  
    \left\{\prod_{i=1}^\kappa p_{t_i-1}\right\}.
    \tag*{\qedhere}
\end{align*}
\end{proof}

\begin{corollary*}[\Cref{thm:qsdpt} Restated]\label{cor:qsdpt}
With every path bounded by \Cref{clm:qsdpt-maxpath}, if the transition probability $p_t$ is polynomial in $t$,\footnote{Recall that this means $p_t=\Theta(t^{c_1}/N^{c_2})$ for some constants $c_1,c_2\ge 0$.} then with $\gamma_T=\Theta(T^2\cdot p_T)$, we have a strong direct product for quantum property finding:
\begin{align*}
    \Pr[\cA \text{ wins }P^{\times \kappa}]
    \le O(1)^\kappa\cdot \gamma_T^\kappa.
\end{align*}
\end{corollary*}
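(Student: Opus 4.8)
The plan is to deduce the corollary from the path decomposition (\Cref{clm:splitting}) and the per-path bound (\Cref{clm:qsdpt-maxpath}) by a short computation, using the polynomial form of $p_t$ only at the very end. Recall from \Cref{prob:qsdpt-qalgo} that $\Pr[\cA\text{ wins }P^{\times\kappa}]=\|\ket{\psi_{\sf win}}\|^2$, and from \Cref{clm:splitting} that $\ket{\psi_{\sf win}}=\sum_{\vz\in Z}\ket{\psi_{\vz}}$, where $Z$ consists of the strictly increasing tuples $1\le z_1<\cdots<z_\kappa\le\kappa T$, so $|Z|=\binom{\kappa T}{\kappa}$. First I would apply the triangle inequality together with \Cref{clm:qsdpt-maxpath} to get
\[
\|\ket{\psi_{\sf win}}\|\le\sum_{\vz\in Z}\|\ket{\psi_{\vz}}\|\le|Z|\cdot\max_{\vz\in Z}\|\ket{\psi_{\vz}}\|\le\binom{\kappa T}{\kappa}\cdot\sqrt{8^\kappa\cdot\max_{\substack{t_1+\cdots+t_\kappa=\kappa T\\ t_1,\ldots,t_\kappa\in\mathbb Z_+}}\ \prod_{i=1}^\kappa p_{t_i-1}}\,.
\]
Squaring and using the standard estimate $\binom{\kappa T}{\kappa}\le(\kappa T)^\kappa/\kappa!\le(eT)^\kappa$ then yields
\[
\Pr[\cA\text{ wins }P^{\times\kappa}]\le(e^2T^2)^\kappa\cdot 8^\kappa\cdot\max_{\substack{t_1+\cdots+t_\kappa=\kappa T\\ t_1,\ldots,t_\kappa\in\mathbb Z_+}}\ \prod_{i=1}^\kappa p_{t_i-1}\,.
\]

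The second step is to bound the maximum of the product. Writing $p_t=\Theta(t^{c_1}/N^{c_2})$, fix absolute constants $0<c_{\min}\le c_{\max}$ with $c_{\min}\,t^{c_1}/N^{c_2}\le p_t\le c_{\max}\,t^{c_1}/N^{c_2}$. For any $t_1,\ldots,t_\kappa\in\mathbb Z_+$ with $\sum_i t_i=\kappa T$ we have $\sum_i(t_i-1)=\kappa(T-1)$, so by AM--GM $\prod_i(t_i-1)\le(T-1)^\kappa$, hence $\prod_i(t_i-1)^{c_1}=\big(\prod_i(t_i-1)\big)^{c_1}\le(T-1)^{c_1\kappa}\le T^{c_1\kappa}$. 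Therefore $\prod_i p_{t_i-1}\le(c_{\max}/N^{c_2})^\kappa\cdot T^{c_1\kappa}\le(c_{\max}/c_{\min})^\kappa\cdot p_T^\kappa=O(1)^\kappa\cdot p_T^\kappa$. Plugging this into the previous display gives $\Pr[\cA\text{ wins }P^{\times\kappa}]\le(e^2T^2)^\kappa\cdot 8^\kappa\cdot O(1)^\kappa\cdot p_T^\kappa=O(1)^\kappa\cdot(T^2p_T)^\kappa$, and since $\gamma_T=\Theta(T^2p_T)$ this is $O(1)^\kappa\cdot\gamma_T^\kappa$, as claimed.

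As for difficulty: once \Cref{clm:splitting} and \Cref{clm:qsdpt-maxpath} are in hand, the corollary is routine; the only mild subtlety is turning the max over compositions $t_1+\cdots+t_\kappa=\kappa T$ into $p_T^\kappa$ with only an $O(1)^\kappa$ loss, which the polynomial growth of $p_t$ makes immediate via AM--GM. The genuine technical obstacle lies upstream in the cited claims --- in particular establishing the approximate orthogonality of the paths $\ket{\psi_{\vz,\pi}}$ over permutations $\pi$, which is what lets one avoid a $\kappa!$ blow-up and keep the bound at $O(1)^\kappa$ rather than $(\kappa!)^{\Theta(1)}\gamma_T^\kappa$ --- but those are proved earlier in this section.
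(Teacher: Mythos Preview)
Your proof is correct and essentially identical to the paper's own argument: both combine the triangle inequality over $Z$ with \Cref{clm:qsdpt-maxpath}, bound $|Z|=\binom{\kappa T}{\kappa}\le(eT)^\kappa$, and then use the polynomial form $p_t=\Theta(t^{c_1}/N^{c_2})$ together with AM--GM on $\sum_i t_i=\kappa T$ to replace the maximum of $\prod_i p_{t_i-1}$ by $O(1)^\kappa\cdot p_T^\kappa$. Your write-up is in fact slightly more explicit than the paper's about the AM--GM step and about handling the constants from the $\Theta(\cdot)$, but the approach is the same.
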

\begin{proof} %
The size of set $Z$ is $|Z|=\binom{\kappa T}{\kappa}\le(e\cdot T)^\kappa$, and thus
\begin{align*}
    \Pr[\cA \text{ wins }P^{\times \kappa}]&=\|\ket{\psi_{\sf win}}\|^2
    \le |Z|^2\cdot \max_{\vz\in Z}\|\ket{\psi_\vz}\|^2\\
    &\le e^{2\kappa}\cdot T^{2\kappa} \cdot \max_{\substack{t_1 + \cdots + t_\kappa = \kappa T \\ t_1, \ldots, t_\kappa \in \mathbb{Z}_{+}}} 
    \left\{\prod_{i=1}^\kappa p_{t_i-1}\right\}.
\end{align*} 
Recall that $p_t=\Theta(t^{c_1}/N^{c_2})$ is polynomial in $t$ and thus $\gamma_t=\Theta(t^{2+c_1}/N^{c_2})$.
Therefore
\begin{align*}
    \Pr[\cA \text{ wins }P^{\times \kappa}]
    &\le {O(1)}^\kappa\cdot T^{2\kappa} \cdot \max_{\substack{t_1 + \cdots + t_\kappa = \kappa T \\ t_1, \ldots, t_\kappa \in \mathbb{Z}_{+}}} 
    \left\{\prod_{i=1}^\kappa p_{t_i-1}\right\}\\
    &= {O(1)}^\kappa\cdot T^{2\kappa} \cdot \max_{\substack{t_1 + \cdots + t_\kappa = \kappa T \\ t_1, \ldots, t_\kappa \in \mathbb{Z}_{+}}} 
    \left\{\prod_{i=1}^\kappa\frac{t_i^{c_1}}{N^{c_2}}\right\}\\
    &\le O(1)^\kappa\cdot T^{2\kappa}\cdot (T^{c_1}/N^{c_2})^\kappa\\
    &= O(1)^\kappa\cdot \gamma_T^\kappa
\end{align*}
as desired.
\end{proof}

\subsection{Threshold SDPT}

For the threshold version, we consider the case of finding properties for at least $\kappa$ salts among all $K\ge \kappa$ salts, with $B\in\mathbb{Z}_{+}$ queries to the oracle.
To this end, we extend \Cref{def:win-state}
\begin{align*}
    \ket{\psi_{\sf win}}=\Lambda^{\ge \kappa}\cdot \compressO\cdot U_{B}\cdot \compressO\cdot U_{B-1}\cdots \compressO\cdot U_1\ket{\psi_0}.
\end{align*}
By orthogonality, we project $\ket{\psi_{\sf win}}$ onto different spaces where the database register succeeds on exactly $\kappa\le r\le K$ salts as $\Lambda^{\ge \kappa}=\sum_{r=\kappa}^{K}\Lambda^{=r}$, and 
\begin{align*}
    \|\ket{\psi_{\sf win}}\|^2=\sum_{r=\kappa}^K \|\Lambda^{=r}\ket{\psi_{\sf win}}\|^2.
\end{align*}

For every $r$, we apply \Cref{clm:splitting} on $\Lambda^{=r}\ket{\psi_{\sf win}}$ and obtain the triangle inequality:
\begin{align*}
    \|\Lambda^{=r}\ket{\psi_{\sf win}}\|
    \le \sum_{\vz\in Z}\|\Lambda^{=r}\ket{\psi_{\vz}}\|
    \le |Z|\cdot \max_{\vz\in Z}\|\Lambda^{=r}\ket{\psi_{\vz}}\|.
\end{align*}
\begin{remark}
    Here $Z$ is defined to be the set of all $\vz=(z_1,\ldots,z_r)$ such that $1\le z_1<z_2<\cdots<z_r\le B$. The definition of $Z$ depends on both the number of winning salts $r$ and the number of all queries $B$.
\end{remark}

We also have a different version of \Cref{clm:qsdpt-maxpath}:
\begin{claim}\label{clm:tqsdt-maxpath}
    With $\vz$, $Z$, and $\{\Lambda^{=r}\ket{\psi_{\vz}}\}_{\vz\in Z}$ defined for this case, we have
    \begin{align*}
    \max_{\vz\in Z}\left\|\Lambda^{=r}\ket{\psi_{\vz}}\right\|^2
    \le 8^r\cdot \max_{\substack{t_1 + \cdots + t_r = B \\ t_1, \ldots, t_r \in \mathbb{Z}_{+}}}  
    \left\{\prod_{i=1}^r p_{t_i-1}\right\},
    \end{align*}
where $p_t$ is the transition probability defined in \Cref{def:transition-prob}.
\end{claim}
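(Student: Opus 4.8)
The plan is to follow the proof of \Cref{clm:qsdpt-maxpath} essentially verbatim, replacing the role of $\kappa$ by the free parameter $r$, the salt universe $[\kappa]$ by $[K]$, and the query budget $\kappa T$ by $B$. First I would unfold $\ket{\psi_{\vz}}=\sum_{\pi}\ket{\psi_{\vz,\pi}}$, where now $\pi=(\pi_1,\dots,\pi_r)$ ranges over ordered $r$-tuples of distinct salts from $[K]$, and group the summands by their unordered support $\cS=\{\pi_1,\dots,\pi_r\}$, so that $\ket{\psi_{\vz}}=\sum_{|\cS|=r}\ket{\psi_{\vz,\cS}}$ with $\ket{\psi_{\vz,\cS}}=\sum_{\iota}\ket{\psi_{\vz,\iota}}$ over the permutations $\iota$ of $\cS$. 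Because every $\ket{\psi_{\vz,\pi}}$ ends with the projector $\Lambda_{\{\pi_1,\dots,\pi_r\}}$, the state $\Lambda^{=r}\ket{\psi_{\vz,\cS}}$ is supported only on databases succeeding on \emph{exactly} the salt set $\cS$; hence for $\cS\neq\cS'$ these states live on disjoint computational-basis sets and are orthogonal. This yields
\[
\bigl\|\Lambda^{=r}\ket{\psi_{\vz}}\bigr\|^2=\sum_{|\cS|=r}\bigl\|\Lambda^{=r}\ket{\psi_{\vz,\cS}}\bigr\|^2\le\sum_{|\cS|=r}\bigl\|\ket{\psi_{\vz,\cS}}\bigr\|^2 .
\]

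Next I would relate $\ket{\psi_{\vz,\cS}}$ to the truncated state $\ket{\psi_{\vz,\cS}^{z_r}}$ of \Cref{def:zpit}: the operators applied after the $z_r$-th query are each of the form $\Lambda_{\cS}\cdot\compressO\cdot U_t$ for $z_r<t\le B$, which are norm-non-increasing, so $\|\ket{\psi_{\vz,\cS}}\|^2\le\|\ket{\psi_{\vz,\cS}^{z_r}}\|^2$. Decomposing the latter by the number of used elements (\Cref{def:used_elements,def:G&g}), an orthogonal decomposition, gives $\|\ket{\psi_{\vz,\cS}^{z_r}}\|^2=\sum_{j=0}^{B}g_{r,\cS}^{j}$. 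Summing over $\cS$ and applying the corollary of \Cref{clm:qsdpt-G_P_g} with $\ell=r$ and query budget $B$, namely $\sum_{\cS}\sum_{j\le B}g_{r,\cS}^{j}\le 8^{r}\cdot P^B_r$, we obtain $\|\Lambda^{=r}\ket{\psi_{\vz}}\|^2\le 8^{r}P^B_r=8^{r}\max_{t_1+\cdots+t_r=B}\prod_{i=1}^r p_{t_i-1}$, uniformly over $\vz\in Z$, which is exactly the claimed bound.

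The one point requiring care — and the main obstacle — is to verify that \Cref{clm:qsdpt-G_P_g} and its corollary remain valid once the salt universe is the larger set $[K]$ rather than $[\kappa]$ (the query budget $B$ is already a free parameter there). I expect this to go through unchanged: the inductive step of \Cref{clm:qsdpt-G_P_g} only tracks the evolution of the norm on the interval between two consecutive crossing times $z_{\rho-\ell}$ and $z_\rho$, and at each of the $\ell$ crossing queries $z_{\rho-\ell+1},\dots,z_\rho$ it uses the single-query transition bound $[\![\neg P\to P]\!]\lesssim\sqrt{p_u}$ from \Cref{fct:cO-transition} applied to the restriction of the database to the freshly-won salt, together with orthogonality across the number-of-used-elements index and across the permutation structure; none of these ingredients refers to the size of the ambient salt universe. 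I would therefore restate \Cref{clm:qsdpt-G_P_g} with $\cS,\cS'\subseteq[K]$ and observe that the same induction and the same weighted AM--GM bookkeeping producing $P^B_r$ carry over; the remaining steps of \Cref{thm:threshold-qsdpt} (summing $\|\Lambda^{=r}\ket{\psi_{\sf win}}\|^2$ over $\kappa\le r\le K$, bounding $|Z|=\binom{B}{r}$, and summing the resulting geometric series) then conclude as indicated in the excerpt.
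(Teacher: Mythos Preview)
Your proposal is correct and follows essentially the same argument as the paper: split $\Lambda^{=r}\ket{\psi_{\vz}}$ by the success set $\cS$ using orthogonality, bound each piece by $\|\ket{\psi_{\vz,\cS}^{z_r}}\|^2=\sum_j g_{r,\cS}^j$, and invoke \Cref{clm:qsdpt-G_P_g} with $\ell=r$. Your explicit flag that \Cref{clm:qsdpt-G_P_g} must be reused with the ambient salt universe $[K]$ rather than $[\kappa]$ is apt (the paper applies it silently), and your justification that none of the ingredients in its inductive proof depend on the size of the salt universe is correct.
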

\begin{proof}
For every $\vz\in Z$, we split the final state $\Lambda^{=r}\ket{\psi_{\vz}}$ according to the set of salts the database register succeeds on. That is, for all $\cS\subseteq[K]$ such that $|\cS|=r$, 
\begin{align*}
    \|\Lambda^{=r} \ket{\psi_{\vz}}\|^2
    &= \sum_{\cS}\|\Lambda^{=r} \ket{\psi_{\vz,\cS}}\|^2
    \tag{by orthogonality}\\
    &\le \sum_{\cS}\| \ket{\psi_{\vz,\cS}^{z_r}}\|^2
    =\sum_{\cS}\sum_{j=0}^{B}\|Q_j \ket{\psi_{\vz,\cS}^{z_r}}\|^2=\sum_{\cS}\sum_{j=0}^{B}g_{r,\cS}^j\\
    &\le 8^r\cdot \max_{\substack{t_1 + \cdots + t_r = B \\ t_1, \ldots, t_r \in \mathbb{Z}_{+}}}  
    \left\{\prod_{i=1}^r p_{t_i-1}\right\},
\end{align*}
where the last inequality follows from \Cref{clm:qsdpt-G_P_g}.
\end{proof}

We now show how to obtain the threshold SDPT \Cref{thm:threshold-qsdpt}.

\begin{corollary*}[\Cref{thm:threshold-qsdpt} Restated]
\Cref{thm:qsdpt} also holds for the threshold version. That is, for salt space $[K]$ and an algorithm $\cA$ that makes $B\in\mathbb{Z}_{+}$ queries, if $p_t=\Theta(t^{c_1}/N^{c_2})$ for some constants $c_1,c_2\ge 0$, then with $\gamma_t=\Theta(t^2\cdot p_t)$, the probability of succeeding on more than $\kappa\le K$ salts will have the following upper bound:
\begin{align*}
    \Pr[\cA\text{ wins }P^{(\ge \kappa)}]\le O(1)^\kappa\cdot (\gamma_{B/\kappa})^\kappa,
\end{align*}
where we recall $\gamma_{B/r}$ from \Cref{rmk:rounding?} when $B/r$ is fractional.
\end{corollary*}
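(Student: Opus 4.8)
The plan is to mirror the proof of \Cref{thm:qsdpt} almost verbatim, but summing over the unknown number $r$ of salts on which the algorithm succeeds. First I would write the final (sub-normalized) winning state as $\ket{\psi_{\sf win}}=\Lambda^{\ge\kappa}\,\compressO\,U_B\cdots\compressO\,U_1\ket{\psi_0}$ and use the orthogonality of the projectors $\Lambda^{=r}$ (for $r=\kappa,\ldots,K$) to decompose $\|\ket{\psi_{\sf win}}\|^2=\sum_{r=\kappa}^K\|\Lambda^{=r}\ket{\psi_{\sf win}}\|^2$. For each fixed $r$ I would apply the path-splitting claim (\Cref{clm:splitting}), now with $Z$ being the set of tuples $1\le z_1<\cdots<z_r\le B$, to get $\|\Lambda^{=r}\ket{\psi_{\sf win}}\|\le|Z|\cdot\max_{\vz}\|\Lambda^{=r}\ket{\psi_\vz}\|$, and then invoke \Cref{clm:tqsdt-maxpath} to bound $\max_\vz\|\Lambda^{=r}\ket{\psi_\vz}\|^2\le 8^r\cdot P_r^B$ where $P_r^B=\max_{t_1+\cdots+t_r=B}\prod_i p_{t_i-1}$.

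Next I would plug in the polynomial assumption $p_t=\Theta(t^{c_1}/N^{c_2})$, so that by concavity of $t\mapsto t^{c_1}$ (AM-GM), the product $\prod_{i=1}^r p_{t_i-1}$ is maximized at $t_i\approx B/r$, giving $P_r^B\le O(1)^r\cdot (p_{B/r})^r = O(1)^r\cdot(\gamma_{B/r}/(B/r)^2)^r$. Combining with $|Z|=\binom{B}{r}\le(eB/r)^r$, we obtain
\begin{align*}
\|\Lambda^{=r}\ket{\psi_{\sf win}}\|^2
\le (eB/r)^{2r}\cdot 8^r\cdot O(1)^r\cdot\Big(\tfrac{\gamma_{B/r}}{(B/r)^2}\Big)^r
= O(1)^r\cdot (\gamma_{B/r})^r.
\end{align*}
Since $\gamma_t=\Theta(t^{2+c_1}/N^{c_2})$ is an increasing polynomial and $r\ge\kappa$, the summand $O(1)^r(\gamma_{B/r})^r$ should be controlled by $O(1)^\kappa(\gamma_{B/\kappa})^\kappa$; more precisely I would argue that (under the natural regime $\gamma_{B/\kappa}\ll 1$ which makes the statement nontrivial) the terms form a geometric-like series dominated by the $r=\kappa$ term — either because $\gamma_{B/r}$ decreases in $r$ faster than $O(1)^{-1}$, or by crudely bounding each of the at most $K$ terms and absorbing the $K$ factor into the $\widetilde O$ / constants as is done elsewhere. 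Summing over $r=\kappa,\ldots,K$ then yields $\|\ket{\psi_{\sf win}}\|^2\le O(1)^\kappa\cdot(\gamma_{B/\kappa})^\kappa$, which is the claim.

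The main obstacle I expect is the final geometric-series step: controlling $\sum_{r\ge\kappa}O(1)^r(\gamma_{B/r})^r$ by its first term requires care, because $\gamma_{B/r}$ is \emph{decreasing} in $r$ (good) but the base-change from $B/\kappa$ to $B/r$ interacts with the exponent $r$, so one must verify that the ratio of consecutive terms is bounded below $1$ in the relevant parameter regime — or, failing a clean ratio bound, fall back on the observation that there are at most $K\le\poly(N)$ values of $r$ and the extra factor is swallowed by the implicit constants (consistent with how the $\widetilde O(\cdot)$ in \Cref{thm:main_quantum} already hides $\log N$ factors). The other subtle point, already flagged in \Cref{rmk:rounding?}, is making sense of $\gamma_{B/\kappa}$ and $P_r^B$ when $B/\kappa$ is not an integer; I would handle this by using the polynomial $p(\cdot)$ itself (rather than the integer sequence $p_t$) in the AM-GM step, exactly as the remark prescribes, so that $p_{B/r}:=p(B/r)$ is well-defined and the concavity argument goes through unchanged.
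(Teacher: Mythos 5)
Your proposal follows essentially the same route as the paper's proof: decompose $\|\ket{\psi_{\sf win}}\|^2$ over the exact number $r\ge\kappa$ of winning salts by orthogonality, apply the path splitting and \Cref{clm:tqsdt-maxpath} to each piece with $|Z|=\binom{B}{r}$, use the polynomial form of $p_t$ (and the convention of \Cref{rmk:rounding?}) to get $O(1)^r(\gamma_{B/r})^r$, and sum over $r$. The paper resolves your flagged obstacle exactly via your primary suggestion --- assume WLOG that $\gamma_{B/\kappa}$ is at most a small constant (otherwise the bound is trivial), so that $O(1)\cdot\gamma_{B/r}\le 1/2$ for every $r\ge\kappa$ and the sum is dominated (up to a factor $2$) by the $r=\kappa$ term; your fallback of absorbing a factor of $K$ into the constants would not be sound, since $K$ need not be bounded by $O(1)^\kappa$, so the geometric-decay argument is the one to use.
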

\begin{proof}
We can assume without loss of generality that 
\begin{equation}\label{eq:B/k5.19}
B/\kappa\le O\left(N^{c_2/(c_1+2)}\right).
\end{equation}
This is because if otherwise 
$$
\gamma_{B/\kappa}=\Theta\left(\frac{(B/\kappa)^{c_1+2}}{N^{c_2}}\right)\ge1
$$
and hence the statement holds trivially.

Now we proceed to proving \Cref{thm:threshold-qsdpt} given \Cref{eq:B/k5.19}.
Similar to \Cref{cor:qsdpt},\footnote{Actually the analysis of \Cref{cor:qsdpt} only holds for $r\le B$. In the case of $r>B$, we have $\|\Lambda^{r}\ket{\psi_{\sf win}}\|=0$, which is trivially bounded by $O(1)^r\cdot (\gamma_{B/r})^r$. 
This is because now the database register cannot contain $r$ elements, and thus no adversary can win the property finding game on $r$ salts (recall \Cref{def:qproperty_finding} for quantum property finding).} we first have
\begin{align}
    \|\Lambda^{=r}\ket{\psi_{\sf win}}\|^2 
    &\le |Z|^2\cdot \max_{\vz\in Z} \|\Lambda^{=r}\ket{\psi_{\vz}}\|^2\le O(1)^r\cdot (\gamma_{B/r})^r
    \label{eq:B/ksimilar}\\
    &\le\Theta\left(\frac{(B/r)^{c_1+2}}{N^{c_2}}\right)^r
    \notag\\
    &\le2^{-r}.
    \tag{by \Cref{eq:B/k5.19}}
\end{align}
Then we sum over different $r$ for $\|\ket{\psi_{\sf win}}\|^2$:
\begin{align*}
    \|\ket{\psi_{\sf win}}\|^2 
    &= \sum_{r=\kappa}^K\|\Lambda^{=r}\ket{\psi_{\sf win}}\|^2\\
    &\le 2\cdot\|\Lambda^{=\kappa}\ket{\psi_{\sf win}}\|^2
    \tag{since $\|\Lambda^{=r}\ket{\psi_{\sf win}}\|^2\le2^{-r}$ by above}\\
    &\le O(1)^\kappa\cdot(\gamma_{B/\kappa})^\kappa
    \tag{by \Cref{eq:B/ksimilar}}
\end{align*}
as desired.
\end{proof}

\section{\texorpdfstring{Missing Proofs in \Cref{sec:qsdpt}}{Missing Proofs for Quantum Property Finding Direct Product}}

\subsection{Proof of Claim \ref{clm:splitting}}\label{app:prf_splitiing}

\begin{proof}[Proof of \Cref{clm:splitting}]
For $\ket{\psi_{\sf win}}$ defined in \Cref{def:win-state} and $\vz$, $Z$, $\ket{\psi_{\vz}}$ defined in \Cref{def:splitting-path}, we generalize $Z$ to be the set of all $\vz=(z_1,\ldots,z_\kappa)$ such that $1\le z_1<\cdots<z_\kappa\le B$, and we write $\ket{\psi_{\sf win}}$ for $\kappa$ salts and $B$ queries as $\ket{\psi_{\sf win}^{\kappa,B}}$. We also generalize the salt space to $[K]$, while $\ket{\psi_{\sf win}^{\kappa,B}}$ is redefined as the final state where every superposition of database succeeds on \emph{exactly} $\kappa$ salts. We define $\ket{\psi_{\vz}^{\kappa,B}}$ as in \Cref{def:splitting-path} correspondingly. 

Now we prove that \Cref{clm:splitting} holds for all $\kappa,B\in\mathbb{Z}_{+}$ with $\kappa\le \min\{B,K\}$. In fact, we prove the following: for every $\kappa,B\in\mathbb{Z}_{+}$ with $\kappa\le \min\{B,K\}$, and $Z$ as the set of all $(z_1,\ldots,z_\kappa)$ with $1\le z_1<\cdots <z_\kappa\le B$, we have
\begin{align*}
    \ket{\psi_{\sf win}^{\kappa,B}}= \sum_{\vz\in Z}\ket{\psi_{\vz}^{\kappa,B}}.
\end{align*}
From this we can immediately obtain \Cref{clm:splitting} by letting $\kappa=K$ and $B=\kappa T$.

We additionally define $\ket{\psi_{\sf win}^{\cS,B}}$ for $\cS\subset[K]$, $|\cS|=\kappa$ to be the part of $\ket{\psi_{\sf win}^{\kappa,B}}$ that succeeds on salts \emph{exactly} from $\cS$. Thus $\ket{\psi_{\sf win}^{\kappa,B}}=\sum_{\cS}\ket{\psi_{\sf win}^{\cS,B}}$ for summation over $\cS\subset[K]$ such that $|\cS|=\kappa$. We only need to prove that for every $\cS\subset [K]$ and $|\cS|=\kappa$, $\ket{\psi_{\sf win}^{\cS,B}}$ also has this kind of splitting $\{\ket{\psi_{\vz}^{\cS,B}}\}_{\vz\in Z}$. 
We prove it by induction on $(\kappa,B)$. 

\paragraph*{Base case of $(1,1)$.} We start from $(\kappa,B)=(1,1)$. Without loss of generality, we can let $\cS=\{1\}$. By definition, $\ket{\psi_{\sf win}^{\cS,1}}=\Lambda_{1}\cdot\compressO\cdot U_1\ket{\psi_0}$. 
As $1\le z_1\le B$, there is only one possible path with $\vz=(z_1)$, $z_1=1$, and $Z=\{\vz\}$. %
Thus $\ket{\psi_{\sf win}^{\cS,1}}=\sum_{\vz\in Z}\ket{\psi_{\vz}^{\cS,1}}$ for $(\kappa,B)=(1,1)$.

\paragraph*{From $(1,B)$ to $(1,B+1)$.} 
Now we assume that it holds for $(1,B)$ and prove the case of $(1,B+1)$. Without loss of generality, we can assume $\cS=[1]$. Now $Z=\{(t)\}_{t=1}^{B+1}$. Here we denote the state just after the $t$-th query as $\ket{\psi^t}$, and here $\ket{\psi_{\sf win}^{[1],B+1}}=\Lambda_1\ket{\psi^{B+1}}$. When it comes to the path, we have the following:
\begin{align*}
    \ket{\psi_{\sf win}^{[1], B+1}} &= \Lambda_{1}\cdot\compressO\cdot U_{B+1}(\Lambda_{1}+\I-\Lambda_{1})\ket{\psi^B}\\ 
    &= \Lambda_{1}\cdot\compressO\cdot U_{B+1}\ket{\psi_{\sf win}^{[1],B}}+ (\Lambda_{1}\cdot\compressO\cdot U_{B+1}) \cdot (\I-\Lambda_{1})\ket{\psi^{B}} \\
    &=\left(\sum_{\vz'=(t), t\le B}\Lambda_{1}\cdot\compressO\cdot U_{B+1}\ket{\psi_{\vz'}^{[1],B}}\right)+\ket{\psi_{(B+1)}^{[1],B+1}}
    \tag{by induction hypothesis}\\
    &=\sum_{\vz\in Z}\ket{\psi_{\vz}^{[1],B+1}}.
\end{align*}
Therefore we conclude that the case of $(1,B)$ holds for all $B\in\mathbb{Z}_{+}$.

\paragraph*{From $(\kappa-1, \kappa-1)$ to $(\kappa,\kappa)$.} 
Now we assume that the formula holds for $(\kappa-1,\kappa-1)$ and prove the case of $(\kappa,\kappa)$. For $(\kappa,\kappa)$ and $\cS\subset[K]$ of size $\kappa$, without loss of generality, we can let $\cS=[\kappa]$. Now the set of path has only one element defined by $\vz=(1,2,\ldots,\kappa)$, $Z=\{\vz\}$, as we require $z_1<z_2<\cdots <z_\kappa$. When it comes to the path, we have the following:
\begin{align*}
    \ket{\psi_{\sf win}^{[\kappa],\kappa}}
    &=\Lambda^{=\kappa}\cdot \compressO\cdot U_{\kappa}(\Lambda^{=\kappa}+\I-\Lambda^{=\kappa})\ket{\psi^{\kappa-1}}\\
    &=\Lambda^{=\kappa}\cdot \compressO\cdot U_\kappa\ket{\psi_{\sf win}^{[\kappa],\kappa-1}}+\sum_{k\in [\kappa]}\Lambda^{=\kappa}\cdot \compressO\cdot U_\kappa \ket{\psi_{\sf win}^{[\kappa]\backslash\{k\}, \kappa-1}}\\
    &=\sum_{k\in [\kappa]}\Lambda^{=\kappa}\cdot \compressO\cdot U_\kappa \ket{\psi_{\sf win}^{[\kappa]\backslash\{k\}, \kappa-1}}\\
    &=\sum_{k\in [\kappa]}\Lambda^{=\kappa}\cdot\compressO\cdot U_\kappa\ket{\psi_{\vz'}^{[\kappa]\backslash\{k\}, \kappa-1}}=\ket{\psi_{\vz}^{[\kappa],\kappa}}
    \tag{$\vz'=(1,\ldots,\kappa-1)$ by induction hypothesis}\\ 
    &=\sum_{\vz\in Z}\ket{\psi_{\vz}^{[\kappa],\kappa}}.
\end{align*}
For the second line uses $\Lambda^{=\kappa}\cdot \compressO\cdot U_{\kappa}(\I-\Lambda^{=\kappa})\ket{\psi^{\kappa-1}}=\sum_{k\in [\kappa]}\Lambda^{=\kappa}\cdot \compressO\cdot U_\kappa \ket{\psi_{\sf win}^{[\kappa]\backslash\{k\}, \kappa-1}}$ as we must win exactly $\kappa-1$ salts before the final query to win all of them.
The third line is by the fact that we will need at least $\kappa$ queries to satisfy the property for each salt in $[\kappa]$, and thus the first term in the second line is $0$.

Therefore we can conclude that the case of $(\kappa,\kappa)$ holds for all $\kappa\le K$.

\paragraph*{From $(\kappa-1,B), (\kappa,B)$ to $(\kappa,B+1)$.} Now we assume that it holds for any $(\kappa,b)$ that $2\le \kappa\le K$ and all $b\le B$, for some $K,B\in\mathbb{Z}_{+}$. We prove the case of $(\kappa,B+1)$ for all $\kappa\le K$. Without loss of generality, we let $\cS=[\kappa]$.

In this case, $1\le z_1<\cdots<z_\kappa\le B+1$, we start with the value of $z_\kappa$.
If $z_\kappa\le B$, we can work with $Z_0=\{\vz:1\le z_1<\cdots <z_\kappa\le B\}$, and it is handled by the case of $(\kappa,B)$.
If $z_\kappa=B+1$, we can work with $Z_1=\{\vz:1\le z_1<\cdots <z_{\kappa-1}\le B\}$, and it is handled by the case of $(\kappa-1,B)$. Note that $Z=\{\vz:\vz=\vz_0\text{ for $\vz_0\in Z_0$ or }\vz=\vz_1\cup\{B+1\}\text{ for $\vz_1\in Z_1$}\}$. 
Thus $Z$ is the disjoint union of $Z_0,Z_1'$, where $Z_1'=\{\vz\cup\{B+1\}:\vz\in Z_1\}$.

When it comes to the paths, we can decompose $\ket{\psi_{\sf win}^{[\kappa],B+1}}$ as follows:
\begin{align*}
    \ket{\psi_{\sf win}^{[\kappa],B+1}}&=\Lambda^{=\kappa}\cdot\compressO\cdot U_{B+1}(\Lambda^{=\kappa}+\I-\Lambda^{=\kappa})\ket{\psi^B}\\
    &=\Lambda^{=\kappa}\cdot\compressO\cdot U_{B+1}\ket{\psi_{\sf win}^{[\kappa],B}} + \sum_{k\in[\kappa]}\Lambda^{=\kappa}\cdot\compressO\cdot U_{B+1}\ket{\psi_{\sf win}^{[\kappa]\backslash\{k\}, B}}\\
    &=\sum_{\vz\in Z_0}\Lambda^{=\kappa}\cdot\compressO\cdot U_{B+1}\ket{\psi_{\vz}^{[\kappa],B}} + \sum_{k\in[\kappa]}\sum_{\vz'\in Z_1}\Lambda^{=\kappa}\cdot\compressO\cdot U_{B+1}\ket{\psi_{\vz'}^{[\kappa]\backslash\{k\},B}}
    \\
    &=\sum_{\vz\in Z_0\cup Z_1'}\ket{\psi_{\vz}^{[\kappa],B+1}}.
\end{align*}
The second line is by the fact that we can only succeed on one more salt after one query, and thus $\ket{\psi^B}$ should succeed on at least $\kappa-1$ salts. The third line is by the assumptions of $(\kappa,B)$ and $(\kappa-1,B)$.

\paragraph*{Conclusion.} By induction, we can conclude that the splitting of paths holds for all $(\kappa,B)$, $\kappa\le\min\{B,K\}$ as desired.
\end{proof}

\subsection{Proof of Claim \ref{clm:qsdpt-G_P_g}}\label{app:qsdpt-G_P_g}

We first present the following lemma.
\begin{lemma}\label{ineq:abel}
Assume $a_1, \ldots, a_n$ and $b_1, \ldots, b_n$ satisfy $\sum_{i=1}^{k} a_i \le \sum_{i=1}^k b_i$ for all $k\in [n]$. Then for every non-increasing $c_1\ge c_2\ge \cdots \ge c_n\ge 0$, we have 
\begin{align*}
    \sum_{i=1}^k c_i\cdot a_i \le \sum_{i=1}^k c_i \cdot b_i.
\end{align*}
\end{lemma}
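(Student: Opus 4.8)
The plan is to prove this by Abel summation (summation by parts), which converts the hypothesis about prefix sums of the $a_i$ and $b_i$ into the desired inequality for the weighted sums. First I would introduce the prefix sums $A_k := \sum_{i=1}^{k} a_i$ and $B_k := \sum_{i=1}^{k} b_i$, with the convention $A_0 = B_0 = 0$, so that $a_i = A_i - A_{i-1}$ and the hypothesis becomes simply $A_k \le B_k$ for all $k \in [n]$.

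Next I would carry out the summation by parts: for any $k \in [n]$,
\[
\sum_{i=1}^{k} c_i a_i = \sum_{i=1}^{k} c_i (A_i - A_{i-1}) = c_k A_k + \sum_{i=1}^{k-1} (c_i - c_{i+1}) A_i ,
\]
which is the standard reindexing of the telescoping sum (using $A_0 = 0$). Now every coefficient appearing on the right is nonnegative: $c_k \ge 0$ because the sequence is nonnegative, and $c_i - c_{i+1} \ge 0$ because it is non-increasing. Hence I may replace each $A_j$ by the no-smaller quantity $B_j$, and the inequality only strengthens in one direction:
\[
\sum_{i=1}^{k} c_i a_i \le c_k B_k + \sum_{i=1}^{k-1} (c_i - c_{i+1}) B_i = \sum_{i=1}^{k} c_i b_i ,
\]
where the last step simply undoes the summation by parts. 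This is exactly the claim. (If one prefers the statement only for $k=n$, it suffices to run this for $k=n$ with arbitrary nonnegative non-increasing $(c_i)$; the argument above handles all $k$ uniformly anyway.)

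I do not expect a genuine obstacle here; the only point requiring care is the bookkeeping of the boundary term $c_k A_k$ together with the telescoping coefficients $(c_i - c_{i+1})$, and checking that these and $c_k$ are nonnegative — which is precisely where the two hypotheses on $(c_i)$ (non-increasing \emph{and} nonnegative) are used. An alternative, equally short route is a straightforward induction on $k$, peeling off the last term $c_k a_k$ and invoking $c_{k-1} \ge c_k$ with the inductive control on prefix sums; but the Abel-summation argument is cleaner and makes transparent why monotonicity of $(c_i)$ is exactly the right hypothesis.
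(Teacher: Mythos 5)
Your proposal is correct and is essentially identical to the paper's own proof: both define the prefix sums $A_k,B_k$, apply Abel summation to write $\sum_{i=1}^k c_i a_i = c_k A_k + \sum_{i=1}^{k-1}(c_i-c_{i+1})A_i$, and use nonnegativity of $c_k$ and $c_i-c_{i+1}$ together with $A_i\le B_i$ to conclude. Nothing further is needed.
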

\begin{proof}
We denote $A_k=\sum_{i=1}^k a_i$ and $B_k=\sum_{i=1}^k b_i$. Thus we have $A_k\le B_k$ for all $k\in [n]$ and
\begin{align*}
    \sum_{i=1}^k c_i\cdot a_i = c_k \cdot A_k + \sum_{i=1}^{k-1}(c_i-c_{i+1})\cdot A_i 
     \le c_k\cdot B_k + \sum_{i=1}^{k-1}(c_i-c_{i+1})\cdot B_i
    =\sum_{i=1}^k c_i\cdot b_i.
    \tag*{\qedhere}
\end{align*}
\end{proof}

Now we present the proof for \Cref{clm:qsdpt-G_P_g}.
\begin{proof}[Proof of \Cref{clm:qsdpt-G_P_g}]
We extend \Cref{def:P_l^B} to $\ell=0$ and define $P_0^B=1$ for all $B\in\mathbb{N}$. 
The case of $r=0$ is true by definition, as both sides are identical to $\sum_{j\le B}g_{0,\emptyset}^j=g_{0,\emptyset}^0=1$.

For any $r\in\mathbb{Z}_{+}$, we prove the statement by induction on $\ell\in [r]\cup\{0\}$. 

\paragraph*{Base case $\ell=0$.}
In this case, since $P_{0}^{B-j}=1$, we have
\begin{align*}
    \sum_{\cS}\sum_{j=0}^B g_{r,\cS}^j= \sum_S\sum_{j=0}^B P_0^{B-j}\cdot g_{r,\cS}^j.
\end{align*}

\paragraph*{From $\ell$ to $\ell+1$.}
Now we assume that it holds for any value $\le \ell$ and prove the case of $\ell+1$. To go from $\ell$ to $\ell+1$, we will at least need to analyze how much the norm is reduced when the database just succeeds on one more salt. Here we need to build some relations between $\ket{\psi_{\vz,\cS}^{z_{r-\ell}}}$ and $\ket{\psi_{\vz,\cS'}^{z_{r-\ell-1}}}$. We consider the case for $\ket{\psi_{\vz,\cS}^{z_r}}$ and $\ket{\psi_{\vz,\cS'}^{z_{r-1}}}$. We remark that here the size of $\cS$ automatically corresponds to the index $r$ of $z_r$ in $\ket{\psi_{\vz,\cS}^{z_r}}$. That is, $|\cS|=r$ and $|\cS'|=r-1$.

We start from $\ket{\psi_{\vz,\cS}^{z_{r}}}$. 
By definition, $g_{r,\cS}^j=\|Q_j\ket{\psi_{\vz,\cS}^{z_r}}\|^2$. 
To analyze the transition pattern, we can further decompose $g_{r,\cS}^B$ into a sum of $g_{r,\cS,k}^{j,\nu}$ to indicate the number of elements $\nu$ on that queried salt $k$ in the database. 
Informally here $j$ corresponds to the number of used elements that are not of the $z_r$-th queried salt $k$. 
Here we define $Q^\nu$ as the projection that projects every superposition $\ket{k,x,u,z}\ket{D}$ to the states such that the size of the database $D|_k$ (restricted on that salt $k$) is $\nu$. 
Recall that $\tau_k$ is the projection on the query register that projects to queries on salt $k$ and $Q_{j+\nu}$ projects to states that the database contain $j+\nu$ used elements (see \Cref{def:G&g}).
Then we define $g_{r,\cS,k}^{j,\nu}$ ($k\in\cS$) as the following:
\begin{align}
    g_{r,\cS,k}^{j,\nu}
    &=\|\tau_k\cdot Q_{j+\nu}Q^\nu\ket{\psi_{\vz,\cS}^{z_r}}\|^2\notag \\
    &=\|\tau_k\cdot Q_{j+\nu}Q^\nu\cdot\Lambda_{\cS}\cdot\compressO(\I-\Lambda_k) U_{z_r}\ket{\psi_{\vz,\cS\backslash\{k\}}^{z_r-1}}\|^2
    \tag{by the definition of $\ket{\psi_{\vz,\cS}^{z_r}}$} \\
    &=\|Q_{j+\nu} Q^\nu\Lambda_{\cS}\cdot\compressO\cdot (\I-\Lambda_k) \tau_k \cdot U_{z_r}\ket{\psi_{\vz, \cS\backslash\{k\}}^{z_r-1}}\|^2.
    \label{def:grsijt}
\end{align}
For the last equality above, we note that $Q_{j+\nu}Q^\nu\Lambda_{\cS}$ is still a projection on the database register, it commutes with $\tau_k$; and $\compressO$ commutes with $\tau_k$.
For fixed $\cS$ and by orthogonality, we have
\begin{align*}
    g_{r,\cS}^{B}
    =\sum_{k\in \cS}\sum_{j=0}^{B} g_{r,\cS,k}^{j,B-j}.
\end{align*}

Now we look at $(\I-\Lambda_k)\tau_k \cdot U_{z_r}\ket{\psi_{\vz, \cS\backslash\{k\}}^{z_r-1}}$, which is the state just before the query that succeeds on a new salt $k$. We want to analyze the norm of that state. 

Let $\tQ_{j+\nu}^\nu$ be the projection onto the the state that every superposition can contribute to falling in $Q_{j+\nu}Q^\nu$ after one oracle query $\compressO$. That is, $Q_{j+\nu}Q^\nu\cdot\compressO\cdot(\I-\tQ_{j+\nu}^\nu)\ket{\psi}=0$ for any state $\ket{\psi}$. Then, for any $\cS'\subset[\kappa]$ such that $|\cS'|=r-1$, we define
\begin{align}
    h_{r,\cS',k}^{j,\nu} = \|
    \tQ_{j+\nu}^\nu\cdot \Lambda_{\cS'}(\I-\Lambda_k)\cdot \tau_k \cdot U_{z_r}\ket{\psi_{\vz, \cS'}^{z_r-1}}
    \|^2.\label{def:hrsijt}
\end{align}
It corresponds to the norm of the state just before the $z_r$-th query, such that each of its superposition queries on salt $k$ and can contribute to falling in $g_{r,\cS'\cup\{k\},k}^{j,\nu}$. 

\begin{remark}
In fact, for every superposition $\ket{k,x,u,z}\ket{D}$ that lies in the image of $\tQ_{j+\nu}^\nu$ in \Cref{def:hrsijt}, it must belong to either one of these two cases (we can also define $\tQ_{j+\nu}^\nu$ in this way):
\begin{itemize}
    \item[(1)] $(k,x)\notin D$, the size of $D|_k$ is $\nu-1$, and the number of used elements (with respect to $\cS'$, as in \Cref{def:hrsijt}) in $D$ is $j$. 
    
    This corresponds to the case when the $z_r$-th query is on an entry $(k,x)$ that is not in the database $(k,x)\notin D$. The $z_r$-th query will append one element to the database, adding the size of $D|_k$ by 1 and the number of used elements by $\nu$. 
    \item[(2)] $(k,x)\in D$, the size of $D|_k$ is $\nu$, and the number of used elements (with respect to $\cS'$) is $j-1$. 
    
    This corresponds to the case when the $z_r$-th query is on an entry $(k,x)$ that is already in the database $(k,x)\in D$. After the $z_r$-th query, as the state moves from $(\I-\Lambda_k)$ to $\Lambda_k$, $|D|$ will remain the same, the size of $D|_k$ remaining $\nu$ and the number of used elements increased by $\nu+1$. 
\end{itemize}
We can also obtain orthogonality for $\tQ_{j+\nu}^\nu$ for different $j,\nu$ (with the same $\cS'$) from the above property, by looking at their support. An immediate result is $\|\sum_{k\notin \cS}\sum_{\nu\in\mathbb{N}}\sum_{j\in\mathbb{N}}\tau_k\cdot \tQ_{j+\nu}^\nu\ket{\psi_{\vz,\cS}}\|\le \|\ket{\psi_{\vz,\cS}}\|$ for any state $\ket{\psi_{\vz,\cS}}$ with used elements with respect to $\cS$.
\end{remark}

We have the following two facts about $g_{r,\cS,k}^{j,\nu}, h_{r,\cS',k}^{j,\nu}, g_{r-1,\cS}^j$, which gives us a way to go from the $z_r$-th query back to the $z_{r-1}$-th.

\begin{fact}\label{fct:transition-g-h}
With the above definition, by the compressed oracle technique,
\begin{align*}
    g_{r,\cS,k}^{j,\nu}\le 8\cdot p_{\nu-1}\cdot h_{r,\cS\backslash\{k\},k}^{j,\nu}.
\end{align*}
\end{fact}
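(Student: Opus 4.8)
The plan is to reduce \Cref{fct:transition-g-h} to a transition-capacity estimate of the same flavour as \Cref{fct:cO-transition}. Write $\cS'=\cS\setminus\{k\}$ (so $|\cS'|=r-1$) and set $\ket{\chi}:=(\I-\Lambda_k)\,\tau_k\,U_{z_r}\ket{\psi_{\vz,\cS'}^{z_r-1}}$. First I would note that $\ket{\psi_{\vz,\cS'}^{z_r-1}}$ lies in the image of $\Lambda_{\cS'}$: the outermost operator in \Cref{def:zpit} is $\Lambda_{\{\iota_1,\dots,\iota_{r-1}\}}=\Lambda_{\cS'}$ for every permutation $\iota$ of $\cS'$. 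Since $U_{z_r}$, $\tau_k$ and $\I-\Lambda_k$ all commute with the database projection $\Lambda_{\cS'}$ (the first two act on the algorithm/query registers, and $\Lambda_{\cS'}$ touches only salts in $\cS'\not\ni k$), we get $\ket{\chi}=\Lambda_{\cS'}\ket{\chi}$. Hence $\ket\chi$ is precisely the vector inside $h_{r,\cS',k}^{j,\nu}$, i.e.\ $h_{r,\cS',k}^{j,\nu}=\|\tQ_{j+\nu}^\nu\ket{\chi}\|^2$ (\Cref{def:hrsijt}), and it is also the vector appearing before the query inside $g_{r,\cS,k}^{j,\nu}$, i.e.\ $g_{r,\cS,k}^{j,\nu}=\|Q_{j+\nu}Q^\nu\Lambda_\cS\,\compressO\,\ket{\chi}\|^2$ (\Cref{def:grsijt}).

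Next I would strip the bookkeeping projections from $g_{r,\cS,k}^{j,\nu}$. Because a compressed-oracle query with query salt $k$ modifies only $D|_k$, the operator $\compressO$ commutes with $\Lambda_{\cS'}$ on the image of $\tau_k$; combining this with $\Lambda_\cS=\Lambda_{\cS'}\Lambda_k$, $\ket\chi=\Lambda_{\cS'}\ket\chi$, and the fact that $\Lambda_{\cS'},\Lambda_k$ are database-diagonal, one gets $\Lambda_\cS\compressO\ket{\chi}=\Lambda_k\compressO\ket{\chi}$. The projections $\Lambda_k$, $Q^\nu$, $Q_{j+\nu}$ are all diagonal in the database basis, hence mutually commuting, so by the defining property $Q_{j+\nu}Q^\nu\compressO(\I-\tQ_{j+\nu}^\nu)=0$ of $\tQ_{j+\nu}^\nu$ we obtain
\[
g_{r,\cS,k}^{j,\nu}=\bigl\|Q_{j+\nu}Q^\nu\,\Lambda_k\,\compressO\,\tQ_{j+\nu}^\nu\ket{\chi}\bigr\|^2\ \le\ \bigl\|\Lambda_k\,\compressO\,\tQ_{j+\nu}^\nu\ket{\chi}\bigr\|^2 .
\]
It therefore suffices to prove the transition estimate $\|\Lambda_k\compressO\,\tQ_{j+\nu}^\nu\ket{\chi}\|^2\le 8\,p_{\nu-1}\,\|\tQ_{j+\nu}^\nu\ket{\chi}\|^2$.

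For this last estimate I would condition each computational-basis component $\ket{k,x,u,z}\ket{D}$ of $\tQ_{j+\nu}^\nu\ket{\chi}$ on the query input $x$ and on $D'$, the part of $D$ outside the $(k,x)$-slot. By the definition of $\tQ_{j+\nu}^\nu$ every such component falls into one of the two cases of the remark following \Cref{def:hrsijt}: either $(k,x)\notin D$ with $|D|_k|=\nu-1$, or $(k,x)\in D$ with $|D|_k|=\nu$; in \emph{both} cases $D'$ has exactly $\nu-1$ entries on salt $k$, and since $\tQ_{j+\nu}^\nu\ket\chi$ still lies in the image of $\I-\Lambda_k$ (so $D|_k\in\overline P$) and $\overline P$ is downward closed by monotonicity of $P$, we get $D'|_k\in\overline P^{(\le\nu-1)}$. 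Fixing $x$ and $D'$, the query $\compressO=\stddecomp\cdot\cphso'\cdot\stddecomp$ acts as a unitary on $\operatorname{span}\{\ket{D'\cup(k,x,y)}:y\in\Y\}\cup\{\ket{D'}\}$ while leaving the rest of the database fixed, and the image of $\Lambda_k$ meets this span only in databases $D'\cup(k,x,y)$ with $D'\cup(k,x,y)\in P$, of which there are at most $p_{\nu-1}\,N$ by \Cref{def:transition-prob}. Running the norm-reduction computation of \Cref{fct:cO-transition} (see \Cref{prf:transition-g-h}) for this fixed slot bounds the $\Lambda_k$-component by $\sqrt{8\,p_{\nu-1}}$ times the input norm; since distinct $(x,D')$ give orthogonal inputs and $\Lambda_k$-projected outputs supported on disjoint databases, summing over $(x,D')$ yields the estimate and hence $g_{r,\cS,k}^{j,\nu}\le 8\,p_{\nu-1}\,h_{r,\cS',k}^{j,\nu}$. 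I expect the main obstacle to be precisely this step, and within it the ``overwrite'' case $(k,x)\in D$: a black-box invocation of \Cref{fct:cO-transition} with its size parameter $t=\nu$ would only give $p_\nu$, so one must re-do that computation with the sharper accounting that the size governing the transition is $|D'|_k|=\nu-1$, the number of salt-$k$ entries other than the queried one, rather than $|D|_k|$. Everything else is routine manipulation of database-diagonal projections.
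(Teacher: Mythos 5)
Your proposal is correct and follows essentially the same route as the paper's proof: after stripping the bookkeeping projections (the paper drops $\Lambda_{\cS\setminus\{k\}}$ and $Q_{j+\nu}Q^\nu$ by norm-monotonicity and inserts $\tQ_{j+\nu}^\nu$ exactly as you do), the heart of the argument is the same two-case analysis of the queried slot — fresh entry versus overwrite — with the counting of at most $p_{\nu-1}N$ accepting values at slot-size $\nu-1$, a Cauchy–Schwarz step in the overwrite case, and the factor $8$ from combining the two parts. Your reorganization into orthogonal $(x,D')$ blocks, and your observation that a black-box use of \Cref{fct:cO-transition} with $t=\nu$ would only yield $p_\nu$ so the computation must be redone with the $|D'|_k|=\nu-1$ accounting, matches precisely what the paper does (its global $\ket{\phi_\alpha}/\ket{\phi_\beta}$ split is the same computation, just not phrased blockwise).
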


\Cref{fct:transition-g-h} formalizes the way to go from $z_r$ to $z_r-1$. Its proof follows from the analysis of the compressed oracle and is left in \Cref{prf:transition-g-h}. 

\begin{fact}\label{fct:transition-h-g}
For $r\in[\kappa]$, $B\in\mathbb{N}$, and $\cS'\subset[\kappa]$ such that $|\cS'|=r-1$,
\begin{align*}
    \sum_{j=0}^B\sum_{\nu\in\mathbb{N}} \sum_{k\notin \cS'} h_{r,\cS',k}^{j,\nu}
    \le \sum_{j=0}^B g_{r-1,\cS'}^{j}.
\end{align*}
\end{fact}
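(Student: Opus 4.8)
To prove \Cref{fct:transition-h-g}, the plan is to read both sides as squared norms of a single evolving state and to collapse the triple sum on the left by three orthogonality (Pythagoras) steps --- over the queried salt $k$, over the pair $(j,\nu)$, and over the number of used elements --- together with one monotonicity observation that transports the used-element cutoff from time $z_r-1$ back to time $z_{r-1}$. Unlike the companion \Cref{fct:transition-g-h}, no compressed-oracle ``transition probability'' estimate enters here: the present fact is pure norm bookkeeping.

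First I would rewrite each $h_{r,\cS',k}^{j,\nu}$. The state $\ket{\psi_{\vz,\cS'}^{z_r-1}}$ lies in the image of $\Lambda_{\cS'}$ by construction, and for $k\notin\cS'$ the projector $\Lambda_{\cS'}$ commutes with $U_{z_r}$, $\tau_k$, $\I-\Lambda_k$ and $\tQ_{j+\nu}^{\nu}$ (the first two act on the algorithm/query registers, and the remaining operators are, like $\Lambda_{\cS'}$, diagonal in the computational basis of the database register), so $h_{r,\cS',k}^{j,\nu} = \|\tQ_{j+\nu}^{\nu}(\I-\Lambda_k)\tau_k U_{z_r}\ket{\psi_{\vz,\cS'}^{z_r-1}}\|^2$. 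Now fix $k\notin\cS'$. Restricted to salt-$k$ queries, the projectors $\tQ_{j+\nu}^{\nu}$ have pairwise disjoint computational-basis supports: by the remark characterizing the image of $\tQ_{j+\nu}^{\nu}$, a basis state $\ket{k,x,u,z}\ket D$ lies in $\tQ_{j+\nu}^{\nu}$ for the single pair $(j,\nu)=(\mathrm{used}(D),\,|D_k|+1)$ when $(k,x)\notin D$ and $(j,\nu)=(\mathrm{used}(D)+1,\,|D_k|)$ when $(k,x)\in D$, where $\mathrm{used}(D)$ counts used elements relative to $\cS'$ at time $z_r-1$ in the sense of \Cref{def:used_elements}. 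Hence restricting the sum to $j\le B$ admits only salt-$k$ states with $\mathrm{used}(D)\le B$, i.e. $\sum_{j\le B,\nu}\tQ_{j+\nu}^{\nu}\le Q^{[z_r-1]}_{\le B}$ on the salt-$k$ subspace, where $Q^{[z_r-1]}_{\le B}$ is the time-$(z_r-1)$ instance of the cutoff projector $Q_{\le B}$ of \Cref{def:G&g}; it acts on the database register only, hence commutes with $U_{z_r}$ and $\tau_k$. Using this, the mutual orthogonality of the $\tQ_{j+\nu}^{\nu}$, the orthogonality of the $\tau_k$, and the unitarity of $U_{z_r}$, I get
\begin{align*}
\sum_{j=0}^{B}\sum_{\nu}\sum_{k\notin\cS'} h_{r,\cS',k}^{j,\nu}
&\le \sum_{k}\left\|\tau_k\, U_{z_r}\, Q^{[z_r-1]}_{\le B}\ket{\psi_{\vz,\cS'}^{z_r-1}}\right\|^2 \\
&= \left\|Q^{[z_r-1]}_{\le B}\ket{\psi_{\vz,\cS'}^{z_r-1}}\right\|^2 .
\end{align*}

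It remains to transport $Q^{[z_r-1]}_{\le B}$ back through the evolution from $z_{r-1}$ to $z_r-1$. For $z_{r-1}<t<z_r$ every operator applied in this stretch equals $\Lambda_{\cS'}\cdot\compressO\cdot U_t$, which is independent of the ordering $\iota$ of $\cS'$ (so it factors out of the sum over $\iota$ defining $\ket{\psi_{\vz,\cS'}^{t}}$) and has operator norm $\le 1$. The key dynamical input is that the number of used elements is \emph{non-decreasing} under one such operator: by the local-change property of the compressed oracle (\Cref{lem:bounded-database}, \cite{zhandry2019record}) a query either adds one entry, leaves $D$ fixed, or deletes one entry, and a direct check of $\mathrm{used}(D)=t-|D|+\sum_{k''\in\cS'}|D_{k''}|$ shows the increment equals $[k_0\in\cS']$, $1$, or $2-[k_0\in\cS']$ respectively (with $k_0$ the queried salt), all $\ge 0$. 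Therefore $Q^{[t+1]}_{\le B}\cdot(\Lambda_{\cS'}\compressO U_t) = Q^{[t+1]}_{\le B}\cdot(\Lambda_{\cS'}\compressO U_t)\cdot Q^{[t]}_{\le B}$; iterating and bounding each sandwiched factor by a contraction gives $\|Q^{[z_r-1]}_{\le B}\ket{\psi_{\vz,\cS'}^{z_r-1}}\|^2 \le \|Q^{[z_{r-1}]}_{\le B}\ket{\psi_{\vz,\cS'}^{z_{r-1}}}\|^2 = \sum_{j=0}^{B} g_{r-1,\cS'}^{j}$, the last equality being the definition of $g_{r-1,\cS'}^{j}$ (\Cref{def:G&g}) together with orthogonality of the $Q_j$ (for $r=1$ this is the boundary identity $g_{0,\emptyset}^{0}=1$). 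Chaining the two displayed inequalities completes the proof.

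The only step beyond routine bookkeeping is keeping the truncation $\sum_{j\le B}$ honest: because the ``$(k,x)\in D$'' branch of $\tQ_{j+\nu}^{\nu}$ records $\mathrm{used}(D)=j-1$ rather than $j$, one cannot simply telescope full norms, and it is exactly the monotonicity of the used-element count at every intermediate query --- together with the fact that the intermediate operators do not depend on the salt ordering --- that makes the $B$-truncated inequality go through. Everything else is disjointness of computational-basis supports and the standard local-change behavior of the compressed oracle.
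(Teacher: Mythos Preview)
Your proposal is correct and follows essentially the same route as the paper: both proofs drop the $(\I-\Lambda_k)$ factor, use the pairwise orthogonality of the $\tQ_{j+\nu}^{\nu}$ (and of the $\tau_k$) to collapse the triple sum into $\|Q_{\le B}\ket{\psi_{\vz,\cS'}^{z_r-1}}\|^2$, and then invoke monotonicity of the used-element count along the stretch $z_{r-1}\to z_r-1$ to pull the cutoff back to time $z_{r-1}$. Your write-up is in fact a bit more careful than the paper's in two places: you make the time-dependence of $Q_{\le B}$ explicit via the superscript $[t]$, and you bound $\sum_{j\le B,\nu}\tQ_{j+\nu}^{\nu}\le Q_{\le B}^{[z_r-1]}$ directly rather than passing through the per-$j$ inequality $\sum_{\nu,k}\tQ_{j+\nu}^{\nu}\tau_k\le Q_j$ as the paper does (that per-$j$ containment is not literally true because of the $(k,x)\in D$ branch where $\mathrm{used}(D)=j-1$, though the final bound is unaffected).
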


\Cref{fct:transition-h-g} gives us a way to go from $z_r-1$ back to $z_{r-1}$. The intuition is the following: for every superposition $\ket{k,x,u,z}\ket{D}$ in $\ket{\psi_{r-1,\cS'}^{z_{r-1}}}$, the number of the used elements in (each superposition of) $D$, which is with respect to the set $\cS'$, will not decrease during the evolution from the $z_{r-1}$-th query to the $(z_r-1)$-th query. This is because during the evolution for this path, we apply $U_t$, $\compressO$, and $\Lambda_{\cS'}$ on the state just after the $z_{r-1}$-th query. $U_t$ performs local computation and it does not change the database register; $\Lambda_{\cS'}$ will be a projection on the computational basis and it does not reduce the number of used elements in a database. $\compressO$ performs an oracle query, and it can only add, change, or delete an element. While adding and changing do not reduce the number of used elements, deleting does not reduce either, as empty entry is also considered as a used element. Therefore, for any superposition that is counted in $g_{r-1,\cS'}^j$, the number of used elements in the database register will be at least $j$ after it applies the $(z_{r}-1)$-th query, and it will ``fall into'' $\{h_{r,\cS',k}^{j',\nu}\}_{j'\ge j,\ \nu\in\mathbb{N}, k\notin \cS'}$.

\begin{proof}[Proof of \Cref{fct:transition-h-g}] Following from the non-decreasing property of the number of used elements with respect to $\cS'$ ($\cS'\subset[\kappa], |\cS'|=r-1$), we can obtain $\|Q_{\le B}\ket{\psi_{\vz,\cS'}^{z_{r}-1}}\|\le \|Q_{\le B}\ket{\psi_{\vz,\cS'}^{z_{r-1}}}\|$, and thus 
\begin{align*}
    \sum_{j=0}^B\sum_{\nu\in\mathbb{N}} \sum_{k\notin \cS'} h_{r,\cS',k}^{j,\nu}
    &=\sum_{j=0}^B\sum_{\nu\in\mathbb{N}} \sum_{k\notin \cS'}\|
    \tQ_{j+\nu}^\nu\cdot (\I-\Lambda_k)\tau_k \cdot U_{z_r}\ket{\psi_{\vz, \cS'}^{z_r-1}}
    \|^2\\
    &\le \sum_{j=0}^B\sum_{\nu\in\mathbb{N}} \sum_{k\notin \cS'}\|
    \tQ_{j+\nu}^\nu\cdot \tau_k \cdot U_{z_r}\ket{\psi_{\vz, \cS'}^{z_r-1}}
    \|^2
    = \sum_{j=0}^B\|\sum_{\nu\in\mathbb{N}} \sum_{k\notin \cS'}\tQ_{j+\nu}^\nu\cdot \tau_k \cdot U_{z_r}\ket{\psi_{\vz, \cS'}^{z_r-1}}\|^2\\
    &\le \sum_{j=0}^B\|Q_j\cdot U_{z_r}\ket{\psi_{\vz,\cS'}^{z_r-1}}\|^2
    = \|Q_{\le B}\ket{\psi_{\vz,\cS'}^{z_r-1}}\|^2\\
    &\le \|Q_{\le B}\ket{\psi_{\vz,\cS'}^{z_{r-1}}}\|^2=\sum_{j=0}^B g_{r-1,\cS'}^j.\tag*{\qedhere}
\end{align*}
\end{proof}

With \Cref{fct:transition-g-h} and \Cref{fct:transition-h-g}, we have the following:
\begin{align*}
    \sum_{\cS}\sum_{j=0}^B P_{\ell}^{B-j}\cdot g_{r-\ell,\cS}^j
    &= \sum_{\cS}\sum_{j=0}^B  P_{\ell}^{B-j}\cdot \left(\sum_{\nu=0}^j\sum_{k\in \cS} g_{r-\ell,\cS,k}^{\nu, j-\nu}\right)\\
    &\le \sum_{\cS}\sum_{j=0}^B  P_{\ell}^{B-j}\cdot \left(\sum_{\nu=0}^j\sum_{k\in \cS} 8\cdot p_{j-\nu-1}\cdot h_{r-\ell, \cS\backslash\{k\},k}^{\nu, j-\nu}\right)
    \tag{by \Cref{fct:transition-g-h}} \\
    &\le 8\cdot \sum_{\cS'}\sum_{k\notin \cS'}\sum_{j=0}^B \sum_{\nu=0}^j P_{\ell+1}^{B-\nu}\cdot h_{r-\ell, \cS', k}^{\nu, j-\nu}
    \tag{by \Cref{def:P_l^B}} \\
    &= 8\cdot \sum_{\cS'}\sum_{\nu=0}^B P_{\ell+1}^{B-\nu}\cdot\left(\sum_{k\notin \cS'}\sum_{d=0}^{B-\nu} h_{r-\ell, \cS', k}^{\nu, d}\right).
    \tag{by $d=j-\nu$}
\end{align*}
Here $|\cS|=r-\ell$ and $|\cS'|=r-\ell-1$.

Note that $P_{\ell+1}^{B-\nu}$ is non-increasing on $\nu$ by the monotonicity of the transition probability $p_t$, and \Cref{fct:transition-h-g} gives the inequality of two sums of arrays. Thus following \Cref{ineq:abel}, for every $\cS'$ ($|\cS'|=r-\ell-1$), we can continue our calculation as:
\begin{align*}
   \sum_{\nu=0}^B P_{\ell+1}^{B-\nu}\cdot\left(\sum_{k\notin \cS'}\sum_{d=0}^{B-\nu} h_{r-\ell, \cS', k}^{\nu, d}\right)
    &\le \sum_{\nu=0}^B P_{\ell+1}^{B-\nu}\cdot\left(\sum_{k\notin \cS'}\sum_{d\in \mathbb{N}} h_{r-\ell, \cS', k}^{\nu, d}\right)\\
    &\le \sum_{\nu=0}^B P_{\ell+1}^{B-\nu}\cdot g_{r-\ell-1,\cS'}^\nu. \tag{by \Cref{ineq:abel}}
\end{align*}
Therefore 
\begin{align*}
    \sum_{\cS}\sum_{j=0}^B P_{\ell}^{B-j} \cdot g_{r-\ell,\cS}^j\le {8}\cdot\sum_{\cS'}\sum_{\nu=0}^B P_{\ell+1}^{B-\nu}\cdot g_{r-\ell-1,\cS'}^\nu
\end{align*}
giving us the induction step from $\ell$ to $\ell+1$, with $|\cS|=r$, $|\cS'|=r-\ell$, and $|\cS''|=r-\ell-1$:
\begin{align*}
    \sum_{\cS}\sum_{j=0}^B g_{r,\cS}^j
    &\le {8}^{\ell}\cdot \sum_{\cS'}\sum_{j=0}^B P^{B-j}_{\ell}\cdot g_{r-\ell, \cS'}^j\\
    &\le {8}^{\ell+1}\cdot \sum_{\cS''}\sum_{j=0}^B P^{B-j}_{\ell+1}\cdot g_{r-\ell-1, \cS'}^j.
\end{align*}

\paragraph*{Conclusion.} Therefore, by induction, \Cref{clm:qsdpt-G_P_g} holds for all $\ell\in[r]$ (for all fixed $r\in[\kappa]$).
\end{proof}

\subsection{Proof of Fact \ref{fct:transition-g-h}}\label{prf:transition-g-h}

\begin{proof}[Proof of \Cref{fct:transition-g-h}] 
In the following computation, we will use $\compressO=\cphso$.

According to the definition of $g_{r,\cS,k}^{j,\nu}$ in \Cref{def:grsijt} and $h_{r,\cS',k}^{j,\nu}$ in \Cref{def:hrsijt}, we can write them as the following, as we define $\cS'=\cS\backslash\{k\}$:
\begin{alignat*}{3}
    g_{r,\cS,k}^{j,\nu}&=\|Q_{j+\nu} Q^\nu\Lambda_{\cS}\cdot\  &&\cphso\ \cdot\   &&(\I-\Lambda_k)\tau_k \cdot U_{z_r}\ket{\psi_{\vz,\cS'}^{z_r-1}}\|^2,\\
    &=\|Q_{j+\nu} Q^\nu\Lambda_{\cS}\ \cdot\ &&\cphso\ \cdot\   &&\tQ_{j+\nu}^\nu \cdot
    \Lambda_{\cS'}(\I-\Lambda_{k})\cdot
    \tau_k\cdot U_{z_r}\ket{\psi_{\vz,\cS'}^{z_r-1}}\|^2,\\
    h_{r,\cS',k}^{j,\nu}&=\ &&\ &&\|\tQ_{j+\nu}^\nu \cdot  
    \Lambda_{\cS'}(\I-\Lambda_{k})\cdot
    \tau_k\cdot  U_{z_r}\ket{\psi_{\vz,\cS'}^{z_r-1}}\|^2.
\end{alignat*}
Here $Q^\nu$ is defined in \Cref{def:grsijt}; it is the projection that projects every superposition $\ket{k,x,u,z}\ket{D}$ to the states such that the size of the restricted database $D|_k$ on salt $k$ is $\nu$. Here $\tQ_{j+\nu}^\nu$ is defined in \Cref{def:hrsijt} as the projection onto the state that every superposition can contribute to falling in $Q_{j+\nu}Q^\nu$ after one oracle query $\cphso$. That is, $Q_{j+\nu}Q^\nu\cdot\cphso\cdot(\I-\tQ_{j+\nu}^\nu)\ket{\psi}=0$ for all possible $\ket{\psi}$, and $Q_{j+\nu}Q^\nu\cdot\cphso\cdot\tQ_{j+\nu}^\nu\neq 0$. 

We define that state with the norm of $h_{r,\cS',k}^{j,\nu}$ as $\ket{\phi}$, and it can be written as
\begin{align*}
    \ket{\phi}&=\tQ_{j+\nu}^\nu \cdot \Lambda_{\cS'}(\I-\Lambda_k)\cdot \tau_k\cdot U_{z_r}\ket{\psi_{\vz,\cS'}^{z_r-1}}\\
    &=\sum_{x,u,z,D}\alpha_{x,u,z,D}
    \ket{k,x,u,z}\otimes\ket{D} + 
    \sum_{x,u,z,D,y}\beta_{x,u,z,D,y} \ket{k,x,u,z}\otimes\ket{D\cup((k,x),y)}.
\end{align*}
Here the sum is over $D$ such that $(k,x)\notin D$. And we also define $\ket{\phi_{\alpha}}$ and $\ket{\phi_{\beta}}$ as
\begin{align*}
    \ket{\phi_{\alpha}}&=\sum_{x,u,z,D} \alpha_{x,u,z,D}\ket{k,x,u,z}\otimes\ket{D},\\
    \ket{\phi_{\beta}}&=\sum_{x,u,z,D,y} \beta_{x,u,z,D,y}\ket{k,x,u,z}\otimes\ket{D\cup((k,x),y)}\\
    &=\sum_{x,u,z,D}\ket{k,x,u,z}\otimes\left(
    \sum_y\beta_{x,u,z,D,y}\ket{D\cup((k,x),y)}
    \right).
\end{align*}
By definition, $\ket{\phi_{\alpha}}$ and $\ket{\phi_\beta}$ are orthogonal.

Now we analyze how the norm decreases for $\ket{\phi_{\alpha}}$ and $\ket{\phi_\beta}$ when $\cphso$ and the projector $Q_{j+\nu}Q^\nu\Lambda_{k}$ are applied. 

\paragraph*{Case of $\ket{\phi_{\alpha}}$.} For $\ket{\phi_{\alpha}}$ of which superpostion is over $D$ such that $(k,x)\notin D$,
\begin{align*}
    \cphso\ket{\phi_{\alpha}}&= \sum_{x,u,z,D}\alpha_{x,u,z,D}\cdot\left( \stddecomp_{(k,x)}\cdot\cphso'\cdot\stddecomp_{(k,x)}\ket{k,x,u,z}\otimes\ket{D}\right)\\
    &=\sum_{x,u,z,D}\alpha_{x,u,z,D}\ket{k,x,u,z} \otimes\left(\frac{1}{\sqrt{N}}
    \sum_{y}\omega_N^{uy}\ket{D\cup((k,x),y)}
    \right).
\end{align*}
By definition, every superposition of $\ket{\phi_{\alpha}}$ is in the image of $\tQ_{j+\nu}^\nu$, and we do not care the part that will lie in the image of $\I-Q_{j+\nu}Q^\nu\Lambda_{k}$ after one oracle query. Thus we can assume that the size of $D\cup((k,x),y)|_k$ is $\nu$, and the size of $D|_k$ is $\nu-1$, and
\begin{align*}
    \left\|Q_{j+\nu}Q^\nu \Lambda_{k}\cphso\ket{\phi_{\alpha}}\right\|
    &\le \|\Lambda_{k}\cphso\ket{\phi_{\alpha}}\|\\
    &=\left\|
    \sum_{x,u,z,D}\alpha_{x,u,z,D}\ket{k,x,u,z} \otimes\left(
    \frac{1}{\sqrt{N}}\sum_{\substack{y \text{ s.t. }\\ D\cup((k,x),y)\in P}}\omega_N^{u y}\ket{D\cup((k,x),y)}
    \right)
    \right\|\\
    &\le \sqrt{p_{\nu-1}}\cdot\left\|
    \sum_{x,u,z,D}\alpha_{x,u,z,D}\ket{k,x,u,z} \otimes\left(\frac{1}{\sqrt{N}}
    \sum_{y}\omega_N^{uy}\ket{D\cup((k,x),y)}
    \right)
    \right\|\\
    &= \sqrt{p_{\nu-1}}\cdot\|\cphso\ket{\phi_{\alpha}}\|.
\end{align*}
Note that for all $y\in[N]$, the number of $y$ such that $D\cup\{((k,x),y)\}|_k\in P$ while $D|_k\in \bar{P}$ is no more than $N\cdot p_{\nu-1}$, with the size of $D|_k$ no more than $\nu-1$. This is by the definition of the transition probability $p_{\nu-1}$. Thus the norm of every superposition of $\ket{k,x,u,z}$ in $\ket{\phi_{\alpha}}$ is reduced to at most $\sqrt{p_{\nu-1}}$ times its original norm, which means,
\begin{align}\label{ineq:tr_x_notinD}
    \left\|Q_{j+\nu}Q^\nu \Lambda_{k}\cphso\ket{\phi_{\alpha}}\right\|^2\le p_{\nu-1}\cdot \|\ket{\phi_{\alpha}}\|^2.
\end{align}

\paragraph*{Case of $\ket{\phi_{\beta}}$.} For $\ket{\phi_{\beta}}$ of which superposition is over $D$ such that $(k,x)\notin D$,
\begin{alignat*}{2}
    &\cphso\ket{\phi_{\beta}}\\
    =& \sum_{x,u,z,D,y}&&\beta_{x,u,z,D,y}\cdot\left( \stddecomp_{(k,x)}\cdot\cphso'\cdot\stddecomp_{(k,x)}\ket{k,x,u,z}\otimes\ket{D\cup((k,x),y)}\right)\\
    =& \sum_{x,u,z,D,y}
    &&\beta_{x,u,z,D,y}\ket{k,x,u,z}\otimes\\
    &   &&\left(\omega_N^{uy}\ket{D\cup((k,x),y)}+(1+\omega_N^{uy})\ket{D}-\frac{1}{N}\sum_{y'}(\omega_N^{uy}+\omega_N^{uy'})\ket{D\cup((k,x),y')}\right)
    .
\end{alignat*}
By definition, every superposition of $\ket{\phi_{\beta}}$ is in the image of $\tQ_{j+\nu}^\nu$, and we do not care the part that will lie in the image of $\I-Q_{j+\nu}Q^\nu\Lambda_{k}$ after one oracle query. Since $D|_k\in \bar{P}$, the size of $D\cup((k,x),y')|_k$ is $\nu$, and as $D\cup((k,x),y)|_k\in \bar{P}$, 
\begin{align}\label{eq:norm_beta}
    \|Q_{j+\nu} & Q^\nu\Lambda_{k}\cphso
    \ket{\phi_{\beta}}\| \le\|\Lambda_{k}\cphso\ket{\phi_{\beta}}\|\nonumber\\
    &=\left\|\sum_{x,u,z,D,y}\beta_{x,u,z,D,y} \ket{k,x,u,z}\otimes \left(\frac{1}{N}\sum_{\substack{y'\text{ s.t. }\\ D\cup((k,x),y')\in P}}(\omega_N^{uy}+\omega_N^{uy'})\ket{D\cup((k,x),y')}\right)\right\| \nonumber\\
    &=\left\|\sum_{x,u,z,D}\ket{k,x,u,z}\otimes \left(\sum_{\substack{y'\text{ s.t. }\\ D\cup((k,x),y')\in P}}\sum_y\frac{\beta_{x,u,z,D,y}}{N} \cdot(\omega_N^{uy}+\omega_N^{uy'})\ket{D\cup((k,x),y')}
    \right)
    \right\|. 
\end{align}
For every superposition of $\ket{k,x,u,z}$ in $\ket{\phi_{\beta}}$, the norm squared is $\sum_y |\beta_{x,u,z,D,y}|^2$. Note that for all $y\in[N]$, the number of $y$ such that $D\cup\{((k,x),y)\}|_k\in P$ while $D|_k\in \bar{P}$ is no more than $N\cdot p_{\nu-1}$, with the size of $D|_k$ no more than $\nu-1$. This is by the definition of the transition probability $p_{\nu-1}$. Therefore, for every superposition of $\ket{k,x,u,z}$ in \Cref{eq:norm_beta}, the norm squared is
\begin{align*}
    \sum_{\substack{y'\text{ s.t. }\\ D\cup((k,x),y')\in P}}\left|
    \sum_y\frac{\beta_{x,u,z,D,y}}{N} \cdot(\omega_N^{uy}+\omega_N^{uy'})
    \right|^2 &\le \frac{p_{\nu-1}}{N}\cdot 
    \max_{y'\in[N]}\left|
    \sum_y\beta_{x,u,z,D,y} \cdot(\omega_N^{uy}+\omega_N^{uy'})
    \right|^2\\
    &\le \frac{p_{\nu-1}}{N}\cdot
    \left(\sum_y |\beta_{x,u,z,D,y}|^2\right)\cdot \max_{y'\in[N]}\left(\sum_y|\omega_N^{uy}+\omega_N^{uy'}|^2\right)\\
    &\le \frac{p_{\nu-1}}{N}\cdot\left(\sum_y |\beta_{x,u,z,D,y}|^2\right)\cdot 4N\\
    &=4 p_{\nu-1}\cdot\left(\sum_y |\beta_{x,u,z,D,y}|^2\right).
\end{align*}
The second line is by Cauchy–Schwarz inequality $|\sum_i a_ib_i|^2\le(\sum_i|a_i|^2)(\sum_i|b_i|^2)$.
Therefore, we can conclude from \Cref{eq:norm_beta} that
\begin{align}\label{ineq:tr_x_inD}
    \|Q_{j+\nu} & Q^\nu\Lambda_{k}\cphso
    \ket{\phi_{\beta}}\|^2\le 4p_{\nu-1}\cdot\|\ket{\phi_{\beta}}\|^2.
\end{align}

\paragraph*{Conclusion.} Note that for every $\ket{\phi_{\alpha}}$ and $\ket{\phi_{\beta}}$, $\forall k\in[\kappa]$, $Q_{j+\nu}Q^\nu\Lambda_{k}\cphso$ will not change their first register $\ket{k}$. Therefore, by \Cref{ineq:tr_x_notinD}, \Cref{ineq:tr_x_inD}, and the orthogonality on the first register,
\begin{align*}
    g_{r,\cS,k}^{j,\nu}&=\|Q_{j+\nu}Q^\nu\Lambda_{\cS} \cphso \ket{\phi}\|^2\\
    &\le \left\|Q_{j+\nu}Q^\nu\Lambda_{k}\cphso \ket{\phi_{\alpha}} +Q_{j+\nu}Q^\nu\Lambda_{k}\cphso\ket{\phi_{\beta}}\right\|^2\\
    &\le 2\cdot \left(
    \left\|Q_{j+\nu}Q^\nu\Lambda_{k}\cphso \ket{\phi_{\alpha}}\right\|^2 +\left\|Q_{j+\nu}Q^\nu\Lambda_{k}\cphso \ket{\phi_{\beta}}\right\|^2
    \right)
    \tag{since $\|a+b\|^2\le 2(\|a\|^2+\|b\|^2)$}
    \\
    &\le 8\cdot p_{\nu-1}\cdot \left(\|\ket{\phi_{\alpha}}\|^2+\|\ket{\phi_{\beta}}\|^2\right)\tag{by \Cref{ineq:tr_x_notinD} and \Cref{ineq:tr_x_inD}}
    \\
    &=8\cdot p_{\nu-1}\cdot\|\ket{\phi}\|^2
    =8\cdot p_{\nu-1}\cdot h_{r,\cS',k}^{j,\nu}.\tag{by orthogonality}
\end{align*}
Thus \Cref{fct:transition-g-h} holds.
\end{proof}

Actually, within the analysis to prove \Cref{fct:transition-g-h}, we can have the following corollary that shows the transition pattern.

\begin{fact*}[\Cref{fct:cO-transition} Restated]
Let $P$ be a monotone property with transition probability $p_t$. For any joint state $\ket{\psi}$ of the algorithm's registers and the oracle registers, define $\Gamma_{k}^{\le t}$ as the projection that projects every superposition $\ket{i,x,u,z}\ket{D}$ to the states such that $i=k$ and the size of the database $D|_k$ is less than $t$. Define $\Gamma_{k}^P$ as the projection that projects to $D|_k\in P$. Then we can have the following result for the decrease of norm during the transition from $D|_k\in \bar{P}$ to $D|_k\in P$:
\begin{align*}
    \left\|\Gamma_k^P\cdot \cphso\cdot (\I-\Gamma_k^P)\Gamma_k^{\le t}\ket{\psi}\right\|
    \le \sqrt{8\cdot p_t}\left\|(\I-\Gamma_k^P)\Gamma_k^{\le t}\ket{\psi}\right\|.
\end{align*}
\end{fact*}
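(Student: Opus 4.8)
The plan is to retrace the compressed-oracle computation from the proof of \Cref{fct:transition-g-h}, stripped of the size bookkeeping (the $Q^\nu$, $\widetilde Q^\nu$, and the $g$'s and $h$'s) that was only needed there to get the refined direct-product statement. I would set $\ket\phi := (\I-\Gamma_k^P)\Gamma_k^{\le t}\ket\psi$, so that by construction $\ket\phi$ is supported on computational-basis states $\ket{k,x,u,z}\otimes\ket D$ whose query salt is $k$, with $D|_k\in\overline P$ and $D|_k$ of size at most $t$. First I would split $\ket\phi=\ket{\phi_\alpha}+\ket{\phi_\beta}$ according to whether the query coordinate is unrecorded in the $k$-block ($(k,x)\notin D$) or recorded ($(k,x)\in D$); for each fixed $\ket{k,x,u,z}$ these cases partition the database register, so the two pieces are orthogonal and $\|\ket\phi\|^2=\|\ket{\phi_\alpha}\|^2+\|\ket{\phi_\beta}\|^2$. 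Since $\cphso$ leaves the $\ket{k,x}$ register invariant and acts only on the phase register and the $(k,x)$-block of $D$, it then suffices to bound $\|\Gamma_k^P\cphso\ket{\phi_\alpha}\|$ and $\|\Gamma_k^P\cphso\ket{\phi_\beta}\|$ separately and combine them via $\|a+b\|^2\le2\|a\|^2+2\|b\|^2$; the loss of this factor $2$ is precisely what turns the eventual constant into $8$ rather than $4$, since $\Gamma_k^P\cphso\ket{\phi_\alpha}$ and $\Gamma_k^P\cphso\ket{\phi_\beta}$ need not be orthogonal.

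For $\ket{\phi_\alpha}$ I would argue as in the $\ket{\phi_\alpha}$ case of \Cref{fct:transition-g-h}: one $\cphso$ query on an unrecorded $(k,x)$ produces $\frac1{\sqrt N}\sum_y\omega_N^{uy}\ket{D\cup((k,x),y)}$ when $u\ne 0$ and returns $\ket D$ when $u=0$, and in the latter case $\Gamma_k^P$ annihilates it because $D|_k\in\overline P$. After $\Gamma_k^P$ only the basis states with $(D\cup((k,x),y))|_k\in P$ remain, and by \Cref{def:transition-prob} together with the monotonicity of $P$ there are at most $N\cdot p_t$ such $y$ (using $(k,x)\notin D$, $D|_k\in\overline P$, and that $D|_k$ has size $\le t$). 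Hence each fixed $(x,u,z,D)$-block loses a factor at most $p_t$ in squared norm, and since blocks with distinct $D$ stay orthogonal after the $(k,x)$-entry is appended, $\|\Gamma_k^P\cphso\ket{\phi_\alpha}\|^2\le p_t\|\ket{\phi_\alpha}\|^2$.

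For $\ket{\phi_\beta}$ I would write $\ket{\phi_\beta}=\sum_{x,u,z,D,y}\beta_{x,u,z,D,y}\ket{k,x,u,z}\otimes\ket{D\cup((k,x),y)}$ with $(k,x)\notin D$ in the summation, and apply Zhandry's formula $\cphso\ket{k,x,u,z}\ket{D\cup((k,x),y)}=\ket{k,x,u,z}\otimes\big(\omega_N^{uy}\ket{D\cup((k,x),y)}+(1+\omega_N^{uy})\ket D-\tfrac1N\sum_{y'}(\omega_N^{uy}+\omega_N^{uy'})\ket{D\cup((k,x),y')}\big)$. Since $\overline P$ is downward closed under monotonicity, both $D|_k$ and $(D\cup((k,x),y))|_k$ lie in $\overline P$, so $\Gamma_k^P$ kills the first two terms and, in the third, keeps only the $y'$ with $(D\cup((k,x),y'))|_k\in P$ — again at most $N\cdot p_t$ of them. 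For each fixed $\ket{k,x,u,z}$ and base $D$, Cauchy--Schwarz together with $\sum_y|\omega_N^{uy}+\omega_N^{uy'}|^2\le4N$ bounds the surviving squared norm by $\tfrac{N p_t}{N^2}\cdot\big(\sum_y|\beta_{x,u,z,D,y}|^2\big)\cdot 4N=4p_t\sum_y|\beta_{x,u,z,D,y}|^2$, and summing over the orthogonal blocks gives $\|\Gamma_k^P\cphso\ket{\phi_\beta}\|^2\le4p_t\|\ket{\phi_\beta}\|^2$. Combining, $\|\Gamma_k^P\cphso\ket\phi\|^2\le2\big(p_t\|\ket{\phi_\alpha}\|^2+4p_t\|\ket{\phi_\beta}\|^2\big)\le8p_t\|\ket\phi\|^2$, and taking square roots yields $\|\Gamma_k^P\cphso(\I-\Gamma_k^P)\Gamma_k^{\le t}\ket\psi\|\le\sqrt{8\cdot p_t}\,\|(\I-\Gamma_k^P)\Gamma_k^{\le t}\ket\psi\|$ as claimed.

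The substantive content is essentially the $\ket{\phi_\beta}$ estimate: I expect the main care to go into reading off from Zhandry's $\cphso$ formula exactly which branches survive $\Gamma_k^P$, and into checking that both of the "number of transitioning outputs" counts are genuinely governed by $p_t$ — which is where $\Gamma_k^{\le t}$ (bounding the size of $D|_k$) and the downward-closure of $\overline P$ get used. Everything else is a direct transcription of the already-written proof of \Cref{fct:transition-g-h}, so no new ideas should be required.
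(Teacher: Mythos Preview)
Your proposal is correct and follows essentially the same approach as the paper's proof: split $(\I-\Gamma_k^P)\Gamma_k^{\le t}\ket\psi$ into the unrecorded ($\alpha$) and recorded ($\beta$) pieces, reuse the $\cphso$ computations from the proof of \Cref{fct:transition-g-h} to get the factors $p_t$ and $4p_t$, and combine via $\|a+b\|^2\le2(\|a\|^2+\|b\|^2)$ together with the orthogonality of $\ket{\phi_\alpha}$ and $\ket{\phi_\beta}$. The paper's own write-up is just a terse outline pointing back to those two inequalities, so your slightly more explicit version (including the downward-closure observation for $\overline P$) is if anything more complete.
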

\begin{proof}[Proof Outline of \Cref{fct:cO-transition}]
We similarly define $\ket{\psi_{\alpha,k}}$ to be the part where every superposition $\ket{k,x,u,z}\ket{D}$ satisfies $D(k,x)=\bot$, and define $\ket{\psi_{\beta,k}}$ to be the part $D(k,x)\neq\bot$. Note that every state in the image of $\Gamma_k^{\le t}$ satisfies that the size of $D|_k$ is no more than $t$. 

Now we can adapt the proof of \Cref{ineq:tr_x_notinD} for every $(\I-\Gamma_k^P)\Gamma_k^{\le t}\ket{\psi_{\alpha,k}}$, and obtain
\begin{align}\label{ineq:tr_x_notinD_general}
    \|\Gamma_k^P\cdot \cphso\cdot (\I-\Gamma_k^P)\Gamma_k^{\le t}\ket{\psi_{\alpha,k}}\|^2\le p_{t}\cdot
    \|(\I-\Gamma_k^P)\Gamma_k^{\le t}\ket{\psi_{\alpha,k}}\|^2.
\end{align}
Similarly, we can adapt the proof of \Cref{ineq:tr_x_inD} for every $(\I-\Gamma_k^P)\Gamma_k^{\le t}\ket{\psi_{\beta,k}}$ and obtain
\begin{align}\label{ineq:tr_x_inD_general}
    \|\Gamma_k^P\cdot \cphso\cdot (\I-\Gamma_k^P)\Gamma_k^{\le t}\ket{\psi_{\beta,k}}\|^2\le 4p_{t}\cdot
    \|(\I-\Gamma_k^P)\Gamma_k^{\le t}\ket{\psi_{\beta,k}}\|^2.
\end{align}
Since $\Gamma_k^{\le t}(\ket{\psi_{\alpha,k}}+\ket{\psi_{\beta,k}})=\Gamma_k^{\le t}\ket{\psi}$, we can obtain
\begin{align*}
    \|\Gamma_k^P\cdot \cphso\cdot (\I-\Gamma_k^P)\Gamma_k^{\le t}\ket{\psi}\|^2
    &=\|\Gamma_k^P\cdot \cphso\cdot (\I-\Gamma_k^P)\Gamma_k^{\le t}(\ket{\psi_{\alpha,k}}+\ket{\psi_{\beta,k}})\|^2
    \\
    &\le 2\cdot\left(\text{LHS of \Cref{ineq:tr_x_notinD_general}}+\text{LHS of \Cref{ineq:tr_x_inD_general}}\right)
    \tag{since $\|a+b\|^2\le 2\cdot(\|a\|^2+\|b\|^2)$}\\
    &\le 8p_t\cdot\left(\|(\I-\Gamma_k^P)\Gamma_k^{\le t}\ket{\psi_{\alpha,k}}\|^2+\|(\I-\Gamma_k^P)\Gamma_k^{\le t}\ket{\psi_{\beta,k}}\|^2\right)
    \tag{by \Cref{ineq:tr_x_notinD_general} and \Cref{ineq:tr_x_inD_general}}\\
    &=8p_t\cdot\|(\I-\Gamma_k^P)\Gamma_k^{\le t}\ket{\psi}\|^2\tag{by orthogonality}
\end{align*}
as desired.
\end{proof}

\end{document}